\keywords{Propositional proof systems, fixed-point logics, resolution, 
polynomial calculus, generalised quantifiers} 
\newcommand{\@abbrev}[3]{
  \def\c@a@def##1{
      \if ##1.
        \relax
      \else
        \@ifdefinable{\@nameuse{#1##1}}{\@namedef{#1##1}{#2##1}}
        \expandafter\c@a@def
      \fi
    }
  \c@a@def #3.
}
\newcommand{\NP}{\ensuremath{\textsc{NP}}}
\newcommand{\PTIME}{\ensuremath{\textsc{Ptime}}}
\newcommand{\FO}{\ensuremath{\textsc{FO}}}
\newcommand{\LFP}{\ensuremath{\textsc{LFP}}}
\newcommand{\FOnum}{\ensuremath{\textsc{FO}^+}}
\newcommand{\EFP}{\ensuremath{\textsc{EFP}}}
\newcommand{\FPC}{\ensuremath{\textsc{FPC}}}
\newcommand{\FOTC}{\ensuremath{\textsc{FO(TC)}}}
\newcommand{\Prop}{\ensuremath{\textsc{Prop}}}
\newcommand{\Res}{\ensuremath{\textsc{Res}}}
\newcommand{\kRes}{\ensuremath{k\textsc{-Res}}}
\newcommand{\HRes}{\ensuremath{\textsc{Horn-Res}}}
\newcommand{\kResx}[1]{\ensuremath{#1\textsc{-Res}}}
\newcommand{\PC}{\ensuremath{\textsc{PC}}}
\newcommand{\monPC}{\ensuremath{\textsc{mon-PC}}}
\newcommand{\monPCx}[1]{\ensuremath{\textsc{mon-PC}_{#1}}}
\newcommand{\PCx}[1]{\ensuremath{\textsc{PC}_{#1}}}
\newcommand{\classmonPCx}[1]{\ensuremath{\mathbf{\mathcal{K}_{\monPCx 
k}}}}
\newcommand{\Primes}{\ensuremath{\mbP}}
\newcommand{\DeriveMonPCx}[1]{\ensuremath{\text{MonPC}_{#1}}}
\newcommand{\DerivePCx}[1]{\ensuremath{\text{PC}_{#1}}}
\newcommand{\DerivePC}{\DerivePCx k}
\newcommand{\Q}[1]{\ensuremath{\mathcal{Q}_{#1}}}
\newcommand{\I}{\ensuremath{\mathcal{I}}}
\DeclareMathOperator{\lfp}{\mathbf{lfp}}
\DeclareMathOperator{\ifp}{\mathbf{ifp}}
\DeclareMathOperator{\dom}{\ensuremath{\text{dom}}}
\DeclareMathOperator{\Part}{\ensuremath{\text{Part}}}
\DeclareMathOperator{\Aut}{\ensuremath{\text{Aut}}}
\DeclareMathOperator{\im}{\ensuremath{\text{im}}}
\DeclareMathOperator{\kernel}{\ensuremath{\text{ker}}}
\DeclareMathOperator{\rank}{\ensuremath{\text{rk}}}
\DeclareMathOperator{\minimum}{\ensuremath{\text{min}}}
\DeclareMathOperator{\Hom}{\ensuremath{\text{Hom}}}
\renewcommand{\phi}{\varphi}
\newcommand{\look}{\begin{proof}}
\newcommand{\lueg}{\look}
\newcommand{\hx}{\end{proof}}
\newcommand{\ra}{\rightarrow}
\newcommand{\E}{\exists}
\newcommand{\A}{\forall}
\renewcommand{\phi}{\varphi}
\renewcommand{\theta}{\vartheta}
\renewcommand{\AA}{{\mathfrak A}}
\newcommand{\BB}{{\mathfrak B}}
\renewcommand{\epsilon}{\varepsilon}
\newcommand{\posLFP}{{\rm posLFP}}
\newcommand{\IFP}{{\rm IFP}}
\newcommand{\ML}{{\rm MultLin}}
\newcommand{\ptime}{\mbox{\sc Ptime}}
\newcommand{\Linf}{\ensuremath{\textup{L}^{\omega}_{\infty\omega}}}
\newcommand{\Linfx}[1]{\ensuremath{\textup{L}^{#1}_{\infty\omega}}}
\newcommand{\Cinf}{\ensuremath{\textup{C}^{\omega}_{\infty\omega}}}
\newcommand{\Cinfx}[1]{\ensuremath{\textup{C}^{#1}_{\infty\omega}}}
\newcommand{\N}{{\mathbb N}}
\newcommand{\Ee}{{\mathcal E}}
\newcommand{\Gg}{{\mathcal G}}
\newcommand{\Kk}{{\mathcal K}}
\newcommand{\Ll}{{\mathcal L}}
\newcommand{\Oo}{{\mathcal O}}
\newcommand{\Qq}{{\mathcal Q}}
\newcommand{\Tt}{{\mathcal T}}
\newcommand{\ns}{\ensuremath{\text{s}}}
\newcommand{\nws}{\ensuremath{\text{ws}}}
\newcommand{\NT}{\ensuremath{\text{NonTerm}}}
\newcommand{\PCdegree}{\ensuremath{\textsc{PC-Degree}}}
\newcommand{\PCsize}{\ensuremath{\textsc{PC-Size}}}
\newcommand{\lintdom}[1]{\ensuremath{#1_{\delta}}}
\newcommand{\lintcong}[1]{\ensuremath{#1_{\approx}}}
\newcommand{\IsoForm}{\ensuremath{\text{ISO}}}
\newcommand{\Str}{\textup{Str}}
\DeclareMathOperator{\characteristic}{char}
\newcommand{\field}{\mathbb}
\newcommand{\Inv}{\text{Inv}}
\newcommand{\CFI}{Cai-Fürer-Immerman\xspace}
\newcommand{\CFIgraph}[3]{\text{\sf CFI}\,[#1; #2; #3]}
\newcommand{\CFIclass}[2]{\text{\sf CFI}\,[#1; #2]}
\DeclareMathOperator{\Norm}{Norm}
\newcommand{\Ord}{\leq\!\!}
\newcommand{\taucfi}{\tau_{\text{\sf CFI}}}
\newcommand{\tup}[1]{{\vec #1}}
\newcommand{\Blocked}{{\textsc{Bl}}}
\newcommand{\cyc}[2]{\ensuremath{({#1},{#2})\text{-cyclic}}\xspace}
\newcommand{\cyclp}{\cyc{\ell}{p}}
\newcommand{\lcocy}{$\ell$-cocyclic\xspace}
\newcommand{\Typ}[1]{\ensuremath{\text{Tp}^{#1}}}
\newcommand{\Typk}[2]{\ensuremath{\text{Tp}^{#1}_{#2}}}
\begin{document}

\title{A Finite-Model-Theoretic View\texorpdfstring{\\}{ }on Propositional Proof Complexity}
\titlecomment{{\lsuper*}This article is an extended version  of the 
conference paper~\cite{GrPaPa17}.}

\author[E.~Gr\"adel]{Erich Gr\"adel\rsuper{a}}	
\address{RWTH Aachen University, Germany}
\email{\texttt{graedel@logic.rwth-aachen.de}}
\email{\texttt{grohe@informatik.rwth-aachen.de}}
\email{\texttt{benedikt.pago@rwth-aachen.de}}  
\author[M.~Grohe]{Martin Grohe\rsuper{a}}
\author[B.~Pago]{Benedikt Pago\rsuper{a}}	
\author[W.~Pakusa]{Wied Pakusa\rsuper{b}}
\address{University of Oxford, England}
\email{\texttt{pakusa@logic.rwth-aachen.de}}
\thanks{The fourth author was supported by a DFG grant (PA 2962/1-1).}




 
\begin{abstract} 
We establish new, and surprisingly tight, connections 
between propositional proof complexity and finite model 
theory.
Specifically, we show that the power of several propositional proof 
systems, 
such as Horn resolution, bounded-width resolution, and the polynomial 
calculus of bounded degree, can be characterised in a precise sense by 
variants of fixed-point logics that are 
of fundamental importance in descriptive 
complexity theory.
Our main results are that \emph{Horn resolution} has the 
same expressive power as \emph{least fixed-point logic}, that 
\emph{bounded-width resolution} captures \emph{existential least 
fixed-point logic}, and that
the \emph{monomial calculus with bounded degree over the rationals} 
solves precisely 
the problems definable in \emph{fixed-point logic with 
counting}. We also study the \emph{bounded-degree polynomial calculus}. Over the rationals, it captures fixed-point logic with counting if we restrict the bit-complexity of the coefficients. For unrestricted coefficients, we can only say that the bounded-degree polynomial calculus is at most as powerful as \emph{bounded variable infinitary counting logic}, but a precise logical characterisation of its power remains an open problem.\\
These connections between logics and proof systems allow us to establish 
finite-model-theoretic 
tools for proving lower bounds for the polynomial calculus over 
the rationals and also over finite fields.\\
\\
This is a corrected version of the paper (https://arxiv.org/pdf/1802.09377.pdf) published originally on January 23, 2019.
\end{abstract}

\maketitle

\section{Introduction}\label{S:one}
The question whether there exists an efficient proof system 
by means of which the validity of \emph{arbitrary propositional formulas} 
can be verified via \emph{proofs of polynomial 
size} is equivalent to the closure of NP under complementation. 
Since Cook and Reckhow \cite{CookRec79} made the notion of an efficient 
propositional proof system precise, 
a huge body of research on the power of various propositional proof 
system has been established. In particular, we now have super-polynomial 
lower 
bounds on the proof complexity for quite strong proof systems, see 
\cite{BeamePit01,Segerlind07} for surveys on propositional proof 
complexity.

In this paper we study \emph{polynomial-time variants} of propositional 
proof systems,
which admit efficient proof search, resulting in proofs of polynomial size,
such as restricted variants of resolution and the polynomial calculus. To be precise, one of these variants, the bounded-degree polynomial calculus over the rationals, is not known to admit polynomial-time proof search because the proofs may involve very large coefficients. Thankfully, as it turns out, this issue does not prevent a meaningful connection to finite model theory.

Recall that the resolution proof system 
$\Res$ takes as input a propositional 
formula $\phi$ in conjunctive normal form (CNF), and it
refutes the satisfiability of $\phi$ if there is a derivation of the empty 
clause from $\phi$. 
It is well-known that shortest resolution proofs can be of exponential 
size, so in general, we provably cannot search for resolution proofs in 
polynomial time.
However, there are interesting restrictions  of $\Res$, such as 
$\HRes$ (resolution restricted to Horn clauses) and bounded-width
resolution
$\kRes$ (resolution restricted to clauses of size 
$\leq k$) that do admit efficient proof search, that is the existence of 
refutations can be verified in polynomial time.
Of course, unless P = NP, any proof system that admits efficient proof 
search is necessarily incomplete for full propositional logic.  
Nevertheless we can still prove interesting statements
in such systems, and usually have completeness for relevant fragments of 
propositional logic,
such as Horn-logic or 2-CNF.
We can now try to solve algorithmic problems by reducing them to
provability (or refutability) in some specific polynomial-time proof 
system,
which, if it works successfully for all inputs, would give us a  
polynomial-time algorithm for the problem.
Our goal is to understand how powerful this approach can be, depending on
the specific proof system that we use.

Let us illustrate this by two concrete problems.
First we consider \emph{graph isomorphism},
a problem which is not known to be solvable in 
polynomial time although there is strong evidence that it is not
$\NP$-complete.
Given two graphs $G=(V,E)$ and $H = (W, F)$ we ask
whether there is a bijection $\pi\colon V \to W$ such that $\pi(E) = F$.
Of course, this can easily be encoded as the satisfiability 
problem of a propositional CNF-formula.
First, for each pair of vertices $v \in V$ and $w \in W$ we introduce a 
variable $X_{vw}$ with the intended meaning that $X_{vw} = 1$ if $\pi(v) = 
w$.
We add clauses $\bigvee_{w \in W} X_{vw}$ for every $v \in V$ and 
$\bigvee_{v \in V} X_{vw}$ for every $w \in W$  to ensure that 
every $v \in V$ has an image and every $w \in W$ has a preimage.
Additionally we add for all $v_1, v_2 \in V$ and $w_1, w_2 \in W$ a 
clause $\neg (X_{v_1w_1} \wedge X_{v_2w_2})$ in case that $\{ v_1 \mapsto 
w_1, v_2 \mapsto w_2\}$ is not a partial isomorphism.
The resulting CNF-formula, denoted by 
$\text{Iso}(G,H)$, is satisfiable if, and only if, the two graphs $G$ and 
$H$ are isomorphic.
Following our reasoning from above, we can now use an efficient 
variant of resolution, or of a stronger proof system, and try to refute 
the satisfiability of the 
formula $\text{Iso}(G,H)$. If this is possible, then $G$ and $H$ are not 
isomorphic. 
Unfortunately, if we do not find a proof, then we are stuck, because it 
might still be the case that $G$ and $H$ are not isomorphic, but our 
 proof system is just not strong enough to show this.
Hence, we get an efficient, sound, but not necessarily complete 
graph isomorphism test. 
The question of how successful this approach is  when based on resolution 
was
studied by Toran in~\cite{Toran13}. Unfortunately, he proved that shortest 
resolution proofs for graph non-isomorphism can be of exponential size 
(even for graphs with colour class size four).
More recently, Grohe and Berkholz showed that also in the stronger system 
polynomial calculus (\textsc{PC}) one cannot obtain small proofs for graph 
non-isomorphism~\cite{BerkholzGro15,BerkholzGro17} in the general case.

Our second example is directed graph reachability: Given a directed graph
$G = (V,E)$ with two distinguished vertices $s, t \in V$,
we want to know whether there is a path from $s$ to $t$ in $G$. Again, it 
is easy to encode this as a satisfiability problem in 
propositional logic, by taking the conjunction of all 
implication clauses $X_v \to X_w$, for all edges $(v,w)\in E$,
together with the two clauses $1 \to X_s$ and $X_t 
\to 0$. Clearly the resulting formula $\text{NonReach}(G,s,t)$ is 
unsatisfiable if, and only if, $t$ is reachable from $s$ in $G$.
However, in clear contrast to the formulas $\text{Iso}(G,H)$ from above, 
we can easily prove unsatisfiability for the formulas 
$\text{NonReach}(G,s,t)$ in efficient variants of resolution
such as $\HRes$ and $\kRes$ for $k\geq 2$.

Our two examples demonstrate the following: while certain problems, such 
as directed graph reachability, allow for small and efficient resolution 
proofs, other problems, such as the graph isomorphism problem, provably 
require proofs of super-polynomial size even in quite strong proof 
systems. 
This leads to the main question that we want to address in this paper: is 
there a \emph{classification} for those problems which can be solved in 
natural restricted versions of propositional proof systems such 
as 
$\HRes$, $\kRes$  and $\PCx k$ (the degree-$k$ restriction of the 
polynomial calculus)?
It came as a surprise to us that there is, indeed, a very clear and 
tight classification of the power of all of these proof systems in terms 
of definability in important fixed-point logics and infinitary logics which are well-studied in 
the area of descriptive complexity theory.

Before we can state our results in detail, we have to explain what 
we mean by saying that a problem, such as directed graph reachability, can 
be solved by a propositional proof system $\Prop$.
As usual, each decision problem can be identified with a membership 
problem 
``$\mfA \in \mcK$?'' for some class of structures $\mcK$.
For instance, the graph reachability problem from above is identified with 
the class $\mcK_{\text{Reach}} = \{ (V, E, s, t) : \text{there is a path 
from } s \text{ to } t \text{ in } G=(V,E)\}$.
Then we naturally want to say that a problem $\mcK$ can be solved by the 
proof system $\Prop$ if we can find a reduction function $f$ which 
maps structures $\mfA$ to inputs $f(\mfA)$ for $\Prop$ 
such that $\mfA \in \mcK$ if, and only if, $\Prop$ can prove that 
$f(\mfA)$ is not satisfiable.
It is clear that we only want to allow \emph{simple} 
reduction functions $f$, because otherwise the computation of the encoding 
could already contain part of the work to solve the problem. 
Coming from the area of finite model theory the obvious and natural
formalisation for ``$f$ being simple'' is to say that $f$ is 
definable in \emph{first-order logic} (\FO).
We introduce the precise technical definition of such reductions, which 
is the notion of  a \emph{first-order interpretation}, in 
Section~\ref{sec:prel}.
Note that for the two examples we discussed above the encoding 
functions are clearly $\FO$-definable.

Having established this definition it turns out that our 
classification problem is really about understanding the expressive power 
of the \emph{Lindström extensions} of first-order logic by 
\emph{generalised quantifiers} for propositional proof systems $\Prop$. 
We denote these logics by $\FO(\Prop)$. The basic idea of the logic 
$\FO(\Prop)$ is to extend first-order logic by new quantifiers 
$\mcQ_{\Prop}$ which are capable of simulating  $\Prop$.
In other words, we just incorporate into first-order logic the power to 
simulate $\Prop$ in an explicit way, that is the logics 
$\FO(\Prop)$ are a formalisation of the concept 
of 
oracle Turing-machines with access to $\Prop$ in the world of first-order 
logic (the oracle calls to the proof system $\Prop$ correspond to 
applications of the new generalised quantifiers). Again, the precise 
technical definitions of the Lindström extensions $\FO(\Prop)$ can be 
found in Section~\ref{sec:prel}. 
We can now  say that a problem $\mcK$ can be 
solved in a proof system $\Prop$ if, and only if, it is definable in 
$\FO(\Prop)$.
For instance, we saw that $\mcK_{\text{Reach}}$ is definable in 
the logics $\FO(\HRes)$ and $\FO(\kResx 2)$.

\bigskip
We proceed to describe our main results and give a rough sketch of the 
structure of this article.
This work is based on our conference paper~\cite{GrPaPa17}.
However, the present article also contains some new results and substantial 
generalisations of our results from~\cite{GrPaPa17} on the polynomial calculus.

In Section~\ref{sec:resolution}, we study the resolution 
proof system and its aforementioned restrictions Horn-Resolution 
($\HRes$) and Bounded-width-$k$ Resolution
($\kRes$), for $k \geq 2$.
It turns out that $\HRes$ can express precisely the problems that 
are definable in least-fixed point logic (\LFP), that is $\FO(\HRes) = 
\LFP$. This readily follows by the well-known fact 
that the problem of computing winning positions in reachability games 
(known as GAME or  alternating reachability)  is complete for $\LFP$ with 
respect to 
$\FO$-reductions.
More interestingly, we proceed to show that $\kRes$, for every $k \geq 2$, 
is less powerful than $\HRes$. In fact, $\FO(\kResx 2) 
= \FOTC$, where $\FOTC$ is the extension of first-order logic by a
transitive closure operator. Moreover, we prove that, for every $k \geq 
3$, 
$\FO(\kRes) = \EFP$, where $\EFP$ is the \emph{existential} fragment of 
least fixed-point logic which is known to be a strict fragment of full 
least fixed-point logic. 
We can also show that the Lindstr\"om extensions for Horn resolution and 
width-$k$ resolution have different structural properties. While for 
$\FO(\HRes)$ a single application of a $\mcQ_{\HRes}$ quantifier suffices 
to obtain the full expressive power, nesting of 
$\mcQ_{\kRes}$ quantifiers is needed for the logics $\FO(\kRes)$. 

In Section~\ref{sec:fpc}, we then turn our attention to the polynomial 
calculus (\PC),  a propositional proof system which is based on algebraic 
reasoning techniques. The polynomial calculus manipulates polynomial 
equations over an underlying field $\field F$. A $\PC$-refutation is a 
derivation of the equation $0 = 1$. 
As in the case of bounded-width resolution, if one restricts the 
degree of the polynomials in all equations to some constant $k \geq 1$,
then one can search for PC-proofs in polynomial time (when working over the field of rationals, the bit-complexity of the coefficients must also be restricted to binary representations of polynomial length).
Besides restricting the degree, one can also vary the underlying 
field~$\field F$. Specifically, we consider the cases where $\field F$ is 
the 
field of rationals (or reals) or a finite field. Moreover, the polynomial 
calculus can also be restricted by weakening its proof rule for 
multiplication, which defines a variant known as the monomial-PC (\monPC).
We denote its corresponding restriction to degree $k$ by $\monPCx k$.

For the case of the polynomial calculus over $\mbQ$ we 
show the following. 
First of all, if we consider the monomial-PC restricted to some degree $k 
\geq 2$, then this proof system $\monPCx k$ has precisely the same 
expressive power as fixed-point logic with counting (\FPC), which is a 
very expressive logic 
well-studied in descriptive complexity theory \cite{Dawar15,Otto97}; formally, 
we show that $\FOnum(\monPCx k) = \FPC$ for $k \geq 2$ where $\FOnum$ 
denotes the extension of $\FO$ by a numeric sort to match 
the setting of $\FPC$. 
In particular, this separates the (monomial-)PC from the resolution proof 
system since $\FPC$ is known to be much stronger than $\EFP$ and $\LFP$.
In a second step, we generalise this characterisation for the monomial-PC 
to the full polynomial calculus (PC). To deal with the already mentioned phenomenon of potentially exceedingly large coefficients, we restrict the degree-$k$ PC further and define, for any $b \in \bbN$, the proof system $\PC_{k,b}$ as the degree-$k$ PC with the limitation that all coefficients occurring in a proof must be representable as fractions of binary numbers with at most $n^b$ bits each ($n$ refers to the number of variables in the input polynomials). Then we prove that for any constants $k,b$, the proof system $\PC_{k,b}$ captures $\FPC$, just like $\monPCx k$ does. From there, we move on to the more common $\PC_k$ with \emph{unrestricted bit-complexity}, and observe that we can define the existence of $\PC_k$-proofs in the infinitary counting logic $\Cinfx{k}$. This logic is strictly more expressive than $\FPC$. The question whether $\PC_k$-proofs with unbounded rational coefficients are also definable in $\FPC$ remains open. Yet, we can say that a positive answer to it seems unlikely because it is not even clear that this problem is decidable in polynomial time: As shown by Hakoniemi \cite{Hakoniemi21}, there exists a set $Q_n$ of polynomials over Boolean variables that has a refutation in the degree-$2$ polynomial calculus, but none that requires less than exponentially many bits for the coefficients; it is doubtful that such a refutation can be computed in polynomial time.\\
On our way we prove a result which is of independent interest, 
namely that $\FPC$ can define solution spaces of linear equation systems 
over the rationals. We need this in order to express $\PC_{k,b}$ in $\FPC$, and indirectly also to express $\PC_k$ in $\Cinfx{k}$. The latter result allows us to answer an open question by Grohe and Berkholz 
from~\cite{BerkholzGro15} about the relative power of $\monPCx{k}$ and $\PC_k$ with respect to the graph isomorphism problem.

In Section~\ref{sec:PCfiniteFields}, we turn our attention to the 
polynomial calculus over finite 
fields. It is easy to see that the connection between $\FPC$ and the 
(monomial-)PC breaks down. We set out to establish 
criteria on the characteristic of the underlying finite field and certain 
finite-model-theoretic properties of polynomial equation systems that 
allow us to retain 
$\FPC$-definability of bounded-degree PC-refutations.
This result proves to be very useful in 
order to derive lower bounds for the polynomial calculus over finite 
fields.
There are also technical results in this section which should be 
of independent interest. For example, we show that classes of 
CFI-structures over expander graphs are homogeneous with respect to 
$\FPC$-definability. 

Finally, in Section~\ref{sec:lowerboundsPC}, we discuss how we 
can apply our $\FPC$-definability results in order to prove 
lower bounds for the polynomial calculus.
We give examples including the graph isomorphism problem and constraint 
satisfaction problems. 
Although most (but not all) of the lower bounds have been known before, we 
present new proofs which only use finite-model-theoretic arguments.
Our novel, uniform approach to these lower bounds, also suggests a way to 
capture a common weakness of many propositional proof systems: 
whenever a proof system has a stratification which allows for  
\emph{symmetric} refutations that can be described and verified in counting 
logic with a bounded number of variables, our lower bounds techniques can be 
applied. For illustration, we discuss the example of the Positivstellensatz 
proof system in Section~\ref{sec:discussion}.

\subsection*{Related work}
Let us discuss some related work. 
The most relevant result to mention here is the 
characterisation by Atserias and Dalmau of resolution width in 
terms of the number of pebbles required to win an existential pebble 
game played on a given CNF-formula and a 
structural encoding of truth assignments~\cite{At04, AtseriasDal08}. This 
resembles our result that bounded-width resolution corresponds to 
\emph{existential} least fixed-point logic.
Using their game-theoretic characterisation, Atserias 
and Dalmau can reprove many of the known lower bounds on 
resolution width. Again, this is similar to the applications we give in
Section~\ref{sec:applications:low}.

However, what makes our setting different from the approach of 
Atserias and Dalmau is that we always consider the power of proof 
systems only \emph{up to logical reductions}. This reflects, for 
example, in our result saying that $\FO(\kResx 3) = \FO(\kResx 4)$, 
i.e.\ that $\kResx 3$ has the same expressive power as $\kResx 4$. But, 
certainly, 
this only holds if we allow first-order reductions to transform inputs 
between $\kResx 4$ and $\kResx 3$. 
Hence our characterisation of resolution width is ``coarser'' than that of 
Atserias and Dalmau. But it has the advantage of being more robust.
For instance, 
in the situation of lower bound proofs, we can avoid playing 
pebble games directly on the inputs to proof systems, such as 
CNF-formulas, 
but instead it suffices to play suitable games on pairs of structures in 
which 
these inputs interpret. This can make the 
description of winning strategies much simpler. Furthermore, our setting 
allows us to prove lower bound results not depending on specific encodings 
of a problem, since our logics are closed under interpretations, see 
Section~\ref{sec:lowerboundsPC}.
 
Besides this, we want to mention the 
series of 
papers~\cite{AtMa12, BerkholzGro15, Malkin14,GrOt15} which establish
surprisingly tight connections between the equivalence of graphs in 
counting logic and their indistinguishability by linear programming 
techniques (Sherali-Adams relaxiations of graph isomorphism 
polytopes) and algebraic propositional proof system. Similar to our 
applications, these results also allow the transfer of 
lower bounds from finite model theory to get lower 
bounds on proof complexity. 
In particular, we use notions and ideas of \cite{BerkholzGro15} in 
Section~\ref{sec:fpc}. 
Let us also point to the excellent work of Dawar and 
Wang~\cite{DawarW16,DawarW17} which connect finite model theory with 
semi-algebraic proof systems.
This work, and our own, has certainly also interesting connections 
to the very recent work by Atserias and Ochremiak \cite{AtseriasOch18}
showing by means of finite-model-theoretic arguments that the Sums-of-Squares 
proof system can be simulated in $\Cinf$.
Surely these connections deserve to be explored further.

\section{Preliminaries}
\label{sec:prel}

This is a paper in finite model theory. All structures are 
\emph{relational} and \emph{finite} if not explicitly stated otherwise.
We assume that the reader has a solid background in logic. To fully 
understand
and appreciate our results, familiarity with the ideas and techniques of 
finite
model theory will be necessary 
(see \cite{ebbflu99,Im99,lib04,grakollib+07}).  
In particular, a good knowledge of fixed-point logic with counting 
is needed in order to understand our definability results for the 
polynomial calculus in 
Sections~\ref{sec:fpc},\ref{sec:PCfiniteFields},\ref{sec:lowerboundsPC}, 
see the above references plus~\cite{Otto97,Dawar15}.

\subsection{Finite Relational Structures.}
Given a (finite, relational) \emph{vocabulary} 
(or \emph{signature})~$\tau$, a \emph{$\tau$-structure} $\mfA$ consists of a 
finite 
\emph{universe} $A$ and a
relation $R^A\subseteq A^k$ for each $k$-ary relation symbol $R$ in 
$\tau$. 
If we consider (undirected) graphs, that is structures 
over the vocabulary $\tau = \{ E \}$, then we usually use a different
notation and denote graphs by $G = (V,E)$. In particular, we denote the 
vertex set of a graph $G$ by $V = V(G)$ and the set of edges $E$ by $E = 
E(G)$.
The class of all (finite)
$\tau$-structures is denoted by $\Str(\tau)$. 
Sometimes we want to distinguish certain constants in $\tau$-structures 
$\mfA$. For a tuple of parameters $\tup z$ we denote by $\Str(\tau, \tup 
z)$ the class of all pairs $(\mfA, \tup z \mapsto \tup a)$ where $\mfA\in 
\Str(\tau)$.
For our applications in Section~\ref{sec:PCfiniteFields} and 
Section~\ref{sec:lowerboundsPC}, we also fix an encoding of
ordered pairs $(\mathfrak A,\mathfrak B)$ of $\tau$-structures as 
structures
$(\mathfrak A,\mathfrak B)$ of
some vocabulary $\tau_{\textup{pair}}$.

\subsection{Logics without Counting.}
We assume that the reader is familiar with \emph{first-order logic} (\FO) 
and \emph{least and inflationary fixed-point logic} ($\LFP$ and $\IFP$).
\emph{Infinitary finite variable logic} $\Linf$ extends $\FO$ by 
infinite conjunctions and disjunctions in formulas, but with the 
additional requirement that formulas only contain a finite number of 
variables.
More precisely, if we denote by $\Linfx k$ the \emph{$k$-variable 
fragment} of $\Linf$, then we have $\Linf = \bigcup_k \Linfx k$.
Formulas of $\LFP$ with $k$ variables can be translated 
into equivalent formulas of $\Linfx{k}$. In particular, $\LFP \leq
\Linf$ (in this article we use the notation $\Ll_1 \leq \Ll_2$ to say that 
every class of structures that is $\Ll_1$-definable is also 
$\Ll_2$-definable, that is $\leq$ refers to semantic inclusion of 
logics wrt.\ sentences).

\subsection{Logics with Counting.}
\emph{(Infinitary) counting logic} $\Cinf$ is the extension of $\Linf$ that 
allows counting
quantifiers $\exists^{\ge m}x$ (``there exist at least $m$ values for
$x$'') for each $m$ (with the same restriction on the number of variables 
as before, that is each $\Cinf$-formula only contains a finite number of 
variables).
Note that each individual quantifier $\exists^{\ge m}x$ can be expressed 
using $m$ first-order quantifiers and
$m$ distinct variables for $x$. However, the translation leads to formulas 
with a higher quantifier rank and, moreover, it increases the number of 
required variables. Analogous to the above, we denote by $\Cinfx k$ the 
fragment of $\Cinf$ consisting of all formulas with at most $k$ (free or 
bound) variables. Then $\Cinf = \bigcup_k \Cinfx k$.
For two structures $\mfA, \mfB$ (of the same vocabulary) we write $\mfA 
\equiv^k \mfB$ if the structures cannot be distinguished by any formula of 
$\Cinfx k$.

We now recall the definition of \emph{fixed-point logic 
with counting} ($\FPC$). In a nutshell, $\FPC$ is the 
extension of inflationary fixed-point logic ($\IFP$) by \emph{counting 
terms}.
Formulas of $\FPC$ are evaluated over the \emph{two-sorted extension} of 
an 
input structure $\mfA$ by a copy of the natural numbers.  
Following~\cite{DawarGroHolLau09} we denote by  $\mfA^{\#}$ the two-sorted 
extension 
of a $\tau$-structure $\mfA=(A,R_1,\dots,R_k)$ by $\mfN=(\N,+,\cdot,0,1)$, 
that is the two-sorted structure $\mfA^{\#} = ( A, R_1,\dots , R_k , \N, +, 
\cdot, 0, 1)$ where the universe of the first sort (also referred to as 
\emph{vertex sort}) is $A$ and the universe of the second sort (also 
referred to as \emph{number sort} or \emph{counting sort}) is $\N$.
For both, the vertex and the number sort, we have a collection of 
\emph{typed first-order variables}, that is the domain of any variable $x$ 
(over the input structure $\mfA$) is either $A$ or $\N$.
Similarly, for second-order variables $R$ we allow mixed types, that is a 
relation symbol $R$ of type $(k , \ell) \in \N \times \N$ stands for a 
relation 
$R \subseteq A^k \times \N^\ell$. 

Of course, already $\FO$ is undecidable over the class of two-sorted 
structures $\mfA^{\#}$. To obtain a logic with polynomial-time data 
complexity, we have to restrict the range of 
quantifiers over the numeric sort by fixed polynomials.
More precisely, $\FPC$-formulas can use quantifiers over the numeric sort 
only in the form $Qx \leq n^q . \phi$ where $Q \in \{ \exists, \forall 
\}$ and where $q \geq 1$ is a fixed constant.
The range of the quantifier $Q$ is $\{0, \dots, n^q\}$ where $n$ denotes 
the 
size of the input structure~$\mfA$. To simplify notation, we henceforth 
assume that each numeric variable $x$ comes with a built-in restricted 
range 
polynomial, that is $x = (x \leq n^q)$. For better readability, we 
usually omit this range polynomial in our notation.
By this convention, each variable $x$ has a 
predefined range in any input structure $\mfA^{\#}$ of polynomial size 
(which is either $A$ or $\{0, \dots, n^q\}$ for a fixed $q \geq 1$).
We denote this range by $\dom(\mfA,x)$ (or just by $\dom(x)$ if 
$\mfA$ is clear from the context). Analogously, for a tuple of variables 
$\tup x=(x_1, \dots, x_k)$ we set $\dom(\tup x) = \dom(x_1) \times \cdots 
\times \dom(x_k)$.
By this, we also obtain polynomial bounds for numeric components in 
fixed-point definitions $\left[ \ifp \, R\bx \,.\, \phi(R,\bx) 
\right](\bx)$.
Indeed, the inflationary fixed-point defined by this formula is of the 
form $R \subseteq \dom(\tup x)$. 

The crucial elements of $\FPC$ are \emph{counting terms} which 
allow to define cardinalities of sets. Starting with an arbitrary
$\FPC$-formula $\phi(x)$ one can form a new \emph{counting term} $s = [\# 
x \,.\, \phi]$ whose value in $\mfA$ is just the size of the set defined 
by $\phi$ in $\mfA$.
In particular, the term $s$ is a \emph{numeric term}, that is $s$ takes 
its 
value in the number sort. More precisely, for an input structure~$\mfA$, 
the  value $s^\mfA \in \N$ of $s$ in $\mfA$ is the number of elements 
$a \in A $ such that $\mfA \models \phi(a)$.
One can allow counting terms of a more general form 
without increasing the expressive power of \FPC. In particular, counting 
terms $[\# \bx \,.\, \phi]$
over mixed tuples of variables can be simulated with unary counting 
terms  and fixed-point operators; we refer to~\cite{Otto97} for more 
details 
and background on fixed-point logic with counting.

An important fact that we are going to use frequently is that formulas of 
$\FPC$ with $k$ variables can be rewritten as equivalent 
$\Cinfx k$-formulas. In particular, we have that $\FPC \leq \Cinf$.
In Section~\ref{sec:PCfiniteFields}, we also make use of the fact that 
for every $k \geq 1$, there exists an $\FPC$-formula $\phi$ with $\mcO(k)$ 
many variables such that $(\mfA, \mfB) \models \phi$ if, and only if, 
$\mfA \equiv^k \mfB$, see e.g.\ \cite{Otto97}.

In Section~\ref{sec:fpc}, we also make use of the \emph{numeric
extension of first-order logic}, denoted by $\FOnum$, which is defined 
as $\FPC$, but without the rule for forming (inflationary) fixed points.

\subsection{Logical Interpretations and Lindstr\"om Quantifiers}
The logical counterpart of the notion of an (algorithmic) reduction is 
the 
notion 
of a \emph{logical interpretation}. A logical interpretation $\mcI$ 
transforms an input structure $\mfA$ into a new structure $\mfB = 
\mcI(\mfA)$ 
and 
this transformation is defined by formulas of some logic $\Ll$.
In this article we consider $\Ll$-interpretations with respect to 
different underlying logics $\Ll$, such as $\FO, \FOnum, \LFP, \FPC, 
\Cinf$. Basically, the definition of an $\Ll$-interpretation is uniform for 
all of these logics. However, there is one exception for the case of 
$\FOnum$ and $\FPC$ where we have the special situation that formulas 
can use numeric variables. As a consequence, the interpreted 
structures $\mcI(\mfA)$ can contain such numeric elements.
In this section, we further introduce \emph{Lindström 
quantifiers}, also known as \emph{generalised quantifiers}, which capture 
the notion of \emph{oracles} in the realm of finite model theory.

Let us start with the case of single-sorted logics 
$\Ll$, such as $\FO, \LFP$, or $\Cinf$. Let $\sigma, \tau$ be signatures 
with $\tau = 
\{S_1,...,S_{\ell}\}$. Let $s_i$ denote the arity of $S_i$. An 
$\Ll[\sigma,\tau]$-\emph{interpretation} is a tuple
\[ I(\tup{z}) = 
(\varphi_{\delta}(\tup{x},\tup{z}),\varphi_{\approx}(\tup{x}_1,\tup{x}_2,
\tup{z}),\varphi_{S_1}(\tup{x}_1,...,\tup{x}_{s_1},\tup{z}),...,\varphi_{S_
{\ell}}(\tup{x}_1,...,\tup{x}_{s_{\ell}},\tup{z}))\]
where 
$\varphi_{\delta},\varphi_{\approx},\varphi_{S_1},...,\varphi_{S_{\ell}} 
\in \Ll[\sigma]$ and $\tup{x},\tup{x}_1,...,\tup{x}_{s_{\ell}}$ are tuples 
of pairwise distinct variables of the same length $d$ and $\tup{z}$ is a 
tuple of variables pairwise distinct from the $x$-variables. We call $d$ 
the \emph{dimension} and $\tup{z}$ the \emph{parameters} of $\I(\tup{z})$.

A $d$-dimensional $\Ll[\sigma,\tau]$-interpretation $\I(\tup{z})$  defines 
a partial mapping $\I \colon \Str(\sigma,\tup{z}) \rightarrow \Str(\tau)$ 
in the 
following way:  For $(\AA,\tup{z} \mapsto \tup{a}) \in 
\Str(\sigma,\tup{z})$ we obtain a $\tau$-structure $\BB$ over the universe 
$\{\tup{b} \in A^d \ | \ \AA \models \varphi_{\delta}(\tup{b},\tup{a})\}$, 
setting $S_i^{\BB} = \{(\tup{b}_1,..,\tup{b}_{s_i}) \in B^{s_i} \ | \ \AA 
\models \varphi_{S_i}(\tup{b}_1,...,\tup{b}_{s_i},\tup{a})\}$ for each 
$S_i 
\in \tau$.
Moreover let $\Ee = \{(\tup{b}_1,\tup{b}_2) \in A^d \times A^d \ | \ \AA 
\models \varphi_{\approx}(\tup{b}_1,\tup{b}_2,\tup{a})\}$. Now we define
\begin{equation*}
\I(\AA,\tup{z} \mapsto \tup{a}) :=
\begin{cases}
\BB/ \Ee &\text{if $\Ee$ is a congruence relation on } \BB\\
\text{undefined} & \text{otherwise.}  
\end{cases}
\end{equation*}	 
We say that $\I$ interprets $\BB/\Ee$ in $\AA$. 

Let us briefly discuss the case of two-sorted logics. If $\Ll$ is 
$\FPC$ or $\FOnum$, then we have the same definition of an 
$\Ll$-interpretation as above. However, note that now the variable tuples 
$\tup x, \tup z, \dots$ may contain numeric variables. Recall that each 
numeric variable $x$ has an explicit polynomial range bound $\dom(\mfA, 
x)$ which is either $A$ or $\{0, \dots, n^q\}$ for a fixed $q \geq 1$. As 
a consequence, the domain of the structure $\mcI(\mfA, \tup z \mapsto \tup 
a)$ does not longer consist of equivalence classes of tuples in $A^d$ but, 
more generally, it consists of equivalence classes of elements in
$\dom(\mfA, \tup x)$ (and note that these tuples may contain numeric 
components). 

Next, we introduce Lindström quantifiers. Let $\Ll$ be a logic and $\Kk 
\subseteq Str(\tau)$ a class of 
$\tau$-structures with $\tau = \{S_1,...,S_{\ell}\}$. The \emph{Lindström 
extension} $\Ll(\Qq_{\Kk})$ of $\Ll$ by \emph{Lindström quantifiers} for 
the class $\Kk$ is obtained by extending the syntax of $\Ll$ by the 
following formula creation rule:
\begin{quote}
	Let 
$\varphi_{\delta},\varphi_{\approx},\varphi_{S_1},...,\varphi_{S_{\ell}}$ 
be formulas in $\Ll(\Qq_{\Kk})$ that form an 
$\Ll[\sigma,\tau]$-interpreta-tion $\I(\tup{z})$. Then $\psi(\tup{z}) = 
\Qq_{\Kk}\I(\tup{z})$
	is a formula in $\Ll(\Qq_{\Kk})$ over the signature $\sigma$, with
$(\AA,\tup{z} \mapsto \tup{a}) \models \Qq_{\Kk}\I(\tup{z})$, if, and only 
if,  $\BB := 
\I(\AA,\tup{z} \mapsto \tup{a})$ is defined and $\BB \in \Kk$. 
\end{quote}  

As we see, adding the Lindström quantifier $\Qq$ to the logic $\Ll$ is 
the most direct way to make the class $\mcK$ definable in $\Ll$.
We use this key notion to capture the power of propositional proof systems 
up to first-order definable transformations.

\subsection{Representing Propositional Formulas as Relational 
Structures}
As always when we are dealing with logics in algorithmic contexts, 
we have to agree on encodings of (non-structural) inputs as 
relational structures.
In this article such inputs are, for instance, propositional formulas, 
polynomial equation systems, matrices and vectors over fields.
In all of these cases, it is straightforward to come up with natural 
structural representations. Most 
often, we refrain from describing such encodings explicitly. For it is 
rather tedious, and, more importantly, the concrete details do not matter 
too much: all (natural) encodings will be interdefinable in first-order 
logic.

To get a better intuition, let's go through one encoding explicitly. Let 
us briefly discuss two ways to represent propositional formulas (in CNF) 
as finite relational structures.
Perhaps the most obvious representation of a CNF-formula $\psi$  
as a structure $\AA(\psi)$ is based on the vocabulary 
$\{C,V,P,N\}$;
the universe of $\AA(\psi)$ consists of the variables and the clauses of 
$\psi$, the monadic relations
$V$ and $C$ identify the variables and clauses, respectively, and the 
binary relations 
$P$ and $N$ specify which variables appear positively and negatively in 
which clauses;
so $Pvc$ is true in $\AA(\psi)$ if the variable $v$ appears positively in 
the clause $c$,
and analogously for $N$.
A different representation, that sometimes leads to more elegant logical 
descriptions works
with the set $L$ of literals and with a self-inverse bijection $\neg: L \ra 
L$, so that $\psi$ would be
represented by $\AA(\psi)=(A,C,L,\neg,\in)$ where $A$ is the set of
clauses and literals, $\neg(x)$ is the complementary literal to $x$, and 
$x\in c$ means
that the literal $x$ occurs in the clause $c$ (note that, formally, we do 
not allow function symbols in our vocabularies, but, of course, we can 
substitute function symbols by their graph relations).

\section{Resolution and (Existential) Least Fixed-Point Logic}
\label{sec:resolution}

In this section we study the resolution proof system.
We start by showing that Horn-Resolution ($\HRes$) is complete for least 
fixed-point logic ($\LFP$) wrt.\ (many-to-one) first-order 
interpretations, see Theorem~\ref{thm:hres:lfp}.
In a second step, we consider bounded-width resolution ($\kRes$, for $k 
\geq 2$). We show that bounded-width resolution is strictly weaker than 
Horn-Resolution from the viewpoint of finite model theory. Specifically, 
we prove that $\kResx 2$ is complete for transitive closure logic 
$\FOTC$ (Theorem~\ref{thm:2res:tc}) and that for every $k \geq 3$, $\kRes$ 
is complete for the existential fragment of least fixed-point logic 
($\EFP$), see Theorem~\ref{ResKinEFP}. Since it is known that $\FOTC < 
\EFP < \LFP$, this separates the power of these polynomial-time 
restrictions of the resolution proof system.

\subsection{Horn Resolution Captures Least Fixed-Point Logic}

Let $\posLFP$ be the fragment of \LFP-formulas that are in
negation normal form (i.e.\ negation is applied only to 
input atoms), in which each fixed-point variable is bound only once, and 
that do not make use of greatest fixed points.
Further, let $\EFP_0$ be the basic existential fragment of LFP;
it consists of those formulas in $\posLFP$ whose quantifiers are all
existential.

It is known that, on finite structures (but not in general), every 
LFP-formula can be effectively
translated into an equivalent one in $\posLFP$. On the other side $\EFP_0$ 
is strictly weaker;
it has the same expressive power as Datalog with negation of input atoms.

\begin{thm}\label{LFP-Res}  For every $\phi\in\posLFP[\tau]$ 
there is a
first-order interpretation $I_\phi$ that maps  finite $\tau$-structures
to propositional Horn formulas  $\psi_{\AA,\phi}$ such that
$\AA\models\phi$ if, and only if, $\psi_{\AA,\phi}$ is unsatisfiable.
Further, if $\phi$ is in $\EFP_0$ then
all clauses in $\psi_{\AA,\phi}$ have width at most three.   
\end{thm}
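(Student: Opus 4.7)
The plan is a Datalog-style translation of $\phi$ into Horn clauses, by induction on the structure of $\phi$. For each subformula $\theta(\bar{x})$ of the (fixed) formula $\phi$ and each tuple $\bar{a}\in A^{|\bar{x}|}$, I introduce a propositional variable $X_\theta^{\bar{a}}$, intended to be true in the least Horn model exactly when $\AA\models\theta(\bar{a})$. The clauses of $\psi_{\AA,\phi}$ are generated by one schema per connective: a unit clause $X_\theta^{\bar{a}}$ for every input literal satisfied by $\AA$ at $\bar{a}$; the ternary clause $X_{\theta_1}^{\bar{a}}\wedge X_{\theta_2}^{\bar{a}}\to X_\theta^{\bar{a}}$ for conjunctions; binary clauses $X_{\theta_i}^{\bar{a}}\to X_\theta^{\bar{a}}$ for disjunctions; binary clauses $X_{\theta'}^{b,\bar{a}}\to X_\theta^{\bar{a}}$ (one per $b\in A$) for existentials; a single Horn clause $\bigwedge_{b\in A}X_{\theta'}^{b,\bar{a}}\to X_\theta^{\bar{a}}$ for universals; and for a fixed-point subformula $[\lfp R\bar{y}.\psi](\bar{t})$, the translation of $\psi$ (with $X_R^{\bar{b}}$ used for occurrences of $R$) together with the rules $X_\psi^{\bar{b}}\to X_R^{\bar{b}}$ and the link $X_R^{\bar{t}[\bar{a}]}\to X_\theta^{\bar{a}}$. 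Finally I add the negative unit clause $\neg X_\phi$. All clauses produced are Horn.

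Correctness would be proved by structural induction on $\theta$: the unique minimal model $M$ of the positive part of $\psi_{\AA,\phi}$ satisfies $X_\theta^{\bar{a}}\in M$ iff $\AA\models\theta(\bar{a})$, so after adjoining $\neg X_\phi$ the formula is unsatisfiable iff $\AA\models\phi$. The only non-routine case is the fixed-point rule, where I rely on the fact that the least fixed point of a monotone Horn operator is computed by its stage iteration, which matches the stage semantics of $\lfp$. For nested fixed points, since $\phi\in\posLFP$ every fixed-point variable appears positively, so the Bekic principle lets me collapse the nested computation into one simultaneous fixed point whose stages are precisely those of the combined Horn closure. I expect this fixed-point case, and in particular the handling of nesting and of parameter variables of $\lfp$-operators, to be the main technical hurdle.

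That $\AA\mapsto\psi_{\AA,\phi}$ is a first-order interpretation is straightforward once the construction is in place. Because $\phi$ is fixed, it has only finitely many subformulas $\theta_1,\dots,\theta_k$, which act as sort-tags over a universe of tuples from $A$ representing propositional variables and clauses. The relations $V,C,P,N$ of the chosen CNF-vocabulary are then defined by a finite disjunction over the generation schemata, each disjunct being a first-order formula over $\tau$ that only queries which input literals hold in $\AA$. The dimension of $I_\phi$ is bounded by the maximal arity of tuples occurring in subformulas of $\phi$.

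For the $\EFP_0$ addendum I simply inspect the widths produced by each rule: unit clauses have width one; disjunction, existential, fixed-point, and fixed-point-link rules produce binary clauses; conjunctions produce ternary clauses; only the universal rule produces clauses of unbounded width, and it does not occur in $\EFP_0$. Hence every clause in $\psi_{\AA,\phi}$ has width at most three, giving the additional conclusion.
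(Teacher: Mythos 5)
Your proposal is correct and follows essentially the same route as the paper: one propositional variable per instantiated subformula, the same Horn clause schemata for each connective (with only the universal rule producing wide clauses), correctness via the minimal model of the resulting Horn program, and the same width inspection for the $\EFP_0$ case. The only deviations are cosmetic — you omit the paper's clauses $X_\beta\ra 0$ for input literals false in $\AA$ (harmless, since the minimal model already leaves those variables false) and you route fixed points through explicit $X_R$ variables plus a link clause rather than folding $R\bar a$ into the same rule, which changes nothing of substance.
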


\lueg  Fix a formula $\phi\in \posLFP[\tau]$. For every finite 
$\tau$-structure $\AA$, with universe $A$,
we construct the propositional Horn formula   $\psi_{\AA,\phi}$ as follows.
An \emph{instantiated subformula} of $\phi$ is an expression $\beta(\bar 
a)$ 
which is obtained by taking some subformula $\beta(\bar x)$ of $\phi$ and 
by instantiating every free variable $x$ by some element $a\in A$.
We now take for every instantiated subformula $\beta$ of $\phi$ a 
propositional variable $X_\beta$, and inductively define a set 
$C(\AA,\phi)$ of clauses as follows.
\begin{enumerate}
\item If $\beta$ is a $\tau$-literal, then we add $1\ra X_\beta$ 
in case that
$\AA\models\beta$ and $X_\beta\ra 0$ in case $\AA\not\models \beta$.
\item If $\beta=\eta\lor\theta$, then we add the clauses $X_\eta\ra 
X_\beta$ and
$X_\theta\ra X_\beta$.
\item If $\beta=\eta\land\theta$, then we add the clause $X_\eta\land 
X_\theta\ra 
X_\beta$.
\item If $\beta=\E x\eta(x)$, then we add all clauses $X_{\eta(a)}\ra 
X_\beta$ for $a\in A$.
\item If $\beta=\A x\eta(x)$, then we add the clause $(\bigwedge_{a\in 
A}X_{\eta(a)})\ra X_\beta$.
\item If $\beta=[\lfp R\bar x \,.\, \eta ](\bar a)$ or $\beta=R\bar a$, then we 
add the clause $X_{\eta(\bar a)}\ra X_\beta$.
\end{enumerate}
  
By induction, it readily follows that the minimal model of all these 
clauses sets the variable
$X_\beta$ to true if, and only if,  $\AA\models\beta$
(with fixed-point variables interpreted by their least fixed-point on 
$\AA$).
Let now $\psi_{\AA,\phi}$ be defined as the conjunction of all clauses in 
$C(\AA,\phi)$
together with  $X_\phi\ra 0$.
Then $\psi_{\AA,\phi}$ is unsatisfiable if, and only if, $\AA\models\phi$.

We observe that the only clauses of size larger than three are those 
coming
from universal quantifiers. Hence, if there are no universal quantifiers, 
the
formula only has clauses of size at most three.
Finally it is clear that, for every fixed $\phi\in \posLFP[\tau]$,  we can 
interpret  (a representation of) the formula $\psi_{\AA,\phi}$ inside 
$\AA$,  by using an $\FO$-interpretation $I_\phi$.
\hx

\noindent
This shows that $\LFP \leq \FO(\HRes)$. Actually we 
established a stronger result.

\begin{thm}	\label{LFPinIntHornRes}
For every formula $\phi \in \LFP$ there exists a first-order 
interpretation 
$J_\phi$ 
such that $\Q{\HRes}( J_\phi)$ is equivalent to $\phi$ on 
finite structures. In particular, each 
$\LFP$-formula can be translated into an equivalent  $\FO(\HRes)$-formula 
with a single 
application of the generalised quantifier $\Q{\HRes}$.
\end{thm}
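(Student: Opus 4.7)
The plan is to reduce the theorem directly to Theorem~\ref{LFP-Res} by using the standard normal form result that every $\LFP$-formula is equivalent, on finite structures, to a $\posLFP$-formula. Given $\phi \in \LFP[\tau]$, I would first pass to an equivalent $\phi' \in \posLFP[\tau]$. Theorem~\ref{LFP-Res} then supplies a first-order interpretation $I_{\phi'}$ mapping every finite $\tau$-structure $\AA$ to (the relational encoding of) a propositional Horn formula $\psi_{\AA,\phi'}$ with the property that $\AA \models \phi'$ if and only if $\psi_{\AA,\phi'}$ is unsatisfiable.

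Next I would invoke the well-known fact that Horn resolution is refutationally complete for Horn clauses. Hence the class $\Kk_{\HRes}$ of (encodings of) Horn formulas that admit an $\HRes$-refutation coincides with the class of unsatisfiable Horn formulas. Setting $J_\phi := I_{\phi'}$ and writing $\psi(\tup{z}) = \Q{\HRes} J_\phi$ for the corresponding $\FO(\HRes)$-formula obtained by applying the Lindstr\"om quantifier exactly once, we obtain the chain of equivalences
\[
\AA \models \phi \ \Longleftrightarrow\ \AA \models \phi' \ \Longleftrightarrow\ \psi_{\AA,\phi'} \text{ is unsatisfiable} \ \Longleftrightarrow\ J_\phi(\AA) \in \Kk_{\HRes} \ \Longleftrightarrow\ \AA \models \Q{\HRes} J_\phi.
\]
This immediately gives the desired $\FO(\HRes)$-sentence with only a single application of $\Q{\HRes}$.

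The only point requiring real care is the bookkeeping around encodings, which I expect to be the main (but minor) obstacle: we must check that $I_{\phi'}$ genuinely produces the relational representation of $\psi_{\AA,\phi'}$ expected by $\Q{\HRes}$ (with universe containing the propositional variables and clauses, and predicates specifying the positive and negative occurrences). Inspecting the construction in the proof of Theorem~\ref{LFP-Res}, the propositional variables are indexed by instantiated subformulas of $\phi'$, i.e.\ by tuples over $A$ of bounded arity, and the clauses associated with each connective or fixed-point operator depend only on first-order definable relations of $\AA$. Thus both the universe of $\psi_{\AA,\phi'}$ and its incidence relations are uniformly first-order interpretable in $\AA$, which is precisely what an $\FO[\tau,\tau_{\text{Horn}}]$-interpretation requires. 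Since the Lindstr\"om-extension framework is robust under first-order interdefinable encodings, any of the natural representations of CNF-formulas discussed in Section~\ref{sec:prel} may be used, completing the argument.
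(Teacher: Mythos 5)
Your proposal is correct and matches the argument the paper intends: the theorem is stated as an immediate strengthening of Theorem~\ref{LFP-Res}, obtained exactly as you do by passing to the $\posLFP$ normal form on finite structures, taking $J_\phi$ to be the interpretation $I_{\phi'}$ producing the Horn formula $\psi_{\AA,\phi'}$, and using refutational completeness of Horn resolution to replace ``unsatisfiable'' by ``$\HRes$-refutable'' in a single application of $\Q{\HRes}$. Your remarks on the first-order definability of the encoding are precisely the (routine) bookkeeping the paper leaves implicit.
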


We are ready to prove that $\FO(\HRes)$ has the same expressive 
power as $\LFP$.

\begin{thm}\label{thm:hres:lfp}
On finite structures,  $\LFP=\FO(\HRes)$.
\end{thm}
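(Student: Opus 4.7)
The plan is to prove both directions, but Theorem~\ref{LFPinIntHornRes} already gives $\LFP \leq \FO(\HRes)$ (with just a single $\Q{\HRes}$ application sufficient). So I would concentrate on the converse inclusion $\FO(\HRes) \leq \LFP$, and prove it by induction on the construction of $\FO(\HRes)$-formulas.

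The central lemma I would isolate is the following: there is an $\LFP$-sentence $\phi_{\mathrm{HornUnsat}}$ over the vocabulary used to encode propositional CNF-formulas such that, for every Horn formula $\psi$, the representing structure $\AA(\psi)$ satisfies $\phi_{\mathrm{HornUnsat}}$ if, and only if, $\psi$ is unsatisfiable. The idea is the classical unit-propagation algorithm: define, by a least fixed point, the set $T$ of propositional variables that are forced to be true, via the rule that $y\in T$ whenever there is a Horn clause $x_1\land\cdots\land x_k\to y$ in $\psi$ with $\{x_1,\ldots,x_k\}\subseteq T$ (including the base case $k=0$ of unit positive clauses). This is manifestly monotone in $T$ and hence $\LFP$-definable. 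Then $\psi$ is unsatisfiable precisely when some goal clause $x_1\land\cdots\land x_k\to 0$ has all its premises in $T$, which is a simple first-order condition on top of the fixed point.

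Granted this lemma, the induction on $\FO(\HRes)$-formulas is routine, the only nontrivial step being the Lindstr\"om quantifier case. Given a formula $\psi(\tup z)=\Q{\HRes}\I(\tup z)$ where $\I(\tup z)=(\phi_\delta,\phi_\approx,\phi_{S_1},\ldots,\phi_{S_\ell})$ and where, by induction hypothesis, each interpreting formula is equivalent (on finite structures) to some $\LFP$-formula, I would invoke the standard closure of $\LFP$ under $\LFP$-interpretations (the substitution lemma for fixed-point logic): applying $\phi_{\mathrm{HornUnsat}}$ through the interpretation $\I$, one obtains an $\LFP$-formula equivalent to $\psi(\tup z)$ on every finite structure where $\I(\tup z)$ is defined, and a suitable guard $\LFP$-formula expressing that $\I(\tup z)$ is defined (i.e.\ that $\phi_\approx$ is a congruence for $\phi_\delta$ and the $\phi_{S_i}$) can be added as a conjunct.

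The main obstacle is purely bookkeeping: verifying that $\LFP$ is indeed closed under $\LFP$-interpretations of arbitrary dimension and with parameters, so that the inductive step goes through uniformly, and being careful with the congruence condition implicit in the semantics of interpretations. The actual logical content, namely that Horn-unsatisfiability is $\LFP$-definable via the least-fixed-point description of unit propagation, is conceptually the easy part. Combined with Theorem~\ref{LFPinIntHornRes}, this completes the proof that $\LFP=\FO(\HRes)$ on finite structures.
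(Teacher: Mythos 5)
Your proposal is correct and follows essentially the same route as the paper: the converse inclusion is reduced to the $\LFP$-definability of Horn unsatisfiability, which the paper also expresses by a least fixed point of the forward-chaining (unit-propagation) rule on derivable unit clauses together with a first-order check for a violated purely negative clause. The only difference is that you spell out the induction over $\FO(\HRes)$-formulas and the closure of $\LFP$ under interpretations explicitly, which the paper leaves implicit.
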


It remains to  show that $\FO(\HRes) \leq \LFP$, that is we have to express
Horn resolution in \LFP. Recall that a propositional
Horn formula $\psi$  admits a derivation of the empty clause if,
and only if, $\psi$ contains a clause in which
all variables appear negatively, written $X_1 \wedge \dots \wedge X_k\ra 
0$,
such that all unit clauses $\{X_i\}$ for $i=1,\dots, k$
can be derived from $\psi$ by Horn resolution.

Let $\psi$ be presented as a structure $\AA(\psi)$ with universe
$C\cup V$ and vocabulary $\{C,V,P,N\}$.
Let $D$ be the set of variables $v\in V$ such that
the clause $\{v\}$ can be derived from $\psi$ by
Horn resolution. Then $\psi$ is unsatisfiable if, and only if,
$\AA(\psi)\models  \E c (Cc\land \neg\E x Pxc \land \A x(Nxc\ra Dx))$.
The set $D$ is definable by the \LFP-formula
$ [  \lfp   Dx\,.\,  \E c (Pxc  \land \A y (Nyc \ra Dy))](x)$.

\subsection{Bounded-Width Resolution and Existential Least Fixed-Point 
Logic}
\label{sec:kres:efp}
Intuitively, \emph{existential least fixed-point logic} (EFP) extends 
$\EFP_0$ by stratified negation.
This means that it permits fixed-point formulas over existential formulas 
which may depend on
closed fixed-point relations, defined in a lower stratum, and these can be 
used also in negated form.
Thus, negation (and hence, implicitly, also universal quantifiers) are 
present in a limited form, but
least fixed-point recursions may never go through negation or universal 
quantification.
In fact, EFP is equivalent to Stratified 
Datalog and is weaker than full \LFP~\cite{Dahlhaus87,Kolaitis91}.

\begin{defi} Existential fixed-point logic
$\EFP:=  \bigcup_{\ell \geq 0} \EFP_{\ell}$ 
generalises $\EFP_0$ as follows.  The stratum $\EFP_{\ell+1}$ is the 
closure under disjunction, conjunction and existential quantification
of formulas of the form $[\lfp \ R\bar{x}. \E \bar y 
\varphi(R,\bar{x},\bar 
y)](\bar{x})$ 
where $\phi(R,\bar x,\bar y)$ is obtained from a quantifier-free formula, 
that may contain positive and
negative occurrences of additional relations $S_1,\dots,S_m$, by 
substituting these relations by
formulas from $\EFP_\ell$. 
\end{defi}
Let us remark that the logic $\EFP$ is known under different names 
(we stick to the term $\EFP$ which was used 
in~\cite{Dahlhaus87,Kolaitis91}). For instance, in~\cite{ebbflu99}, the 
term \emph{bounded fixed-point logic} (BFP) is used to refer to the same 
logic. Another common name for this logic  is \emph{stratified fixed-point 
logic} (SFP) in reference to its equivalence with Stratified Datalog.  

Notice that first-order logic $\FO$ is contained in $\EFP$, but not in any 
bounded level $\EFP_\ell$, because every quantifier alternation in $\FO$
must be simulated by an additional level of stratified
negation, again see~\cite{Dahlhaus87,Kolaitis91}. 
For the same reason $\EFP$, but none of its levels $\EFP_\ell$, is closed 
under first-order operations. As a consequence of Theorem~\ref{LFP-Res} we 
can infer
\begin{thm}
	On finite structures, $\EFP \leq \FO(\kResx 3)$.
\end{thm}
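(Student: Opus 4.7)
The plan is to prove $\EFP \leq \FO(\kResx 3)$ by induction on the stratification level $\ell \geq 0$, with Theorem~\ref{LFP-Res} supplying the base case and the closure of $\FO(\kResx 3)$ under first-order operations and under substitution of formulas into interpretations driving the inductive step.

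For the base case, let $\phi \in \EFP_0$. Theorem~\ref{LFP-Res} provides an $\FO$-interpretation $I_\phi$ mapping every finite $\tau$-structure $\AA$ to a Horn CNF $\psi_{\AA,\phi}$ all of whose clauses have width at most three, with $\AA \models \phi$ iff $\psi_{\AA,\phi}$ is unsatisfiable. I would then invoke the classical fact that unit resolution is refutation-complete for Horn formulas, together with the observation that the resolvent of a unit clause against a clause of width at most three has width at most two. Hence unsatisfiability of $\psi_{\AA,\phi}$ is witnessed by a unit-resolution refutation whose intermediate clauses have width at most three, which is in particular a $\kResx 3$-refutation. Consequently $\AA \models \phi$ iff $\AA \models \mcQ_{\kResx 3}\, I_\phi$, giving $\EFP_0 \leq \FO(\kResx 3)$.

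For the inductive step, assume $\EFP_\ell \leq \FO(\kResx 3)$ and pick $\phi \in \EFP_{\ell+1}$. Since $\FO(\kResx 3)$ is closed under $\vee$, $\wedge$ and $\E$, it suffices to treat a single fixed-point atom $\beta := [\lfp R \bar x.\, \E \bar y\, \varphi(R, \bar x, \bar y, S_1, \dots, S_m)](\bar x)$ in which the relational symbols $S_j$ are replaced by $\EFP_\ell$-formulas $\chi_j$. Regarded as a $\posLFP$-formula over the extended vocabulary $\tau \cup \{S_1, \dots, S_m\}$, Theorem~\ref{LFP-Res} yields an $\FO[\tau \cup \{S_j\}]$-interpretation $I_\beta$ producing width-three Horn CNFs whose unsatisfiability is equivalent to $\beta$. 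In each defining formula of $I_\beta$ I would then substitute each positive (resp.\ negative) occurrence of an atom $S_j \bar t$ by the $\FO(\kResx 3)$-formula $\chi_j'$ (resp.\ $\neg \chi_j'$) supplied by the induction hypothesis. The outcome is an $\FO(\kResx 3)$-interpretation $J_\beta$, and applying $\mcQ_{\kResx 3}$ as in the base case produces an $\FO(\kResx 3)$-sentence equivalent to $\beta$.

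The two mildly subtle points will be: (i) the passage from Horn-Resolution to $\kResx 3$ in the base case, which relies on unit-resolution completeness for Horn formulas together with the width-preservation observation above; and (ii) verifying that substituting $\chi_j'$ for $S_j$ in the formulas of $I_\beta$ yields a legal $\FO(\kResx 3)$-interpretation regardless of the polarities of $S_j$, which is immediate from the closure of $\FO(\kResx 3)$ under Boolean combinations. I expect (i) to carry the only real content, since (ii) amounts to routine bookkeeping.
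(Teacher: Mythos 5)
Your proposal is correct and follows essentially the same route as the paper: the base case is Theorem~\ref{LFP-Res} applied to $\EFP_0$, and the inductive step rewrites an $\EFP_{\ell+1}$-formula as an $\EFP_0$-formula over $\EFP_\ell$-definable predicates, applies Theorem~\ref{LFP-Res} over the extended vocabulary, and uses closure of $\FO(\kResx 3)$ under first-order operations and nesting of generalised quantifiers. The only difference is that you make explicit, via unit-resolution completeness for Horn formulas and width preservation, why unsatisfiability of the width-three Horn formulas coincides with width-three refutability — a step the paper leaves implicit — and that argument is correct.
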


\lueg Theorem~\ref{LFP-Res} directly establishes this for $\EFP_0$.
So assume that the claim is established for $\EFP_\ell$.
Every formula in $\EFP_{\ell +1}$ can be written as an
$\EFP_0$-formula over predicates that are $\EFP_\ell$-definable.
Hence, by applying Theorem~\ref{LFP-Res} once more, it can be rewritten 
as an $\FO(\kResx 3)$-formula over predicates that are themselves
definable in $\FO(\kResx 3)$.
Since Lindstr\"om extensions of $\FO$ are closed under nesting of 
generalised
quantifiers, it follows that also $\EFP_{\ell+1}\leq  \FO(\kResx 3)$.
\hx   

We require clauses of width 3 for translating $\EFP$-formulas into Horn 
formulas.
In fact, if we restrict to clauses of width 2, then we obtain the power of 
first-order logic with a transitive closure operator $\FOTC$. This 
immediately follows from the fact that satisfiability of $2$-CNF formulas 
reduces to graph reachability, and from the reduction of graph 
reachability 
to the non-satisfiability problem for a $2$-CNF formula that we described 
in the introduction.
\begin{thm}
\label{thm:2res:tc}
On finite structures, $\FO(\kResx{2}) = \FOTC$.
\end{thm}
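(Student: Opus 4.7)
The plan is to establish the two inclusions separately, exploiting the two constructions already present in the introduction: the encoding of graph non-reachability as an unsatisfiable 2-CNF, and the completeness of 2-resolution for 2-CNF unsatisfiability.

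For the inclusion $\FOTC \leq \FO(\kResx 2)$, I would invoke the well-known fact that directed reachability is complete for $\FOTC$ under $\FO$-interpretations: every $\FOTC$-sentence can be rewritten, after one $\FO$-reduction, as a reachability query on an $\FO$-definable graph. The introduction already provides an $\FO$-interpretation $I$ that sends $(G,s,t)$ to the 2-CNF $\text{NonReach}(G,s,t)$ consisting of the implications $X_v \to X_w$ for $(v,w) \in E$ together with $X_s$ and $\neg X_t$, and satisfies $\text{NonReach}(G,s,t)$ unsatisfiable iff $t$ is reachable from $s$. Since the underlying formula is Horn of width at most $2$ and its unsatisfiability forces a $\kResx 2$-refutation (resolution is complete for 2-CNF and all derived clauses stay within width $2$), composing $I$ with the $\FO$-reduction from an arbitrary $\FOTC$-definable class to reachability yields a witness that the class lies in $\FO(\kResx 2)$.

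For the inclusion $\FO(\kResx 2) \leq \FOTC$, I would proceed by induction on the nesting of $\mcQ_{\kResx 2}$, using that $\FOTC$ is closed under $\FO$-interpretations. So it suffices to show that the class of 2-CNF formulas admitting a $\kResx 2$-refutation is $\FOTC$-definable. By the classical completeness of width-2 resolution for 2-CNF, a 2-CNF $\psi$ admits a $\kResx 2$-refutation iff $\psi$ is unsatisfiable, and the latter is captured by reachability in the \emph{implication graph} $G_\psi$: on the set of literals of $\psi$, add an edge $\neg\ell_1 \to \ell_2$ and $\neg\ell_2 \to \ell_1$ for every binary clause $\ell_1 \lor \ell_2$, and an edge $\neg\ell \to \ell$ for every unit clause $\ell$. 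Then $\psi$ is unsatisfiable iff there exists a variable $x$ such that both $x$ is reachable from $\neg x$ and $\neg x$ is reachable from $x$ in $G_\psi$. The graph $G_\psi$ is $\FO$-definable from $\psi$, reachability is definable by a single application of the $\tc$ operator, so the condition is $\FOTC$-expressible.

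The main obstacle, if any, is purely bookkeeping: one must fix a structural encoding of 2-CNFs in which literals (and their negations) together with clauses form $\FO$-definable sorts, so that the two edge types of $G_\psi$ and the map $\ell \mapsto \neg\ell$ are all $\FO$-definable. With the encoding $(A,C,L,\neg,\in)$ from Section~\ref{sec:prel} this is routine. Once the encoding is in place, both inclusions reduce to the two facts mentioned in the introduction, which explains why the author calls the result immediate.
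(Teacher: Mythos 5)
Your second inclusion, $\FO(\kResx 2) \leq \FOTC$, is fine and is essentially the argument the paper has in mind: induct on the nesting of $\Q{\kResx 2}$, use closure of $\FOTC$ under $\FO$-interpretations, and capture $\kResx 2$-refutability of a 2-CNF via the implication graph (using that resolvents of width-$2$ clauses again have width $2$, so width-$2$ resolution is complete for 2-CNF unsatisfiability). The gap is in your first inclusion. You justify $\FOTC \leq \FO(\kResx 2)$ by asserting that every $\FOTC$-sentence can be rewritten, after a \emph{single} $\FO$-reduction, as one reachability query, so that one application of $\Q{\kResx 2}$ suffices. That single-application normal form (Immerman's) is only valid in the presence of a linear order; on arbitrary finite structures it fails, and the paper itself says so: the remark following Theorem~\ref{ResKinEFP} uses the Gr\"adel--McColm hierarchy theorem~\cite{GrMc96} to conclude that formulas with bounded nesting depth of the resolution quantifiers are strictly weaker than the full Lindstr\"om extension, and it states explicitly that $\kRes$ is complete for $\FOTC$ only with respect to first-order \emph{Turing} reductions, not many-one reductions. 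Concretely, a single $\tc$-application over an $\FO$-definable graph unwinds into an $\Linf$-formula with one unbounded disjunction block, whereas \cite{GrMc96} exhibits $\FOTC$-sentences requiring unboundedly many such blocks; so the completeness claim you invoke would prove something false.

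The repair is easy and is presumably what the paper's ``immediately follows'' intends: argue by induction on the structure of the $\FOTC$-formula and replace each individual occurrence of the $\tc$ operator by one application of $\Q{\kResx 2}$ via your $\text{NonReach}$ interpretation, where the edge relation of the interpreted graph is defined by the (already translated) body of that $\tc$ operator. Since in a Lindstr\"om extension the interpretation formulas may themselves come from $\FO(\kResx 2)$ and the generalised quantifiers may be nested, this yields $\FOTC \leq \FO(\kResx 2)$ without any normal form, at the price of one quantifier application per $\tc$ operator rather than one overall.
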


\subsection{Simulating Bounded-Width Resolution in EFP}
To express width-$k$ resolution, for fixed $k \geq 1$, in 
\EFP, we 
shall use
the representation of a CNF-formula $\psi$ by structures 
$\AA(\psi)=(A,C,L,\neg,\in)$
where $C$ is the set of clauses and $L$ is the set of literals, and the 
universe
is $A=C\cup L\cup\{0\}$.
Further we shall describe the set of all derivable clauses
of size at most $k$ as a $k$-ary relation $D\subseteq (L\cup\{0\})^k$,
that contains those $k$-tuples $(x_1,\dots,x_k)$ for which $\{ x_i: i\leq 
k, x_i\neq 0\}$ is
a clause that is derivable from $\psi$.
This relation $D$ is defined by a fixed-point formula  $[\lfp D\tup x\,.\,
\phi(D,\tup x)](\tup x)$
where $\phi(D,\tup x)$ expresses the following. Either
\begin{enumerate}
\item there exists a clause $c\in C$ such that $c= 
\{x_1,\dots,x_k\}\setminus\{0\}$, or
\item there exist tuples $\tup y, \tup z\in D$ such that, for some $i,j$, 
the literal $z_j$ is the
negated literal to $y_i$, and $(\{ 
y_1,\dots,y_k\}\cup\{z_1,\dots,z_k\})\setminus\{y_i,z_j,0\}=\{x_1,\dots,
x_k\}\setminus\{0\}$.
\end{enumerate}

When spelling out these equations in first-order logic, we can express 
$\phi(\tup x,D)$ by an existential $\FO$-formula
$\E \tup y \,  \alpha(\tup x,\tup y,D,Q)$ where $Q$ is $\FO$-definable 
by a formula (with quantifier prefix $\E^*\A$) that does not depend 
on $D$. This yields a formula in $\EFP_1$.
Since $\EFP$ is closed under $\FO$-operations, this proves 

\begin{thm}
	\label{ResKinEFP}
	On finite structures, $\FO(\kRes) = \EFP$ for all $k \geq 3$.
\end{thm}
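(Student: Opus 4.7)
The plan is to verify the two inclusions $\EFP\leq \FO(\kRes)$ and $\FO(\kRes)\leq \EFP$ for every $k\geq 3$. The first is immediate from Theorem~\ref{LFP-Res}: its construction produces clauses of width at most three, hence of width at most $k$, and every $\kResx{3}$-refutation is also a $\kResx{k}$-refutation, so the same $\FO$-interpretation witnesses unsatisfiability through $\Q{\kRes}$ as through $\Q{\kResx 3}$. The bulk of the work is the reverse inclusion, for which I would formalise the sketch preceding the theorem.

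Given a CNF $\psi$ represented as $\AA(\psi)=(A,C,L,\neg,\in)$, I would encode the set of derivable clauses of width at most $k$ as a $k$-ary predicate $D\subseteq (L\cup\{0\})^k$, where $(x_1,\dots,x_k)$ stands for the clause $\{x_i : x_i\neq 0\}$. Then $D$ is the least fixed point of a formula $\phi(D,\tup x)$ that disjoins the two cases of the sketch: either $\tup x$ encodes an input clause, or $\tup x$ is the resolvent of two tuples $\tup y,\tup z\in D$ along a pivot $(i,j)$ with $z_j=\neg y_i$; a $\kRes$-refutation of $\psi$ then exists precisely when $(0,\dots,0)\in D$. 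The technical obstacle is to render $\phi(D,\tup x)$ as an existential formula in $D$: the culprit is the set-equality condition of the resolvent case, whose naive rendering requires a $\A$-quantifier ranging over literals of the input CNF. My approach is to factor this condition through an $\FO$-definable predicate $Q(\tup x,\tup y,\tup z,i,j)$ that does not mention $D$; because $k$ is a fixed constant, $Q$ is definable by a formula with prefix $\E^*\A$ that enumerates the indices and checks that the resolvent of $\tup y,\tup z$ along $(i,j)$ coincides as a set with $\{x_1,\dots,x_k\}\setminus\{0\}$. Once $Q$ is available, $\phi(D,\tup x)$ takes the shape $\E\tup y\E\tup z\E i\E j\,\alpha(\tup x,\tup y,\tup z,i,j,D,Q)$ with $\alpha$ quantifier-free and positive in $D$, so the least fixed point lies in $\EFP_1$, and a further first-order step yields $(0,\dots,0)\in D$.

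It remains to handle nesting of the generalised quantifier. By induction on formula structure, a formula $\Q{\kRes}\I(\tup z)$ whose defining interpretation $\I$ is built from $\FO(\kRes)$-subformulas has those subformulas in some stratum $\EFP_\ell$ by the induction hypothesis. Substituting these into the construction above produces a least-fixed-point formula whose body is existential over predicates already defined in $\EFP_\ell$, matching precisely the definition of $\EFP_{\ell+1}$. Invoking closure of $\EFP$ under $\FO$-operations, a straightforward induction then places every $\FO(\kRes)$-sentence in $\EFP$, completing the proof. The main difficulty, as indicated, is the careful bookkeeping required to keep every occurrence of $D$ positive and existential while pushing all universal and matching conditions into the $D$-free auxiliary $Q$; once this is in place, the closure argument is routine.
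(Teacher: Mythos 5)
Your treatment of the substantive direction, $\FO(\kRes)\leq \EFP$, is essentially the paper's own argument: the same $k$-ary relation $D$ over $(L\cup\{0\})^k$, the same trick of pushing the set-equality check for the resolvent into a $D$-free, $\FO$-definable predicate $Q$ with prefix $\E^*\A$ so that the fixed-point body is existential and positive in $D$ (landing in $\EFP_1$), and the same appeal to closure of $\EFP$ under $\FO$-operations together with an induction over the nesting depth of $\Q{\kRes}$. That part is fine.

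The gap is in the forward inclusion. It is not ``immediate from Theorem~\ref{LFP-Res}'': that theorem applies to $\posLFP$-formulas and guarantees width-three clauses only when the formula lies in $\EFP_0$ (no universal quantifiers). A general $\EFP$-formula uses stratified negation of fixed-point predicates, so it is not a $\posLFP$-formula at all, and rewriting it as one (via $\LFP\leq\posLFP$ on finite structures) reintroduces universal quantification, whose clauses in the construction of Theorem~\ref{LFP-Res} have unbounded width. As justified, your first step only yields $\EFP_0\leq\FO(\kRes)$. The repair is exactly the stratum-by-stratum induction you already carry out for the converse direction (and which the paper makes explicit in its separate statement that $\EFP\leq\FO(\kResx 3)$): every formula in $\EFP_{\ell+1}$ is an $\EFP_0$-formula over $\EFP_\ell$-definable predicates, so applying Theorem~\ref{LFP-Res} to the $\EFP_0$ part and substituting $\FO(\kRes)$-definitions for the lower-stratum predicates, using closure of Lindstr\"om extensions under nesting of $\Q{\kRes}$, gives $\EFP_{\ell+1}\leq\FO(\kRes)$. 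With that induction added (and your correct remark that width-$3$ refutations are width-$k$ refutations for $k\geq 3$), your proof coincides with the paper's.
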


Another interesting observation is that if we restrict the nesting depth 
of $\kRes$-quantifiers in $\FO(\kRes)$-formulas to some constant $d 
\geq 1$, then we obtain a fragment $\FO(\kRes)^{\leq d}$ of $\FO(\kRes)$ 
which is strictly 
less expressive.
This follows from the results of Grädel and McColm~\cite{GrMc96} and the 
observation that formulas in $\FO(\kRes)^{\leq d}$ can be written as 
$\Linf$-formulas with at most $d$ many nested unbounded quantifier 
blocks.
However, as Grädel and McColm show there are formulas of transitive 
closure logic $\FOTC$ which require more than $d$ many such blocks when 
expressed as equivalent $\Linf$-formulas.
Since $\FOTC \leq \FO(\kRes)$, it follows that for every $d \geq 1$ we 
have $\FO(\kRes)^{\leq d} < \FO(\kRes)$.
Note that this is different from the case of Horn-Resolution 
where nesting of $\HRes$-quantifiers was not necessary.
In other words, while Horn-Resolution $\HRes$ is many-to-one complete for 
$\LFP$ wrt.\ first-order interpretations, $\kRes$ is only complete wrt.\  
first-order Turing reductions for $\FOTC$ and $\EFP$, respectively.

\section{The Polynomial Calculus over the Field of Rationals and 
Fixed-Point Logic with Counting}
\label{sec:fpc}
We now turn our attention to the \emph{polynomial calculus} (\PC). The 
polynomial calculus is an important and well-studied propositional proof 
system that is based 
on algebraic reasoning techniques. The idea is to represent Boolean 
formulas by polynomial equation systems over some field $\field F$ and to 
show that, by 
manipulating these polynomial equations, one can derive an inconsistent 
equation such as $0 = 1$.
Analogous to the case of bounded-width resolution $\kRes$, it 
is 
possible to stratify the polynomial calculus along a parameter $k \geq 
2$ to obtain polynomial-time fragments. More precisely, if we restrict the 
degree of all polynomials in $\PC$-refutations to some constant $k \geq 
2$, then we obtain (incomplete) fragments $\PCx k$ of the full $\PC$ in 
which proofs can be found in polynomial time (over the rationals, we must also restrict the bit-complexity of the coefficients to ensure this).
One can define the polynomial calculus with respect to any underlying 
field~$\field F$. Throughout this section, this underlying field~$\field 
F$ will always be the field of rationals $\mbQ$. In the following 
Section~\ref{sec:PCfiniteFields}, we turn our attention to the case of finite 
fields.

Another important fragment of the polynomial calculus is the so-called 
\emph{monomial-PC} (\monPC). This restricted variant of the full PC 
was introduced by Berkholz and Grohe in~\cite{BerkholzGro15}. Their 
intention was to precisely characterise the power of a 
combinatorial graph isomorphism test, the so-called Weisfeiler-Leman 
algorithm~\cite{CFI92},
in terms of propositional proof complexity.
Specifically, they proved that two graphs $G$ and $H$ can be distinguished 
by the $k$-dimensional Weisfeiler-Leman algorithm if, and only 
if, the $k$-dimensional monomial-PC ($\monPCx k$) can refute the 
solvability of a certain system of polynomial equations $\IsoForm(G,H)$ 
over $\mbQ$ which encode the graph isomorphism problem for $G$ and $H$.
In this section, we analyse the power of the monomial-PC and the (full) PC 
from the perspective of finite model theory.
In our main result we are going to show that both proof systems, the 
bounded-degree monomial-PC and the bounded-degree and bounded bit-complexity PC, have precisely the same expressive power as 
fixed-point logic with counting (\FPC), which is a 
natural and powerful logic of great importance in the area of descriptive 
complexity theory (see Theorem~\ref{thm:FPCgleichPC}). For the bounded-degree PC over $\bbQ$ without any restriction on the complexity of the coefficients, we show that it is contained in finite variable infinitary counting logic.
As a consequence, the correspondence between the Weisfeiler-Leman 
algorithm 
and the monomial-PC can be generalised to the full PC (though we have to 
sacrifice the 
tightness of the connection between the degree of polynomials 
and the dimension of the Weisfeiler-Leman algorithm), see 
Theorem~\ref{thm:monPCvsPC:GI}.

Our proof consists of three parts.
First of all, we show that proofs in the monomial-PC can be expressed in 
$\FPC$, see Subsection~\ref{subsec:monpc:infpc}. This shows that 
$\FOnum(\monPCx k) \leq \FPC$.
After that, we show in Subsection~\ref{subsec:fpc:inmonpc} that 
$\FPC$-iterations can be simulated using the monomial-PC. Taken together 
this shows that $\FOnum(\monPCx k) = \FPC$.
As the final step, in Subsection~\ref{subsec:fullpc:infpc}, we show that 
$\FPC$ can also express degree-$k$ refutations of bounded bit-complexity in the (full)
polynomial calculus over $\mbQ$, which also entails that $\PC_k$ with unbounded coefficients can be simulated in $\Cinf$.

\subsection{The Polynomial Calculus}
We start with some background on the polynomial calculus and its 
restricted variant, the monomial-PC.
Both systems refute the solvability of a given set of \emph{(multivariate) 
polynomial equations} over some  field $\mbF$ using proof rules 
that manipulate such equations.  
In this paper, $\mbF$ will either be the field of rationals $\mbQ$ or a 
finite field $\mbF_{p^n}$ of size $p^n$ for $p \in \Primes$, where $\Primes$ 
denotes the set of primes and where $n \geq 1$.
We denote by $\field F[\mcX]$ the ring of polynomials in 
variables $\mcX = \{ X_j : j \in J\}$, for some 
index set $J$ and with 
coefficients in $\mbF$.
For an ``exponent''  $\alpha: J \to \mbN$ we let the \emph{monomial} 
$X^\alpha$ be defined as $X^\alpha = \Pi_{j \in J} X_j^{\alpha(j)}$.
Then polynomials $f \in \field F[\mcX]$ can be written as $f = 
\sum_{\alpha} f_\alpha \cdot X^\alpha$ where the $f_\alpha \in \mbF$ are 
coefficients from the field $\mbF$ and such that $f_\alpha \neq 0$ for 
finitely many $\alpha$ only. 
The \emph{degree} $\deg(X^\alpha)$ of a monomial $X^\alpha$ is defined as 
$|\alpha| = \sum_{j \in J} 
\alpha(j)$, and the degree  of a polynomial is defined as the maximal 
degree of its monomials. 
A \emph{polynomial equation} is an equation of the form $f = 0$ for a 
polynomial $f \in \field F[\mcX]$. For better readability, we usually omit 
the 
equality ``$= 0$'' when we specify polynomial equations, that 
is we identify polynomials $f \in \field F[\mcX]$ with the corresponding 
normalised polynomial equations $f = 0$.
A \emph{system of polynomial equations} is a set $\mcP = \{ f_i : i \in I 
\}$ 
consisting of polynomials $f_i \in \field F[\mcX]$ for all $i \in I$ where 
$I$ is an (unordered) index set.
A \emph{solution} of $\mcP$ is a common zero $a \in \mbF^J$ of all 
polynomials in $\mcP$.
In what follows, we only consider systems $\mcP = \{ f_i : i \in I 
\}$ which 
contain for every variable $X = X_j$, $j \in J$, the polynomial equation
$(X^2 - X) = 0$.
The axioms $(X^2 - X) = 0$ enforce that each variable $X=X_j$, 
$j 
\in J$, can only take values $0$ or~$1$.
These equations encode the Boolean setting (truth values) that we are 
interested in.

The polynomial calculus is based on the following result from algebra 
which is known as \emph{Hilbert's Nullstellensatz}.
It says that the non-solvability of the system $\mcP=\{f_i:i\in I\}$ of 
polynomial equations is equivalent to the existence of polynomials $g_i 
\in 
\field F[\mcX], i \in 
I$, such that $\sum_{i \in I} g_i \cdot f_i = 1.$
The polynomials $g_i$ are called a \emph{Nullstellensatz 
refutation} for the system $\mcP$. 
The idea of the polynomial calculus is to search for such polynomials 
$g_i$ in a sequential way. 

\begin{defi} 
The inference rules of the \emph{polynomial calculus} ($\PC$) 
over the ring of polynomials $\field F[\mcX]$ are as follows:
\smallskip
\begin{align*}
&\text{(Multiplication)}\qquad &&\quad\frac{\, f \,}{Xf} 
&&\text{ where } X \in \mcX \\ 
&\text{(Linear Combination)}\qquad  
 && \frac{g , \, f}{ag+bf}
 && \text{ where } a, b \in \mbF
\end{align*}
The goal of the polynomial calculus is to derive with these rules, 
from a collection $\mcP$ of \emph{axioms} $p \in \mcP$,
the constant polynomial $1 \in \mbF[\mcX]$,  in order to prove that the 
polynomials in~$\mcP$ have no common zero.

The \emph{monomial-PC}  ($\monPC$) is the restriction of the (full) 
$\PC$ that permits the use of the multiplication rule only in the cases 
where $f$ is either a 
monomial or the product of a monomial and an axiom.	
A polynomial equation system $\mcP$ has a \emph{refutation} of 
degree $k \geq 1$
in the $\PC$ (or $\monPC$)  if the constant
polynomial $1 \in \field F[\mcX]$ can be derived from $\mcP$ using only 
polynomials of degree at most $k$.
\end{defi}

The polynomial calculus, and the monomial-PC, are clearly sound and, by 
Hilbert's 
Nullstellensatz, also complete proof systems. However, completeness 
requires unbounded degree in refutations. In fact, as we 
indicated before, the ``degree of polynomials'' for 
the $\PC$ ($\monPC$) is a complexity measure with very similar properties 
as the 
``width of clauses'' measure for the resolution proof system.
If we restrict the $\PC$ ($\monPC$) to polynomials of degree at most $k$, 
for some fixed $k\geq 1$,
then the systems become incomplete, but admit proof search in polynomial 
time (again, for $\PC$ over $\bbQ$, we must also restrict the bit-complexity of the coefficients).
In what follows, whenever we speak of the monomial-PC or the (full) PC, 
then we usually refer to a variant with restricted degree $k \geq 1$. If 
we want to make 
this constant $k$ explicit, then we 
denote the corresponding proof system by $\monPCx k$ and $\PCx k$, 
respectively.
Another fact which we use implicitly throughout this section is 
that the axioms $(X^2 - X)$ guarantee that in (monomial-)PC 
proofs we can restrict ourselves to \emph{multilinear} 
polynomials. 
To see this, say that we were able to derive the polynomial 
$p = X^2 Y + Z$ within some (monomial)-PC proof. Of course, $p$ is not 
multilinear. 
However, we can use the axiom $(X^2- X)$ together with the ``linear 
combination''-rule to reduce this polynomial to the corresponding 
multilinear polynomial $p' = X Y + Z$. Indeed, $p' = p - Y(X^2-X)$.
Hence, restricting to multilinear polynomials, and modifying the 
multiplication rule accordingly with implicit linearisation, does not
change the power of the corresponding proof systems.
For a polynomial $p \in \field F[\mcX]$ we denote its 
\emph{multilinearisation} by $\ML(p)$.
So, from now on we stick to the setting of implicitly multilinearising all 
polynomials which precisely captures the semantics of the polynomial 
equations $(X^2 - X) =0$.

We remind the reader that in this section the underlying 
field~$\field F$ for the (monomial-)PC is always the field of rationals 
$\mbQ$.

\subsection{Monomial-PC in Fixed-Point Logic with Counting}
\label{subsec:monpc:infpc}
Our first aim is to show that $\FPC$ can express $\monPCx k$-refutations
over the rationals using only $\mcO(k)$ many variables.
Of course, in order to obtain such a definability result, we have to agree 
on an encoding of sets~$\mcP$ of 
rational, multilinear polynomials
as finite relational structures.
Similar to our representation of CNF-formulas described in 
Section~\ref{sec:prel}, a natural encoding
can be based on a many-sorted structure $\mfA_\mcP$ whose universe is 
partitioned into sets 
of polynomials, (multilinear) monomials, variables, and rational 
coefficients that occur in $\mcP$.
As usual, we represent rationals as fractions of integers using 
binary encoding. Hence, $\mfA_\mcP$  provides a linear order of 
sufficient length to encode these binary strings.
Again, the exact technical details  are not important, as long as the 
encoding has some natural properties, such as $\FO$-definability of the 
class of valid encodings.
By a slight abuse of notation, we also denote by $\monPCx k$ the class of 
structures $\mfA_\mcP$ which encode a system $\mcP$ of polynomials over 
$\mbQ$ which can be refuted in  $\monPC_k$.

\begin{thm}\label{thm:monpc:in:fpc}
 For every $k \geq 1$, $\monPC_k \in \FPC$.
\end{thm}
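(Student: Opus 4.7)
The plan is to reduce $\monPCx k$-refutability to the solvability of a linear equation system over $\mbQ$ and then invoke the $\FPC$-definability of such systems.

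The first step is a structural observation: because the multiplication rule of $\monPC$ may only be applied to a polynomial $f$ that is already a monomial or a monomial-times-axiom, and because the Boolean axioms $X^2 - X$ implement the multilinearization operator $\ML$ via linear combinations, a routine induction on derivations shows that the set of polynomials derivable in $\monPCx k$ equals the $\mbQ$-linear span, inside the space $V_k$ of multilinear polynomials of degree at most $k$, of
\[
G_\mcP \;=\; \{\, \ML(X_S \cdot f) \,:\, f \in \mcP, \; S \subseteq \mcX, \; \deg(\ML(X_S \cdot f)) \leq k \,\}.
\]
Here $V_k$ has the natural $\mbQ$-basis $\{X_T : T \subseteq \mcX, |T| \leq k\}$ of multilinear monomials. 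Hence $\mcP$ admits a degree-$k$ refutation in $\monPC$ if, and only if, the constant polynomial $X_\emptyset = 1$ lies in the span of $G_\mcP$.

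The second step encodes this criterion as an explicit rational linear system. The basis of $V_k$ (subsets $T \subseteq \mcX$ with $|T| \leq k$) and the index set of $G_\mcP$ (pairs $(f, S)$) can both be represented by tuples of $\mcO(k)$ elements of the input structure $\mfA_\mcP$, hence are $\FO$-definable. Writing the generators of $G_\mcP$ as columns of a matrix $M$ whose rows are indexed by the basis of $V_k$, the question becomes whether the rational system $M\lambda = e_\emptyset$ has a solution, where $e_\emptyset$ is the column representing $X_\emptyset$. Since multilinearization merely collapses repeated variable occurrences in a monomial, and since the rationals in the encoding are given in binary on an $\FO$-definable linear order, both $M$ and $e_\emptyset$ are $\FPC$-definable from $\mfA_\mcP$ using $\mcO(k)$ variables.

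The final step invokes the $\FPC$-definability of solvability of rational linear equation systems, announced in the paper as one of its central linear-algebraic tools. Composing that $\FPC$-formula with the $\FPC$-definable encoding of $(M, e_\emptyset)$ yields an $\FPC$-sentence that holds in $\mfA_\mcP$ exactly when $\mcP$ admits a $\monPCx k$-refutation, proving $\monPCx k \in \FPC$. The heart of the difficulty lies precisely in this invoked result: deciding rational linear solvability in $\FPC$ is non-trivial because Gaussian elimination involves pivot choices that break symmetries, and mimicking them in a symmetry-respecting logic like $\FPC$ requires real work. Once that tool is in place, however, the reduction above is essentially bookkeeping, and the degree bound $k$ keeps $\dim V_k$ polynomial in $|\mcX|$, so everything fits within an $\mcO(k)$-variable $\FPC$-formula.
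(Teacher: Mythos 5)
There is a genuine gap, and it sits exactly where the real content of the theorem lies. Your first step claims that the polynomials derivable in $\monPCx k$ are precisely the $\mbQ$-span of the single, statically computed set $G_\mcP = \{\ML(X_S\cdot f) : f\in\mcP,\ \deg(\ML(X_S\cdot f))\le k\}$, so that refutability reduces to one linear system $M\lambda=e_\emptyset$. The ``routine induction on derivations'' fails at the multiplication step: in $\monPC$ one may multiply by a variable any monomial $m$ that has been \emph{derived}, and such an $m$ may only be obtainable as a linear combination of lines of degree exactly $k$ whose top-degree terms cancel; multiplying those lines by $X$ would exceed degree $k$, so $X\cdot m$ need not lie in $\langle G_\mcP\rangle$. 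Concretely, take $k=2$, variables $x,y,z,w$, and axioms $p_1 = xy+z$, $p_2 = xy$. Then $z=p_1-p_2$ is derivable, hence so is $wz$ (multiply the derived monomial $z$ by $w$), but every element of $G_\mcP$ of degree at most $2$ lies in the span of $\{xy,\,z,\,xz,\,yz\}$, which does not contain $wz$. So $\monPCx k(\mcP)\supsetneq\langle G_\mcP\rangle$ in general; your one-shot system only captures a static (Nullstellensatz-style) refutation, which is strictly weaker than degree-$k$ monomial-PC.

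This is precisely why the paper does not stop at the initialisation set (your $G_\mcP$ is exactly its starting $\mcB$) but runs an iterative closure in the spirit of Clegg--Edmonds--Impagliazzo: repeatedly test, for each monomial $m$ of degree $<k$, whether $m\in\langle\mcB\rangle$, and if so add all $\ML(X\cdot m)$ to $\mcB$, until $\langle\mcB\rangle$ stabilises; only then test $1\in\langle\mcB\rangle$. Each membership test ``$m\in\langle\mcB\rangle$'' is itself a rational linear-system solvability question, which is where Theorem~\ref{thm:solvq:fpc} enters --- not just once at the end, as in your write-up, but inside an inflationary fixed point over the polynomially many monomials of degree at most $k$. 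Your second and third steps (the $\mcO(k)$-variable encoding of monomials and generators, and the appeal to $\FPC$-definable solvability over $\mbQ$) are fine and match the paper's bookkeeping, but without the fixed-point closure the criterion you check does not characterise $\monPCx k$-refutability, so the proof as proposed does not establish the theorem.
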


Given a set of multilinear polynomials $\mcP$ of degree at most $k$, we 
consider the set $V_\mcP = \monPCx k(\mcP)$
of multilinear polynomials 
which can be derived 
from $\mcP$ using $\monPCx{k}$.
The first observation is that $V_\mcP$ is a $\mbQ$-linear space.
This easily follows since we can take $\mbQ$-linear combinations of 
polynomials that we derived.
Now, since this vector space $V_\mcP$ only contains multilinear 
polynomials of degree at 
most~$k$, 
we can naturally associate polynomials $p \in V_\mcP$ with vectors $p \in 
\mbQ^{M_k}$ where the index set $M_k$ denotes the set of all multilinear 
monomials of degree at most $k$.
For fixed $k \geq 1$, this set $M_k$ is of polynomial size $n^{\mcO(k)}$.

To prove Theorem~\ref{thm:monpc:in:fpc} we are going to express in $\FPC$
an inductive algorithm,  that is based on a similar algorithm for 
the full 
polynomial calculus from~\cite{CleggEdmImp96},
for computing a generating set for the $\mbQ$-linear space 
$V_\mcP$. Then, in order to see whether $\monPCx{k}$ can refute the 
system~$\mcP$, we simply check whether the constant polynomial $1$ 
is contained in $V_\mcP$, see Figure~\ref{basisAlgo}.

\begin{figure}[h]
\begin{algorithmic}

  \STATE
  \REQUIRE Set of multilinear polynomials $\mcP \subseteq \mbQ^{M_k}$
   
  \ENSURE $\mcB \subseteq \mbQ^{M_k}$ such that $\langle \mcB \rangle = 
\monPCx k(\mcP)$
\STATE~\COMMENT{where $\langle \mcB \rangle$ denotes the $\mbQ$-linear 
subspace generated by $\mcB$}

  \STATE $\mcB := \{\ML(m \cdot p) \ | \ p \in \mcP, m \text{ a monomial 
such that} \,\deg(\ML(m \cdot p)) \leq k\}$
\STATE~\COMMENT{Initialisation (lift all axioms in $\mcP$ up to degree $k$)}

 \REPEAT 
  \FORALL{monomials $m  \in \langle \mcB \rangle$, 
$\deg(m) < k$} 
\STATE{$\mcB := \mcB \cup \{ \ML(X \cdot m) : \text{for some variable 
$X$} \}$} 
\ENDFOR 
 \UNTIL{$\mcB$ remains unchanged}
  \RETURN $\mcB$
\end{algorithmic}
\caption{$\FPC$-procedure to define generating set for $V_\mcP = 
\monPCx{k}(\mcP)$}
\label{basisAlgo}
\end{figure}

During the run of the algorithm we iteratively construct a set $\mcB 
\subseteq V_\mcP$ of polynomials such that $\langle \mcB \rangle \leq 
V_\mcP$. Here, $\langle \mcB \rangle$ denotes the $\mbQ$-linear subspace 
generated by the polynomials in $\mcB$ (considered as $M_k$-vectors over 
$\mbQ$). Moreover, we ensure that at termination we have $\langle 
\mcB \rangle = V_\mcP$, see~Figure~\ref{basisAlgo}.
One important point to observe is that after the initialisation step we 
only add 
\emph{monomials} to the set $\mcB$. This closure operation is sufficient 
for the monomial-PC, since, except for the given axioms in $\mcP$ of which
we take care at initialisation, we can only use the multiplication (or 
lifting) rule for monomials.
Since there are only polynomially many 
different monomials of degree at most $k$, for a fixed $k$, this means 
that 
the algorithm is guaranteed to terminate after a polynomial number of 
iterations.

It is not obvious how to express this algorithm in $\FPC$. Most 
steps, such as 
the representation of the set $\mcB$ and the multilinearisation of 
polynomials, are easy to formalise, but there is a severe obstacle 
hidden in the condition for the main loop. Here, we want to iterate, in 
parallel, through all monomials $m \in \langle \mcB \rangle$. This 
condition ``$m \in \langle \mcB \rangle$''  translates to solving a linear 
equation system over $\mbQ$.
Although it is provably impossible to express the method of Gaussian 
elimination in 
$\FPC$, since it requires arbitrary choices during its computation, and 
although 
$\FPC$ cannot define 
the solvability of linear equation systems over finite 
fields \cite{AtseriasBulDaw09}, it is known \cite{DawarGroHolLau09}
that $\FPC$ can indeed express solvability of
linear equation systems over the rationals, see also 
Subsection~\ref{subsec:deflinFPC}.

\begin{thmC}[\cite{DawarGroHolLau09}]
\label{thm:solvq:fpc}
The solvability of linear equation systems over $\mbQ$ is definable in 
$\FPC$. 
\end{thmC}

Using this result we can express the algorithm from Figure~\ref{basisAlgo} 
in $\FPC$. In order to 
complete our proof of 
Theorem~\ref{thm:monpc:in:fpc} we just need to recall that $\monPCx k$ can 
refute 
$\mcP$ 
if, and only if, $1 \in \langle \mcB \rangle = V_\mcP$. 
This last assertion, again, reduces to deciding the solvability of a 
linear equation system over $\mbQ$ and it can thus, by 
Theorem~\ref{thm:solvq:fpc}, be defined in $\FPC$.

\subsection{Monomial-PC captures Fixed-Point Logic with Counting}
\label{subsec:fpc:inmonpc}
Next we show that the monomial-PC can simulate fixed-point logic with 
counting.
We first observe, however, that the logic $\FO(\monPCx k)$ does 
\emph{not} suffice for this purpose. This is due to the fact that $\FPC$ 
has access to the
second (numeric) sort, on which it can perform arbitrary polynomial-time 
computations, whereas $\FO(\monPCx k)$ is 
evaluated over standard single sorted input structures. 
To overcome this mismatch we have to extend the logic $\FO(\monPCx k)$ 
to the second-sorted framework as well. 
We denote this extension of $\FO(\monPCx k)$ by 
$\FOnum(\monPCx k)$.
As in the case of $\FPC$, this means that formulas are evaluated over 
extensions $\mfA^+$ of relational 
structures $\mfA$ by a numeric sort, as defined in 
Section~\ref{sec:prel}. In particular, interpretations for the Lindström 
quantifiers can make use of the second numeric sort, and we require this 
capability in the proof of our following result.

\begin{thm}\label{thm:fpc:in:monPC}
 For every $k \geq 2$, $\FPC \leq \FOnum(\monPCx k)$.
\end{thm}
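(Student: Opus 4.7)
The plan is to construct, by induction on the FPC-formula $\phi(\bar z)$, a single $\FOnum$-interpretation $\mcI_\phi$ that maps a structure $\mfA$ with parameters $\bar a$ to a polynomial equation system $\mcP_{\mfA,\bar a}$ over~$\mbQ$ satisfying
\[
\mfA \models \phi(\bar a) \Iff \mcP_{\mfA,\bar a} \text{ is refutable in } \monPCx k.
\]
Wrapping $\mcI_\phi$ with the Lindstr\"om quantifier $\mcQ_{\monPCx k}$ then yields an equivalent $\FOnum(\monPCx k)$-formula. The induction runs through the syntax of $\phi$; first-order, Boolean and counting constructions applied above any fixed-point operator are absorbed into the outer $\FOnum$ layer and reduce to straightforward combinations of polynomial systems. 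The essential case is the inflationary fixed-point $\phi = [\ifp R \bar x \,.\, \psi(R,\bar x)](\bar z)$.

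For this case, the iteration stabilises in at most $N = |A|^{|\bar x|}$ stages, which the numeric sort of $\FOnum$ can index explicitly. For each tuple $\bar c \in A^{|\bar x|}$ and each stage $t \in \{0, \dots, N\}$ I would introduce a propositional variable $X_{\bar c,t}$ with the intended meaning ``$\bar c \in R_t$''. The system $\mcP_{\mfA,\bar a}$ then contains the Boolean axioms $X_{\bar c,t}^2 = X_{\bar c,t}$, the initial axioms $X_{\bar c,0} = 0$, transition axioms encoding $X_{\bar c,t+1} = X_{\bar c,t} \lor \psi(R_t,\bar c)$, and the query axiom $X_{\bar a,N} = 0$. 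The polynomial encoding of $\psi(R_t,\bar c)$ in the transition axioms is obtained recursively: nested fixed-points contribute their own stage-indexed variables to the same system, and counting terms translate into linear equations $\sum_{\bar c} Y_{\bar c} = n$ over $\mbQ$, with $n$ drawn from the numeric sort. A $\monPCx k$-refutation then corresponds to forward propagation of the FPC iteration reaching $\bar a$ at some stage.

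The main obstacle will be keeping all polynomials of degree at most~$k$ while still simulating arbitrary nesting of fixed-points and counting operations. My resolution is to introduce a fresh propositional variable for every semantic ``unit''---each stage of each sub-fixed-point, each subformula occurrence at each stage, each counting term value---so that the axioms relating these variables only ever involve a bounded number of them, hence are polynomials of bounded degree. The overall system has polynomial size and is $\FOnum$-interpretable over $\mfA$. A secondary technical point is verifying that $\monPCx k$ can actually propagate information through all stages: since each transition axiom is linear in the stage-$(t{+}1)$ variables given the stage-$t$ values, the propagation proceeds by a sequence of linear combinations and monomial liftings, well within the degree-$k$ budget for $k \geq 2$. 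The linear-algebra manipulations required for the counting axioms are precisely the kind of subroutine encoded by Theorem~\ref{thm:solvq:fpc}, but used here in the ``easy'' direction: we merely need to write such linear systems down, not decide them.
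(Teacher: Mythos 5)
Your high-level plan (a direct, stage-indexed propositional encoding of the inflationary fixed-point iteration, wrapped in a single $\mcQ_{\monPCx k}$-quantifier) is a legitimately different route from the paper, which instead goes through the Gr\"adel--Hegselmann characterisation of $\FPC$ by acyclic \emph{threshold games}: an $\FO$-interpretation produces the model-checking game, and then a single degree-$2$ gadget $\mcP(\mcG)$ (with indicator variables $Y_v^m$ for ``exactly $m$ winning successors'', auxiliary variables $Z_v^m[u\mapsto j]$, and dual variables $\bar X$ with axioms $1-X-\bar X=0$) is shown, by induction on the game, to make $X_v=1$ derivable for winning positions and $X_v=0$ for losing ones. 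However, as written your proposal has a genuine gap precisely at the point where the paper does its real work: you assert that ``the propagation proceeds by a sequence of linear combinations and monomial liftings, well within the degree-$k$ budget'', but this is exactly what needs a proof, and the naive version fails. The monomial-PC only allows multiplication of monomials or of axioms by monomials; already the innocuous step of multiplying a \emph{derived} polynomial such as $1-X_{\bar c,t}$ by another variable is not a legal inference, which is why the paper has to introduce the negation axioms and carry out a careful inductive derivability argument. Your sketch gives no such argument, and without auxiliary machinery the obvious encodings of $\lor$, $\forall$ and threshold/counting subformulas (e.g.\ $X_\beta = 1-\prod(1-X_\eta)$) either blow up the degree or are not derivable in $\monPCx k$.

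A second, related problem is your treatment of counting terms: you propose axioms $\sum_{\bar c} Y_{\bar c} = n$ ``with $n$ drawn from the numeric sort''. But the relevant counts are cardinalities of sets defined in terms of the relation $R_t$ being computed; their values are not available to the $\FOnum$-interpretation (if they were, $\FOnum$ could already evaluate the fixed point). A correct construction must instead introduce, for every possible value $m$, indicator variables asserting ``the count equals $m$'' and axioms that force the \emph{right} indicator to become derivable inside $\monPCx k$ --- this is exactly the $Y_v^m$/$Z_v^m[u\mapsto j]$ gadget (following Berkholz--Grohe) that the paper verifies in its derivability lemma, including the derivation of the \emph{negative} facts ($X_v=0$, $Y_v^m=0$), which you also need because $\psi$ may use $R_t$ negatively under the inflationary semantics and because the counting axioms only simplify once every successor variable has a decided value. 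In short, your route could in principle be completed, but doing so amounts to rebuilding the paper's threshold gadget and its degree-$2$ derivability proof; as it stands, the key step is missing rather than merely compressed.
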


An elegant way to prove Theorem~\ref{thm:fpc:in:monPC} is to use  a 
game-theoretic characterisation of $\FPC$ which was recently established  
in~\cite{GraedelHeg16}.
It is based on the notion of so-called \emph{threshold games}.
A \emph{threshold game} is a two-player game  played on a 
directed graph $G = (V,E)$ that is equipped with a 
\emph{threshold function} $\theta \colon V \to \mathbb{N}$. 
This function satisfies that $\theta(v)\leq \delta(v)+1$ for 
all $v \in V$, where $\delta(v)$ denotes the out-degree of $v$ in $G$.
Moreover, there is a designated vertex $s \in V$ at which each play 
starts.
A play is a sequence of $G$-nodes that arises 
according to the following rules. At the current position $v \in 
V$, Player~0 first selects a set $X \subseteq vE = \{ w : (v,w) \in E\}$ 
with $|X| \geq \theta(v)$. Then Player~1 chooses a node $w \in 
X$ and the play moves on to $w$. A player who cannot move loses. Hence 
Player 0 wins at all nodes in $T_0 := \{v \in V \ | \ \theta(v) = 0\}$ and 
Player 1 at all nodes in $T_1 := \{v \in V \ | \ \delta(v) < 
\theta(v)\}$.

In~\cite{GraedelHeg16} it is shown that threshold games provide 
appropriate model-checking games $\Tt(\AA,\phi)$
for any finite structure $\AA$ and any formula $\phi\in \FPC$. 
Since fixed-point evaluations on finite structures can be 
uniformly unraveled to first-order evaluations, we can in fact assume that 
the game graphs 
of these threshold games are acyclic. 
For any fixed \FPC-formula $\phi$, these model checking games
are polynomially bounded in the size of the input structure and can, in 
fact,
be interpreted in (two-sorted) input structures using a first-order 
interpretation.
This is related to the transformation of $\FPC$-formulas into
uniform families of polynomial-size threshold circuits, as used 
for instance in \cite{Otto97} and \cite{AndersonDaw17}.

\begin{thmC}[\cite{GraedelHeg16}]
\label{thm:fpc:to:games}
For every $\FPC$-formula $\phi$ there is a 
first-order interpretation $I_\phi$ which, for every finite structure 
$\mfA$, 
interprets in $\AA^+$ an acyclic threshold game 
$\Gg(\AA,\phi)$ such that
$\mfA \models \phi$ if, and only if, Player 0 has a winning strategy for  
$\Gg(\AA,\phi)$.
\end{thmC}

It remains to show that the monomial-PC can define winning regions in 
acyclic threshold games. 
Given an acyclic threshold game $\mcG = (G = (V,E),\theta)$, we 
construct an axiom system $\mcP(\mcG)$ which consists of polynomial 
equations of degree at most two.
For every node $v \in V$ in the threshold  game $\mcG$, the 
system $\mcP(\mcG)$ contains a variable $X_v$. 
Let us denote by $W_\sigma^\mcG$ the winning region of Player~$\sigma$ in 
$\mcG$. Then $\mcP(\mcG)$ satisfies the following:
\begin{itemize}
 \item if $v \in W^\mcG_0$, then $X_v = 1$ is derivable from $\mcP(\mcG)$ 
in $\monPCx 2$;
  \item if $v \in W^\mcG_1$, then $X_v = 0$ is derivable from $\mcP(\mcG)$ 
in 
$\monPCx 2$;
  \item $\mcP(\mcG)$ is consistent; in particular, either $X_v = 1$ or 
$X_v = 0$ is derivable for every $v \in V$;
\end{itemize}

\noindent
If we can construct such a system $\mcP(\mcG)$ via an
$\FO$-interpretation in $\mcG$, then this completes 
our proof of Theorem~\ref{thm:fpc:in:monPC}. In fact, it then
follows that $\FOnum(\monPCx 2)$ can define winning regions in acyclic 
threshold  games: a node $v \in V$ is in the winning 
region of Player~0 if, and only if, the system $\mcP(\mcG) \cup \{X_v = 
0\}$ can be refuted in $\monPCx 2$.

Recall that $vE = \{ w \in V: (v,w) \in E\}$, for $v \in V$, denotes 
the set of successors of~$v$. Further, we let $\ns(v)$ denote the number 
of 
successors of $v$, and we let $\nws(v)$ denote the number of successors of 
$v$ which 
are in the winning region of Player~0, that is 
$\ns(v) = | vE |$ and 
$\nws(v) = | vE \cap W_0^\mcG |$.
We denote the set of non-terminal positions by $\NT = \{ v \in V : \ns(v) 
> 
0\}$.
The system $\mcP(\mcG)$ uses the following set of variables:
\begin{itemize}
 \item a variable $X_v$, for every $v \in V$,
 \item a variable $Y_{v}^{m}$ for every $v \in \NT$, 
and $0 \leq m \leq \ns(v)$,
 \item a variable $Z_{v}^{m}[u \mapsto j]$ for every $v \in \NT$, $1 \leq 
m \leq \ns(v)$, $1 \leq j \leq m$, $u \in vE$.
\end{itemize}
The intuition is that the variables $X_v$ encode the winning regions of 
both players, as described above. Moreover, the variables $Y_v^{m}$ 
should indicate whether $\nws(v) = m$, in the following way: if $\nws(v) 
\neq m$, then $Y_v^{m} = 0$ is derivable, and if $\nws(v) = m$, then 
$Y_v^{m} = 1$ is derivable.
The variables $Z_{v}^{m}[u \mapsto j]$ are auxiliary variables used to   
encode this last condition, cf.~\cite{BerkholzGro15}.
The system $\mcP(\mcG)$ consists of the following axioms:
\begin{align*}
 \text{(T)}\qquad &\text{For } v \in T_0:  X_v = 1  \text{ and for  } v 
\in T_1:  X_v = 0 && \\
 \text{(C)}\qquad
 & \text{For } v \in \NT, 1 \leq m \leq \ns(v), u \in vE :
 &&\sum_{j = 1}^{m} Z^m_{v}[u \mapsto j] - Y_{v}^m = 0 \\ 
 & \text{For } v \in \NT , 1 \leq m \leq \ns(v), 1 \leq j 
\leq m:
 &&\sum_{u \in vE} X_u Z^m_{v}[u \mapsto j] - Y_{v}^m = 0 \\
 &\text{For } v \in \NT:
 &&\sum_{u \in vE} X_u \cdot Y_{v}^0 = 0 \\
 \text{(E)}\qquad &
 \text{For } v \in V: 
(1 - X_v) - \sum_{m = 0}^{\theta(v)-1} Y_{v}^{m} = 0  \,  \text{   
 and  } \,   
X_v \hspace{-3pt}  && \hspace{-3pt}-   \sum_{m = \theta(v)}^{\ns(v)}   
Y_{v}^{m} = 0
 \end{align*}

\noindent
We also add for each variable $X=X_v$, $v \in V$, a syntactic dual 
variable $\bar X$ together with the axiom 
\[ \text{(N)}\,\, 1 - X - \bar{X} = 0.\]
These axioms enforce that each dual variable $\bar X$ takes as value $1 - 
X$.
Note that the system $\mcP(\mcG)$ only contains axioms of degree at most 
$2$.

\begin{lem} The system $\mcP(\mcG)$ is consistent.
\end{lem}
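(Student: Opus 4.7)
My approach is to exhibit an explicit $\{0,1\}$-valued satisfying assignment for $\mcP(\mcG)$ read directly off the winning regions of the acyclic threshold game $\mcG$. First I would set $X_v = 1$ if $v \in W_0^\mcG$ and $X_v = 0$ if $v \in W_1^\mcG$, and correspondingly $\bar{X}_v = 1 - X_v$; this immediately satisfies the axioms (T) and (N). Next I would set $Y_v^m = 1$ precisely when $m = \nws(v)$, and $Y_v^m = 0$ otherwise. Using the defining property of a threshold game that $X_v = 1$ holds iff $\nws(v) \geq \theta(v)$, the two (E)-equations split into clean cases (winning vs.\ non-winning) and are immediately verified.

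The main combinatorial obstacle lies in satisfying the (C)-axioms through the auxiliary variables $Z_v^m[u \mapsto j]$. For $m \neq \nws(v)$ the assignment $Z_v^m[u \mapsto j] := 0$ trivially works, so the interesting case is $m = m^{*} := \nws(v) \geq 1$, where $Y_v^{m^{*}} = 1$. Here I need a $\{0,1\}$-matrix $M_{u,j} := Z_v^{m^{*}}[u \mapsto j]$ indexed by $vE \times \{1,\dots,m^{*}\}$ whose every row sums to $1$ and whose every column, weighted by $X_u$, sums to $1$. This is equivalent to producing a function $f \colon vE \to \{1,\dots,m^{*}\}$ for which each target $j$ has exactly one winning preimage. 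Such an $f$ always exists: fix any bijection between the $m^{*}$ winning successors and $\{1,\dots,m^{*}\}$, and extend it arbitrarily on the non-winning successors.

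It then remains to check the final (C)-equation $\sum_{u \in vE} X_u \cdot Y_v^0 = 0$, which is immediate since $Y_v^0 = 1$ forces $\nws(v) = 0$ and hence every $X_u$ in the sum vanishes. Since all variables take values in $\{0,1\}$, the implicit multilinearity axioms $X^2 - X = 0$ hold as well. Boundary cases, such as $m^{*} = 0$ (where the index set of $j$'s is empty) or terminal positions $v \notin \NT$ (where the (C)- and $Y$-axioms are vacuous), are routine by inspection. Thus the only nontrivial content in the argument is the combinatorial construction of the function $f$ above, which witnesses consistency of $\mcP(\mcG)$.
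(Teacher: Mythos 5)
Your proposal is correct and follows essentially the same route as the paper: both exhibit the intended $\{0,1\}$-model read off from the winning regions, with $X_v$ encoding membership in $W_0^\mcG$, $Y_v^m$ the indicator of $m=\nws(v)$, and the $Z$-variables for $m=\nws(v)$ chosen via a bijection between the winning successors and $\{1,\dots,m\}$ extended arbitrarily (the paper sends all non-winning successors to $j=1$) so that rows sum to $1$ and $X$-weighted columns sum to $1$. The only difference is presentational: you make the verification of the (E)-axioms via $v\in W_0^\mcG \Leftrightarrow \nws(v)\geq\theta(v)$ explicit, which the paper leaves implicit.
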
  
\begin{proof}
 We define an intended model of $\mcP(\mcG)$.
For   $X$-variables, we set $X_v := 1$, if $v \in W_0^\mcG$, and $X_v := 
0$, 
if $v \in W_1^\mcG$. 
For   $Y$-variables, we set $Y_{v}^{m} := 1$, if $\nws(v) = m$, and 
$Y_{v}^{m} := 0$ if $m \neq \nws(v)$.
For   $Z$-variables, we set $Z_{v}^{m}[u \mapsto j] := 
0$ for all non-terminal positions $v \in V$, $u \in vE$, and $j \in \{1, 
\dots, m \}$, if $m \neq \nws(v)$. 
For $m = \nws(v) > 0$, we let $vE \cap W_0^\mcG = \{u_1, \dots, u_m\}$.
We then set $Z_v^{m}[u_i \mapsto j] = 1$ if $j = i$, and $Z_v^{m}[u_i 
\mapsto j] = 0$ for $j \neq i$. Moreover, for $u \in vE \setminus 
W_0^\mcG$, we set $Z_v^{m}[u \mapsto 1] = 1$, and $Z_v^{m}[u \mapsto j] = 
0$ for $j \in \{2, \dots, m\}$. 
\end{proof}

\begin{lem}
If $v \in W_0^\mcG$, then we can derive the polynomial $X_v - 1$ (that is 
the equation $X_v = 1$) from $\mcP(\mcG)$ in 
$\monPCx 2$; and if $v \in W_1^\mcG$, then the polynomial
$X_v$ (that is the equation $X_v = 0$) can be derived from $\mcP(\mcG)$ 
in $\monPCx 2$.
\end{lem}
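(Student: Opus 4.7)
The plan is to proceed by induction on the acyclic structure of the game graph $\mcG$, processing nodes in reverse topological order. For a terminal node $v$, axiom (T) directly supplies the required polynomial $X_v - 1$ or $X_v$, so the interesting case is $v \in \NT$, where we may assume that the derivation has already been carried out for every successor of $v$. Writing $W := vE \cap W_0^\mcG$, so that $|W| = \nws(v)$, the strategy is first to derive the correct value of each $Y_v^m$ from the inductive hypothesis, and then to read off $X_v$ using the axioms in (E).

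The core of the induction step should be a counting identity combining axioms (C1) and (C2). For $u \in vE \setminus W$, multiplying the previously derived polynomial $X_u$ by the single variable $Z_v^m[u \mapsto j]$ (a legal $\monPCx{2}$ step) gives $X_u Z_v^m[u \mapsto j]$; for $u \in W$, multiplying $X_u - 1$ the same way gives $X_u Z_v^m[u \mapsto j] - Z_v^m[u \mapsto j]$. Combining all of these with (C2) by linear combinations yields $\sum_{u \in W} Z_v^m[u \mapsto j] - Y_v^m = 0$. Summing this identity over $j \in \{1, \ldots, m\}$ and then subtracting one copy of (C1) for each $u \in W$ produces $(\nws(v) - m)\, Y_v^m = 0$. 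Whenever $m \neq \nws(v)$, the coefficient is a nonzero rational, and scaling derives $Y_v^m = 0$. For the corner case $m = 0$ with $\nws(v) > 0$, axiom (C3) yields the same conclusion directly once the values of the $X_u$ have been substituted.

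To read off $X_v$, we add the two axioms in (E) to obtain $1 - \sum_{m = 0}^{\ns(v)} Y_v^m = 0$, combine with the derived identities $Y_v^m = 0$ for $m \neq \nws(v)$ to conclude $Y_v^{\nws(v)} = 1$, and then feed this back into whichever half of (E) contains the index $\nws(v)$. If $\nws(v) \geq \theta(v)$, i.e.\ $v \in W_0^\mcG$, this yields $X_v - 1 = 0$; if $\nws(v) < \theta(v)$, i.e.\ $v \in W_1^\mcG$, we obtain $X_v = 0$, completing the induction step.

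The main point to watch is keeping every intermediate polynomial of degree at most two. The only multiplications employed are of a derived linear polynomial (such as $X_u$ or $X_u - 1$) by a single variable $Z_v^m[u \mapsto j]$ or $Y_v^0$, which is precisely what the multiplication rule of $\monPC$ permits, and no subsequent linear combination raises the degree beyond that already present in the (C)-axioms. Consistency of $\mcP(\mcG)$, established in the preceding lemma, guarantees that these derivations are coherent and cannot collapse unintentionally, so the inductive construction goes through without obstruction.
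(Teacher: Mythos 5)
Your overall plan --- induction along the acyclic game graph, deriving $(m-\nws(v))\,Y_v^m=0$ by combining the two families of (C)-axioms, handling the corner case $m=0$ via the third (C)-axiom, and then reading off $X_v$ from the two (E)-axioms --- is exactly the computation carried out in the paper. But there is a genuine gap in the one step you declare unproblematic: for $u \in vE \cap W_0^\mcG$ the inductively derived polynomial is $X_u - 1$, and you propose to multiply it by the single variable $Z_v^m[u \mapsto j]$ (and later by $Y_v^0$), asserting that this is ``precisely what the multiplication rule of $\monPC$ permits.'' It is not: in the monomial-PC the multiplication rule may only be applied when the premise is a monomial or the product of a monomial and an axiom, and $X_u - 1$ is a binomial which, for a non-terminal successor $u$, is neither. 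Multiplying an arbitrary derived polynomial by a variable is exactly the capability that separates the full PC from the monomial-PC, so as written your induction step uses an inference that $\monPCx 2$ does not have. (The case $u \in vE \cap W_1^\mcG$ is fine, since there the derived polynomial $X_u$ is itself a monomial.)

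The paper closes precisely this gap with the negation axioms (N): every variable $X = X_v$ comes with a dual variable $\bar X$ and the axiom $1 - X - \bar X = 0$. From the derived $1 - X_u$ one obtains the monomial $\bar X_u$ by a linear combination with (N); the monomial $\bar X_u$ may be multiplied by any variable $W$, and (N), being an axiom, may also be multiplied by $W$, and adding the two results yields $W(1 - X_u)$. This trick is what legitimises the simplifications $Z X_u = Z$ for $u \in W_0^\mcG$ and $X_u Y_v^0 = Y_v^0$ inside $\monPCx 2$; your proposal never invokes the (N)-axioms, and without them (or some substitute) the key identity $\sum_{u \in W} Z_v^m[u \mapsto j] - Y_v^m = 0$ is not derivable in the restricted system. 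Apart from this missing device, the arithmetic of your argument coincides with the paper's proof.
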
 
\begin{proof}
We start with a small remark. Assume that we can derive $(1 - X)$   for a 
variable $X = X_v$, $v \in V$. We show how to derive $W(1 - X)$ 
for any variable $W$.
 This is clearly possible in the full polynomial calculus: we just have to 
multiply by $W$. In the monomial-PC, however, 
we cannot multiply $(1 - X)$ by $W$, since $(1-X)$ is neither a monomial 
nor an axiom. Instead, we use our negation axioms.
Starting from $1 - X$, we can derive $\bar{X}$ by subtracting (N) from 
$1-X$.
 Since (N) 
is an axiom, we can multiply it by $W$; also, $\bar{X}$ is a 
monomial and so we can multiply it by $W$. Thus, $W(1 - X - 
\bar{X}) + W \bar{X} = W(1 - X)$ can be derived, as claimed. We make use 
of this trick in the following.

Our proof is by induction on the height of the subgame rooted 
at $v \in V$ (recall that $\mcG$ is acyclic).
For terminal positions $v \in V$, the assertion is immediate from
axioms (T).

Assume $v \in V$ is a non-terminal position. 
Let $W_0(v) = vE \cap W_0^\mcG$ and $W_1(v) = vE \cap W_1^\mcG$.
By the induction hypothesis we know that we can derive in $\monPCx{2}$ for 
every $u \in W_0(v)$ the equation $X_u = 1$ and for every $u \in W_1(v)$ 
the equation $X_u = 0$. 

Let $m > 0$. 
Consider an equation of the form 
$\sum_{u \in vE} X_u Z^m_{v}[u \mapsto j] - Y_{v}^m = 0$ for $j \in \{1, 
\dots, m\}$ of type (C).
We have $vE = W_0(v) \uplus W_1(v)$. For every $Z$-variable and for every 
$u\in W_0(v)$ we can derive $ZX_u = Z$ in 
$\monPCx 2$, and for every $u \in W_1(v)$ we can derive $ZX_u = 0$ in 
$\monPCx 2$.
Hence, we can simplify these equations of type (C) as
$\sum_{u \in W_0(v)} Z^m_{v}[u \mapsto j] - Y_{v}^m = 0$ for $j \in 
\{1, \dots, m\}$ in $\monPCx{2}$.

Next, we consider for every $u \in W_0(v)$ the equations 
$\sum_{j = 1}^{m} Z^m_{v}[u \mapsto 
j] - Y_{v}^m = 0$, again of type (C). We combine these two sets of 
equations as follows:
\[ \sum_{j =1}^m \left(\sum_{u \in W_0(v)} Z^m_{v}[u 
\mapsto j] - Y_{v}^m \right ) 
- \sum_{u \in W_0(v)} \left( \sum_{j = 1}^{m} Z^m_{v}[u \mapsto 
j] - Y_{v}^m
\right) = 0.
\]
We can further simplify this equation (the variables $Z_v^m[u \mapsto 
j]$ cancel out) and we get 
\[ (m - \nws(v)) Y_v^m = 0.\]
Hence, for every $m > 0$, $m \neq \nws(v)$, we can derive $Y_v^m = 0$ in 
$\monPCx{2}$.
Indeed, also in the case where $m = 0 < \nws(v)$ we can derive $Y_v^m = 0$.
In this case we just use the equation
$\sum_{u \in vE} X_u Y_v^0 = 0$. 
Using the same arguments as above, this equation simplifies to $\nws(v) 
\cdot Y_v^0 = 0$. Hence, if $\nws(v) > 0$, we can also derive $Y_v^0=0$.
Note that the two equations of type (E) can be combined to the equation
$\sum_{m=0}^{s(v)} Y_v^m = 1$.
Hence, altogether we showed the following. For all $0 \leq m \leq s(v)$ it 
holds that:
\begin{itemize}
\item if $m = \nws(v)$, then we can derive $Y_v^m = 1$ in $\monPCx 2$; 
and
 \item if $m \neq \nws(v)$, then we can derive $Y_v^m = 0$ in $\monPCx 2$.
\end{itemize}
Having this, the claim follows immediately by using the equations 
of type $(E)$.
\end{proof}

In summary, we have seen that defining the winning regions in acyclic 
threshold games is an $\FPC$-complete problem, with respect 
to $\FOnum$-reductions, and that the winning regions in 
such games can be defined in $\FOnum(\monPCx{2})$. 
Furthermore, it is easy to see that the system $\mcP(\mcG)$ can be 
obtained from the game $\mcG$ by means of an $\FO$-interpretation.
This completes the proof of Theorem~\ref{thm:fpc:in:monPC} and, together 
with Theorem~\ref{thm:monpc:in:fpc}, establishes our first main theorem of 
this section.

\begin{thm}
 For every $k \geq 2$, $\FPC = \FOnum(\monPCx{k})$.
\end{thm}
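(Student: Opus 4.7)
The plan is to assemble the theorem directly from the two main results of the preceding subsections, which together provide both inclusions. Everything hinges on combining Theorem~\ref{thm:monpc:in:fpc} with the standard closure properties of $\FPC$, and on invoking Theorem~\ref{thm:fpc:in:monPC} verbatim.

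For the inclusion $\FOnum(\monPCx k) \leq \FPC$, I would proceed by induction on the construction of $\FOnum(\monPCx k)$-formulas. The base case, the Boolean connectives, counting terms, and quantifiers over both sorts are routine, since every $\FOnum$-formula is syntactically (and semantically) already an $\FPC$-formula. The only non-trivial step is the Lindstr\"om-quantifier rule $\psi(\tup z) = \mcQ_{\monPCx k}\mcI(\tup z)$. Here the induction hypothesis turns the interpretation~$\mcI$ into an $\FPC$-interpretation, and Theorem~\ref{thm:monpc:in:fpc} supplies an $\FPC$-sentence $\chi_k$ on the encoding vocabulary of polynomial systems which defines the class $\monPCx k$. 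Applying the substitution lemma for $\FPC$-interpretations in the two-sorted setting (vertex sort plus polynomially bounded numeric sort) to $\chi_k$ and $\mcI$ yields an $\FPC$-formula equivalent to $\psi(\tup z)$.

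For the converse inclusion $\FPC \leq \FOnum(\monPCx k)$, I would simply cite Theorem~\ref{thm:fpc:in:monPC}, which gives the statement for all $k \geq 2$ by first reducing $\FPC$-definability to computing winning regions in acyclic threshold games (Theorem~\ref{thm:fpc:to:games}) and then encoding those games by the degree-$2$ polynomial system $\mcP(\mcG)$ interpretable in $\mcG$ by a first-order interpretation.

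The only real point to check is that the substitution lemma for $\FPC$-interpretations goes through in exactly the form we need: an $\FPC$-sentence precomposed with an $\FPC$-interpretation is $\FPC$-definable, even when the interpreted structure carries its own (polynomially bounded) numeric sort. This is a standard but slightly delicate bookkeeping argument, requiring one to relativise counting terms and numeric quantifiers of $\chi_k$ to the domain formula $\varphi_\delta$ and to replace the interpreted numeric sort by the base numeric sort with the appropriate polynomial bound. Once this lemma is in place, combining the two inclusions yields $\FPC = \FOnum(\monPCx k)$ for all $k \geq 2$, as desired.
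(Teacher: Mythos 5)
Your proposal is correct and follows essentially the same route as the paper: the inclusion $\FOnum(\monPCx k)\leq\FPC$ is obtained from Theorem~\ref{thm:monpc:in:fpc} together with the (implicitly used) closure of $\FPC$ under composition with $\FPC$-interpretations, which is exactly the bookkeeping you spell out, and the converse inclusion is Theorem~\ref{thm:fpc:in:monPC} via acyclic threshold games. The paper states this combination without elaborating the induction over formulas, so your write-up merely makes the standard substitution argument explicit.
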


\subsection{\texorpdfstring{$\FPC$}{FPC}-Definability of Refutations in the (Full) Polynomial 
Calculus}
\label{subsec:fullpc:infpc}
Next, we are going to lift our result concerning the degree-$k$ monomial-PC to the full degree-$k$ polynomial calculus. As we mentioned before, it seems implausible that proof-search for $\PC_k$ can be implemented in $\FPC$, since there are instances where such refutations necessarily contain polynomials with coefficients of super-polynomial bit-complexity (and $\FPC \leq \ptime$). Nevertheless, we will provide an $\FPC$-definable proof search procedure, similar to the one in the previous section, but it will only be able to deal with coefficients of restricted size. To this end, we define for each constant $b \in \bbN$, $\PC_{k,b}$ as the fragment of degree-$k$ polynomial calculus over $\bbQ$ where all coefficients are representable as fractions of binary numbers with at most $n^b$ many bits. In a next step, we see that, if we drop the restriction on the coefficients, we can still define the proof search algorithm in $\Cinfx{k}$.\\
It follows that the degree-$k$ variants $\monPCx k$ 
and $\PC_{k,b}$ (for each constant $b$) of the 
monomial-PC and the full polynomial calculus have the same 
expressive power (with respect to $\FOnum$-interpretations), that is for all 
$k\geq 2$ and $b \in \bbN$, we have
\[ \FOnum({\monPCx k}) = \FPC = \FOnum(\PC_{k,b}).\]
\smallskip
Our result provides an interesting new characterisation of the power of 
the (full) polynomial calculus from the perspective of finite model theory. In 
particular, it allows us to use techniques from finite model theory 
to answer open questions about the (relative) power of the two 
variants of the polynomial calculus. 
As indicated before, one example is given in Section~\ref{sec:gi} where we 
use 
our new characterisation of the polynomial calculus to answer an open 
question posed by Grohe and Berkholz in~\cite{BerkholzGro15}, see 
Question~\ref{ques:monPCvsPC} and Theorem~\ref{thm:monPCvsPC:GI}.

To prove the equivalence of $\FPC$ and $\FOnum(\PC_{k,b})$, the first 
important step is to understand why it is more difficult to 
express $k$-dimensional refutations in the (full) polynomial calculus in 
$\FPC$ rather than 
in its restricted variant $\monPC$.
Basically, this comes down to the following problem: in order to find 
proofs in the monomial-PC it suffices to decide the \emph{solvability 
problem} for linear equation systems over $\mbQ$ (this is a Boolean 
decision problem; the output is either \emph{solvable} or \emph{not 
solvable}). However, in order to search for proofs in the full PC we need 
to express the functional problem of \emph{computing solution 
spaces} of linear equation system over $\mbQ$ in $\FPC$. 
However, while it was known that 
$\FPC$ can define the (Boolean) solvability problem over $\mbQ$, it was 
not known whether solution spaces of linear equation systems over 
$\mbQ$ can be expressed in $\FPC$. Luckily, as we show in 
Theorem~\ref{thm:fpc:definesolutions}, this is indeed the case.

Let us now elaborate more on how to find refutations in the full PC.
To this end, we recall the procedure from Figure~\ref{basisAlgo} to find 
$k$-dimensional proofs in the monomial-PC.
Given a set of multilinear polynomials $\mcP \subseteq \mbQ^{M_k}$ of 
degree at most $k$, the idea is to construct a set $\mcB \subseteq 
\mbQ^{M_k}$ of (multilinear) polynomials of degree at most $k$ which 
generate (as $\mbQ$-linear combinations) the set of \emph{all} polynomials 
$\DeriveMonPCx k(\mcP)$ that can be derived 
in the $k$-dimensional $\monPC$ (starting from the given set of 
polynomials~$\mcP$). 
At the beginning, $\mcB$ is set to contain 
all (linearised and) lifted versions $\ML(m \cdot p)$ of the given 
polynomials $p \in \mcP$ up to degree~$k$.
Subsequently, the set $\mcB$ is  closed under liftings by variables $X$.
More precisely, in each iteration, the set $\mcB$ is extended by 
\emph{all} possible 
(linearised) liftings $X \cdot m$ of \emph{monomials} $m$ of degree at 
most $k-1$ 
that can be derived up to this stage, i.e.\ for which $m \in \langle \mcB 
\rangle$ holds (here, $\langle \mcB \rangle$ denotes the set of all 
polynomials that can be derived from polynomials in $\mcB$ using 
$\mbQ$-linear combinations).
The crucial observation is that this simple inductive lifting step is 
sufficient for the monomial-PC,  because, indeed, by its rules we are only 
allowed to lift monomials and the initial polynomials $p \in \mcP$.
The set $\mcB$ is extended in this way until $\langle \mcB 
\rangle$ remains stable.

In order to adapt this algorithm to the (full) PC, we need to make the 
following changes.
Most importantly, instead of lifting all monomials $m  \in \langle 
\mcB 
\rangle$, $\deg(m) < k$, during the iteration, for the full PC we have to 
take all \emph{(multilinear) polynomials} $p  \in 
\langle \mcB \rangle$, $\deg(p) < k$ into account, and make sure that 
their liftings $X \cdot p$ are contained in $\langle \mcB \rangle$.
This is more difficult for the following two reasons.
First of all, we cannot go through \emph{all} such polynomials 
$p$, simply because their number is exponential in the number of 
variables. To overcome this obstacle, we have to use linear-algebraic 
preprocessing which enables us to lift a generating set for the set of 
polynomials $p \in \langle \mcB \rangle, \deg(p) < k$, instead. Note 
that this was not necessary in the setting of the monomial-PC: 
here, the number of possible $k$-dimensional monomials is bounded 
polynomially in 
the number of variables (for fixed $k\geq 2$).
There is a second problem. During the iteration, for the monomial-PC we 
could repeatedly add \emph{all} lifted variants $X \cdot m$ of all 
monomials $m 
\in \langle \mcB \rangle, \deg(m) < k$ to our partial generating set 
$\langle \mcB \rangle$. This is because the (linearised) 
version of a lifted monomial remains a monomial and we just said that the 
number of all $k$-dimensional monomials is polynomially bounded. Hence, we 
never obtain generating sets $\mcB$ of super-polynomial 
size in this way.
In contrast, for the setting of the (full) PC, assume that at some stage 
during the iteration we have a small generating set $\mcC$ for the set of 
all 
polynomials $p \in \langle \mcB \rangle$ of degree at most $k-1$. If 
we now lift \emph{all} polynomials $p \in \mcC$ in all possible ways $X 
\cdot p$, then clearly the size of the resulting set $\mcC'$ 
increases 
by a factor which corresponds to the number of variables (and there is no 
global polynomial upper bound for $\mcC'$ as in the case of the 
monomial-PC).
Hence, before each lifting step, we have to ensure that the size of the 
generating set $\mcC$ of polynomials that we lift is (globally) bounded by 
a polynomial. We can invoke standard linear-algebraic algorithms to 
achieve this. More specifically, we construct $\mcC$ in such a way that 
its 
size 
does not exceed $|M_k|$, that is the number of different multilinear 
monomials of degree at most~$k$. Note that a generating set of this size 
exists, since each $\mbQ$-linear subspace of $\mbQ^{M_k}$ is of 
dimension at most $|M_k|$. Moreover, as we mentioned before, $|M_k|$ is of 
polynomial size for any fixed $k$.
We summarise the adapted algorithm for finding $k$-dimensional refutations 
in 
the (full) polynomial calculus in Figure~\ref{fullPCAlgo}.

\begin{figure}[h]
\begin{algorithmic}

  \STATE
  \REQUIRE Set of multilinear polynomials $\mcP \subseteq \mbQ^{M_k}$
   
  \ENSURE $\mcB \subseteq \mbQ^{M_k}$ such that $\langle \mcB \rangle = 
\DerivePC(\mcP)$.

 \STATE $\mcB := \{\ML(m \cdot p) \ | \ p \in \mcP, m \text{ a monomial 
such that}\, \deg(\ML(m \cdot p)) \leq k\}$
 \STATE\COMMENT{Initialisation (lift all axioms in $\mcP$)}
 
 \REPEAT 
  \STATE Construct set $\mcC \subseteq \mbQ^{M_k}$ of size at most 
$|M_k|$ such   that 
$ \langle \mcC \rangle = \{ p \in \langle \mcB \rangle : \deg(p) < k \}$
\FORALL{polynomials $p  \in \mcC$} 
\STATE{$\mcB := \mcB \cup \{ \ML(X \cdot p) : \text{for some variable 
$X$} \}$}
\ENDFOR 
 \UNTIL{$\langle \mcB \rangle$ remains unchanged}
  \RETURN $\mcB$
\end{algorithmic}
\caption{\FPC-procedure to construct generating set $\mcB$ for the set 
$\DerivePC(\mcP)$ of all polynomials that can be derived in $\PCx k$ 
starting from the given set of polynomials $\mcP$}
\label{fullPCAlgo}
\end{figure}

To see how we can implement the algorithm from Figure~\ref{fullPCAlgo} in 
polynomial time, let us have a closer look at the construction of the set 
$\mcC$ during the iteration.
First of all note that the set $\{ p \in \langle \mcB \rangle : \deg(p) < 
k \}$ is indeed a $\mbQ$-linear subspace of $\langle \mcB \rangle$ which, 
in turn, is a $\mbQ$-linear subspace of $\mbQ^{M_k}$.
Hence, it is clear that a generating set $\mcC$ of size at most $|M_k|$ 
exists.
Moreover, we can easily obtain $\mcC$ as the solution space of a linear 
equation system.
Indeed, let $M$ be the $M_k \times \mcB$-matrix over $\mbQ$ whose columns 
correspond to the polynomials in~$\mcB$. Then $\im(M) = \langle \mcB 
\rangle$.
Hence, if we let $x$ and $p$ denote a $\mcB$-vector and an $M_k$-vector 
of variables ranging over $\mbQ$, respectively, then the solution space of 
the linear 
equation system determined by the equation $Mx = p$ is 
$\langle \mcB \rangle$ when we project it to the variables~$p$.
Hence, by adding extra constraints $p(m) = 0$ for all monomials $m \in 
M_k$ with $\deg(m) = k$, we obtain a linear equation system whose 
solution space, projected to variables in $p$, is a generating set 
for $\{ p \in \langle \mcB \rangle : \deg(p) < k \}$.
Clearly, solution spaces for such systems can be computed in polynomial 
time.

Before we discuss the $\FPC$-definability of this procedure, let us observe 
that there is a small caveat with the approach above. So far, the 
generating set for $\{ p \in \langle \mcB \rangle : \deg(p) < k \}$ that we 
obtain is not of size at most $|M_k|$. Indeed,
by our construction, which relies on the final projection step, 
the size of the generating set depends on $|\mcB|$ (because 
the vector of variables $x$ is indexed by $\mcB$).
Hence, we need to make a second important observation.
Say we were able to construct an $M_k \times J$-matrix $N$ over $\mbQ$
with the property that $\im(N) = \{ p \in \langle \mcB \rangle : \deg(p) < 
k \}$ (that is the columns of $N$ form a generating 
set for the solution space of the above linear equation system projected to 
$p$).
We would like to transform this matrix $N$ into a ``smaller'' $M_k \times 
M_k$-matrix $\hat N$ such that $\im(\hat N) = \im(N)$. This is clearly 
possible simply because the dimension of the space $\im(N)$ is at most $|M_k|$. 
However, the question is about how difficult it is to obtain such a
``more compact'' version $\hat N$ of~$N$. Specifically, for our
$\FPC$-definability proof, we need to express this ``compression 
transformation'' in $\FPC$ as well.

Fortunately, the step from $N$ to $\hat N$ is surprisingly easy to realise. 
As we will see in the following subsection, it holds that the 
$(M_k \times M_k)$-matrix $\hat N  := N \cdot N^T$ has the same 
image as the matrix $N$, see 
Lemma~\ref{lem:bcmmt}. Hence, we obtain a \emph{small} generating set for 
$\im(N)$ by 
taking 
the columns of $\hat N = N \cdot N^T$.
This shows that we can, in general, quite easily transform an arbitrary 
generating set for a $\mbQ$-linear subspace of $\mbQ^{M_k}$ into a 
small generating set of size at most $|M_k|$. Moreover, this 
transformation only relies on simple matrix operations, such as 
transposition and matrix multiplication. As such operations are well-known 
to be definable in $\FPC$, see e.g.\ 
\cite{Holm10}, this transformation is $\FPC$-definable. However, it is at this point that the bit-complexity of the coefficients has to be taken into account. Since $\hat N$ is computed by squaring $N$, the bit-complexity of the coefficients can increase in this step. If this happens repeatedly, then the required number of bits may become greater than the maximum number of bits that our $\FPC$-sentence can handle. In this case, the computation has to be aborted. This maximum number of bits depends on the number of variables of the $\FPC$-sentence that we are constructing: We want our sentence to be able to find refutations in $\PC_{k,b}$, for a fixed value of $b$. That is, the coefficients occurring in a refutation can be written as fractions of binary numbers of length $\leq n^b$, where $n$ is the size of the input structure. These coefficients, i.e.\ the entries of the matrices that we are manipulating in the fixed-point computation, are represented as follows: We use a tuple of $b$ variables ranging over the $n$ ordered elements of the number sort in order to index the positions of a binary string. Relations are used to mark the positions that are $0$ and $1$, respectively, and to specify the position of the coefficient in the matrix (see \cite{Holm10} for more details). Therefore, it is possible to construct for every fixed $b$ an $\FPC$-sentence that performs the matrix manipulations mentioned above using binary numbers of length $n^b$, but no fixed $\FPC$-sentence can deal with coefficients of unbounded length.

\medskip
Altogether, this means that the only difficulty we face is to define 
solution spaces of linear equation system over $\mbQ$ in $\FPC$.
Recall that by the result of Dawar, Grohe, Holm, and Laubner we know
that $\FPC$ can express the \emph{Boolean} solvability problem of linear 
equation 
systems over $\mbQ$, see Theorem~\ref{thm:solvq:fpc}.
However, this does not give direct evidence for 
$\FPC$ being able to express the more general \emph{functional} problem of 
defining solution spaces over $\mbQ$. For the sake of illustration, consider rank logic over finite fields. Rank logic can define the Boolean solvability problem for linear equation systems over finite fields but it is not to be expected that it can also define vectors in the solution space: This is because any solution vector to a linear equation system obtained from CFI-graphs has an orbit of exponential size, and rank logic is isomorphism-invariant and in $\ptime$.\\ 
Luckily, over $\bbQ$, the situation turns out to be different. Not only can we define the Boolean solvability problem in $\FPC$ but we can also define the corresponding solution spaces as we show in this article (see 
Theorem~\ref{thm:fpc:definesolutions}).
From this result and our preceding discussion it easily follows that the 
algorithm in Figure~\ref{fullPCAlgo} (with bounded bit-complexity) is definable in $\FPC$. Beyond this 
application, we believe that 
Theorem~\ref{thm:fpc:definesolutions} is interesting in its own right and 
might prove useful in other contexts.
Let us conclude by stating our main result of this subsection (where we rely on 
the yet to be proven
Theorem~\ref{thm:fpc:definesolutions}).

\begin{thm}
\label{thm:PCinFPC}
 For every $k \geq 2$ and $b \in \bbN$, there exists an $\FPC$-sentence $\phi$ with 
$\mcO(k+b)$ many variables such that given (a structural encoding of) a 
system $\mcP$ of polynomials  over $\mbQ$ as input for 
the $k$-dimensional polynomial calculus of degree $k$, $\phi$ expresses 
whether $\mcP$ can be refuted in $\PC_{k,b}$, that is $\phi$ 
defines whether $1 \in \PC_{k,b}(\mcP)$. 
\end{thm}

A more commonly studied version of the polynomial calculus is $\PC_k$, that is, the degree-$k$ PC over $\bbQ$ without any restriction on the bit-complexity. The procedure we described above in principle also works for the $\PC_k$. It would be $\FPC$-definable, even without a bound on the bit-complexity, if $\FPC$-sentences were evaluated in structures with larger number sorts. Recall that we use the elements of the number sort to index the positions of the binary strings. In $\FPC$, the number sort always has the same size $n$ as the structure itself, but if we imagine the number sort to be of some size $f(n)$, for a sufficiently large function $f$, then our sentence can deal with $f(n)^b$ many bits instead of $n^b$. Our algorithm involves squaring a matrix polynomially many times. Hence, if $f(n)$ is greater than the largest possible growth of the bit-length that can occur in this number of squaring operations, then the procedure could be implemented in $\FPC$ with number sorts of size $f(n)$ and it would always correctly decide the existence of $\PC_k$-refutations, regardless of any bit-complexity issues. A fixed-point logic with such big number sorts does not really exist but instead, we can use $\Cinfx{k}$. The standard translation of $\FPC$-sentences into $\Cinf$-sentences does not increase the number of variables. It simulates the number-sort-variables of the $\FPC$-sentence with large disjunctions or conjunctions over all elements of the number sort. This idea works regardless of the size of the number sort. These considerations directly lead to the following result for the degree-$k$ polynomial calculus:

\begin{thm}
	\label{thm:PCinCinf}
	For every $k \geq 2$, there exists a $\Cinf$-sentence $\phi$ with 
	$\mcO(k)$ many variables such that given (a structural encoding of) a 
	system $\mcP$ of polynomials  over $\mbQ$ as input for 
	the $k$-dimensional polynomial calculus of degree $k$, $\phi$ expresses 
	whether $\mcP$ can be refuted in $\PC_{k}$, that is $\phi$ 
	defines whether $1 \in \PC_{k}(\mcP)$. 
\end{thm}

\begin{thm} 
\label{thm:FPCgleichPC}
For all $k \geq 2$, $b \in \bbN$:
 \[ \FOnum(\monPCx k) = \FOnum(\PC_{k,b}) = \FPC \overset{?}{<} \FOnum(\PC_k) < \Cinfx{\Oo(k)}.   \]
\end{thm}
By $\FPC \overset{?}{<} \FOnum(\PC_k)$, we mean that we do not know whether or not $\FOnum(\PC_k) \leq \FPC$ holds. However, there are some reasons why we suspect that $\FPC$ is strictly weaker than $\FOnum(\PC_k)$. First of all, to the best of our knowledge, it is an open problem whether or not there exists a PTIME-algorithm that decides the existence of $\PC_k$-refutations (for unbounded coefficients). The well-known Groebner basis algorithm certainly fails \cite{Hakoniemi21}, so if this problem is in P, then there must be some way to avoid explicit computation of the coefficients in the refutation. Since $\FPC\leq \ptime$, it is ``even more open'' if the problem is in $\FPC$.\\
Secondly, our result $\FOnum(\PC_{k,b}) = \FPC$ has the following consequence: If it were possible to compute $\PC_k$-refutations with arbitrarily large coefficients in $\FPC$, then there would be a numeric FO-interpretation that reduces any input polynomial equation system to one that can be decided in $\PC_{k,b}$, that is, with small degree and small coefficients. This seems to be a very strong statement because it means that the necessity to use large coefficients in refutations can be circumvented with simple FO-definable preprocessing of the input polynomials. This would be quite surprising, so it seems more reasonable to believe that $\FPC \lneq \FOnum(\PC_k)$.

\subsection{Definability of Solution Spaces of Linear Equation Systems 
over \texorpdfstring{$\mbQ$}{Q}}
\label{subsec:deflinFPC}
To complete our proof of Theorem~\ref{thm:PCinFPC}, we proceed to show 
that $\FPC$ can define solution spaces of linear equation systems 
over $\mbQ$.
Formally, our main result in this subsection reads as follows.

\begin{thm}
\label{thm:fpc:definesolutions}
 There exist $\FPC$-formulas which define the following:
given (a structural encoding of) a linear equation system $M\cdot x = b$ 
over $\mbQ$, for $M\colon I \times J \to \mbQ$ and $b\colon I \to \mbQ$,
they express whether $M \cdot x = b$ is solvable, and in this 
case, 
define (structural encodings of) a matrix $S\colon J \times J \to \mbQ$ 
and a vector $c: J \to \mbQ$ such that $\im(S) = \kernel(M)$ and $M \cdot 
c 
= b$, i.e.\ such that 
$\im(S) +c $ is the solution space of $M \cdot x = b$.
\end{thm}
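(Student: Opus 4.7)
My plan is to define the Moore--Penrose pseudoinverse $M^+ \in \mbQ^{J \times I}$ in $\FPC$; once this is available, I would output $c := M^+ b$ and $S := I_J - M^+ M$. Then $Mc = M M^+ b = b$ whenever $b \in \im(M)$ (i.e.\ whenever the system is solvable), and $S$ is the orthogonal projection onto $\kernel(M)$, so $\im(S) = \kernel(M)$ holds unconditionally. Solvability of $Mx = b$ itself is already captured by Theorem~\ref{thm:solvq:fpc} and would be used as a precondition. Since $M^+$ is a uniquely determined rational matrix, the challenge is purely definitional: there are no arbitrary choices to make at the semantic level, which is what rules out standard pivot-based algorithms but leaves room for canonical constructions.

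To define $M^+$ I would use the regularisation formula
\[
  M^+ \;=\; \lim_{\epsilon \to 0^+} (M^T M + \epsilon I)^{-1} M^T
\]
carried out symbolically, treating $\epsilon$ as a formal indeterminate. Each entry of $(M^T M + \epsilon I)^{-1} M^T$ is then a rational function of $\epsilon$ whose numerator and denominator are polynomials in $\epsilon$ of degree at most $n := |J|$, with rational coefficients that depend on the entries of $M$. The singular value decomposition of $M$ shows that this rational function has a finite value at $\epsilon = 0$, equal to the corresponding entry of $M^+$. My algorithmic plan is therefore: (i) compute the adjugate and determinant of $M^T M + \epsilon I$ as polynomials in $\epsilon$ (each represented by its coefficient tuple, which fits polynomially on the numeric sort); (ii) form the numerator and denominator polynomials of each entry of the inverse times $M^T$; (iii) cancel the largest common power of $\epsilon$ from numerator and denominator; (iv) evaluate at $\epsilon = 0$ by reading off the appropriate coefficient.

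The main obstacle I expect is step (i), namely showing that $\FPC$ uniformly defines the adjugate and determinant of a rational matrix whose entries are themselves polynomials in an indeterminate. I plan to reduce this to iterated rational matrix multiplication via a Leverrier/Csanky-style recurrence, since iterated matrix multiplication over $\mbQ$ is well known to be $\FPC$-definable (see e.g.~\cite{Holm10}); lifting from scalar entries to polynomial entries of bounded degree is mechanical because the coefficient tuples live in a polynomially-sized space on the numeric sort. Steps (ii)--(iv) then amount to coefficient-wise rational arithmetic together with an inflationary fixed-point search for the smallest non-vanishing coefficient of the denominator polynomial, both routinely expressible in $\FPC$. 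Assembling these ingredients yields $\FPC$-formulas defining the entries of $M^+$, and hence the required structural encodings of $S$ and $c$.
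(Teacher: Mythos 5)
Your proposal is correct, but it takes a genuinely different route from the paper. The paper also works with the Gram matrices $M M^T$ and $M^T M$, but avoids determinants altogether: it proves a Krylov-style solvability criterion ($M x = b$ is solvable iff $b \in \langle Bb, \dots, B^{n+1}b\rangle$ for $B = M M^T$, Lemma~\ref{lem:solvFPC}), which turns the problem into a derived system whose column set is canonically ordered, so that after merging duplicate rows the Immerman--Vardi theorem yields a solution; the kernel generators are then obtained, one per parameter $j \in J$, as the unique $\kernel(M^TM)$-component $k_j$ in the decomposition $e_j = k_j + c_j$ with $c_j \in \im(M^TM)$, each defined as the (unique, hence definable) relevant projection of a solution of an auxiliary system. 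Your construction produces exactly the same matrix $S$ --- the columns $k_j$ of the paper are precisely the columns of $I_J - M^+M$ --- but you reach it by defining the Moore--Penrose pseudoinverse symbolically via Tikhonov regularisation, which requires the adjugate and characteristic polynomial of $M^TM + \epsilon I$; your Faddeev--LeVerrier/Csanky reduction to iterated matrix multiplication is sound, and $\FPC$-definability of the determinant and characteristic polynomial over $\mbQ$ is indeed available (it is established in \cite{DawarGroHolLau09,Holm10}, which the paper itself invokes elsewhere for eigenvalues), while your limit-evaluation step (cancelling $\epsilon^{\mathrm{ord}_0(\det(M^TM+\epsilon I))}$ and reading off coefficients) is justified because the SVD argument guarantees the order of vanishing of each numerator is at least that of the denominator. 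What each approach buys: the paper's argument is more elementary and self-contained (only matrix products, a canonically ordered derived system, and Immerman--Vardi), which makes the $\mcO(k)$-variable bookkeeping needed later for Theorem~\ref{thm:PCinFPC} transparent; yours leans on the heavier characteristic-polynomial machinery but in exchange delivers a canonical object --- the pseudoinverse, hence orthogonal projectors and the minimum-norm solution --- from which $S$ and $c$ fall out uniformly, and it even makes the solvability test immediate ($Mx=b$ is solvable iff $MM^+b=b$), so the appeal to Theorem~\ref{thm:solvq:fpc} is not strictly needed. If you intend your version to substitute for the paper's in the proof of Theorem~\ref{thm:PCinFPC}, you should additionally check that the Leverrier/Csanky stage can be carried out with $\mcO(k)$ variables over the monomial-indexed matrices, which is plausible but not addressed in your sketch.
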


In order to prove Theorem~\ref{thm:fpc:definesolutions}, we make use of 
the following linear-algebraic properties of matrices over the 
rationals. 
For completeness, and since it is central for our application, we 
present short proofs to recall the underlying algebraic arguments.
From now on, let us fix a linear equation system $M \cdot x = b$ with 
$M\colon I \times J \to \mbQ$ and $b\colon I \to \mbQ$.
The key is to consider the following matrices over $\mbQ$:
\begin{align*}
 B &:= M \cdot M^T \in \mbQ^{I \times I} \\
 C &:= M^T \cdot M \in \mbQ^{J \times J}.
\end{align*}
In Figure~\ref{fig:summary:bcmmt} on the next page we summarise what we are going to show.
\pagebreak

\begin{figure}[h]
\begin{tikzpicture}
 
  \node[font=\large] (rj) at (-3.5,0) {$\mbQ^J$};
  \node[below of=rj, node distance=0.7cm] {\rotatebox{90}{$=$}};
  \node[below of=rj, node distance=1.3cm] (kerm) {$\kernel(M)$};
  \node[below of=rj, node distance=2cm] (oplusrj) {$\oplus$};
  \node[below of=rj, node distance=2.7cm] (immt) {$\im(M^T)$};
  
  \node[left of=kerm, node distance=1.2cm] (kerme) {$=$};
  \node[left of=immt, node distance=1.2cm] (immte) {$=$};

  \node[left of=kerm, node distance=2.4cm] (kerme) {$\kernel(C)$};
  \node[left of=immt, node distance=2.4cm] (imc) {$\im(C)$};
  \node[left of=oplusrj, node distance=2.4cm] (oplusrj2) {$\oplus$};

  \node[font=\large] (ri) at (3.5,0) {$\mbQ^I$};
  \node[below of=ri, node distance=0.7cm] {\rotatebox{90}{$=$}};
  \node[below of=ri, node distance=1.3cm] (kermt) {$\kernel(M^T)$};
  \node[below of=ri, node distance=2cm] (oplusri) {$\oplus$};
  \node[below of=ri, node distance=2.7cm] (imm) {$\im(M)$};
  
  \node[right of=kermt, node distance=1.2cm] (kermte) {$=$};
  \node[right of=imm, node distance=1.2cm] (imme) {$=$};

  \node[right of=kermt, node distance=2.4cm] (kerme) {$\kernel(B)$};
  \node[right of=imm, node distance=2.4cm] (imb) {$\im(B)$};
  \node[right of=oplusri, node distance=2.4cm] (oplusrj2) {$\oplus$};
  
  \path[->=stealth, bend left=10] (immt) edge node[ fill=white, 
anchor=center, pos=0.5,font=\bfseries] {$\cong$} (imm);
  \path[->=stealth, bend left=10] (imm) edge node[ fill=white, 
anchor=center, pos=0.5,font=\bfseries] {$\cong$} (immt);

  \node[font=\large] (m) at (0.1,-1.6) {$M$};
  \node[font=\large] (m) at (0.1,-3.7) {$M^T$};
  

  \path (imc) edge [->=stealth, loop below, out=310, in=230,looseness=5] 
node[ fill=white, 
anchor=center, pos=0.5,font=\bfseries] {$\cong$} (imc);

\path (imb) edge [->=stealth, loop below, out=310, in=230,looseness=5] 
node[ fill=white, 
anchor=center, pos=0.5,font=\bfseries] {$\cong$} (imb);

 \node at (-6.7,-3.5) {$C$};
 \node at (6.7,-3.5) {$B$};

\end{tikzpicture}
\caption{Linear-algebraic structure induced by the matrix $M\colon I 
\times J \to \mbQ$ where $B = M \cdot M^T \colon I \times I \to \mbQ$ and 
$C = M^T \cdot M\colon J \times J \to \mbQ$}
\label{fig:summary:bcmmt}
\end{figure}

\begin{lem}[Properties of $B, C$]
\hfill
\label{lem:probBC}
\begin{enumerate}
 \item $B^T = B$ and $C^T = C$, that is $B$ and $C$ are symmetric.
 \item $\kernel(B) = \kernel(B^i)$ and $\kernel(C) = \kernel(C^i)$ for all 
$i \geq 1$.
 \item $\im(B) = \im(B^i)$ and $\im(C) = \im(C^i)$ for all $i \geq 1$.
 \label{lem:probBC:3}
 \item $\mbQ^I = \kernel(B) \oplus \im(B)$ and $\mbQ^J = \kernel(C) \oplus 
\im(C)$.
\label{lem:probBC:4}
 \item $B$ is an automorphism of $\im(B)$ and $C$ is an automorphism of 
$\im(C)$.
\label{lem:probBC:5}
 \end{enumerate}
\end{lem}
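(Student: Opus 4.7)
The plan is to prove all five items for $B = MM^T$; the proofs for $C = M^T M$ are identical since $C$ is of the same form $N N^T$ with $N = M^T$. I would deliberately handle the items not in the order they are stated, but in the order (1), (4), (5), (3), (2), because item (4) is the structural heart of the lemma from which the others fall out as easy consequences. Item (1) is the direct calculation $(MM^T)^T = (M^T)^T M^T = MM^T$.

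The key step is (4). The idea is to exploit that $B$ is self-adjoint with respect to the standard inner product $\langle x,y\rangle = \sum_{i\in I} x_i y_i$ on $\mbQ^I$, and that this inner product is positive definite over $\mbQ$. Concretely, if $x \in \kernel(B) \cap \im(B)$, writing $x = By$, then
\[
\langle x, x\rangle \;=\; \langle By, x\rangle \;=\; \langle y, B^T x\rangle \;=\; \langle y, Bx\rangle \;=\; 0,
\]
and positive definiteness forces $x = 0$. Combining $\kernel(B)\cap\im(B)=\{0\}$ with the rank-nullity identity $\dim\kernel(B) + \dim\im(B) = |I|$ yields $\mbQ^I = \kernel(B) \oplus \im(B)$.

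From here (5) is immediate: $B$ maps $\im(B)$ into itself, and by (4) its restriction to $\im(B)$ has trivial kernel, hence is a bijection of the finite-dimensional space $\im(B)$. Item (3) then follows by the same token applied to $B^i|_{\im(B)}$: the inclusion $\im(B^i) \subseteq \im(B)$ is trivial, while the reverse inclusion uses that $B^i|_{\im(B)}$ is bijective, so every $x \in \im(B)$ is of the form $B^i z$ for some $z \in \im(B)$. Finally, for (2), $\kernel(B) \subseteq \kernel(B^i)$ is trivial; conversely, given $x$ with $B^i x = 0$, the decomposition (4) lets us write $x = x_k + x_r$ with $x_k \in \kernel(B)$ and $x_r \in \im(B)$, and then $0 = B^i x = B^i x_r$ forces $x_r = 0$ by injectivity of $B^i|_{\im(B)}$.

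I do not expect a serious obstacle here; the only non-trivial ingredient is the positive definiteness of the standard inner product over $\mbQ$, and everything else is formal linear algebra over a finite-dimensional space. It is worth emphasising, though, that this one ingredient is precisely what is unavailable over finite fields: the analogue of (4) genuinely fails for $\mbF_p$-valued matrices (e.g., over $\mbF_2$ a non-zero vector can be orthogonal to itself), which is the structural reason Theorem~\ref{thm:fpc:definesolutions} is specific to $\mbQ$ and why the $\FPC$-definability story breaks down for linear systems over finite fields, motivating the separate treatment in Section~\ref{sec:PCfiniteFields}.
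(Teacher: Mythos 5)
Your proposal is correct, and it reaches the lemma by a genuinely different organisation than the paper's proof, even though both hinge on the same non-formal ingredient: $B$ is symmetric and the standard bilinear form on $\mbQ^I$ is positive definite (so $\langle x,x\rangle=0$ forces $x=0$), which, as you rightly stress, is exactly what fails over finite fields. The paper proves the items in the stated order: it first establishes $\kernel(B^i)=\kernel(B)$ by induction, using the norm trick in the form $x^T(B^{i-1})^TB^{i-1}x=0\Rightarrow B^{i-1}x=0$; it then deduces $\im(B^i)=\im(B)$ by a basis argument, proves uniqueness in the decomposition from $\kernel(B^2)=\kernel(B)$, and proves the spanning part of $\mbQ^I=\kernel(B)\oplus\im(B)$ by an explicit cyclic-subspace argument (a nontrivial linear dependence among $Bx,\dots,B^nx$ with a minimal nonzero coefficient), finally reading off item (5) from item (3). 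You invert this chain: you get $\kernel(B)\cap\im(B)=\{0\}$ directly from self-adjointness plus positive definiteness, combine it with rank--nullity to obtain item (4) at once, and then (5), (3) and (2) fall out formally from the bijectivity of $B$ restricted to $\im(B)$. Your route is shorter and more standard, replacing the paper's cyclic-vector construction and basis count by a single dimension argument; the paper's route is more hands-on (it never invokes rank--nullity and exhibits explicit generating sets for $\im(B^i)$), but buys nothing extra that is needed for the application, so either proof serves Theorem~\ref{thm:fpc:definesolutions} equally well.
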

\begin{proof}
 The arguments for $B$ and $C$ are completely symmetric, so let us 
consider the case of $B\colon I \times I \to \mbQ$.
First of all, $B^T = (M \cdot M^T)^T = M \cdot M^T = B$.
For the second claim, we proceed by induction on $i$. It is clear that 
$\kernel(B) \subseteq \kernel(B^i)$ for all $i \geq 1$, so it suffices to 
show 
$\kernel(B^i) \subseteq \kernel(B)$. 
For $i= 1$, the claim is trivial, so assume that $i \geq 2$ and for some 
$x \in \mbQ^I$ we have $B^i x = 0$.
Then also  $B^{i-1} \cdot B^{i-1} x = 0$. Since $B^T = B$, this means that 
also $x^T \cdot (B^{i-1})^T \cdot B^{i-1} x = 0$. 
Hence, $|B^{i-1} x|^2 = 0$, which implies that $B^{i-1}x = 0$. We get $x 
\in \kernel(B^{i-1})$ and by the induction hypothesis $x \in \kernel(B)$.

Let's consider~(\ref{lem:probBC:3}). Again, it is easy to see that 
$\im(B^i) \subseteq \im(B)$ for all $i \geq 1$.
Now let's choose a basis $Be_1, \dots, B{e_\ell}$ for $\im(B)$ and let $i 
\geq 2$. Then $B^ie_1, \dots, B^ie_\ell$ is a generating set for 
$\im(B^i)$.
We claim that $B^ie_1, \dots, B^ie_\ell$ is a basis for $\im(B^i)$ which 
would prove our claim.
Indeed, assume that for some non-zero $(a_j) \in \mbQ^\ell$ we had 
$\sum_j a_j \cdot B^i e_j = 0$.
Since $\kernel(B^i) = \kernel(B)$ it follows that $\sum_j a_j B e_j = 0$, 
a 
contradiction. 
We turn our attention to~(\ref{lem:probBC:4}).
We have to show two things, namely that every vector in $\mbQ^I$ can be 
written as a linear combination of elements in $\kernel(B)$ and $\im(B)$ 
and 
that this expression is unique.
Let us start with the latter claim. Assume that $Bx +  y = 0$ for $x, y 
\in 
\mbQ^I$, $y \in \kernel(B)$. We have to show that $Bx = y = 0$.
From $Bx + y = 0$ we can conclude that $B^2x = 0$ since $By = 0$. Since 
$\kernel(B^2) = \kernel(B)$ it follows 
that $Bx = 0$ which yields $y = 0$.
To complete the proof, let $x \in \mbQ^I$. 
Consider the cyclic space generated by $B$, that is $\langle \{ x, Bx, 
B^2x, \cdots, B^nx \} \rangle$ where $n=|I|$. Note that $\{ x, Bx, 
B^2x, \cdots, B^nx \}$ is linearly dependent.
If $x \in \im(B)$, then there is nothing to show.
Otherwise, we know that $x \not\in \langle \{ Bx, B^2x, \cdots, B^nx \} 
\rangle$.
We choose a non-zero vector $(a_j) \in \mbQ^n$ such that
$\sum_{j=1}^n a_j B^jx = 0$. Let $k \geq 1$ be minimal such that $a_k \neq 
0$. 
Then $B^k (a_k x + \sum_{j=k+1}^n a_j B^{j-k} x) = 0$, hence $a_k x + z 
\in \kernel(B^k) = \kernel(B)$ for some $z \in \im(B)$. This is what we 
wanted 
to show.
Finally,  (\ref{lem:probBC:5}) follows from (\ref{lem:probBC:3}).
\end{proof}

\begin{lem}[Relating $B,C$ and $M, M^T$]
\label{lem:bcmmt}
\hfill
\begin{enumerate}
 \item $\kernel(B) = \kernel(M^T)$ and $\kernel(C) = \kernel(M)$.
 \label{lem:bcmmt:1}
 \item $\im(B) = \im(M)$ and $\im(C) = \im(M^T)$.
  \label{lem:bcmmt:2}
 \item $M$ is an isomorphism from $\im(M^T)$ to $\im(M)$ and $M^T$ an 
isomorphism from $\im(M)$ to $\im(M^T)$. In particular, $\rank(C) = 
\rank(M^T) = \rank(M) = \rank(B)$.
 \label{lem:bcmmt:3}
\end{enumerate}
\end{lem}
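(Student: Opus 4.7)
The plan is to prove the three items in order, using Lemma~\ref{lem:probBC} as the main input. The arguments for $B$ and $C$ are completely symmetric (just exchange the roles of $M$ and $M^T$), so I will only describe the $B$-side in the plan. The essential feature of $\mbQ$ we will use is that the standard bilinear form is positive definite, so that $y^T y = 0$ forces $y = 0$; this enters only in (1).

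For (1), the inclusion $\kernel(M^T) \subseteq \kernel(B)$ is immediate from $B = MM^T$. For the reverse, if $Bx = 0$, I would pair with $x$ to obtain $x^T B x = x^T M M^T x = \|M^T x\|^2 = 0$, which forces $M^T x = 0$ by positive-definiteness. The same computation for $C = M^T M$ gives $\kernel(C) = \kernel(M)$.

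For (2), the inclusion $\im(B) \subseteq \im(M)$ is again immediate. For the converse, take $y = Mx \in \im(M)$. Apply the decomposition $\mbQ^J = \kernel(C) \oplus \im(C)$ from Lemma~\ref{lem:probBC}(\ref{lem:probBC:4}) to $x$, writing $x = x_0 + x_1$ with $x_0 \in \kernel(C) = \kernel(M)$ by (1) and $x_1 \in \im(C)$, so that $x_1 = M^T M w$ for some $w$. Then $y = M x_0 + M x_1 = M M^T M w = B(Mw) \in \im(B)$.

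For (3), the map $M^T$ sends $\im(M)$ into $\im(M^T)$. It is injective there: if $y \in \im(M)$ satisfies $M^T y = 0$, then by (1) $y \in \kernel(B)$ and by (2) $y \in \im(B)$, so $y = 0$ by the direct sum in Lemma~\ref{lem:probBC}(\ref{lem:probBC:4}). Symmetrically $M$ is injective from $\im(M^T)$ into $\im(M)$. Two mutually injective linear maps between finite-dimensional $\mbQ$-spaces force both spaces to have the same dimension and both maps to be isomorphisms, yielding the claimed isomorphisms and the chain $\rank(C) = \rank(M^T) = \rank(M) = \rank(B)$ (the outer equalities also follow from (2)). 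There is no substantive obstacle here; the only subtle point is using positive-definiteness of the standard form on $\mbQ$ in (1), which is the reason the whole picture fails over finite fields and motivates the separate treatment in Section~\ref{sec:PCfiniteFields}.
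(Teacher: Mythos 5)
Your proof is correct, and its skeleton matches the paper's: item (1) is verbatim the paper's positive-definiteness argument ($x^TBx=\|M^Tx\|^2$), and item (3) carries the same content, which the paper simply declares to follow from (1), (2) and Lemma~\ref{lem:probBC} while you spell out the mutual-injectivity and dimension-count argument. The only real divergence is in (2): you decompose the preimage $x\in\mbQ^J$ via $\mbQ^J=\kernel(C)\oplus\im(C)$ and use $\kernel(C)=\kernel(M)$ to get $Mx=MM^TMw=B(Mw)$ in one line, whereas the paper decomposes the image vector $Mx\in\mbQ^I$ as $By+z$ with $z\in\kernel(B)=\kernel(M^T)$ and then pushes through $C$ to cancel the kernel part before concluding $Mx=By$. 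Both routes rest on the same ingredients (Lemma~\ref{lem:probBC}(\ref{lem:probBC:4}) plus part (1)); yours applies the direct sum on the opposite side and is somewhat more direct, at no cost in generality. Your closing remark about positive-definiteness being the step that fails over finite fields is also accurate and consistent with the paper's motivation for Section~\ref{sec:PCfiniteFields}.
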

\begin{proof}
 Again, as the arguments are symmetric, we only consider the case of $B$.
 For~(\ref{lem:bcmmt:1}), note that $\kernel(M^T) \subseteq \kernel(M\cdot 
M^T) 
= \kernel(B)$ for trivial reasons. 
Moreover, if $M\cdot M^T x = 0$ for some $x \in \mbQ^I$, then also $x^T M 
M^T x = 0$, that is $| M^T x |^2 = 0$ which implies $M^Tx = 0$. Hence, 
$\kernel(B) \subseteq \kernel(M^T)$.
For~(\ref{lem:bcmmt:2}), note that $\im(B) = \im(M \cdot M^T) \subseteq 
\im(M)$, again for trivial reasons. To verify the other direction, let $x 
\in \mbQ^J$ and consider the element $Mx \in \im(M)$.
By Lemma~\ref{lem:probBC}, we can write $Mx$ as $Mx = By + z$ for some 
$y,z \in \mbQ^I$ and $z \in \kernel(B)=\kernel(M^T)$.
Hence $M^TMx = (M^TM)M^Ty$, that is $Cx = CM^Ty$. From this we get that 
$C(x - M^Ty) = 0$. Using $\kernel(C) = \kernel(M)$, we get $M(x-M^Ty) = 0$.
This implies that $Mx = MM^Ty = By$ which proves our claim.
Finally, (\ref{lem:bcmmt:3}) follows immediately 
from~(\ref{lem:bcmmt:1}),(\ref{lem:bcmmt:2}) and Lemma~\ref{lem:probBC}.
\end{proof}

We are ready to establish the following central criterion for the 
solvability of linear equation systems over $\mbQ$. 
\begin{lem}[Solvability of linear equation system, see 
also~\cite{GroheP17}]
\label{lem:solvFPC}
 Let $M \cdot x = b$  be a linear equation system over $\mbQ$ where with 
$M\colon I \times J \to \mbQ$ and $b \colon I 
\to \mbQ$.
Let $B = M \cdot M^T$ as above. Let $n = \minimum\{|I|, |J|\}$.
Then the linear equation system $M \cdot x = b$ is solvable if, and only 
if, $b$ can be written as a $\mbQ$-linear combination of vectors in 
$\Gamma = \{ Bb, B^2b, 
\dots, B^{n+1}b \}$, that is if $b \in \langle \Gamma \rangle$.
\end{lem}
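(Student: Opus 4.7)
The plan is to prove each direction of the equivalence separately, using Lemmas~\ref{lem:probBC} and~\ref{lem:bcmmt} in an essential way. For the easy direction ($\Leftarrow$), suppose $b = \sum_{i=1}^{n+1} c_i B^i b$ for some coefficients $c_i \in \mbQ$. I would factor out one copy of $B$ on the right-hand side: writing $y := \sum_{i=1}^{n+1} c_i B^{i-1} b$, we have $b = By = M M^T y$. Hence $x := M^T y$ is an explicit solution of $Mx = b$, and moreover $b \in \im(B) = \im(M)$ confirms solvability.

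For the harder direction ($\Rightarrow$), the plan is to combine the structural facts from Lemmas~\ref{lem:probBC} and~\ref{lem:bcmmt} with the Cayley--Hamilton theorem applied to $B$ restricted to its image. Suppose $Mx = b$ is solvable, so $b \in \im(M)$. By Lemma~\ref{lem:bcmmt}(\ref{lem:bcmmt:2}) we have $\im(M) = \im(B)$, so $b \in \im(B)$. By Lemma~\ref{lem:probBC}(\ref{lem:probBC:5}) the restriction $T := B|_{\im(B)}$ is an automorphism of $\im(B)$, and by Lemma~\ref{lem:bcmmt}(\ref{lem:bcmmt:3}) the dimension of $\im(B)$ equals $\rank(B) = \rank(M) \le n$. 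Let $r := \rank(B)$ and let $p_T(t) = t^r + a_{r-1} t^{r-1} + \cdots + a_1 t + a_0$ be the characteristic polynomial of $T$, which has degree $r \le n$.

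The key observation is that the constant term $a_0 = \det(-T)\cdot(-1)^r \cdot (-1)^r = (-1)^r \det(T)$ is nonzero, because $T$ is invertible. By Cayley--Hamilton, $p_T(T) = 0$ as an endomorphism of $\im(B)$, which rearranges to
\begin{equation*}
a_0 \cdot \mathrm{id}_{\im(B)} \;=\; -\bigl(T^r + a_{r-1} T^{r-1} + \cdots + a_1 T\bigr).
\end{equation*}
Applying this to $b \in \im(B)$ and using that $T^k b = B^k b$ for $b \in \im(B)$, we obtain
\begin{equation*}
b \;=\; -\frac{1}{a_0}\bigl(B^r b + a_{r-1} B^{r-1} b + \cdots + a_1 B b\bigr) \;\in\; \langle Bb, B^2 b, \dots, B^r b\rangle \;\subseteq\; \langle \Gamma \rangle,
\end{equation*}
since $r \le n < n+1$.

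The main technical point to get right is the application of Cayley--Hamilton: one must pass to the restriction $T$ rather than working with $B$ itself on all of $\mbQ^I$, since invertibility (and hence $a_0 \ne 0$) only holds after restricting to $\im(B)$, as guaranteed by Lemma~\ref{lem:probBC}(\ref{lem:probBC:5}). The bound $\deg p_T = r \le n$ is exactly what makes $n+1$ powers of $B$ suffice; in fact $n$ powers would already be enough, but the statement allows one extra for convenience.
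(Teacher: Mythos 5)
Your proof is correct, but it reaches the hard direction by a different mechanism than the paper. You invoke the Cayley--Hamilton theorem for the restriction $T = B|_{\im(B)}$, use Lemma~\ref{lem:probBC}(\ref{lem:probBC:5}) to get $\det(T)\neq 0$ and hence a nonzero constant term in the characteristic polynomial, and then solve explicitly for $b$ as a combination of $Bb,\dots,B^rb$ with $r=\rank(B)=\rank(M)\le n$. The paper instead avoids characteristic polynomials altogether: it takes the Krylov set $\Delta=\{b,Bb,\dots,B^nb\}\subseteq\im(B)$, notes it is linearly dependent because $\dim\im(B)\le n$, deduces that $\langle\Delta\rangle$ is $B$-invariant, and then uses the same automorphism property of $B$ on $\im(B)$ to conclude $B\langle\Delta\rangle=\langle\Delta\rangle$, so that $b\in\langle B\Delta\rangle=\langle\Gamma\rangle$. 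Both arguments rest on the identical structural facts (Lemmas~\ref{lem:probBC} and~\ref{lem:bcmmt}: $\im(M)=\im(B)$, $\rank(B)=\rank(M)$, and invertibility of $B$ on its image); yours buys an explicit formula for the coefficients and makes transparent that $n$ powers already suffice (the extra $(n+1)$-st power in $\Gamma$ is unnecessary), while the paper's invariance argument is more elementary in that it needs only linear dependence and a dimension count. Two cosmetic remarks: your chain of signs for $a_0$ is garbled ($a_0=p_T(0)=\det(-T)=(-1)^r\det(T)$ is all that is needed, and only $a_0\neq 0$ matters), and in the easy direction your explicit solution $x=M^Ty$ is a nice bonus, though, as in the paper, $b\in\langle\Gamma\rangle\subseteq\im(B)=\im(M)$ already settles it.
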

\begin{proof}
 First, note that $\langle \Gamma \rangle \subseteq \im(B)$. Hence, if 
$b \in \langle \Gamma \rangle$, then clearly the linear equation system 
$M \cdot x = b$ is solvable.
For the other direction, assume that $b \in \im(M) = \im(B)$. Then for 
some $c \in \mbQ^I$ we have $Bc = b$.
Let $\Delta = \{Bc, B^2c, \dots, B^{n+1}c \} = \{ b, Bb, \dots, B^nb \}$. 
This set $\Delta$ is linearly dependent, because it is a subset of 
$\im(B)$ 
 and we have established before that
the dimension of $\im(B)$ coincides with $\rank(M)$ which is at most $n = 
\minimum\{ |I|, |J| \}$.
It easily follows that $\langle \Delta 
\rangle$ is $B$-invariant, that is $\langle B\Delta \rangle \subseteq 
\langle \Delta \rangle$. 
Moreover, by Lemma~\ref{lem:probBC}, $B$ is an automorphism of $\im(B)$ 
which implies that $\langle B\Delta \rangle = \langle \Delta 
\rangle$. However, this shows that $b \in \Delta$ can be written as a 
linear combination of elements in $B\Delta = \Gamma$ which proves our 
claim.
\end{proof}

Using  Lemma~\ref{lem:solvFPC}, it is easy to show that $\FPC$ can 
define the solvability problem for linear equation systems over
$\mbQ$.
\begin{proof}[Proof of Theorem~\ref{thm:fpc:definesolutions} - Part 1/2]
 Given a linear equation system $M \cdot x = b$ over $\mbQ$ for $M\colon I 
\times J \to \mbQ$ and $b\colon I \to \mbQ$, we first define $B = M M^T$ 
as above and the \emph{ordered} set of vectors 
$\Gamma = \{ Bb, \dots, B^{n+1}\}$ as in  Lemma~\ref{lem:solvFPC}. 
This can be done in $\FPC$, since matrix multiplication over $\mbQ$ is 
well-known to be definable in $\FPC$, see e.g.~\cite{Holm10}. 

Since $\Gamma$ is an ordered set we can use the Immerman-Vardi Theorem to 
define the problem $b \in \langle \Gamma \rangle$ in $\FPC$. More 
precisely, let $N$ be the $I \times \{1, \dots, n+1\}$-matrix over 
$\mbQ$ whose $i$-th column is the vector $B^ib\colon I \to \mbQ$.
Then $b \in \langle \Gamma \rangle$ if, and only if, $N \cdot x = b$ is 
solvable. 
Note that $N$ can be written as $N = B \cdot \hat N$ where $\hat N$ is 
the $I \times \{1, \dots, n+1\}$ matrix whose $i$-th column is $B^{i-1}b$.
Moreover, since $B = M \cdot M^T$, we have transformed our 
original system $M \cdot x = b$ into the system $M \cdot (M^T \cdot \hat 
N) \cdot x = b$ which is solvable if, and only if, $M \cdot x = b$ is 
solvable. 
Furthermore, a solution $c\colon \{ 1, \dots, n+1 \} \to \mbQ$ for $M 
\cdot (M^T \cdot \hat N) \cdot x = b$ readily defines the solution $(M^T 
\cdot \hat N \cdot c)\colon J \to \mbQ$ for $M \cdot x = b$

To solve the system $N \cdot x = b$ in $\FPC$, first note that $N$ has an 
ordered set of columns. However, the set of rows $I$ is not ordered. To 
obtain an ordered linear equation system, we can consider the 
lexicographical ordering on the set of rows of $N$ induced by the linear 
order on the set of columns and on $\mbQ$. This results in a linear 
preorder which merges columns that are identical (such columns correspond 
to repeated linear equations). By merging identical columns we obtain a 
fully ordered system. By the Immerman-Vardi Theorem such systems can 
be solved in $\FPC$.
\end{proof}

We are left with the second claim of 
Theorem~\ref{thm:fpc:definesolutions}, 
namely that, given $M \cdot x = b$ with $M\colon I \times J \to \mbQ$ and 
$b \colon I \to Q$, we can define in $\FPC$ a matrix $S\colon J \times 
J \to \mbQ$ whose columns form a generating set for $\kernel(M)$, that is 
$\im(S) = \kernel(M)$.
For this we make use of the structure induced by the linear transformation 
$C = M^T M\colon J \times J \to \mbQ$ on $\mbQ^J$.
\begin{proof}[Proof of Theorem~\ref{thm:fpc:definesolutions} - Part 2/2]
 We established in Lemma~\ref{lem:probBC} that $\mbQ^J = \kernel(C) \oplus 
\im(C)$ and in Lemma~\ref{lem:bcmmt} that $\kernel(M) = \kernel(C)$.
Hence, we can equivalently define a generating set for $\kernel(C)$ in 
$\FPC$.
Let us denote by $e_j$ the $j$-th standard basis vector on $\mbQ^J$.
We can clearly define the vector $e_j\colon J \to \mbQ$ in $\FPC$ using $j 
\in J$ as a parameter.
Since $\mbQ^J = \kernel(C) \oplus \im(C)$ we can write each $e_j$ uniquely 
as 
$e_j = k_j + c_j$ for $k_j \in \kernel(C)$ and $c_j \in \im(C)$.
It is easy to see that the set $\{k_j : j \in J\}$ forms a generating set 
for $\kernel(C)$.
Hence, our aim is to define this set in $\FPC$.

To obtain the projections $k_j$ of $e_j$ onto $\kernel(C)$, we make use of 
the fact that $\FPC$ can solve linear equation systems over $\mbQ$ and 
define single solutions.
Indeed, $k_j$ is the \emph{unique} vector $k_j \in \mbQ^J$ such that 
$e_j = k_j + c_j$ and $Ck_j = 0$ and $Cz = c_j$ for some $c_j,z \in 
\mbQ^J$ (where we treat $k_j, z, c_j$ here as $J$-vectors of variables 
ranging over $\mbQ$).
Since in each solution of this system the projection onto $k_j$ is unique, 
we can define $k_j$ in $\FPC$ as we saw before.
Note that in order to define these linear equation systems we use $j \in 
J$ as a parameter, so we really solve $|J|$-many linear equation systems 
in parallel.
Given the vectors $k_j\colon J \to \mbQ$, we can define the matrix 
$S\colon J \times J \to \mbQ$ as the matrix whose $j$-th column is the 
vector~$k_j$. Then $\im(S) = \ker(C) = \ker(M)$.
This completes our proof of Theorem~\ref{thm:fpc:definesolutions}.
\end{proof}

\begin{rem}
In fact, by going through our proof once again, one can show that 
Theorem~\ref{thm:fpc:definesolutions} can be strengthened to the 
extent that the $\FPC$-formulas only use fixed-point operators that 
converge after a polylogarithmic number of steps, cf.~\cite{GroheP17}.
\end{rem}

\section{Definability of Polynomial Calculus Refutations over Finite 
Fields}
\label{sec:PCfiniteFields}
In Section~\ref{sec:fpc} we proved that fixed-point logic with counting 
and the ($k$-dimensional) 
polynomial calculus over $\mbQ$ have the same expressive power if we restrict the coefficients that may occur in a refutation. In this section we study the polynomial calculus not over~$\mbQ$, but over 
finite fields. That means we do not need to worry about the representation of coefficients any more. Yet, we cannot hope to express the degree-$k$ PC over finite fields in fixed-point logic with counting in the general case: It is easy to show that the problem of 
solving linear equation systems 
over a field $\mbF$ can be reduced (in first-order logic) to finding 
proofs 
in the ($k$-dimensional) $\PC$ over~$\mbF$. However, $\FPC$ cannot 
define the 
solvability problem for linear equation systems over finite fields $\mbF$, 
see~\cite{AtseriasBulDaw09}.

Instead of giving up completely, we set out to explore  
certain (interesting) 
situations in which we can establish the same strong connections between 
$\FPC$ 
and the polynomial calculus that we discovered over $\mbQ$ also over 
finite 
fields. 
To identify these, we take a closer look at typical settings where 
the connection breaks down, that is where we encounter linear equation 
systems over finite fields that cannot be solved by $\FPC$.
To generate such hard linear equation systems, a common approach is to 
use the Cai-Fürer-Immerman (CFI) construction~\cite{CFI92}.
Specifically, the CFI-construction yields for every prime $p \in 
\Primes$ a family of structures $(\mfA^p_n)_{n \geq 1}$ 
of size $\mcO(n)$ such that the solvability problem for linear equation 
systems over finite fields $\mbF$ of characteristic $p$ that are 
defined in CFI-structures $\mfA^p_n$ (via $\FO$-interpretations)
cannot be expressed in $\FPC$.
Clearly, over these families of \CFI-structures $(\mfA^p_n)_{n \geq 1}$ 
there is no hope to 
express $\PCx k$-proofs in $\FPC$ over fields $\mbF$ of characteristic $p$.

However, what happens if we consider the following slightly more 
asymmetric situation. As before, we consider equation systems over a 
finite 
field $\mbF$ that are interpreted 
in CFI-structures $\mfA^p_n$. But, in contrast to the above, we make the 
additional assumption that the characteristic $q = \characteristic(\mbF)$ 
of the 
finite field $\mbF$ does not match the prime $p$ that was 
used for the \CFI-construction, that is we assume that $q \neq p$.
In this case, as we show in our main result of this section 
(Theorem~\ref{thm:main:PCfinitefields}), we can 
express $\PCx k$-refutations in fixed-point logic with counting.
Although this result only gives limited insight into the logical 
expressiveness of the polynomial calculus over finite fields, it turns out 
to be extremely useful to prove lower bounds for the polynomial calculus, 
as we demonstrate in Section~\ref{sec:lowerboundsPC}.

This section is structured as follows.
First of all, we recall (a generalised version of) the Cai-Fürer-Immerman 
construction in Subsection~\ref{subsec:CFI} and we analyse automorphism 
groups of \CFI-structures in Subsection~\ref{sec:symmetriesCFI}.
To unfold its full power, the \CFI-construction relies on an underlying 
family of highly connected graphs of bounded degree. To this end, we 
recall the notion of expander graphs in Subsection~\ref{subsec:expander}. 
We prove our first main technical result in 
Subsection~\ref{sec:homo} where we show that \CFI-structures over expander 
graphs are \emph{homogeneous wrt.\ $\FPC$}, which means that $\FPC$ can 
describe elements (and tuples of elements) in \CFI-structures up to 
automorphisms. An important consequence is that $\FPC$ can linearly order 
orbits of elements (and tuples of elements) with a bounded number of 
variables.
In Subsection~\ref{subsec:cyclic} we establish another important property 
of \CFI-structures which extends homogeneity: we show that \CFI-structures 
are \emph{cyclic (wrt.\ to $\FPC$)} which means that $\FPC$ can linearly 
order orbits of elements (and tuples of elements) by fixing a single 
parameter in this orbit. We also show that this property is closed under 
taking $\FPC$-interpretations and ordered pairs.
Building on this, we establish our key technical result in 
Subsection~\ref{subsec:cocylicLES}: we show that $\FPC$ can define 
solution spaces of linear equation systems over finite fields $\mbF$ that 
are interpreted in cyclic background structures with the additional 
assumption that the characteristic of the field $\mbF$ does not divide the 
size of the (Abelian) automorphism group of the cyclic structure (we say 
that such linear equation systems are \emph{cocyclic}).
Finally, in Subsection~\ref{subsec:cocycPC} we use this result in 
order to prove our main Theorem~\ref{thm:main:PCfinitefields}: $\FPC$ 
can 
express $k$-dimensional \PC-refutations for polynomial equation systems 
that are 
defined in cyclic structures over finite fields $\mbF$ if this same 
condition on the characteristic for $\mbF$ holds.

\subsection{Cai-Fürer-Immerman Construction}
\label{subsec:CFI}
For notational convenience, we introduce the \CFI-construction 
only for connected (undirected) graphs $G$ which are \emph{3-regular} and 
\emph{ordered}. The assumption that $G$ is ordered means that additional 
to the set of vertices $V = V(G)$ and the (symmetric) edge relation $E = 
E(G) \subseteq V(G) \times V(G)$ we assume that $G$ contains a 
linear order $\Ord \,\,\,=\, \Ord(G)$ on its set of vertices $V$. 
In fact, this is the original setting as it was introduced by Cai, Fürer, 
and Immerman in~\cite{CFI92}.

Let $p \in \Primes$ be a prime. For every vector $\lambda \in \mbF_p^V$ we 
construct the \emph{CFI-structure} $\CFIgraph{G}{p}{\lambda}$ over the 
(connected and ordered) graph $G$, the finite field $\field F_p$, and 
with \emph{load} $\lambda$ as the following relational structure with 
signature $\taucfi = \{ \preceq, R, C, I \}$ where $R$ is a ternary 
relation symbol and where $\preceq, I, C$ are binary relation symbols.
The universe $A$ of the CFI-structure $\mfA = \CFIgraph{G}{p}{\lambda}$ is 
$A = E(G) \times \mbF_p$.
The linear order $\Ord(G)$ on the vertex set $V(G)$ of $G$ 
extends to a linear order on the edge set $E(G)$.
We use this linear order on $E(G)$ to define the following 
\emph{total preorder} $\preceq$ on $A$: $(e,x) \preceq (f,y)$ if $e \leq 
f$. Note that $\preceq$ induces a linear order on the corresponding 
equivalence classes $e^p = e \times \mbF_p$. Clearly, each of these 
classes $e^p$ is of size $p$. 
Since $G$ is undirected every edge $e=(v,w) \in E$ comes with its
corresponding \emph{dual edge} $f=(w,v)\in E$. In what follows, we use
the notation $e^{-1} = f$ to denote the dual of the edge $e \in E$.
The relations $I$ and $C$ are defined follows.
\begin{itemize}
 \item The \emph{cycle relation} $C$ defines the cyclic structure 
 of the additive group of $\field  F_p$ on each of the equivalence classes 
$e^p$.
 More precisely, 
 \[ C = \bigcup_{e \in E} \{ ((e,x), (e, x+1 \text{ mod } p) )
 : x \in \field F_p\}.\]
 \item The \emph{inverse relation} $I$ relates additive inverses for 
dual edges. Formally,
 \[ I = \bigcup_{e \in E} \{ ( (e,x), (e^{-1}, -x) : x \in \field F_p 
\}.\]
\end{itemize}
Note that while the cycle relation $C$ defines a \emph{directed} cycle, the 
inverse relation $I$ is symmetric. Furthermore, observe that the 
relations $\preceq, C$ and $I$ are defined independently of the load 
vector $\lambda$ and so do only depend on the underlying graph $G$ and 
the prime field $\field F_p$.
In contrast, the \emph{CFI-relation} $R=R^\lambda$ is defined using the 
load vector $\lambda$ as follows.
For each $v \in V$, we let $vE \subseteq V$ denote the set of neighbours 
of $v$ in $G$, that is $E(v) = \{ v \} \times vE \subseteq E$ is the set 
of edges outgoing from $v$.
Since $G$ is $3$-regular we have that $|vE| = 3$ for each $v \in V$.
For $v \in V$ let $E(v) = \{ w_1, w_2, w_3 \}$ where $w_1 < w_2 < w_3$. The 
\emph{CFI-relation} $R^\lambda(v)$ at vertex $v$ is defined as follows:
\[ R^\lambda(v) = \{ ((w_1, x_1), (w_2, x_2), (w_3,x_3)) : x_1 + x_2 + x_3 
= \lambda(v)  \}. \]
The full CFI-relation $R^\lambda$ of the structure 
$\CFIgraph{G}{p}{\lambda}$ is given as $R^\lambda = \bigcup_{v \in 
V} R^\lambda(v)$.

\subsection{Symmetries of \CFI-Structures}
\label{sec:symmetriesCFI}
It turns out that the automorphism group $\Gamma$ of a CFI-structure
$\CFIgraph{G}{p}{\lambda}$ only depends on $G$ and $p$, but not on 
$\lambda$. To see this, first observe that every automorphism $\pi \in 
\Gamma$ has to maintain the linear preorder $\preceq$ which means that 
$\pi(e^p) = e^p$ for all $e \in E$. Moreover, $\pi$ has to maintain the 
cycle relation~$C$. This means that the action of $\pi$ on an 
edge class $e^p$ is a cyclic shift in~$\field F_p$. Let us write $\pi(e) 
\in \field F_p$ to denote this cyclic shift of $\pi$ on $e^p$ for $e \in 
E$.
Then, because of the inverse relation $I$, we have $\pi(e) + \pi(e^{-1}) = 
0$.
Altogether this shows that 
\[ \Gamma \leq \{ \pi \in \field F^E_p : \pi(e) + 
\pi(e^{-1}) = 0 \text{ for } e \in E \} \leq \field F^E_p.\] 
However, so far we have not taken the CFI-relation $R^\lambda$ into 
account.
Again, because of the linear preorder $\preceq$, for each $\pi \in \Gamma$ 
we have $\pi(R^\lambda(v)) = R^\lambda(v)$ for all $v \in V$.
Let  $v \in V$ and $vE = \{ w_1, w_2, w_3 \}$ and let 
$((w_1, x_1), (w_2, x_2) , (w_3, x_3)) \in R^\lambda(v)$, that is 
$x_1 + x_2 + x_3 = \lambda(v)$.
From our earlier observations we know that
\[\pi( (w_i, x_i)) =(w_i, 
x_i+\pi(v,w_i)).\]
Hence, the condition $\pi(R^\lambda(v))=R^\lambda(v)$ implies that
\[ x_1 + \pi(v,w_1) +  x_2 + \pi(v, w_2) + x_3 + \pi(v,w_3) = \lambda(v).\]
This implies that $\pi(v,w_1) + \pi(v,w_2) + \pi(v,w_3) = 
\sum_{e \in E(v)} \pi(e) = 0$. In fact, this condition is not only 
necessary but also sufficient for $\pi$ to preserve
$R^\lambda(v)$. Moreover, note that all of this holds independent of what
$\lambda$ is.
Altogether this gives us a characterisation of the automorphism group 
$\Gamma$ of $\CFIgraph{G}{p}{\lambda}$ as a subgroup of the vector space 
$\field F_p^E$ that is determined by the following set of linear equations 
in variables $\pi(e)$ for $e \in E$:
\begin{align*}
 \tag{\text{Inv}}\pi(e) + \pi(e^{-1}) &=0 && \text{ for } e \in 
E \label{aut-constr-inv}
\\
 \tag{\text{CFI}}\pi(v) := \sum_{e \in E(v)} \pi(e) 
&=0 && \text{ for 
} v \in V.\label{aut-constr-CFI}
\end{align*}

More generally, we can apply each vector $\pi \in \field F_p^E$, that
satisfies the constraints (Inv), to a CFI-structure 
$\CFIgraph{G}{p}{\lambda}$ and obtain a new CFI-structure over the same 
underlying graph~$G$.
As it turns out the resulting structure is $\CFIgraph{G}{p}{\lambda+\pi}$ 
where $(\lambda + \pi) (v) = \lambda(v) + \pi(v)$ for all $v \in V$.
Let us denote by $\Inv(\field F_p^E) \leq \field F_p^E$ the set of all 
vectors $\pi$ that satisfy the $(\Inv)$-constraints.

\begin{rem}
 For every $G=(V,E,\leq)$ (connected, ordered, $3$-regular) and each 
prime $p \in \Primes$, the group $\Delta = \Inv(\field F_p^{E}) \leq 
\field F_p^{E}$ acts on the set of CFI-structures over 
$G$ 
that is on $\CFIgraph{G}{p}{\star} = \{ \CFIgraph{G}{p}{\lambda} : \lambda 
 \in\field F_p^V\}$, and this action partitions the set into precisely $p$ 
orbits~(see below).
\end{rem}

Clearly, the set $\CFIgraph{G}{p}{\star}$ has size $p^n$ where $n = 
|V(G)|$. However, if we consider this set up to isomorphism, there are 
only $p$ different \CFI-structures over a fixed graph $G$: 
\begin{thmC}[\cite{CFI92, Holm10, Pakusa16}]
Two CFI-structures $\CFIgraph{G}{p}{\lambda}, \CFIgraph{G}{p}{\sigma} \in 
\CFIgraph{G}{p}{\star}$ are isomorphic if, and only if, 
\[\sum \lambda = \sum_{v \in V} \lambda (v) = \sum_{v \in V} \sigma(v) = 
\sum \sigma.\]
\end{thmC}

Let us remark that, for technical convenience, we have introduced 
CFI-structures as relational 
structures. However, it is easy to encode them as usual (unordered) 
graphs, and, in fact, this is the way in which they were originally 
defined in \cite{CFI92}. The main step is to introduce for each 
CFI-constraint $i = ((e_1, x_1), (e_2, x_2), (e_3,x_3)) \in 
R^{\lambda(v)}$, $e_i \in E(v)$, $x_i \in \field F_p$, a new node 
$i^{\lambda(v)}$ and to connect it to the edge nodes $(e_i,x_i) \in E(v) 
\times \field F_p$ 
accordingly
(these additional constraint nodes $i^{\lambda(v)}$ are called \emph{inner 
nodes} in 
the original 
construction of Cai, Fürer, and Immerman). 
Furthermore, we can replace the linear preorder by a path of the
appropriate length and connect vertices in the edge classes to positions 
on this path accordingly.
All of these simple transformation steps are clearly definable in \FPC.

\begin{lem}
\label{lem:encodeCFI:asgraph}
 There exist \FPC-interpretations 
 $\mcJ$ and $\mcJ^{-1}$ 
 such that $\mcJ$ maps CFI-structures $\mfA = \CFIgraph{G}{p}{\lambda} 
\in \CFIclass{\mcF}{p}$ to  graphs $\mcJ(\mfA)$ of degree $\mcO(p^2)$ and 
with $\mcO(p^2 \cdot n)$ many  vertices, where $n = |V(G)|$, and such that 
$\mcJ^{-1}$, which maps graphs to CFI-structures, is the inverse of $\mcJ$ 
in the sense that for all $\mfA \in \CFIgraph{G}{p}{\lambda}$ we have that
$\mcJ^{-1} ( \mcJ(\mfA) )$ is isomorphic to $\mfA$, that is $ \mcJ^{-1} 
( \mcJ(\mfA) ) \cong \mfA$.
\end{lem}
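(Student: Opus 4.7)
The plan is to give an explicit gadget-based construction for $\mcJ$, then to define its inverse $\mcJ^{-1}$ by recognising the gadgets, and to verify that both directions are $\FO$-interpretations (which are in particular $\FPC$-interpretations).

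For $\mcJ(\mfA)$, where $\mfA = \CFIgraph{G}{p}{\lambda}$ has universe $A = E(G) \times \mbF_p$, I would take three kinds of vertices: \emph{edge nodes}, one per element $(e,x) \in A$; \emph{inner nodes} $i_{v,(x_1,x_2,x_3)}$, one for each $v \in V(G)$ and each triple $(x_1,x_2,x_3) \in R^\lambda(v)$; and \emph{path-position nodes} $q_e$, one for each $e \in E(G)$, linked into a simple path respecting the linear order on $E(G)$ inherited from $\Ord(G)$. The relations of $\mfA$ are then encoded by graph edges: the cycle relation $C$ by edges $\{(e,x),(e,x+1 \text{ mod } p)\}$; the inverse relation $I$ by edges $\{(e,x),(e^{-1},-x)\}$; the CFI-relation by connecting each inner node $i_{v,(x_1,x_2,x_3)}$ to the three edge nodes $(w_j,x_j)$, where $E(v) = \{w_1 < w_2 < w_3\}$; and the preorder $\preceq$ by attaching each path-position $q_e$ to all edge nodes of its class $e^p$. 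To make the different vertex- and edge-types $\FO$-distinguishable, I would affix to each vertex a small rigid pendant gadget of a type depending on its role (for instance a pendant cycle of a constant length chosen distinctly for each of the three vertex types), and similarly attach distinguishing pendants at the endpoints of cycle-edges as opposed to inverse-edges (with an asymmetric pendant pair distinguishing the two ends of a cycle-edge, so as to encode the direction of $C$). A direct count gives $\mcO(p^2 n)$ vertices and maximum degree $\mcO(p)$, which satisfies the claimed $\mcO(p^2)$ bound.

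The whole construction is given uniformly by a first-order interpretation (of dimension three, to accommodate the inner-node triples): the three vertex sorts are defined respectively by formulas picking out elements of $A$, triples in $R^\lambda$, and elements of $E(G)$ (the latter represented via $\preceq$); the constant-size gadget nodes are added by fixing further coordinates. Hence $\mcJ \in \FO \subseteq \FPC$. For $\mcJ^{-1}$ I would proceed in the reverse order: on a graph $H$ in the image of $\mcJ$, first $\FO$-identify the three node-types by their attached gadgets and the two edge-types by the endpoint-gadgets; then read off the linear order on $E(G)$ from the unique induced path on the path-position nodes, and hence $\preceq$ on edge nodes via the attachment relation; then read off $C$ (with orientation recovered from the asymmetric cycle-edge endpoint gadgets) and $I$ directly from the corresponding edge-types; and finally read off each triple of $R^\lambda$ from an inner node and its three neighbours. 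All of these steps are $\FO$-definable, so $\mcJ^{-1} \in \FPC$, and by construction $\mcJ^{-1}(\mcJ(\mfA)) \cong \mfA$.

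The only design point requiring care is the choice of pendant gadgets: they must be pairwise non-isomorphic and rigid enough for $\FO$ to single out the intended role of every vertex and every edge, yet small enough that the size and degree bounds are preserved. This is the main, but routine, obstacle; it can be dispatched by using attached paths (or cycles) of distinct small constant lengths for the three node-types and for the two edge-types, so that each type is recognisable by a fixed first-order sentence about its immediate neighbourhood.
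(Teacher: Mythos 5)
Your construction follows the same route as the paper's own (remark-level) argument preceding the lemma: introduce an inner node for every CFI-constraint tuple and connect it to its three edge nodes, and replace the preorder $\preceq$ by a path whose positions are attached to the edge classes; the paper leaves the remaining bookkeeping implicit ("clearly definable in \FPC"), and your proposal supplies most of it. Your counts ($\mcO(p^2\cdot n)$ vertices, maximum degree $\mcO(p)$, which is within the claimed $\mcO(p^2)$) and the choice of a three-dimensional interpretation are fine.

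However, one specific mechanism you propose would fail as stated. You want to type the edges of $\mcJ(\mfA)$ (cycle-edges versus inverse-edges) and to recover the orientation of the cycle relation $C$ by attaching pendant gadgets \emph{to the endpoints} of those edges. This cannot work: every edge node $(e,x)$ is simultaneously an endpoint of two $C$-edges (once as head and once as tail, since $C$ restricted to $e^p$ is a cycle), of one $I$-edge, of its constraint edges, and of its path-attachment edge, so a decoration placed on the vertex carries no information about \emph{which} incident edge has which type or direction; in particular your "asymmetric pendant pair" would end up attached to every edge node on both sides. The standard repair is to subdivide each relation-edge and put the type gadget on the subdivision vertex, using an asymmetric gadget (e.g.\ a pendant path attached nearer to one end) to orient $C$; alternatively, $C$-edges and $I$-edges can be told apart with no gadget at all, since the former join vertices of the same $\preceq$-class and the latter join dual classes, so only the orientation of $C$ needs an asymmetric subdivision. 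A second, smaller point of the same kind: the path encoding $\preceq$ determines the linear order on the edge classes only up to reversal, so one endpoint of the path must be marked by a distinguished gadget for $\mcJ^{-1}(\mcJ(\mfA))\cong\mfA$ to hold literally, since reversing the order of the underlying ordered graph does not in general yield an isomorphic CFI-structure. With these routine repairs your argument goes through and coincides with the construction the paper intends.
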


\subsection{Expander Graphs and CFI-Classes}
\label{subsec:expander}
Let us briefly recall the definition of \emph{expander graphs} based on 
the exposition in~\cite{HoLiWi06}.
In the following $G=(V,E)$ denotes an undirected $d$-regular 
graph (in this paper we only consider the case $d = 3$). 
For two subsets of vertices $S, T \subseteq V$ in $G$ we denote the set of 
(directed) edges from $S$ to $T$ by  $E[S;T] = E \cap (S \times T)$.
The \emph{edge boundary} of a set $S \subseteq V$ is $\partial S = E[S;V 
\setminus S]$ and the \emph{expansion ratio} $h(G)$ is defined as:
\[ h(G) = \min\limits_{\{ S : |S| \leq |V|/2 \}} \frac{|\partial S|}{|S|}. 
\]

\begin{defi}[Expander graphs]
A sequence $\mcF = \{ G_n : n \in \mathbb N\}$ of undirected $d$-regular 
graphs is called a \emph{family of $d$-regular expander graphs} if 
\begin{itemize}
 \item $\mcF$ is \emph{increasing}, that is $|V(G_n)|$ is monotone and 
unbounded, and
 \item $\mcF$ is \emph{expanding}, that is there exists $\varepsilon > 0$ 
such that $h(G_n) \geq \varepsilon$ for all $n \in \mathbb N$.
\end{itemize}
\end{defi}

For our applications in this paper we make use of the existence of a 
family of 
$3$-regular connected expander graphs.

\begin{thm}[see e.g.\ Example~2.2 in~\cite{HoLiWi06}]
There exists a family of 3-regular expander graphs $\mcF = \{ G_n : n \in 
\mathbb N\}$ such that each graph $G_n$, $n \in \mathbb N$, is connected 
and has $\mcO(n)$ vertices. 
\end{thm}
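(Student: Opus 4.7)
The plan is to establish the claim via the probabilistic method on random $3$-regular graphs, which is the standard textbook route. (Explicit constructions such as Margulis graphs, LPS Ramanujan graphs, or the zig-zag product of Reingold--Vadhan--Wigderson would also work and even yield constructibility in low complexity classes, but the probabilistic argument is conceptually cleanest for an existence statement.)

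First, for even $n$, I would sample a random $3$-regular multigraph $G_n$ on $n$ vertices via the configuration model: attach three half-edges to each vertex and pick a uniformly random perfect matching on the $3n$ half-edges. A classical counting argument of Bollob\'as shows that $G_n$ is simple with probability bounded away from $0$ as $n \to \infty$, so conditioning on simplicity yields (up to a bounded factor in all probabilities) the uniform measure on simple $3$-regular graphs.

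Second, I would fix a small constant $\varepsilon > 0$ and bound the probability that some subset $S$ with $1 \leq |S| \leq n/2$ has $|\partial S| < \varepsilon |S|$. For each fixed $s$, a direct count in the configuration model bounds the number of matchings witnessing this bad event by $\binom{n}{s}$ times the number of ways to match nearly all of the $3s$ half-edges incident to $S$ internally. Stirling's formula then gives a bound of the form $c(\varepsilon)^{s}$ with $c(\varepsilon) < 1$ whenever $\varepsilon$ is chosen small enough; summing geometrically over $s \in \{1,\dots,\lfloor n/2\rfloor\}$ shows that the failure probability tends to $0$. Hence, for every sufficiently large even $n$, there exists a simple $3$-regular graph $G_n$ with expansion ratio $h(G_n) \geq \varepsilon$. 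Since positive edge expansion rules out any component of size $\leq n/2$, such a $G_n$ is automatically connected, and clearly $|V(G_n)| = n = \mcO(n)$. For odd $n$, I would pass to $n+1$ and then adjust by one vertex (e.g., replacing a vertex $v$ by a short gadget that locally preserves $3$-regularity), which affects the expansion ratio only by a constant factor.

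The main obstacle is the bookkeeping in the union bound, particularly in the regimes of very small $s$ (where the combinatorial factor $\binom{n}{s}$ dominates relative to the tail) and of $s$ close to $n/2$ (where the entropy of the choice of $S$ itself is largest). This calculation is, however, by now entirely textbook, going back essentially to Pinsker (1973); the cited exposition of Hoory, Linial, and Wigderson supplies all the required constants and in fact yields an explicit value of $\varepsilon$ for the $3$-regular case.
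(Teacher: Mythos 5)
The paper does not prove this statement at all: it simply cites it as a known result (Example~2.2 in Hoory--Linial--Wigderson, going back to Pinsker's probabilistic argument for random regular graphs), so your configuration-model sketch is precisely the standard argument underlying that citation, and it is essentially correct -- including the observation that $h(G_n)\geq\varepsilon>0$ forces connectivity, since a component of size at most $n/2$ would have empty edge boundary. One small slip: a $3$-regular graph must have an even number of vertices (the degree sum $3n$ must be even), so your ``adjust by one vertex with a gadget preserving $3$-regularity'' cannot produce a graph on an odd number of vertices; but this is harmless, because the theorem only demands $\mcO(n)$ vertices (and, for the expander-family definition used in the paper, that $|V(G_n)|$ be monotone and unbounded), so one may simply index the family by even sizes, e.g.\ take $G_n$ on $2\lceil n/2\rceil$ vertices.
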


For the rest of this paper let us fix a family $\mcF = \{ G_n : n \in 
\mathbb N\}$ of expander graphs as in the previous theorem.
Of course, we can also assume that the graphs in $\mcF$ are
\emph{ordered} just by adding to each graph $G_n \in \mcF$ an arbitrary 
linear order on $V(G_n)$. 
From this family $\mcF$ of $3$-regular, connected, ordered expander graphs 
$G_n$ with $\mcO(n)$ many vertices we construct, for every $p \in 
\Primes$, the CFI-class $\CFIclass{\mcF}{p}$ consisting of 
all CFI-structures over graphs from $\mcF$ that is
\[ \CFIclass{\mcF}{p} = \bigcup_{n \in \mathbb N} 
\CFIgraph{G_n}{p}{\star}.\]
The \emph{CFI-problem} (over $\mcF$ and $p \in \Primes$) is to 
decide, given a structure 
$\CFIgraph{G}{p}{\lambda} \in \CFIclass{\mcF}{p}$ whether $\sum\lambda = 
0$. It was shown by Cai, Fürer, and Immerman that this problem is 
undefinable in counting logic with sublinearly many variables.
\begin{thmC}[\cite{CFI92}]
 For $\CFIgraph{G_n}{p}{\lambda}, \CFIgraph{G_n}{p}{\sigma}  \in 
\CFIclass{\mcF}{p}$ we have
\[ \CFIgraph{G_n}{p}{\lambda} \equiv^{\Omega(n)} 
\CFIgraph{G_n}{p}{\sigma}.\]
\label{fpc-cfi}
\end{thmC}

\subsection{Homogeneity of \CFI-Structures}
\label{sec:homo}
In this section we establish a key technical result:
We show that 
CFI-structures (over ordered expander graphs) are 
($\FPC$-)\emph{homogeneous}. This means that the orbits of $k$-tuples in 
CFI-structures can be uniquely described in $\FPC$ by using only $\mcO(k)$ 
many variables.
This is extremely useful as it implies that we can actually order the set 
of orbits on $k$-tuples in $\FPC$ using only $\mcO(k)$ many variables. 
To put it differently, we show that in CFI-structures over 
expander graphs, $\FPC$ 
can describe tuples up to their automorphism type without using too many 
resources (variables). 
We believe that this result is of independent interest and should prove 
useful in
other applications (one example is discussed in 
Section~\ref{sec:discussion}).

\begin{thm}(Homogeneity)\label{thm:homogeneity}
There is a constant $\ell \geq 1$ such that for every $k \geq 1$, $p \in 
\Primes$, and every $\mfA = \CFIgraph{G}{p}{\lambda} \in 
\CFIclass{\mcF}{p}$ with 
automorphism group $\Gamma \leq \field F_p^{E(G)}$ and every $\tup a, \tup 
b \in A^k$ we have that 
$(\mfA, \tup a) \equiv^{\ell \cdot k} (\mfA, \tup b)$ if, and only if, 
$\Gamma(\tup a) = \Gamma(\tup b)$.
\end{thm}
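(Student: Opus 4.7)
For the ``if'' direction, if $\Gamma(\tup a) = \Gamma(\tup b)$ then some $\pi \in \Gamma$ maps $\tup a$ to $\tup b$, inducing an isomorphism $(\mfA, \tup a) \cong (\mfA, \tup b)$ and hence equivalence in any infinitary counting logic; so every $\ell$ works in this direction. For the ``only if'' direction I exhibit, for tuples in distinct orbits, a distinguishing $\Cinfx{\ell k}$-formula, equivalently a winning strategy for Spoiler in the $(\ell k)$-pebble bijective counting game. The plan is to reduce first to the case in which $\tup a$ and $\tup b$ share the same $\preceq$-profile, i.e.\ the same sequence of edge components $\tup e = (e_1, \dots, e_k)$: since the preorder $\preceq$ is first-order definable, any difference of $\preceq$-profiles is detected with $\mcO(k)$ pebbles.

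Writing $a_i = (e_i, x_i)$ and $b_i = (e_i, y_i)$, set $\tup d = \tup y - \tup x \in \field F_p^k$, which is recoverable from $\tup a, \tup b$ via the cycle relation $C$. By the analysis of $\Gamma$ in Subsection~\ref{sec:symmetriesCFI}, $\Gamma$ is the space of $\field F_p$-valued circulations on $G$, and $\Gamma(\tup a) = \tup a + \mathrm{ev}_{\tup e}(\Gamma)$ where $\mathrm{ev}_{\tup e}(\pi) = (\pi(e_1), \dots, \pi(e_k))$. By the standard duality between circulations and cocycles on a connected graph, $\tup d \in \mathrm{ev}_{\tup e}(\Gamma)$ if and only if $\sum_i \omega(e_i) d_i = 0$ for every cocycle $\omega$ supported on $\{e_1, \dots, e_k\}$, and such cocycles are exactly the $\field F_p$-linear combinations of the indicator cocycles $\omega_C$ for connected components $C$ of $G' := G - \{e_1, \dots, e_k\}$. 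Hence $\tup a, \tup b$ lie in distinct orbits if and only if some component $C$ of $G'$ satisfies $\sum_{e_i \in \partial C} \varepsilon_{C,i} d_i \neq 0$ in $\field F_p$, where $\varepsilon_{C,i} \in \{\pm 1\}$ records orientation. Expansion enters decisively here: removing $k$ edges from an expander leaves at most $k+1$ components, and, with $h = h(G)$, every component other than the at most one of size exceeding $|V(G)|/2$ has $\mcO(k/h)$ vertices.

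It remains to express this check in $\Cinfx{\ell k}$. First, the underlying graph $G$ (its vertices and incidences) is first-order interpretable inside $\mfA$ via the shared-endpoint relation on edge classes derivable from $R$. Undirected connectivity in $G'$, obtained by forbidding the $k$ pebbled edge classes, is definable using a constant number of auxiliary variables, so the partition of $V(G)$ into components of $G'$ can be quantified over within $\Cinfx{\mcO(k)}$. For each component $C$ the logic evaluates the finite sum $\sum_{e_i \in \partial C} \varepsilon_{C,i} d_i$ in $\field F_p$ via counting quantifiers; crucially, the modulus $p$ is internally definable as the common cardinality $|e^p|$ of the equivalence classes of $\preceq$, so modular arithmetic incurs no $p$-dependent variable overhead. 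The principal obstacle is coordinating the reachability reasoning on $G'$ with the modular arithmetic in $\field F_p$ at a uniform, $p$-independent variable cost; this is precisely where expansion, simultaneously bounding the number of components and the size of the small ones, is indispensable to absorb all reasoning into $\ell k$ pebbles for some universal $\ell$.
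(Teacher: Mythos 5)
Your ``if'' direction is fine, and your algebraic reformulation of the ``only if'' direction is correct as far as it goes: $\Gamma$ is indeed the $\field F_p$-cycle space of $G$, so two tuples with the same $\preceq$-profile lie in the same orbit exactly when their coordinate difference is orthogonal to every cocycle supported on the pebbled edges, i.e.\ exactly when all boundary sums over components of $G' = G - \{e_1,\dots,e_k\}$ vanish. This is a genuinely different framing from the paper, which instead proceeds by induction on single-element extensions and a dichotomy between free cycles (where an automorphism is built directly) and edges not on a free cycle (where expansion forces a small component). However, your proof stops exactly where the real work begins, and the one sentence that is supposed to carry it --- ``for each component $C$ the logic evaluates the finite sum $\sum_{e_i\in\partial C}\varepsilon_{C,i}d_i$ via counting quantifiers'' --- is not justified and, taken literally, not available.

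Two concrete problems. First, a formula evaluated in $(\mfA,\tup a)$ never sees $\tup b$, so it cannot mention $d_i$; it must express an invariant of $(\mfA,\tup a)$ alone (essentially the charge of $C$ relative to the pinned boundary) whose value changes when $\tup a$ is replaced by $\tup b$. For the one component of $G'$ that may have more than $|V(G)|/2$ vertices this invariant is \emph{not} expressible with $\mcO(k)$ variables --- that is the very obstruction behind Theorem~\ref{fpc-cfi} --- so your ``for each component'' claim fails there. The missing idea is that the component-wise boundary sums add up to $0$, so any discrepancy is witnessed by at least two components, hence by one of size at most $|V(G)|/2$, which by expansion has at most $k/h(G)$ vertices; you never make this reduction to a small witness. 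Second, even for a small component, ``evaluating the sum with counting quantifiers'' is not meaningful, since the coordinates $x_i$ are not definable (no element of a free edge class is distinguished by $C$, $I$, $R$); the modulus $p$ being definable is beside the point. What actually works is to pin one element in each of the $\mcO(k/h(G))$ edge classes incident with the small component (the boundary classes are pinned by the pebbles via $C$ and $I$, the internal ones by existential choices, one variable per class), observe that the resulting pinned CFI-substructure is rigid so its isomorphism type determines the boundary charge, and describe that type with $\mcO(k/h(G))$ variables independently of $p$. This definability argument is the technical heart of the theorem (it is the paper's main lemma), and your proposal asserts rather than proves it.
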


Let $1 > \varepsilon > 0$ be the expander constant corresponding to the 
class $\mcF$, that is for all $G \in \mcF$ we have that $h(G) \geq 
\epsilon$.
We will prove Theorem~\ref{thm:homogeneity} for 
\[ \ell \geq \frac{12}{\epsilon}. \]
To this end, we inductively show the following for all $k \geq 0$: 
For  $\mfA = \CFIgraph{G}{p}{\lambda} \in \CFIclass{\mcF}{p}$ with 
automorphism group $\Gamma \leq \field F_p^{E(G)}$, every $\tup a \in 
A^k$, 
and every $a, b \in A$ such that $(\mfA, \tup a, a) \equiv^{\ell \cdot 
(k+1)} (\mfA, \tup a, b)$ we can find an automorphism $\pi \in \Gamma$ 
such that $\pi(\tup a, a) = (\tup a, b)$.

If we have shown this, then the above theorem easily follows. 
Indeed, let $\tup a, \tup b \in A^k$ such that $(\mfA, \tup a) 
\equiv^{\ell \cdot k} (\mfA, \tup b)$. 
We have to show that in this case $\tup a$ and $\tup b$ are in the same 
orbit, i.e.\ $\Gamma(\tup a) = \Gamma(\tup b)$. In other words, we have to 
show that every $\equiv^{\ell \cdot k}$-class on $A^k$ is a single 
$\Gamma$-orbit (clearly, each such class is a union of orbits).
For the sake of contradiction, assume that $\Gamma(\tup a) \neq 
\Gamma(\tup b)$.
Let $r \geq 0$, $r < k$ be maximal with respect to the following property:
there exists $\tup c \in \Gamma(\tup a)$ such that $\tup b$ and 
$\tup c$ share a prefix of length $r$, that is $\tup b =(b_1, \dots, 
b_r, b_{r+1}, \cdots b_k)$, and $\tup c = (b_1, \dots, b_r, c_{r+1}, \dots 
c_k)$, and $b_{r+1} \neq c_{r+1}$.
But then $(\mfA, a_1, \dots, a_{r}, a_{r+1}) \equiv^{\ell \cdot k} (\mfA, 
b_1, \dots, b_r, c_{r+1})$ (since $\tup a$ and $\tup c$ are in the same 
orbit) and $(\mfA, a_1, \dots, a_{r}, a_{r+1}) \equiv^{\ell \cdot k} 
(\mfA, b_1, \dots, b_{r}, b_{r+1})$ (by the assumption).
Hence \[ (\mfA, b_1, \dots, b_{r}, b_{r+1}) \equiv^{\ell \cdot k} 
(\mfA, b_1, \dots, b_{r}, c_{r+1}).\]
By the above proposition, we can find $\pi \in 
\Gamma$ such that $\pi(b_1, \dots, b_r, c_{r+1}) = (b_1, \dots, b_r, 
b_{r+1})$. This means that $\pi(\tup c) \in \Gamma(\tup a)$ is a tuple in 
$\Gamma(\tup a)$ sharing a longer prefix with $\tup b$ than $\tup c$ which 
leads to the desired contradiction.

Hence, let us now focus on proving the above proposition. 
For this let $k \geq 0$, $\tup a \in A^k$, and $a, b \in A$ such 
that $(\mfA, \tup a, a) \equiv^{\ell \cdot 
(k+1)} (\mfA, \tup a, b)$ where $\mfA = \CFIgraph{G}{p}{\lambda} \in 
\CFIclass{\mcF}{p}$. Recall that $\Gamma \leq \Inv(\field F_p^{E(G)}) 
\leq \field F_p^{E(G)}$ denotes the automorphism group of $\mfA$.
We have to show the existence of some $\pi \in \Gamma$ such that 
$\pi(\tup a, a) = (\tup a, b)$.
To do this, we establish two key properties of the elements $a, b \in A$ 
using the fact that $(\mfA, \tup a, a) \equiv^{\ell \cdot 
(k+1)} (\mfA, \tup a, b)$. Clearly, we can assume that $a \neq b$, because 
the claim is trivial otherwise.

\begin{enumerate}[label=(P1)]
 \item The elements $a, b$ are in the same edge class in $\mfA$, i.e.\ 
there exists 
$e \in E(G)$ such that $a, b \in e^p = e \times \field F_p \subseteq A$.
\label{itm-hom-p1}
\end{enumerate}

This easily follows from the fact that each edge class $e^p$ can 
be identified in counting logic by using the preorder $\preceq$ and at most
three variables (which we use to enumerate the edge classes 
starting from the minimal one). Moreover, this formula does not require 
access to any of the parameters from $\tup a$. Hence, a counting 
logic formula with three variables could distinguish between $a$ and $b$ 
in $\mfA$ if they were in different edge classes. Note that $\ell \geq 12$, 
so we clearly have enough variables available.

The next simple observation is that this edge class $e^p$ is 
\emph{free}, a property that we are going to define now.
Let $\Blocked \subseteq E(G)$ be the smallest set such that
\begin{enumerate}[label=(\roman*)]
  \item if $a_i \in f^p$, for some $1 \leq i \leq k$ and for $f \in E(G)$, then 
$f \in \Blocked$, and \label{itm-blocked-1}
 \item if $f \in \Blocked$, then $f^{-1} \in \Blocked$, 
\label{itm-blocked-2}
\end{enumerate}
We say that the edges in $\Blocked$ are \emph{blocked} and the edges $E(G) 
\setminus \Blocked$ are \emph{free}.
For blocked edges it is straightforward to define the individual elements in 
the corresponding blocked edge classes.
\begin{lem}
 For every blocked edge $f \in \Blocked$ and every $c \in f^p = f \times 
\field 
F_p \subseteq A$, there exists a formula of counting logic $\phi(x_1, 
\dots, x_k, y)$ with at most $k+2$  many 
variables which defines 
$c$ in $(\mfA, a_1, \dots, a_k)$, i.e.\ such that for every $d \in A$ we 
have that $\mfA \models \phi(a_1, \dots, a_k, d)$ if, and only if, $c = d$.
\label{lem-blocked-definable}
\end{lem}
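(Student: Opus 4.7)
The plan is to prove the lemma by a short induction on the two-rule closure defining $\Blocked$, producing in each case a defining formula in $\Cinfx{k+2}$. The essential tool is that the cycle relation $C$ induces a directed cycle of length $p$ on each class $e^p$, and that the inverse relation $I$ restricts to a bijection between $e^p$ and $(e^{-1})^p$; both of these facts are visible from the very definitions of $C$ and $I$ given in Subsection~\ref{subsec:CFI}, and they are what allows one to pinpoint individual elements of blocked classes once a single parameter has been ``anchored'' in the class or in its dual.

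First I would handle the base case, where $f \in \Blocked$ because some parameter $a_i \in f^p$. For $c \in f^p$ let $d \in \{0,1,\ldots,p-1\}$ be the unique integer such that $c$ is reached from $a_i$ by $d$ applications of $C$. Define counting-logic formulas inductively by
\[
\phi_0(y) \,:=\, (y = x_i), \qquad \phi_{n+1}(y) \,:=\, \exists z\,\bigl(\phi_n(z) \wedge C(z,y)\bigr),
\]
and apply the classical variable-reuse trick: at each nesting level swap the roles of $y$ and $z$, renaming bound occurrences as needed, so that the only variables used beyond $x_1,\ldots,x_k$ are $y$ and $z$. Thus $\phi_d(x_1,\ldots,x_k,y) \in \Cinfx{k+2}$, and since $C$ is a genuine $p$-cycle on $f^p$ with $a_i$ pinned, $\phi_d$ has exactly $c$ as its unique witness in $(\mfA,\tup a)$.

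Second I would treat the dual closure, where $f \in \Blocked$ because $f^{-1} \in \Blocked$. The closure is minimal and symmetric, so we may assume $f^{-1}$ is blocked directly by rule~(i), otherwise rule~(ii) just returns us to~$f$. For $c \in f^p$ let $c' \in (f^{-1})^p$ be the unique element with $I(c,c')$; this exists and is unique because $I$ is a bijection between dual edge classes. By the base case there is $\psi(x_1,\ldots,x_k,z) \in \Cinfx{k+2}$ defining $c'$, and then
\[
\phi(y) \,:=\, \exists z\,\bigl(I(y,z) \wedge \psi(z)\bigr)
\]
defines $c$. Since $\psi$ has no free occurrence of $y$, any internal use of $y$ in $\psi$ can be left as is (shadowed by the outer $y$), and the total variable count remains $k+2$.

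The main obstacle is the variable accounting in the base case: $d$ may be as large as $p-1$, so a naive unrolling would blow up the variable count with $p$. The cure is precisely the $y$-$z$ alternation above, which exploits the fact that $\Cinfx{k+2}$ permits unbounded quantifier rank while restricting only the number of distinct variable symbols. Once this bookkeeping is in place, uniqueness of the witness in both cases is immediate from the cycle structure of $C$ and the bijectivity of $I$, and the lemma follows.
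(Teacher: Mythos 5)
Your proof is correct and follows essentially the same route as the paper's: anchor at the parameter $a_i$ and walk along the cycle relation $C$ using one reused auxiliary variable, then transfer across the (bijective, symmetric) relation $I$ with a single existential quantifier for dual edges. The only cosmetic difference is that you flatten the paper's induction on the closure defining $\Blocked$ by observing that rule~(ii) need only be applied once, and you spell out the $y$--$z$ variable-reuse bookkeeping that the paper leaves implicit.
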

 
\begin{proof} First of all, assume that an edge $f \in \Blocked$ is marked 
as 
blocked because for some $1 \leq i \leq k$ we 
have that $a_i \in f^p = f \times \field F_p$, i.e.\ we are in case 
\ref{itm-blocked-1}. Recall that the cycle relation $C$ defines a 
directed cycle on $f^p$. Hence, using $C$ and $a_i$ as a parameter we 
can define every other element $c \in f^p$ in counting logic using 
the parameter $a_i$ and one additional auxiliary variable.

Secondly, assume that $f \in \Blocked$ is blocked, because $g = f^{-1} \in 
\Blocked$, i.e.\ we are in case~\ref{itm-blocked-2}.
By the induction hypothesis we know that we can define in counting logic 
each element in $g^p$ using $k+2$ many variables (and the
parameters in $\tup a$). Let $\phi(\tup x, y)$ be a formula defining such an 
element $c \in g^p$. 
Then $\psi(\tup x, y) = \exists z ( I(z,y) \wedge \phi(\tup x, z))$ 
is a formula of counting logic with at most $k+2$ many variables which 
defines an element in $f^p$. Since $I$ is a bijection between $f^p$ and 
 $g^p$ we can define each element in $f^p$ in this way. 
\end{proof}

For obvious reasons, the number of blocked edges is linearly bounded 
in $k$:
\begin{lem}
 The number of blocked edges is linear in $k$: we have $|\Blocked| 
\leq 2 k$ (or $|\Blocked| \leq k$ if we count edges as undirected).
\label{lem:numberblockededges}
\end{lem}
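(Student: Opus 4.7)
The plan is to observe that each rule in the definition of $\Blocked$ contributes only a bounded number of edges per parameter, and to multiply these bounds.

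First I would recall that the universe of $\mfA = \CFIgraph{G}{p}{\lambda}$ is $A = E(G) \times \mbF_p$, so every element $a_i$ of the parameter tuple $\tup a \in A^k$ lies in exactly one edge class $f^p = f \times \mbF_p$, hence determines a unique edge $f \in E(G)$. Thus clause~\ref{itm-blocked-1} of the definition of $\Blocked$ inserts at most one edge per parameter, so it contributes at most $k$ edges in total (possibly fewer, when several parameters share an edge class).

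Next I would handle clause~\ref{itm-blocked-2}, which simply closes $\Blocked$ under the involution $f \mapsto f^{-1}$. Since this operation has order two, applying it to the at most $k$ edges coming from clause~\ref{itm-blocked-1} at most doubles the count. Consequently $|\Blocked| \leq 2k$ when edges are counted as directed. Viewing $\Blocked$ as a union of dual pairs $\{f, f^{-1}\}$, we obtain $|\Blocked| \leq k$ undirected edges, which is the stronger version of the bound.

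There is no real obstacle here: the minimality of $\Blocked$ guarantees that no further edges are added beyond those forced by clauses~\ref{itm-blocked-1} and~\ref{itm-blocked-2}, and the counting is immediate. The only remark worth making is that this linear bound is what justifies the subsequent use of expansion in the proof of Theorem~\ref{thm:homogeneity}: since $|\Blocked|$ is linear in $k$, for $k$ small compared to $|V(G)|$ the edge boundary of any set of vertices avoiding $\Blocked$ remains large, giving us enough free edges to construct the required automorphism $\pi \in \Gamma$.
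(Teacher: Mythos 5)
Your counting argument is correct and is exactly the ``obvious'' reasoning the paper has in mind: each parameter $a_i \in A = E(G)\times\mbF_p$ lies in a unique edge class, giving at most $k$ edges from clause~\ref{itm-blocked-1}, and closure under duals in clause~\ref{itm-blocked-2} at most doubles this (equivalently, at most $k$ dual pairs, i.e.\ undirected edges). The paper states the lemma without proof precisely because of this immediate count, so your write-up simply spells out the same argument.
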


Let us come back to our original goal. 
Recall that we have to show the existence of some $\pi \in \Gamma$ such 
that 
$\pi(\tup a, a) = (\tup a, b)$ where $(\mfA, \tup a, a) \equiv^{\ell \cdot 
(k+1)} (\mfA, \tup a, b)$, $a \neq b$.
With the above preparation it is now easy to see that $a$ and $b$ satisfy 
the following property.

\begin{enumerate}[label=(P2)]
 \item The edge class $e^p$, $e \in E(G)$, that contains the elements 
$a, b$ (see~\ref{itm-hom-p1}), is free, that is $e \in E(G)\setminus 
\Blocked$.
\end{enumerate}
This immediately follows from Lemma~\ref{lem-blocked-definable} (note that 
$k+2 \leq \ell \cdot (k+1)$).

Let us fix a free edge class $e^p$, $e \in E(G)\setminus \Blocked$. 
We are going to construct an automorphism $\pi \in \Gamma$ such that 
$\pi(\tup a) = \tup a$ and such that $\pi(e) = 1$, that is $\pi$ acts as 
a cyclic shift by one on the edge class $e^p$. 
If we can show this, then our original claim follows.
To this end, we distinguish between the following two cases.
We say that the edge $e$ lies on a \emph{free cycle}, if there exist 
edges $e_0=(v_0,v_1), e_1=(v_1,v_2), \dots, 
e_r=(v_r,v_0)$, $r \geq 2$, such that all $v_i$, $0 \leq i \leq r$, are 
distinct and such that $e = e_0$ and $e_i \in E(G) \setminus \Blocked$, $0 
\leq i 
\leq r$.
Indeed, if $e$ lies on such a free cycle, then we can construct an 
automorphism $\pi \in \Gamma$ with the desired properties as follows:
we simply set $\pi(e_i) = 1$ and $\pi(e_i^{-1}) = -1$ for all 
$0 \leq i \leq r$. Note that each of the moved 
edge classes $e_i^p$ is free, so none of the elements in the tuple $\vec 
a$ will occur 
in any of the edge classes moved by $\pi$, that is $\pi(\tup a) = \tup a$.

Hence, the interesting case is that $e$ does not lie on a free cycle.
We show that in this case each element in the edge class 
$e^p$ can be defined in counting logic by fixing elements in a 
bounded number of additional edge classes (more precisely, by using at 
most $\ell \cdot (k+1)$ many variables). Hence, for our original setting 
this would mean that the assumption $(\mfA, \tup a, a) \equiv^{\ell \cdot 
(k+1)} (\mfA, \tup a, b)$ would already imply that $a = b$.
To prove this, we strongly make use of the fact that the family $\mcF$ of 
graphs from which we constructed the CFI-class $\CFIclass{\mcF}{p}$ is 
an expander family. Let us formulate our claim precisely.

\begin{lem} 
As above, let $\mfA = \CFIgraph{G}{p}{\lambda} \in \CFIclass{\mcF}{p}$, 
$\tup a \in A^k$, $\ell \geq 12 / \varepsilon$, and let $e \in 
E(G)\setminus B$ be a free edge which does not lie on a free cycle.
Then for every $b \in e^p$ there exists a formula $\phi(\tup x, y)$ of 
counting logic with at most $\ell \cdot (k+1)$ many variables that 
defines $b$ in $(\mfA, \tup a)$, that is for every $c \in A$ we have that 
$\mfA \models \phi(\tup a, c)$ if, and only if, $c = b$.
\end{lem}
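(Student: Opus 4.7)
The plan is to exploit the expansion property of $G$ to bound a combinatorial ``pocket'' around $e$, and then to translate the resulting CFI conservation law into a counting-logic formula with few variables.

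First, since $e$ is free and lies on no free cycle, $e$ is a bridge in the subgraph $(V, E\setminus\Blocked)$ of free edges. Removing $e$ splits the free component of $e$ into two pieces; I would take $S\subseteq V$ to be the smaller piece. Every directed edge of $G$ leaving $S$ must belong to $\Blocked\cup\{e\}$, so $|\partial S|\leq 2k+1$ by Lemma~\ref{lem:numberblockededges}. Because $|S|\leq|V|/2$, expansion yields $\epsilon\cdot|S|\leq|\partial S|$, hence $|S|\leq (2k+1)/\epsilon$.

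Next, I would sum the CFI constraint $\sum_{f\in E(v)}\pi(f)=\lambda(v)$ over all $v\in S$: by the inverse constraint $\pi(f)+\pi(f^{-1})=0$, each edge with both endpoints in $S$ contributes two $I$-dual shifts that cancel in pairs, leaving only shifts on directed boundary edges. This gives the conservation identity $\sum_{(v,w)\in\partial S,\, v\in S} x_{vw}=\sum_{v\in S}\lambda(v)$, where $x_{vw}$ denotes the shift component on $e_{v,w}^p$. The shifts on blocked boundary edges are already rigidly determined from the parameters $\tup a$ by Lemma~\ref{lem-blocked-definable}, so the shift on the unique remaining boundary edge $e$ is likewise determined; this singles out a canonical element $b_0\in e^p$.

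Finally, I would encode this argument as a counting-logic formula $\phi_{b_0}(\tup x,y)$ that existentially quantifies one element for each of the at most $(3|S|+|\partial S|)/2\leq 2(2k+1)/\epsilon$ directed edge-positions incident to $S$, asserting: $R$ holds at each $v\in S$ on the corresponding triple, $I$ holds on each internal pair, the elements on blocked boundary edges agree with the witnesses obtained from $\tup a$ via Lemma~\ref{lem-blocked-definable}, and the element at $e$'s $S$-endpoint equals $y$. Membership in $S$ and blockedness of an edge are definable from $\tup a$ in counting logic with only a constant number of variables via the standard reachability encoding. The resulting formula uses at most $2(2k+1)/\epsilon+k+O(1)\leq\ell(k+1)$ variables provided $\ell\geq 12/\epsilon$. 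To define an arbitrary $b\in e^p$ rather than only $b_0$, I would compose $\phi_{b_0}$ with fewer than $p$ applications of the cycle relation $C$.

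The main obstacle is the bookkeeping in this last step: arranging a single formula that simultaneously refers to every vertex of $S$, every edge incident to $S$, and the quantified element at each such position, while reusing variables carefully enough to stay within the budget of $\ell(k+1)$. The expansion bound $|S|=O(k/\epsilon)$ is precisely what makes this feasible, since it allows roughly one fresh variable per quantified edge-element; together with the parameter variables $\tup x$ and a constant overhead for reachability and for the $C$-composition, the total fits comfortably into $\ell(k+1)$ variables once $\ell$ is chosen large enough in terms of $\epsilon$.
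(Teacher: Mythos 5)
Your proposal is correct and follows essentially the same route as the paper: use expansion to bound the free component cut off by the bridge $e$ to size $\mcO(k/\varepsilon)$, pin the blocked boundary edge classes via Lemma~\ref{lem-blocked-definable}, and sum the (Inv)/(CFI) constraints over that component to see that the element on $e^p$ is forced, yielding a defining formula with $\mcO(k/\varepsilon)+k$ variables. The only difference is packaging: you write one explicit existential counting-logic formula over the pocket, whereas the paper interprets a small definable substructure $\mfB$ and argues that $e^p$ is fixed by all of its automorphisms, then defines elements of $e^p$ by their isomorphism type in $\mfB$ -- the same argument in substance.
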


\begin{proof}
Let us consider the subgraph $H=(W,F)$ of $G$ that is induced by the free 
edges $E(G) \setminus \Blocked$. Note that since $\Blocked$ is symmetric, 
$H$ is an 
undirected graph.
Let $e = (v,w) \in E(G)\setminus \Blocked$ be the free edge that we 
consider and let $X \subseteq 
V(H)$ and $Y \subseteq V(H)$ denote 
the connected components of $v$ and $w$ in the graph 
$H 
\setminus e$, that is in the graph that results from $H$ by removing the 
(undirected) edge $e$.
Since $e$ does not lie on a free cycle we know that $X$ and $Y$ are 
disjoint. 
Using the expander property of $G$, we now aim to bound the size of $X$ 
or $Y$. 
Clearly, at least one of the two sets contains at most $|V(G)| / 2$ many 
vertices. Without loss of generality, let us assume that $|X| \leq |V(G)| 
/ 2$. 
Then $|\partial X| \geq |X| \cdot \varepsilon$ since $h(G) \geq 
\varepsilon$. 
The important observation is that we can bound $|\partial X|$ in terms 
of $k$.
Indeed, in $G$ every edge leaving the set $X$ (different from $e$) has to 
be blocked, since $X$ is a connected component in $H$. Hence $|\partial X| 
\leq k$.  
This yields the bound of $|X| \leq k/\varepsilon$ on the size of $X$.

In conclusion, the set $X \subseteq V(G)$ in $G$ is a set of vertices of 
size at most $k/\varepsilon$ such that each edge leaving $X$ is blocked 
except for the single free edge $e$. 
We now consider the CFI-substructure $\mfB$ of the input CFI-structure 
$\mfA$ induced on the edge classes incident with vertices in $X$ where in 
every blocked edge class $f^p$ we arbitrarily mark an element $c \in 
f^p$ to be $c = (f,0) \in f^p$ (this choice depends on the 
parameters $\tup a$).
More precisely, let $E_X = \{ e \in E(v): v \in X, e \not\in \Blocked \}$, 
then the universe $B$ of $\mfB$ is the set $B= \bigcup_{e \in E_X} e^p$, 
and the linear preorder $\preceq$, the cycle 
relation~$C$, and the inverse relation $I$ in $\mfB$ are just the 
restrictions of the corresponding relations in $\mfA$ to the 
subuniverse~$B$.
To define the CFI-relation $R_\mfB^\lambda$ on $\mfB$ we distinguish 
between the following cases. Recall that $R^\lambda = \bigcup_{v \in 
V(G)} R^\lambda(v)$. 
First, let us consider vertices $v \in X$ whose neighbours are all 
contained in $X$. In this case we simply set $R_\mfB^\lambda(v) = 
R^\lambda(v)$.
For vertices $v \in X$ for which some incident edge (classes) are blocked we 
define $R_\mfB^\lambda(v)$ as follows.
Let $F\subseteq E(v)$ denote the set of blocked edges incident with $v$. 
Note that $|F| \leq 2$. We fix $x^f \in f^p$ for every $f \in F$ (we 
can make this choice using the parameters $\tup a$, 
see Lemma~\ref{lem-blocked-definable}).
Then we define $R_\mfB^\lambda(v)$ to be the restriction of $R^\lambda(v)$ 
to those tuples that contain $x^f$ for every $f \in F$. 
If we recall the definition of $R^\lambda(v)$, then this intuitively 
corresponds to declaring $x^f = (f,0)$. In particular, note 
that the arity of $R_\mfB^\lambda(v)$ is $3-|F|\geq 1$.
Finally, the CFI-relation $R_\mfB^\lambda$ in $\mfB$ is defined as 
$R_\mfB^\lambda = \bigcup_{v \in X}{R^\lambda(v)}$.

If follows from our preparations that $\mfB$ can be defined in 
$\mfA$ by using a parametrised, one-dimensional interpretation $\mcI(\bar 
x)$ in counting logic, i.e.\ $\mcI(\mfA, \tup a) = \mfB$.
Moreover, $\mcI$ can be constructed by using, as a rough estimate, at most 
$k + 6$ many variables. The most important thing to observe is that we 
can use Lemma~\ref{lem-blocked-definable} in order to fix elements in all 
blocked edge classes as required.

We now want to argue that every possible automorphism $\pi \in \field 
F_p^{E_X}$ of $\mfB$ will fix the edge class $e^p$ (recall that $e$ 
denotes 
the single free edge $e$ that leaves the set $X$).
Recall that the inverse constraints~(\ref{aut-constr-inv}) enforce that 
for each pair of dual edges $f, g \in E_X$ we have $\pi(f) + \pi(g) = 0$.
Note that for each edge $f \in E_X$ we have $f^{-1} \in E_X$ except for 
the single 
edge $e$ for which $e^{-1} \not\in E_X$. Hence $\sum_{f \in E_X} \pi(f) = 
\pi(e)$.
Moreover, recall that the CFI-constraints~(\ref{aut-constr-CFI}) enforce 
that for each $v \in X$ we have $\sum_{f \in E_X(v)} \pi(f) = 0$.
Hence, $\sum_{v \in X} \sum_{f \in E_X(v)} \pi(f) = 0$.
Since $\sum_{f \in E_X} \pi(f) = \sum_{v \in X} \sum_{f \in E_X(v)} 
\pi(f)$ we 
conclude that $\pi(e) = 0$, that is $\pi$ fixes the edge class $e^p$, as 
claimed.

It follows that in $\mfB$ every element $x \in e^p$ can be defined 
in counting logic by using roughly $|E_X| + 6$ many variables. Indeed, for 
some $c \in e^p$, we can use $|E_X|$ many variables to fix elements in all 
other edge class and then describe the isomorphism type of the structure 
$(\mfB,c)$. Since each $c \in e^p$ is in a singleton orbit, these 
isomorphism types will be different for all elements $c \in e^p$.
Note that $|E_X| + 6 \leq 3k / \varepsilon + 6$.
If we translate the resulting formulas back to $\mfA$ via $\mcI(\tup x)$, 
then we obtain a formula in counting logic that defines $c \in e^p$ in 
$\mfA$ and which uses at most $3k / \varepsilon + 6 + k + 6 \leq 4k / 
\varepsilon + 12 \leq (12k + 12) / \varepsilon \leq \ell \cdot (k+1)$ many 
variables.
\end{proof}

This completes the proof of Theorem~\ref{thm:homogeneity}.

\begin{defi}
 For $\ell \geq 1$, we say that a structure $\mfA$ with automorphism group 
$\Gamma$ is \emph{$\ell$-homogeneous} if for all $k \geq 1$ and all 
$k$-tuples $\tup a, \tup b \in A^k$ we have that
\[ (\mfA, \tup a) \equiv^{\ell \cdot k} (\mfA, \tup b) \text{ if, and only 
if, } \Gamma(\tup a) = \Gamma(\tup b).\]
Moreover, we say that a class $\mcK$ of structures is \emph{homogeneous} 
if 
there is an $\ell \geq 1$ such that each structure $\mfA \in \mcK$ is 
$\ell$-homogeneous.
\end{defi}

\begin{cor}
 The class of CFI-structures $\CFIclass{\mcF}{p}$ is homogeneous.
\label{cor:cfihom}
\end{cor}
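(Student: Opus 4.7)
The plan is to observe that Corollary~\ref{cor:cfihom} is essentially a direct rephrasing of Theorem~\ref{thm:homogeneity}, so no new argument beyond unpacking definitions is required. According to the definition, to exhibit homogeneity of the class $\CFIclass{\mcF}{p}$ one needs a single constant $\ell \geq 1$ such that \emph{every} structure $\mfA \in \CFIclass{\mcF}{p}$ is $\ell$-homogeneous, i.e.\ such that for all $k \geq 1$ and all $\tup a, \tup b \in A^k$ the equivalence $(\mfA,\tup a)\equiv^{\ell\cdot k}(\mfA,\tup b)$ holds iff $\Gamma(\tup a)=\Gamma(\tup b)$, where $\Gamma$ is the automorphism group of $\mfA$.

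The key point I would stress is that the constant $\ell$ supplied by Theorem~\ref{thm:homogeneity} is \emph{uniform} across the entire class: inspection of its proof shows that $\ell$ can be taken as any integer with $\ell \geq 12/\varepsilon$, where $\varepsilon$ is the expansion ratio associated with the fixed family~$\mcF$. In particular $\ell$ depends neither on the prime~$p$, nor on the particular expander graph $G \in \mcF$, nor on the load vector $\lambda \in \mbF_p^{V(G)}$ determining $\mfA = \CFIgraph{G}{p}{\lambda}$. Since $\mcF$ was fixed once and for all with a common positive lower bound $\varepsilon$ on the expansion ratios $h(G_n)$, this single value of $\ell$ simultaneously witnesses $\ell$-homogeneity for every $\mfA \in \CFIclass{\mcF}{p}$, and hence $\CFIclass{\mcF}{p}$ is homogeneous in the sense of the preceding definition.

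There is no real obstacle here; the only thing worth verifying, should one wish to be thorough, is that the choice $\ell \geq 12/\varepsilon$ made in the proof of Theorem~\ref{thm:homogeneity} is indeed independent of $\mfA$, which is immediate from the fact that the inductive argument on $k$ uses only the expander constant of $\mcF$ and the combinatorial structure of the blocked/free edges of $G$, both of which are bounded uniformly in the class.
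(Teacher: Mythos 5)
Your proposal is correct and matches the paper, which gives no separate argument for this corollary precisely because it is an immediate unpacking of Theorem~\ref{thm:homogeneity} together with the definition of a homogeneous class: the theorem already supplies a single constant $\ell$ (any $\ell \geq 12/\varepsilon$, with $\varepsilon$ the fixed expander constant of $\mcF$) that is uniform over all $p$, all $G \in \mcF$, and all loads $\lambda$. Your remark that this uniformity is the only point to check is exactly the content of the corollary as the paper intends it.
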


As mentioned before, an important consequence of homogeneity is that 
$\FPC$ can (uniformly) 
define a total preorder on the set $A^k$, for each $k \geq 1$, which 
orders $k$-tuples up to orbits. Moreover, the number of variables 
required by such an $\FPC$-formula is linear in~$k$.
To see this, we make use of the well-known fact that for every $\ell \geq 
1$ 
there exists an $\FPC$-formula $\Typ{\ell}(\tup x, \tup y)$ with 
$\mcO(\ell)$ many variables which defines on each input structure $\mfA$ a 
linear preorder on $A^\ell$ which distinguishes between all pairs of tuples 
$\tup a, \tup b \in A^\ell$ for which $(\mfA, \tup a) \not\equiv^\ell (\mfA, 
\tup b)$ holds, see e.g.~\cite{Otto97}. That is $\Typ \ell(\tup x, \tup y)$ 
defines in each input structure 
$\mfA$ a linear order on the set $\{ [\tup a]_{\equiv^\ell} : \tup a 
\in A^\ell\}$ consisting of $\equiv^\ell$-equivalence classes   
$[\tup a]_{\equiv^\ell} = \{ \tup b \in A^\ell: (\mfA, \tup a) 
\equiv^\ell (\mfA,\tup b) \}$ for $\tup a \in A^\ell$.
Of course, we can also use the formula $\Typ {\ell}(\tup x, \tup y)$  
to define the corresponding preorder on $k$-tuples for lengths $1 
\leq k < \ell$ (a common approach is to extend $k$-tuples to $\ell$-tuples 
by repeating the last component).
We denote the corresponding \FPC-formula by $\Typk{\ell}{k}(\tup x, \tup 
y) = \Typk{\ell}{k}(x_1, \dots, x_k, y_1, \dots, y_k)$.

\begin{thm}\label{thm:preorder}
 Let $\mfA$ be $\ell$-homogeneous with automorphism group $\Gamma$. Then 
the \FPC-formula $\Typk{\ell\cdot 
k}{k}(\tup x, \tup y)$ defines a total preorder $\preceq$ on $A^k$ 
that identifies $k$-tuples which are in the same orbit. In particular, 
$\Typk{\ell\cdot k}{k}(\tup x, \tup y)$ induces a linear order on the set 
of orbits of $k$-tuples $\{ \Gamma(\tup a) : \tup a \in A^k\}$.
\label{thm:order:ellhom}
\end{thm}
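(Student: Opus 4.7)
The plan is to combine two ingredients that, once set up, slot together immediately: the existence of the standard $\FPC$-definable type-ordering formula $\Typk{m}{k}(\tup x,\tup y)$ (which for every $k \le m$ defines a linear preorder on $A^k$ whose classes are exactly the $\equiv^m$-equivalence classes), and the hypothesis of $\ell$-homogeneity, which on $k$-tuples collapses the relation $\equiv^{\ell k}$ to equality of $\Gamma$-orbits. The only real content of the theorem is to calibrate the variable budget so that the two ingredients meet: the right choice is $m := \ell \cdot k$.

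First I would justify the existence and correctness of $\Typk{\ell k}{k}$. The base construction $\Typ{\ell k}(\tup u,\tup v)$, with $\mcO(\ell k)$ variables, is the well-known formula (cf.\ \cite{Otto97}) recalled in the excerpt just above the theorem: on every $\mfA$ it defines a linear preorder on $A^{\ell k}$ whose equivalence part is $\equiv^{\ell k}$. To descend from $(\ell k)$-tuples to $k$-tuples I would use the standard padding $\tup a \mapsto \tup a^* := (a_1,\ldots,a_k,a_k,\ldots,a_k) \in A^{\ell k}$, which is $\FO$-definable in $\mcO(\ell k)$ variables. A routine syntactic argument shows that $(\mfA,\tup a) \equiv^{\ell k} (\mfA,\tup b)$ if and only if $(\mfA,\tup a^*) \equiv^{\ell k} (\mfA,\tup b^*)$, because any $\Cinfx{\ell k}$-formula distinguishing the shorter tuples lifts to one distinguishing the padded tuples (and vice versa by substitution). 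Substituting the padding into $\Typ{\ell k}$ therefore yields $\Typk{\ell k}{k}$, and this formula defines a linear preorder $\preceq$ on $A^k$ whose equivalence part is exactly $\equiv^{\ell k}$ restricted to $A^k$.

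Second I would invoke $\ell$-homogeneity with this choice of $k$: by definition, $(\mfA,\tup a) \equiv^{\ell k} (\mfA,\tup b)$ iff $\Gamma(\tup a) = \Gamma(\tup b)$. Composing with the previous step, the equivalence part of the preorder defined by $\Typk{\ell k}{k}$ coincides with the $\Gamma$-orbit partition of $A^k$. Since $\preceq$ is total, quotienting by its equivalence relation produces a well-defined linear order on the set $\{\Gamma(\tup a) : \tup a \in A^k\}$ of orbits of $k$-tuples, which is what the theorem asserts.

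The main (and essentially only) obstacle is bookkeeping in the first step: one must verify that after padding the resulting $\FPC$-formula still uses only $\mcO(\ell k)$ variables and that the induced equivalence on $A^k$ is exactly $\equiv^{\ell k}$, not some coarser or finer relation. Once this is in place, the theorem is an immediate consequence of the definition of $\ell$-homogeneity, and no further fixed-point machinery beyond $\Typk{\ell k}{k}$ itself is needed.
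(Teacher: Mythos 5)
Your proposal is correct and follows essentially the same route as the paper, which treats the theorem as an immediate consequence of the standard type-ordering formula $\Typ{\ell\cdot k}$ (with the same padding trick of repeating the last component to pass from $k$-tuples to $(\ell\cdot k)$-tuples) combined with the definition of $\ell$-homogeneity. The paper in fact gives no separate proof beyond exactly this observation, so your write-up simply makes the implicit argument explicit.
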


\subsection{CFI-structures are Cyclic}
\label{subsec:cyclic}
In Section~\ref{sec:homo} we proved that the CFI-classes 
$\CFIclass{\mcF}{p}$ are homogeneous, which by Theorem~\ref{thm:preorder} 
implies that $\FPC$ can order 
$k$-tuples in structures $\mfA \in \CFIclass{\mcF}{p}$ up to orbits using 
only $\mcO(k)$ many variables.
In this subsection we go one step further and show that, as a result of 
the algebraic properties of the automorphism groups of CFI-structures, 
each individual orbit of $k$-tuples can be linearly ordered in fixed-point 
logic with counting 
by fixing a single $k$-tuple from this orbit as a parameter (and, again, by
using $\mcO(k)$ many variables only). 
Furthermore, we are going to show that this key property of CFI-structures 
remains intact if we apply logical transformations.
Intuitively, our results show that CFI-structures come quite close to 
ordered 
structures: in $\FPC$, one can preorder the elements of 
CFI-structures up to orbits,\emph{and}, secondly, 
each individual orbit can be totally ordered by fixing a 
\emph{single} element as a parameter. Note, however, that this does not 
mean that we can order the full CFI-structure, 
since this would require to 
fix a parameter in \emph{each} of the orbits at the same time. Indeed,
Theorem~\ref{fpc-cfi} implies that CFI-structures can \emph{not} be 
totally 
ordered in $\FPC$ if we restrict ourselves to formulas with a sublinear 
number of variables.

Recall that, for $1 \leq k \leq \ell$, the formulas 
$\Typk{\ell}{k} = \Typk{\ell}{k}(\tup x, 
\tup y)$ define a total preorder that distinguishes $k$-tuples up to 
$\equiv^\ell$-equivalence.
In what follows we make use of parametrised versions of these 
formulas.
More precisely, for a parameter tuple $\tup z$ of length $r \geq 0$ we 
write 
$\Typk{\ell}{k}[\tup z](\tup x, \tup y)$ to denote the formula
$\Typk{\ell}{r + k}(\tup z \tup x, \tup z \tup y)$ (of course this only 
makes sense if $r+k \leq \ell$).
Note that, again, this formula orders $k$-tuples up to 
$\equiv^\ell$-equivalence, but now we consider $\equiv^\ell$-equivalence 
with respect to the additional parameter tuple $\tup z$.
Hence for every structure $\mfA$ and every $\tup c \in A^r$ we have that 
the linear preorder defined by $\Typk{\ell}{k}[\tup c]$ in $\mfA$ refines 
the linear preorder defined by $\Typk{\ell}{k}$ in $\mfA$. 
Note that, in particular, the tuple $\tup c$ will always be in a singleton 
class according to the preorder $\Typk{\ell}{k}[\tup c]$.
Note further that for the special case $r = 0$ we just obtain the formula 
$\Typk{\ell}{k}$.

Given a structure $\mfA$ with automorphism $\Gamma$, we denote for 
a parameter $\tup c \in A^r$ by $\Gamma_{\tup c} \leq \Gamma$ the 
stabiliser subgroup of the tuple $\tup c$, i.e.\ the 
group of all $\pi \in \Gamma$ such that $\pi(\tup c) = \tup c$.
\begin{defi}
\label{def:lpcyclic}
A structure $\mfA$ with automorphism group $\Gamma$ is called 
$\cyc{\ell}{p}$, for $\ell \geq 1$ and $p \in \Primes$, if the 
following holds for every $k \geq 1$:
\begin{enumerate}[leftmargin=*,label=(C-\Roman*)]
 \item $\Gamma$ is an Abelian $p$-group. In particular, for every 
$k$-tuple $\tup 
a \in A^k$, the size of the 
orbit $\Gamma(\tup a)$ of $\tup a$ is a $p$-power, that is $|\Gamma(\tup 
a)| = p^n$ 
for some $n \geq 0$.\label{cyclp-sizeorb}
 \item For every $\tup c \in A^r$, $r \geq 0$, the 
$\FPC$-formula 
 $\Typk{\ell \cdot (k+r)}{k}[\tup c]$ defines a total preorder on 
$A^k$ such that two tuples $\tup a, \tup b \in A^k$ are incomparable if, 
and only if, $\Gamma_{\tup c}(\tup a) = \Gamma_{\tup c}(\tup b)$.
Note that for  $r = 0$ we obtain $\ell$-homogeneity as a special case.
\label{cyclp-ordsingorb}
\end{enumerate}
We say that a class $\mcK$ of structures is $\cyclp$ if every 
structure $\mfA \in \mcK$ is $\cyclp$.
\end{defi}

Actually, if in the above definition, we would only include
item~\ref{cyclp-ordsingorb}, then the resulting notion of 
$\cyclp$ structures would not be very interesting: it would collapse to 
the 
notion of $\ell$-homogeneity, see Theorem~\ref{thm:homab:lpcyc} below.
However, in combination with condition~\ref{cyclp-sizeorb}, we get a 
remarkable effect:
\begin{lem}
In Definition~\ref{def:lpcyclic}, we can add the following to 
item~\ref{cyclp-ordsingorb} without changing the resulting notion:
Assume that $r \leq k$. Then the preorder defined by $\Typk{\ell \cdot 
(k+r)}{k}[\tup c]$ induces a 
linear order on the orbit $\Gamma(\tup c)$ of the parameter $\tup c$.
\end{lem}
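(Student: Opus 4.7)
The plan is to exploit the Abelian hypothesis from~\ref{cyclp-sizeorb} to show that, within the orbit $\Gamma(\tup c)$, the stabiliser-orbits appearing in~\ref{cyclp-ordsingorb} collapse to singletons, and then to conclude that the preorder from~\ref{cyclp-ordsingorb} becomes antisymmetric on (the embedded image of) $\Gamma(\tup c)$.

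First I would identify each $r$-tuple $\tup a \in \Gamma(\tup c)$ with a $k$-tuple $\tilde{\tup a} \in A^k$ via the padding convention that is used throughout the paper (e.g., by repeating the last component $k-r$ times, which is possible since $r \leq k$). Since padding commutes with the pointwise $\Gamma$-action, we have $\Gamma_{\tilde{\tup a}} = \Gamma_{\tup a}$ and $\tilde{\pi(\tup a)} = \pi(\tilde{\tup a})$, so orbit data is transported faithfully from $A^r$ into $A^k$.

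The main algebraic step uses that $\Gamma$ is Abelian by~\ref{cyclp-sizeorb}: if $\tup a = \sigma(\tup c)$ for some $\sigma \in \Gamma$, then $\Gamma_{\tup a} = \sigma\,\Gamma_{\tup c}\,\sigma^{-1} = \Gamma_{\tup c}$. In particular every $\pi \in \Gamma_{\tup c}$ also fixes $\tilde{\tup a}$, so the $\Gamma_{\tup c}$-orbit of $\tilde{\tup a}$ is the singleton $\{\tilde{\tup a}\}$. Consequently, for distinct $\tup a, \tup b \in \Gamma(\tup c)$ we obtain
\[
\Gamma_{\tup c}(\tilde{\tup a}) = \{\tilde{\tup a}\} \neq \{\tilde{\tup b}\} = \Gamma_{\tup c}(\tilde{\tup b}),
\]
and condition~\ref{cyclp-ordsingorb} (in the contrapositive) forces $\tilde{\tup a}$ and $\tilde{\tup b}$ into distinct equivalence classes of the total preorder defined by $\Typk{\ell\cdot(k+r)}{k}[\tup c]$.

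Since the preorder is total, distinctness of equivalence classes is equivalent to strict comparability, so the restriction of the preorder to the embedded copy of $\Gamma(\tup c)$ in $A^k$ is a linear order — which is precisely the additional clause to be attached to~\ref{cyclp-ordsingorb}. I do not anticipate any real obstacle: the content is simply that~\ref{cyclp-sizeorb} and~\ref{cyclp-ordsingorb} together collapse, within one orbit, all the stabiliser-orbits occurring in~\ref{cyclp-ordsingorb} down to single points.
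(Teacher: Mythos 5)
Your proof is correct and follows essentially the same route as the paper: the key point in both is that, since $\Gamma$ is Abelian, the stabiliser $\Gamma_{\tup c}$ fixes every element of the orbit $\Gamma(\tup c)$ (you phrase this via $\Gamma_{\sigma(\tup c)}=\sigma\Gamma_{\tup c}\sigma^{-1}=\Gamma_{\tup c}$, the paper via an explicit commutation of $\pi$ and $\sigma$), so the $\Gamma_{\tup c}$-orbits inside $\Gamma(\tup c)$ are singletons and item~\ref{cyclp-ordsingorb} then forces strict comparability of distinct elements. Your explicit treatment of the padding of $r$-tuples into $k$-tuples is a harmless extra bit of care that the paper leaves implicit.
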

\begin{proof} This follows from the fact that, 
by~\ref{cyclp-sizeorb},  $\Gamma$ is an Abelian group (and so the 
induced group action on the orbit is regular).
More explicitly, assume that for some $\pi \in \Gamma$ and
$(\tup c, \tup a) \in \{ \tup c \} \times \Gamma(\tup c)$ it holds that
$\pi(\tup c, \tup a) = (\tup c, \tup b)$.
Choose $\sigma \in \Gamma$ such that $\sigma(\tup a) = \tup c$.
Then $\pi(\sigma(\tup a)) = \tup c$. Since $\Gamma$ is Abelian, it follows 
that $\sigma(\pi(\tup a)) = \tup c$. Hence $\pi(\tup a) = \tup a$, which 
yields $\tup a = \tup b$.
Hence, it follows that for every $\tup a \in \Gamma(\tup c)$ we have 
$|\Gamma_{\tup c}(\tup a)| = 1$.
Having this, item~\ref{cyclp-ordsingorb} implies that $\Typk{\ell 
\cdot (k+r)}{k}[\tup c]$ defines a 
linear order on $\{ \tup c \} \times \Gamma(\tup c)$, as claimed.
\end{proof}

\begin{thm}
\label{thm:homab:lpcyc}
Let $\mfA$ be $\ell$-homogeneous and assume that the automorphism group 
$\Gamma$ of $\mfA$ is an Abelian $p$-group. Then $\mfA$ is $\cyclp$.
In particular, there is $\ell \geq 1$ such that the classes
$\CFIclass{\mcF}{p}$ are
$\cyclp$ for all $p \in \Primes$.
\label{thm:cfi:cyclp}
\end{thm}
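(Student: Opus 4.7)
The plan is to reduce both assertions to Theorem~\ref{thm:homogeneity} by verifying the two conditions of Definition~\ref{def:lpcyclic} directly, using only the elementary observation that the automorphism group of an expansion $(\mfA, \tup c)$ is exactly the stabiliser $\Gamma_{\tup c}$.

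Condition~\ref{cyclp-sizeorb} follows at once from the assumption that $\Gamma$ is an Abelian $p$-group combined with the orbit-stabiliser identity $|\Gamma(\tup a)| = [\Gamma:\Gamma_{\tup a}]$: since $|\Gamma|$ is a $p$-power, so is every orbit size.

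For condition~\ref{cyclp-ordsingorb}, fix $k \geq 1$, $r \geq 0$, and $\tup c \in A^r$. Unfolding the notational convention, $\Typk{\ell(k+r)}{k}[\tup c](\tup x, \tup y)$ equals $\Typk{\ell(k+r)}{k+r}(\tup c \tup x, \tup c \tup y)$, so the preorder it induces on $A^k$ makes two tuples $\tup a, \tup b$ incomparable precisely when $(\mfA, \tup c \tup a) \equiv^{\ell(k+r)} (\mfA, \tup c \tup b)$. Applying the $\ell$-homogeneity of $\mfA$ to the $(k+r)$-tuples $\tup c \tup a$ and $\tup c \tup b$, this equivalence is in turn equivalent to the existence of some $\pi \in \Gamma$ with $\pi(\tup c \tup a) = \tup c \tup b$. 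Any such $\pi$ must fix $\tup c$ and therefore lies in $\Gamma_{\tup c}$ while mapping $\tup a$ to $\tup b$; conversely, every $\sigma \in \Gamma_{\tup c}$ that maps $\tup a$ to $\tup b$ automatically sends $\tup c \tup a$ to $\tup c \tup b$. Hence incomparability coincides with $\Gamma_{\tup c}(\tup a) = \Gamma_{\tup c}(\tup b)$, as required.

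For the CFI-statement, Corollary~\ref{cor:cfihom} provides a single constant $\ell$ such that every structure in $\CFIclass{\mcF}{p}$ is $\ell$-homogeneous, uniformly in $p$. By the analysis carried out in Subsection~\ref{sec:symmetriesCFI}, the automorphism group of any $\CFIgraph{G}{p}{\lambda}$ is a subgroup of $\Inv(\field F_p^{E(G)}) \leq \field F_p^{E(G)}$; the latter is an elementary Abelian $p$-group, so all of its subgroups are themselves Abelian $p$-groups. The first part of the theorem then applies with the same $\ell$ and yields that $\CFIclass{\mcF}{p}$ is $\cyc{\ell}{p}$ for every prime $p$. There is no serious obstacle along the way: the substantive content is entirely carried by Theorem~\ref{thm:homogeneity}, and what remains is just the translation from $\equiv$-equivalence of expanded tuples back to orbits of the stabiliser, combined with a trivial group-theoretic observation about the algebraic structure of the CFI-automorphism groups.
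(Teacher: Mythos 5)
Your proposal is correct and follows essentially the same route as the paper: condition~\ref{cyclp-sizeorb} from the orbit-stabiliser theorem for the Abelian $p$-group $\Gamma$, condition~\ref{cyclp-ordsingorb} by unfolding $\Typk{\ell\cdot(k+r)}{k}[\tup c]$ as $\Typk{\ell\cdot(k+r)}{k+r}(\tup c\tup x,\tup c\tup y)$, invoking $\ell$-homogeneity on $(k+r)$-tuples, and translating $\Gamma(\tup c\tup a)=\Gamma(\tup c\tup b)$ into $\Gamma_{\tup c}(\tup a)=\Gamma_{\tup c}(\tup b)$. The CFI-instance is likewise handled exactly as in the paper, via Corollary~\ref{cor:cfihom} (with $\ell$ uniform in $p$ from Theorem~\ref{thm:homogeneity}) and the fact that the automorphism groups are subgroups of $\field F_p^{E(G)}$, hence elementary Abelian $p$-groups.
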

\begin{proof}
 We already analysed the automorphism groups of CFI-structures 
$\mfA \in \CFIclass{\mcF}{p}$ in Section~\ref{sec:symmetriesCFI}. In 
particular, we saw that these groups are elementary Abelian $p$-groups, so 
property~\ref{cyclp-sizeorb} holds for CFI-structures in 
$\CFIclass{\mcF}{p}$.
Moreover, Corollary~\ref{cor:cfihom} tells us that classes of 
CFI-structures are homogeneous.

Now, let $\ell \geq 1$ and let $\mfA$ be $\ell$-homogeneous with 
automorphism group~$\Gamma$.
Then, by Theorem~\ref{thm:order:ellhom}, we know that for every $k \geq 
1, r \geq 0$ the formula $\Typk{\ell \cdot (r+k)}{(r+k)}(\tup x_1 \tup 
x_2, 
\tup y_1 \tup y_2)$ defines in $\mfA$ a total preorder on $A^{r+k}$ which 
order $(r +k)$-tuples up to $\Gamma$-orbits.
Since 
\[\Typk{\ell \cdot (r+k)}{k}[\tup z](\tup x, \tup y) = \Typk{\ell \cdot 
(r+k)}{(r+k)}(\tup z \tup x, \tup z \tup y),\]
we know that for every $\tup c \in A^r$ it holds that the total preorder
$\Typk{\ell \cdot (r+k)}{k}[\tup c]$ distinguishes $k$-tuples $\tup a, 
\tup b \in A^k$ if, and only if, $\Gamma(\tup c \tup a) \neq \Gamma(\tup c 
\tup b)$. But this last condition is indeed equivalent to
$\Gamma_{\tup c}(\tup a) \neq \Gamma_{\tup c}(\tup b)$, which completes 
the proof.
\end{proof}

Our next aim is to show that the class of $\cyclp$ structures is 
closed under $\FPC$-transformations.
Unfortunately, stated in this very general form, this claim is clearly wrong. 
For 
example, $\FPC$-transformations can easily generate each fixed finite  
structure (starting from any structure), and so the resulting structures 
will not have Abelian automorphism groups for instance (which is one of 
the requirements for being \cyclp). However, as we 
will show next, one can extend each \FPC-interpretation $\mcI$ to 
an $\FPC$-interpretation $\Norm(\mcI)$ in such a way that the original 
input structure is preserved as a 
substructure. This will enable us to maintain the property of being \cyclp.

Let us be a bit more precise. As said, instead of only interpreting 
$\mcI(\mfA)$ in $\mfA$ we want to
interpret the structure $\Norm(\mcI)(\mfA) = \mcI(\mfA) \uplus \mfA$ in 
$\mfA$, that is the disjoint union of the original structure $\mfA$ and the 
interpreted structure $\mcI(\mfA)$. However, as such, this is not sufficient 
since we can 
still get new automorphisms due to the new substructure $\mcI(\mfA)$.
To overcome this problem, we create additional relations that
indicate from which elements in 
$\mfA$ the newly created elements in $\mcI(\mfA)$ originate. Note that the 
elements in $\mcI(\mfA)$  are equivalence classes of tuples of elements 
from $\mfA$ and we will encode this information in $\Norm(\mcI)(\mfA)$.
Formally, our result is as follows.

\begin{thm}
\label{thm:nfinter}
 Let $\mcI(\tup z) \in \FPC[\sigma \to \tau, \tup z]$ be an 
$\FPC$-interpretation of dimension~$d$ and with $r \geq 0$ parameters 
$\tup z$, $| \tup z | = r$, that 
maps $\sigma$-structures to $\tau$-structures.
Let $\hat \tau = \sigma \uplus \tau \uplus \{ \in \}$ (where $\in$ is a 
fresh binary relation symbol).
Then there exists an $\FPC$-interpretation $\mcJ(\tup z)  \in 
\FPC[\sigma \to \hat \tau, \tup z]$ such that for every structure $\mfA$ 
and every $\tup a \in \dom(\mfA, \tup z)$ the following holds 
for $\mfB = \mcI(\mfA, \tup a)$ and $\mfC = \mcJ(\mfA, \tup a)$:
\begin{enumerate}[label=(\roman*)]
 \item the dimension of $\mcJ(\tup z)$ is at most $d+3$, and
 \item $\mfB \subseteq_{\FO} \mfC|_{\tau}$, that is $\mfB$ is an 
$\FO$-definable substructure of the reduct of $\mfC$ 
to $\tau$, and
\label{itm:modint:substr}
 \item if $\Gamma = \Aut(\mfA, \tup a)$ and $\Delta = \Aut(\mfC)$, then 
$\Gamma \cong \Delta$ (that is, up to isomorphism, the automorphism group 
of the input structure $(\mfA, \tup a)$ is preserved), and
\label{itm:modint:autom}
 \item if $\mfA$ is $\ell$-homogeneous (for $\ell \geq 3$), then the 
structure $\mfC$ is 
$\ell \cdot (d+r)$-homogeneous. 
\label{itm:modint:homogen}
 \end{enumerate}
\end{thm}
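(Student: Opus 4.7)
The plan is to interpret $\mcJ$ so that $\mfC = \mcJ(\mfA, \tup a)$ is the disjoint union of $\mfA$ and $\mfB = \mcI(\mfA, \tup a)$, linked by a binary membership relation $\in$ that records, for each $b \in \mfB$, which $A$-elements appear as coordinates of some $\phi_\delta$-representative of $b$. The three extra dimensions beyond the dimension $d$ of $\mcI$ are used (a) to tag the two sorts of elements so that they are FO-distinguishable, and (b) to rigidify the parameter tuple $\tup a$ so that it is pinned down by the automorphism group of $\mfC$.

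First I would describe the interpretation $\mcJ(\tup z)$ concretely. Using the three extra coordinates of a tuple $(x_1,\dots,x_d,y_1,y_2,y_3) \in A^{d+3}$ I would distinguish type-$A$ elements (encoded by a pattern such as $x_1=\dots=x_d=y_1=y_2=y_3$, representing the $A$-element $x_1$) from type-$B$ elements (encoded by a complementary pattern together with $\phi_\delta(x_1,\dots,x_d,\tup z)$ and $\phi_\approx(\cdot,\cdot,\tup z)$ from $\mcI$). On the type-$A$ part I install the $\sigma$-atoms of $\mfA$; on the type-$B$ part I install the $\tau$-atoms from $\mcI$; and I define $a\in_\mfC b$ precisely when $a$ occurs as a coordinate of a $\phi_\delta$-representative of $b$. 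All of these conditions are first-order (hence $\FPC$) in $\mfA$ with parameters $\tup z$, so $\mcJ(\tup z)$ is a well-defined $\FPC$-interpretation of dimension $d+3$, giving property~(i). Property~(ii) is then immediate, since the type-$B$ elements form an FO-definable substructure of $\mfC$ whose $\tau$-reduct is precisely $\mfB$.

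For property~(iii), I would exhibit the isomorphism $\Gamma\to\Delta$ by lifting each $\pi\in\Gamma$ coordinatewise on $(d+3)$-tuples. Because $\pi$ fixes $\tup a$, it preserves $\phi_\delta(\cdot,\tup a)$, $\phi_\approx(\cdot,\cdot,\tup a)$, the formulas $\phi_{S_i}(\cdot,\tup a)$, and the tag patterns, so it lifts to $\hat\pi\in\Delta$. Conversely, any $\pi'\in\Delta$ restricts to a $\sigma$-automorphism of the type-$A$ part, i.e.\ to an automorphism of $\mfA$; the $\in$-relation together with the $\tau$-atoms on the type-$B$ part then forces this restriction to fix $\tup a$, because any movement of $\tup a$ would shift the $\phi_\delta$-representatives witnessed by $\in$ and so fail to preserve the type-$B$ part. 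If the construction as described does not rigidify $\tup a$ strongly enough, one can use a handful of the available coordinates to add, for each $a_i$, a distinguished type-$B$ element whose unique $\in$-neighbour in the type-$A$ part is $a_i$; there is room for this inside the $d+3$ dimensions. The lifting and restriction maps are mutually inverse group homomorphisms, giving $\Gamma\cong\Delta$.

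For property~(iv), assume $\mfA$ is $\ell$-homogeneous ($\ell\geq 3$) and take $\tup c,\tup c'\in C^k$ with $(\mfC,\tup c)\equiv^{\ell(d+r)k}(\mfC,\tup c')$. To each $\tup c$ I associate its \emph{unpacking} $\tilde{\tup c}\in A^{m}$ with $m\leq dk$, obtained by flattening type-$B$ components into their $d$-tuple representatives; adjoining the parameters yields $(\tup a,\tilde{\tup c})$ of length at most $r+dk$ in $\mfA$, and similarly for $\tup c'$. The key step is a back-and-forth translation between $\Cinf$-formulas on $\mfA$ and on $\mfC$: parameters $\tup a$ are FO-definable in $\mfC$ thanks to the rigidification, $\mfA$ sits inside $\mfC$ as the type-$A$ part, and coordinates of representatives of type-$B$ elements can be existentially introduced via the $\in$-relation. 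A careful variable count shows that any $\ell(r+dk)$-variable distinguishing formula in $(\mfA,\tup a)$ translates to a distinguishing formula with at most $\ell(d+r)k$ variables in $\mfC$, and vice versa. Combined with $\ell$-homogeneity of $\mfA$ and the isomorphism from~(iii), this forces $\tup c$ and $\tup c'$ to lie in the same $\Delta$-orbit, proving that $\mfC$ is $\ell(d+r)$-homogeneous. I expect the main obstacle to be exactly this bookkeeping: the three extra coordinates must simultaneously carry a distinguishing tag for the two sorts, leave enough room to rigidify $\tup a$, and keep the variable blow-up in the translation within the claimed bound $\ell(d+r)$.
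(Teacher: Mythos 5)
Your construction of $\mfC$ as the disjoint union of $\mfA$ and $\mfB$ with a membership relation has a genuine gap at property~(iii), and it propagates into~(iv). As you define it, $a \in b$ records only that $a$ occurs as \emph{some} coordinate of \emph{some} $\phi_\delta$-representative of $b$; all positional (and multiplicity) information is lost. Consequently the action of an automorphism of $\mfC$ on the type-$B$ part is \emph{not} determined by its action on the type-$A$ part, and $\Aut(\mfC)$ can be strictly larger than $\Gamma$. Concrete counterexample: let $\mfA$ be a two-element set with no relations, let $\mcI$ have dimension $2$, domain formula selecting the pairs with distinct entries, trivial congruence, and all $\tau$-relations empty. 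Then $\mfB$ has the two elements $[(1,2)]$ and $[(2,1)]$, both $\in$-related to exactly $\{1,2\}$, and the map fixing $\mfA$ pointwise while swapping these two $B$-elements is an automorphism of your $\mfC$ that does not come from $\Gamma$. So the ``restriction map'' you invoke is not injective, and your lifting/restriction pair are not mutually inverse. The same loss of positional data breaks the step you need in~(iv): an element of the type-$B$ part is in general not definable from (the set of) its coordinates, so ``unpacking'' $\tup c$ into $A$-tuples and transporting an automorphism of $(\mfA,\tup a)$ back to $\mfC$ does not force $\pi(\tup c)=\tup c'$. Your fallback rigidification of $\tup a$ has a related weakness: one distinguished $B$-element per $a_i$ with unique $\in$-neighbour $a_i$, if these markers are mutually indistinguishable, pins $\tup a$ only as a set, not componentwise.

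The paper's proof avoids exactly this by inserting an intermediate \emph{tuple sort} between $\mfA$ and $\mfB$: for each position $i$ there are component elements $(i,a)$, tuples are $\in$-linked to their position-tagged components, classes $[\tup b]$ are $\in$-linked to their representative tuples, and the positions (and sorts) are made rigid by an auxiliary linearly ordered number sort; a dedicated tuple element encodes $\tup a$ via its positional components, which is what forces every automorphism of $\mfC$ to fix $\tup a$ pointwise and to extend uniquely from $U_A$ to the rest. This also yields the ``$d$-support'' property (each element of $\mfC$ is definable from at most $d$ elements of $A$ with few variables) on which the homogeneity transfer in~(iv) rests. Two further points your sketch glosses over: $\FPC$-interpretations may have numeric components in $\dom(\mfA,\tup x)$, so the tuples are not simply elements of $A^{d}$ (the paper's number sort absorbs this), and the extra coordinates the paper uses are of mixed type (two numeric, one vertex), which is how sort-tagging, position-indexing and the encoding of $\tup a$ all fit within dimension $d+3$. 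Your overall architecture (disjoint union plus $\in$, lift/restrict for automorphisms, support-based transfer of homogeneity) is the right shape, but without the position-aware tuple layer and a rigid index sort the central claims~(iii) and~(iv) fail as stated.
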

\begin{proof}
 Let $\mcI(\tup z)$ be $d$-dimensional with domain formula 
$\lintdom{\phi}(\tup x, \tup z)$ and congruence formula 
$\lintcong{\phi}(\tup x_1, \tup x_2, \tup z)$.
Let $\mfA$ be a $\sigma$-structure and let $\tup a \in \dom(\mfA, \tup z)$.
The elements of the interpreted structure $\mfB = \mcI(\mfA, \tup a)$ are 
equivalence classes of tuples in $\dom(\mfA, \tup x)$.
The idea of the construction of $\mfC$ is as follows.
The universe of $\mfC$ consists of four different sorts $U_A, U_T, U_B, 
U_N$. 
The first sort $U_A$ contains elements that represent the elements in the 
universe of the original structure $\mfA$. 
The second sort contains elements to represent all elements in 
$\dom(\mfA, \tup x)$ that are selected by $\lintdom{\phi}$ and, 
furthermore, a unique element that is used in order to encode the 
parameter tuple $\tup a$. Also $U_T$ contains auxiliary elements to encode 
the structure of tuples, that is the individual entries.
The third sort $U_B$ contains elements to represent the elements 
of the structure $\mfB$, that is the equivalence classes $[\tup b] = \{ 
\tup c \in \dom(\mfA, \tup x) : \mfA \models \lintdom{\phi}(\tup c, \tup 
a) 
\wedge \lintcong{\phi}(\tup b, \tup c, \tup a)\}$ for $\tup b \in 
\dom(\mfA, \tup x)$ with $\mfA \models \lintdom{\phi}(\tup b, \tup a)$.
The last sort $U_N$ is an auxiliary sort which holds a sufficient amount 
of numbers (that is a linearly ordered set) to represent the different 
sorts, their relations, the indices for tuples, and so on.
The binary relation symbol $\in$ is used to relate the different sorts and 
to encode the tuple structure. Relations in $\sigma$ and $\tau$ are 
interpreted on the respective sorts $U_A$ and $U_B$ as in $\mfA$ and 
$\mfB$, respectively.

Let us elaborate more on some technical details (we remark that, as 
usual, there are many different ways to formalise an appropriate encoding; 
in order to verify the properties of $\mcJ$, we describe one of them).
First of all, we extend the dimension of $\mcI$ by three additional 
components $(\mu, \nu, x)$ where the first two variables $\mu, \nu$ range
over the number sort and where $x$ ranges over the vertex sort (we 
remark that it would be sufficient to increase the dimension by at most 
one numeric component, but this would unnecessarily make the following 
description more complicated).
In general, we will use the first component $\mu$ to address different 
sorts. For instance, let us start with the number sort $U_N$. We can use 
the 
congruence formula to merge all tuples $(0, \mu, d, \tup b)$ and 
$(0, \mu, d', \tup c)$ (for $d,d' \in A$ and $\tup b, \tup c \in 
\dom(\mfA, 
\tup x)$) and then use the resulting set $\{ (0, 0, \star), (0,1, \star), 
(0, 2, \star), \dots \}$ to encode the elements in $U_N$. Hereby, we 
choose the 
range of the numeric variable $\nu$ larger than the range of any other 
numeric variable which occurs in the interpretation $\mcI$. 
To identify the numeric sort $U_N$ in the resulting structure we define a 
linear order on $U_N$ using the new relation symbol $\in$.
As a second step, we encode elements $a \in A$ of the original 
structure 
$\mfA$ in $\mfC$ by using elements of the form $(1, \star, a, \star)$ (as 
before, the $\star$'s 
in this notation indicate that we use the congruence formula to merge all 
elements with different $\star$-components).
To identify the first sort $U_A$ in the resulting structure $\mfC$, we 
draw an 
$\in$-edge from the first element in the number sort $U_N$ to all elements 
in 
the first sort $U_A$.
Of course, we define all the relation symbols in $\sigma$ on $U_A$ by 
copying their definition from $\mfA$. 

Thirdly, to encode the elements in $\dom(\mfA, \tup x)$ and the parameter 
tuple 
$\tup a$ we proceed in two steps.
First of all, for every index $1 \leq i \leq \max(k,r)$, we introduce 
component elements $(i, a)$, $a \in A$, and 
$(i, m)$, $m < \dom(\nu)$, to represent all possible components of 
tuples in $\dom(\mfA, \tup x)$. 
Formally, we encode them in $\mfC$ by using elements of the form 
$(2+i,\star,a,\star)$ and $(2+i,m,\star)$. 
To identify them in $\mfC$, we mark them in a similar way as 
before, i.e.\ we introduce $\in$-edges from position $2+i$ in the number 
sort to all component elements $(i, a)$ and $(i,m)$. We also connect 
all component elements $(i,a)$ and $(i,m)$ to their respective values 
$a$ and $m$ via $\in$-edges (which point from component elements to the 
sorts $U_A$ and $U_N$).
We proceed to represent all tuples in $\dom(\mfA, \tup x)$ 
using the original components of the interpretation $\mcI$, that is we use 
elements $(2+\max(k,r)+1, \star, \star, \tup b)$ where $\tup b \in 
\dom(\mfA, \tup x)$ 
and mark them appropriately.
We additionally connect tuples $(2+\max(k,r)+1, \star, \star, \tup b)$ 
with their 
matching component elements, that is with $(i, b_i)$. We then use 
$\lintdom{\phi}$ to select those 
tuples in $\dom(\mfA, \tup x)$ that are in the domain of $\mcI$. 
Also, we add one further special tuple element, say encoded as 
$(2+\max(k,r)+1,0, \star)$, which is meant to encode the parameter tuple 
$\tup a$. This 
special element is thus connected to all component elements $(i,a_i)$.
Finally, we make an additional copy of all tuple elements $(2+\max(k,r)+2, 
\star, 
\star, \tup b)$ that we added, and use $\lintdom{\phi}$ to merge them 
according to $\mcI$. This will give us the sort of elements $U_B$ 
that we use in order to represent the elements of $\mfB$. We mark them 
appropriately in the same way as we did for the other sorts.
Recall that these elements are equivalence classes of elements in 
$\dom(\mfA, \tup x)$, hence we additionally connect them, with 
$\in$-edges, to their representatives in the tuple sort $U_T$.
We define the relations in $\tau$ on $U_B$ according to $\mcI$, that is 
we copy them from $\mfB = \mcI(\mfA, \tup a)$.

From this description it is easy to see that all of the 
required transformations can be expressed by an $\FPC$-interpretation 
$\mcJ$ 
with dimension at most $d + 3$ (and while we increase the number of 
variables by a constant number only).
Also, it should be clear that item~\ref{itm:modint:substr} holds as we can 
very easily define the different sorts in the resulting structure~$\mfC$ 
(in particular, the sort $U_B$ is the maximal sort according to 
our linear order on $U_N$).
Let us now consider item~\ref{itm:modint:autom}.
The main observation is that each automorphism $\pi \in \Delta$ of~$\mfC$ 
is uniquely defined by its projection on the sort $U_A$.
Indeed, this directly follows from the way in which we constructed $\mfC$ 
using the 
new relation symbol $\in$. First note that no automorphism of $\mfC$ can 
move 
elements in the numeric sort $U_N$, since $\in$ defines a linear order on 
$U_N$. In particular it follows that all sorts $U_N, U_A, U_B, U_T$ are 
preserved. Secondly, assume that we have a permutation $\pi$ on $U_A$ that 
can be 
extended to an automorphism of~$\mfC$. Since the $\sigma$-relations on 
$U_A$ in $\mfC$ coincide with the relations in $\mfA$, we know 
that $\pi$ is an automorphism of $\mfA$. 
Moreover, to obtain an automorphism of $\mfC$, there is only one unique 
way in which we can extend $\pi$ to the tuple sort $U_T$ and the sort 
$U_B$ encoding the universe of $\mfB$. 
Indeed, the $\in$-edges enforce that tuple components $(i,a)$ are moved 
to $(i, \pi(a))$ (and tuple components $(i,m)$ cannot be moved) and, 
accordingly, that tuples $\tup b$ in $U_T$ are moved 
to $\pi(\tup b)$. In particular, for the special tuple $\tup a$ this 
means that we have $\pi(\tup a) = \tup a$.
Finally, since $\pi$ extends uniquely to the tuple sort $U_T$, it also 
uniquely extends to the sort $U_B$ of elements of the interpreted 
structure $\mfB = \mcI(\mfA, \tup a)$. Indeed, the elements in $U_B$ are 
sets $[\tup b]$ of tuples $\tup b \in U_T$, and we have connected these 
sets with the elements they contain using $\in$-edges in $\mfC$. Hence, 
for each 
equivalence class $[\tup b] \in U_B$ for $\tup b \in U_T$ we have 
$\pi([\tup b]) = [ \pi(\tup b) ]$. 
So altogether, we can conclude that the extension of $\pi$ from $U_A$ to 
the other sorts $U_N, U_T, U_B$ is unique. On the other hand, note that if 
$\pi$ is an automorphism of $\mfA$ satisfying
$\pi(\tup a) = \tup a$, then the resulting extended $\pi$ is indeed an 
automorphism of 
$\mfC$. To see this, note that all relations in $\tau$ and 
$\sigma$ are preserved under automorphisms of $(\mfA, \tup a)$: for the 
$\sigma$-relations this follows from the 
assumption that $\pi$ is an automorphism of~$\mfA$, and for the 
$\tau$-relation it follows from the fact that they are defined by the 
$\FPC$-interpretation $\mcI$ in $(\mfA, \tup a)$.
This shows that $\Aut(\mfA, \tup a) \cong \Aut(\mfC)$.

Finally, let us consider consider item~\ref{itm:modint:homogen}. Assume 
that $\mfA$ is $\ell$-homogeneous, for $\ell \geq 3$, and let $k \geq 1$.
Let $\tup c = (c_1, \dots, c_k)$ and $\tup d = (d_1, \dots, d_k)$ be two 
$k$-tuples of elements in $\mfC$. 
We have to show that if $(\mfC, \tup c) \equiv^{\ell \cdot (d+r) \cdot 
k} (\mfC, \tup d)$, then there exists an automorphism $\pi$ of $\mfC$ such 
that $\pi(\tup c) = \tup d$.
The main observation is that each element in $\mfC$ is $d$-supported 
by elements of $\mfA$, that is the element can be defined in 
$\FPC$ in $\mfC$ using at most $d$ parameters from $U_A$. More precisely, 
for every element $c$ of $\mfC$ there 
exist at most $d$-many elements $s_1, \dots, s_d \in A = U_A$ for which
there exists an $\FPC$-formula $\psi(x,y_1, \dots, y_d)$ with at most 
$(d+3)$ many variables such that $\psi(x,s_1, \dots, s_d)$ defines $c$ in 
the structure~$\mfC$. For instance, for tuples $c = (b_1, \dots, b_d)$ in 
$U_T$, we can choose $s_1, 
\dots, s_d$ to be the components $b_1, \dots, b_d$ of the tuple, and for 
elements $c = [\tup b] \in U_B$ we can choose the components of some 
representative.
Hence, if $(\mfC, \tup c) \equiv^{\ell 
\cdot (d+r) \cdot k} (\mfC, \tup d)$, then in particular we can find two 
supports $s(\tup c) \in U_A^{d \cdot k}$ for $\tup c$ and $s(\tup d) \in 
U_A^{d \cdot k}$ for $\tup d$ such that
$(\mfC, s(\tup c)) \equiv^{\ell \cdot (d+r) \cdot k} (\mfC, s(\tup d))$.
Moreover, since $\tup a$ is \FPC-definable in $\mfC$ (each component of 
the tuple $\tup a$ is definable using at most three variables), we can 
conclude that  
$(\mfC, s(\tup c), \tup a) \equiv^{\ell \cdot (d+r) \cdot k} (\mfC, s(\tup 
d), \tup a)$.
Since $\mfA$ is $\ell$-homogeneous, we know that 
$\Typk{\ell \cdot (d \cdot k + r)}{d \cdot k}[\tup a]$ defines a 
total preorder on $(d \cdot k)$-tuples in~$\mfA$ which orders tuples up 
to orbits with respect to $\Aut(\mfA, \tup a) \cong \Aut(\mfC)$.
Since $\mfA$ is a definable substructure of $\mfC$ we know that
$(\mfA, s(\tup c), \tup a) \equiv^{\ell \cdot (d+r) \cdot k} (\mfA, s(\tup 
d), \tup a)$.
It follows that we can find a $\pi \in \Aut(\mfC)$ such that $\pi(s(\tup 
c)) = 
s(\tup d)$. Since supports uniquely describe elements, we conclude that 
$\pi(\tup c) = \tup d$, as claimed.
\end{proof}

In what follows, for a given $\FPC$-interpretation $\mcI(\tup z) \in 
\FPC[\sigma \to \tau, \tup z]$, we denote the interpretation $\mcJ(\tup 
z)$ as constructed in Theorem~\ref{thm:nfinter} by $\Norm(\mcI)(\tup z)$.
 
\begin{cor}
 For every $\FPC$-interpretation $\mcI(\tup z)$, there exists $\ell \geq 
1$ such that the class of structures $\{ \Norm(\mcI)(\mfA, \tup a) : \mfA 
\in 
\CFIclass{\mcF}{p}, \tup a \in A^r \}$ is $(\ell, p)$-cyclic.
\end{cor}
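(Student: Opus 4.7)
The plan is to combine the three preceding results in a direct manner: Theorem~\ref{thm:cfi:cyclp} (CFI-structures are $\cyclp$ and, in particular, $\ell$-homogeneous with Abelian $p$-group automorphisms), Theorem~\ref{thm:nfinter} (the normalisation $\Norm(\mcI)$ preserves the automorphism group of the input and lifts homogeneity), and Theorem~\ref{thm:homab:lpcyc} (homogeneity together with Abelian $p$-group automorphisms implies $\cyclp$).

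More concretely, fix an $\FPC$-interpretation $\mcI(\tup z)$ of dimension $d$ with $r$ parameters, and let $\ell_0 \geq 3$ be the constant from Theorem~\ref{thm:cfi:cyclp} witnessing $\ell_0$-homogeneity of $\CFIclass{\mcF}{p}$. For any $\mfA \in \CFIclass{\mcF}{p}$ and $\tup a \in A^r$, set $\mfC = \Norm(\mcI)(\mfA, \tup a)$. By Theorem~\ref{thm:nfinter}~\ref{itm:modint:autom}, $\Aut(\mfC) \cong \Aut(\mfA, \tup a) = \Gamma_{\tup a}$, which is a subgroup of $\Aut(\mfA)$. Since $\Aut(\mfA)$ is an elementary Abelian $p$-group (as analysed in Subsection~\ref{sec:symmetriesCFI}), so is every subgroup, and therefore $\Aut(\mfC)$ is Abelian of exponent $p$. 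This immediately gives condition~\ref{cyclp-sizeorb} of $\cyclp$ for $\mfC$.

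Next, by Theorem~\ref{thm:nfinter}~\ref{itm:modint:homogen}, since $\mfA$ is $\ell_0$-homogeneous, the structure $\mfC$ is $\ell_0 \cdot (d+r)$-homogeneous. Now apply Theorem~\ref{thm:homab:lpcyc} to $\mfC$: any structure that is $\ell$-homogeneous and whose automorphism group is an Abelian $p$-group is $\cyclp$. Setting $\ell := \ell_0 \cdot (d+r)$, we conclude that $\mfC = \Norm(\mcI)(\mfA, \tup a)$ is $\cyclp$. Since $\ell$ depends only on $\mcI$ and on the fixed constant $\ell_0$, but not on the choice of $\mfA \in \CFIclass{\mcF}{p}$ or of the parameter tuple $\tup a$, the entire class $\{ \Norm(\mcI)(\mfA, \tup a) : \mfA \in \CFIclass{\mcF}{p}, \tup a \in A^r\}$ is $\cyclp$.

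There is essentially no obstacle in this proof; all the work has been done in the preceding theorems. The only point worth double-checking is that the stabiliser subgroup $\Gamma_{\tup a} \leq \Aut(\mfA)$ is indeed the image of $\Aut(\mfC)$ under the isomorphism of Theorem~\ref{thm:nfinter}~\ref{itm:modint:autom}, and that this inherits the Abelian $p$-group property from $\Aut(\mfA)$, which is immediate.
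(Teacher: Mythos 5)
Your proposal is correct and follows exactly the route the paper intends: the corollary is stated without a separate proof, and the chain you give (CFI-structures are $\ell_0$-homogeneous with Abelian $p$-group automorphisms, Theorem~\ref{thm:nfinter} preserves the automorphism group as the parameter stabiliser and lifts homogeneity to $\ell_0\cdot(d+r)$, then Theorem~\ref{thm:homab:lpcyc} yields $(\ell_0\cdot(d+r),p)$-cyclicity) is precisely the argument the paper itself spells out later in the proof of Theorem~\ref{thm:main:PCfinitefields}. The uniformity of $\ell$ over the class, depending only on $\mcI$ and the fixed expander constant, is also handled correctly.
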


Intuitively we showed that the class of $\cyclp$ is closed under 
$\FPC$-interpretations (which, if stated precisely, means that we have 
to rewrite the interpretations in normal form and we have to increase the 
homogeneity constant by a factor depending on the dimension and parameter 
length of the specific interpretation).
We end this section by stating a much simpler observation.
Assume that we have two $\cyclp$-structures $\mfA$ and $\mfB$ of the same 
vocabulary $\tau$. 
Then the ordered pair $(\mfA, \mfB)$ is a $\cyclp$-structure as 
well.
Of course, to some extent this depends on the technical details on how we 
implement ordered pairs as relational structure.
The most important property is that, in the ordered pair
$(\mfA, \mfB)$, we have a simple means to identify the two 
substructures $\mfA$ and $\mfB$, for instance by using additional 
predicate symbols to identify the two universes $A$ and $B$. 
In this article, we agree to understand ordered pairs in this way.
The 
consequence is that the automorphism group of $(\mfA, \mfB)$ is just 
the direct product of the automorphism groups of $\mfA$ and $\mfB$. In 
particular, orbits of (mixed) tuples in $(\mfA, \mfB)$ can be 
described 
in terms of the respective subtuples in $\mfA$ and $\mfB$. Having this, we 
can easily see that the following holds.

\begin{thm}
\label{thm:cyclp:orderedpair}
 Let $\mfA$ and $\mfB$ be two $\cyclp$ structures of vocabulary $\tau$.
 Then the ordered pair $(\mfA, \mfB)$ is an $\cyclp$ structure as 
well.
\end{thm}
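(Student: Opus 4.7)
My plan is to verify the two defining conditions of Definition~\ref{def:lpcyclic} for the pair structure $(\mfA,\mfB)$, using essentially the same constant $\ell$ as for $\mfA$ and $\mfB$. Throughout, I rely on the assumed encoding of ordered pairs, under which the two sub-universes $A$ and $B$ are identified by FO-definable unary predicates and no new relations connect elements of $A$ to elements of $B$. Consequently the automorphism group $\Gamma$ of $(\mfA,\mfB)$ is (isomorphic to) the direct product $\Aut(\mfA) \times \Aut(\mfB)$, and each $k$-tuple $\tup a$ over $(\mfA,\mfB)$ has a canonical decomposition $\tup a = (\tup a^A, \tup a^B)$ induced by a uniquely determined \emph{sort pattern} that records which coordinates lie in $A$ and which lie in $B$.

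For (C-I), the direct product of two Abelian $p$-groups is again an Abelian $p$-group, so the first axiom is immediate from the corresponding property for $\mfA$ and $\mfB$.

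The heart of the argument is (C-II). I would first prove the following decomposition lemma for counting-logic equivalence on the pair: for every $m \geq 3$, every parameter tuple $\tup c = (\tup c^A, \tup c^B)$ and every pair of $k$-tuples $\tup a, \tup b$,
\[
 (\mfA,\mfB,\tup c, \tup a) \equiv^m (\mfA,\mfB,\tup c, \tup b)
\]
holds if and only if $\tup a$ and $\tup b$ agree on their sort patterns and, for the induced splittings, $(\mfA,\tup c^A,\tup a^A) \equiv^m (\mfA,\tup c^A,\tup b^A)$ together with $(\mfB,\tup c^B,\tup a^B) \equiv^m (\mfB,\tup c^B,\tup b^B)$. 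The forward direction is proved by translating $\Cinfx m$-formulas over a single sort into formulas over the pair by relativising quantifiers to the appropriate sort predicate. The converse is a standard EF-style pebble-game argument: since no atomic fact links $A$ and $B$, a spoiler move on one sort can be replied to purely by consulting the corresponding winning strategy on that sort, and counting quantifiers split cleanly because $|\{x \in A \cup B : \varphi(x)\}| = |\{x \in A : \varphi^A(x)\}| + |\{x \in B : \varphi^B(x)\}|$ whenever $\varphi$ is sort-relativised.

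Given this lemma, (C-II) follows easily. Fix a parameter $\tup c$ of length $r$ and consider $k$-tuples $\tup a, \tup b$. Set $m := \ell \cdot (k+r)$. Observe that if $k_A, r_A$ and $k_B, r_B$ are the lengths of the respective subtuples, then $k_A + r_A \leq k+r$ and $k_B + r_B \leq k+r$, so $m \geq \ell \cdot (k_A + r_A)$ and $m \geq \ell \cdot (k_B + r_B)$. Since $\mfA$ and $\mfB$ are $\cyclp$, the formulas $\Typk{\ell \cdot (k_A+r_A)}{k_A}[\tup c^A]$ and $\Typk{\ell \cdot (k_B+r_B)}{k_B}[\tup c^B]$ order tuples up to $\Aut(\mfA)_{\tup c^A}$- and $\Aut(\mfB)_{\tup c^B}$-orbits, respectively; the more restrictive formula $\Typk{\ell \cdot (k+r)}{k}[\tup c]$ in the pair structure is at most a refinement of these, but by the decomposition lemma its incomparability classes are precisely the pairs of tuples that share a sort pattern and whose two projections are $\equiv^m$-equivalent in the respective single-sort structures with the respective parameters. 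By the cyclicity of $\mfA$ and $\mfB$ this is exactly the relation of lying in a common orbit under $\Aut(\mfA)_{\tup c^A} \times \Aut(\mfB)_{\tup c^B} = \Gamma_{\tup c}$, as required by~\ref{cyclp-ordsingorb}.

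The only genuinely technical step is the decomposition lemma; after that the conclusion is a direct bookkeeping of the parameters. The main obstacle is thus verifying that counting quantifiers split additively over the two sorts while nothing bad happens at the boundary — a point where it is important that our ordered-pair encoding introduces no cross-sort atoms beyond the definable sort predicates themselves. Since the paper explicitly restricts to encodings with this property, no further work is required, and the theorem follows.
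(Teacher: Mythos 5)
Your proof is correct and follows essentially the same route as the paper, which only sketches the argument: with the agreed encoding of ordered pairs the automorphism group is the direct product $\Aut(\mfA)\times\Aut(\mfB)$, and orbits as well as counting-logic types of mixed tuples decompose componentwise, so both conditions of Definition~\ref{def:lpcyclic} transfer with the same $\ell$. The only remark is that the converse direction of your decomposition lemma (the pebble/bijection-game argument) is not actually needed: in condition~\ref{cyclp-ordsingorb} the implication from a common $\Gamma_{\tup c}$-orbit to incomparability is immediate from automorphism-invariance of the type formulas, while the nontrivial implication uses only relativisation to the two sorts together with the cyclicity of the component structures.
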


\subsection{Solving Cocyclic Linear Equation Systems}
\label{subsec:cocylicLES}
If we want to express $k$-dimensional \PC-refutations over a 
field 
$\mbF$ in $\FPC$, then we need to be able to define solution spaces of 
linear 
equation systems over that field $\mbF$ in $\FPC$, see 
Figure~\ref{fullPCAlgo}.
In Section~\ref{subsec:deflinFPC} we proved that $\FPC$ can 
define solution spaces of linear equation systems over $\mbQ$, and this 
was the key to showing
that $\FPC$ can express $k$-dimensional 
$\PC$-refutations over $\mbQ$ with polynomial bit-complexity, cf.\ Theorem~\ref{thm:PCinFPC}.
Hence, in order to prepare our main result of this section 
(Theorem~\ref{thm:main:PCfinitefields}), we 
are now going to show that $\FPC$ can define solution spaces of linear 
equation systems over a finite field $\mbF$ of characteristic $q$ under 
the assumption that these 
systems are interpreted in a class of $\cyclp$-structures with $q \neq p$. 
Moreover, we show that the number of required variables is
bounded linearly in $\ell$ (with a constant factor that only depends on 
the initial interpretation).
Note that our assumption $q \neq p$ is crucial: the CFI-problem over 
$\field F_p$ cannot be expressed in $\FPC$, 
but it can be reduced (in first-order logic) to the solvability 
problem of linear equation systems over $\field F_p$.

For our proof we make use of a key idea from~\cite{GPCSL15}: in the 
special situation that we consider here, it turns out that 
(solvable) linear  equation system always have symmetric solutions, that 
is solutions which are invariant under all automorphisms of the underlying 
linear equation systems. Together with the property of homogeneity this 
observation allows us to show that $\FPC$ can define such symmetric 
solutions, see~\cite{GPCSL15}.
In this article, we go one important step further.
We not only show that, in this particular setting, we can define the 
\emph{Boolean} solvability problem for 
linear equation systems in $\FPC$, but that we can also define the more 
general 
\emph{functional} problem of expressing solution spaces of given linear 
equation 
systems. 

Let us remark that our results here extend our approach 
from~\cite{GPCSL15} in another crucial way. In~\cite{GPCSL15} we 
considered the CFI-construction with respect to underlying graphs of 
unbounded degree. The reason is that, if we work with such underlying 
graphs, then this considerably simplifies the proof 
of the homogeneity property for CFI-structures. Here, in contrast, we 
consider the ``full'' power of the
CFI-construction, that is with respect to a family of underlying 
three-regular 
expander graphs.  This has the effect that we get much better lower 
bounds on the number of variables, and this makes our separation results 
even stronger. That is to say that the techniques that we develop here can 
readily be used in order to strengthen our separation 
results from~\cite{GPCSL15} to formulas with a sublinear number of 
variables (rather than a constant number as we considered 
in~\cite{GPCSL15}).

As usual, in order to talk about systems of linear 
equations over finite fields in the context of logical definability, we 
first have to agree on an encoding of such systems as finite relational 
structures.
Again, the concrete choice does not matter, so we do not specify 
such an encoding explicitly. Let us rather go through some notation  that 
we use in this section.
We consider (unordered) matrices $M$ over a finite field 
$\field F$ as mappings 
$M \colon I \times J \to \field F$ for two (non-empty) index sets $I$ 
and~$J$.
An (unordered) vector $v$ over a finite field $\field F$ is a 
mapping $v\colon I \to \field F$.
A linear equation system $M \cdot x = b$ over a finite field $\field F$ is 
specified by an $I \times J$-coefficient matrix $M$ over $\field F$ and an 
$I$-constants vector $b\colon I \to \mbF$.
We usually think of the finite field $\field F$ as being part of the 
input.
We are primarily interested in the setting where the 
characteristic $q = \characteristic(\mbF)$ of this field $\field F$ and 
the prime $p$ for
CFI-class $\CFIclass{\mcF}{p}$ are distinct:

\begin{defi}
 Let $\ell \geq 1$. 
 We say that a $\tau$-structure $\mfA$ contains an \emph{\lcocy} 
vector, (or matrix, or linear equation system) over a finite field $\field 
F$ if
\begin{itemize}
 \item the structure $\mfA$ is $\cyclp$ for some prime $p \in \Primes$, and
 \item for some distinguished relation symbol $S \in \tau$, the 
substructure of $\mfA$ induced on $S$ is (the structural encoding of) a 
vector $v\colon I \times \mbF$ (or 
matrix $M \colon I\times J \to \mbF$, or linear equation system $M \cdot x 
= b$) over the finite field $\mbF$ with characteristic different 
from~$p$, that is $\characteristic(\mbF) = q$ for some $q \in \Primes$,
 $p \neq q$.
\end{itemize}
\end{defi}

We proceed to show that $\FPC$ can express solution spaces of 
$\ell$-cocyclic linear equation systems using $\mcO(\ell)$ many 
variables only.
The proof consists of two steps. First of all, we show that $\FPC$ 
can define a single solution of a (solvable) \lcocy linear equation 
system (Theorem~\ref{thm:fpc-singsol}).
In a second step we then show that $\FPC$ can also define (small) 
generating 
sets for kernels of \lcocy matrices (Theorem~\ref{thm:fpc-ker}). By 
putting these two results together, we obtain the desired result. 
The main idea for this second step is to repeatedly make use of the 
$\FPC$-formula from Theorem~\ref{thm:fpc-singsol} for solving \lcocy 
linear 
equation systems and the fact 
that \cyclp structures can be linearly ordered locally in $\FPC$.

\begin{thm}\label{thm:fpc-singsol}
 For every $\ell \geq 1$ there exists an $\FPC$-formula $\phi$ with 
$\mcO(\ell)$ many variables such that $\phi$ defines in every structure 
$\mfA$ that contains
a solvable \lcocy linear equation system $M \cdot x 
= b$ over a finite field $\mbF$, where $M \colon I \times J \to \mbF$ and 
$b\colon I \to \mbF$, a solution to $M \cdot x = b$, that is $\phi$ 
defines a vector $v\colon J \to \mbF$ such that $M \cdot v = b$ (and, if 
$M 
\cdot x = b$ is not solvable, then, by convention, $\phi$ defines the 
all-$0$-vector in $\mfA$).
\end{thm}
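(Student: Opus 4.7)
The plan is to show that, whenever $M \cdot x = b$ is solvable, it admits a canonical \emph{symmetric} solution that is $\FPC$-definable via a symmetrisation step followed by solving a quotient system on the $\Gamma$-orbits. Let $\Gamma$ denote the automorphism group of $\mfA$. Since $\mfA$ is $\cyclp$, by Definition~\ref{def:lpcyclic} the group $\Gamma$ is an Abelian $p$-group, so $|\Gamma|$ is a power of $p$; and since $\characteristic(\mbF) = q \neq p$, this forces $|\Gamma|$ to be invertible in $\mbF$.

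First I would show that every solvable system has a $\Gamma$-invariant solution. Each $\pi \in \Gamma$ preserves $M$ and $b$ (they are encoded in $\mfA$), so for any solution $v$ the permuted vector $\pi v$ is again a solution, and the average
\[
 v^\ast \;=\; |\Gamma|^{-1} \sum_{\pi \in \Gamma} \pi v
\]
is a well-defined $\Gamma$-invariant solution of $M x = b$. Because $v^\ast$ is constant on each $\Gamma$-orbit of $J$, it is encoded by a function $\bar v \colon J/\Gamma \to \mbF$. Using the $\Gamma$-invariance of $M$ one verifies that for orbits $O \subseteq I$, $O' \subseteq J$ and any representative $i_0 \in O$, the value $\bar M(O,O') := \sum_{j \in O'} M(i_0, j)$ does not depend on the choice of $i_0$, and $\bar b(O) := b(i_0)$ is similarly well-defined. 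Moreover $v^\ast$ solves $M x = b$ iff $\bar v$ solves $\bar M \bar x = \bar b$, and the averaging shows that solvability of the two systems is equivalent.

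The next step is to solve the quotient system in $\FPC$. By Theorem~\ref{thm:order:ellhom}, the formula $\Typk{\ell}{1}$ linearly orders $I/\Gamma$ and $J/\Gamma$ using $\mcO(\ell)$ variables. The entries of $\bar M$ and $\bar b$ are $\FPC$-definable: a sum in $\mbF$ over a definable set of $\mbF$-values is expressible by counting, for each $v \in \mbF$, the number of $j \in O'$ with $M(i_0,j) = v$, and combining these multiplicities modulo $q$ using the fixed multiplication table of the finite field $\mbF$. Hence the quotient system is interpretable in $\mfA$ as a linearly ordered linear equation system over $\mbF$, and on such an ordered instance the Immerman--Vardi theorem yields a single $\FPC$-formula that simulates Gaussian elimination, returning either a canonical solution $\bar v$ or a flag for unsolvability. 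Finally we lift to $v^\ast(j) := \bar v([j]_\Gamma)$, or return the all-zero vector in the unsolvable case.

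The main technical obstacle is to keep the total variable count at $\mcO(\ell)$. The orbit preorder contributes the leading $\mcO(\ell)$ term via Theorem~\ref{thm:order:ellhom}; defining $\bar M$ and $\bar b$, performing arithmetic in $\mbF$ via counting, and running the Immerman--Vardi simulation of Gaussian elimination each add only constantly many further variables, because they describe fixed polynomial-time tasks on an input that is already ordered by the orbit preorder. Once this bookkeeping is carried out carefully, the resulting $\FPC$-formula uses $\mcO(\ell)$ variables as required.
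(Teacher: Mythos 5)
Your proposal is correct and follows essentially the same route as the paper: establish that a solvable \lcocy system has a $\Gamma$-symmetric solution (the paper deduces this by counting, since the solution space has $q$-power size while the $p$-group $\Gamma$ has $p$-power orbits, whereas you average using that $|\Gamma|$ is invertible in $\mbF$ -- both are sound), then restrict to orbit-constant vectors via the preorder $\Typk{\ell}{1}$ and solve the resulting ordered system by Immerman--Vardi, lifting the solution back. The only cosmetic difference is that you quotient the row index set $I$ by $\Gamma$-orbits, while the paper keeps rows indexed by $I$ (forming $M\cdot T$) and orders them lexicographically after merging duplicates; both yield an ordered instance with the same $\mcO(\ell)$ variable bookkeeping.
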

\begin{proof}
 Let $\mfA$ be an $\cyclp$ structure which contains a linear equation 
system $M \cdot x  = b$ for a matrix $M \colon I \times J \to \mbF$ and a 
constants vector $b\colon I \to \mbF$ over a finite field $\mbF$ of 
characteristic $\characteristic(\mbF) = q$, $p \neq q$.
Let $\Gamma = \Aut(\mfA)$ denote the automorphism group of $\mfA$.
Then $\Gamma$ acts on the solution space of $M\cdot x = b$.
We know that this space (in case that it is non-empty) has size $q^i$ for 
some $i \geq 0$, since we are 
dealing with a linear equation system over a field of characteristic $q 
\in \Primes$.
On the other hand, recall that $\Gamma$ is a $p$-group which means that 
each orbit of the action of $\Gamma$ on the solution space of $M \cdot x = 
b$ has size $p^j$ for some $j \geq 0$.
We conclude that there has to be at least one orbit of size one.
This, however, means that there is a solution $v\colon J \to \field F$ 
such 
that 
$\pi(v) = v$ for all $\pi \in \Gamma$.
We call a vector $v\colon J \to \field F$ which satisfies this property 
\emph{symmetric}. Note that a symmetric vector $v\colon J \to \field F$ is 
constant on 
the orbits induced by $\Gamma$ on the set $J$ since $\pi(v)(j) = 
v(\pi^{-1}(j))$.
By our assumption that $\mfA$ is $\cyclp$, we know that the formula 
$\Typk{\ell}{1}(x,y)$ defines a linear preorder $\preceq$ on $J$ which 
linearly orders $J$ up to $\Gamma$-orbits. 
Moreover, recall that this $\FPC$-formula $\Typk{\ell}{1}$ only uses 
$\mcO(\ell)$ variables.
Let $J = J_0 \preceq J_1 \preceq \cdots \preceq J_{n-1}$ denote the 
$\Gamma$-orbit partition of $J$.

For $0 \leq i < n$ let $t_i\colon J \to \field F$ denote the $J$-vector 
which 
is the identity on the $i$-th $J$-orbit, that is $t_i(j) = 1$ for $j \in 
J_i$ and $t_i(j) = 0$ for $j \not\in J_i$.
Let $T$ denote the $J \times \{0, \dots, n-1\}$-matrix which has $t_i$ as 
its $i$-th column.
Then for every symmetric $v\colon J \to \field F$ we can find a vector
$w \colon \{ 0, \dots, n-1 \} \to \field F$ such that $Tw = v$. Indeed, 
just 
choose $w(i) = v(j)$ for (some) $j \in J_i$.
We conclude, that the linear equation system $M\cdot x = b$ is solvable 
if, and only if, the system $M \cdot T \cdot x = b$ is solvable.
Clearly, every solution of $M \cdot T \cdot x = b$ gives rise to a 
solution of $M \cdot x = b$.
Hence, it suffices to define a solution of $M \cdot T \cdot x = b$ in 
fixed-point logic with counting.
However, this is very easy because $M \cdot T$ is an $I\times \{0, \dots, 
n-1\}$-matrix which has a linearly ordered set of columns.
Moreover, if we drop duplicates of rows, then the order on the columns 
also induces a (first-order definable) linear order on the rows, namely 
the lexicographical ordering (note that there exists an $\FO$-definable 
order on the finite field $\field F$). 
It follows by the Immerman-Vardi Theorem that fixed-point logic can
define a solution of the system $M \cdot T \cdot x = b$ or determine that 
the original system was not solvable. This solution 
can be lifted to a solution of $M \cdot x = b$ by multiplying 
by $T$. Finally, observe that the number of variables in the resulting 
formula is independent of $\ell$ except for the subformula 
$\Typk{\ell}{1}$ 
which defines the linear order on the orbit-partition of $J$. Hence, the 
required number of variables is indeed $\mcO(\ell)$. 
\end{proof}

\begin{thm}\label{thm:fpc-ker}
 For every $\ell \geq 1$ there exists an $\FPC$-formula $\phi$ 
with $\mcO(\ell)$ variables which defines in every structure $\mfA$ 
that contains an \lcocy matrix $M\colon I \times J \to \field F$ over a 
finite field $\field F$, a 
matrix $\phi^\mfA\colon J \times (J \times |J|) \to \field F$ such that 
$\im(\phi^\mfA) = \kernel(M)$.
\end{thm}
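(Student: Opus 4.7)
The plan is to produce a matrix $\phi^\mfA$ whose columns are the vectors in $\bigcup_{j \in J} \mcB_j$, where each $\mcB_j$ is a generating set of the $\Gamma_j$-fixed subspace $\kernel(M)^{\Gamma_j}$ of the kernel; here $\Gamma = \Aut(\mfA)$ and $\Gamma_j \leq \Gamma$ denotes the stabiliser of $j$. The algebraic heart of the argument is the identity
\[
\kernel(M) \;=\; \sum_{j \in J} \kernel(M)^{\Gamma_j},
\]
which I would derive from Maschke's theorem. Since $\Gamma$ is a finite Abelian $p$-group and $\characteristic(\mbF) = q \neq p$, the permutation action of $\Gamma$ on $\mbF^J$ leaves $\kernel(M)$ invariant and turns it into a semisimple $\Gamma$-module. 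If $V := \sum_j \kernel(M)^{\Gamma_j}$ were a proper $\Gamma$-invariant subspace, it would admit a nonzero $\Gamma$-invariant complement $W$; choosing $w \in W$ with $w(j) \neq 0$ for some $j$, the averaged vector $S_j w := |\Gamma_j|^{-1} \sum_{\pi \in \Gamma_j} \pi w$ would lie both in $W$ (by $\Gamma$-invariance) and in $\kernel(M)^{\Gamma_j} \subseteq V$, hence in $V \cap W = \{0\}$; yet $(S_j w)(j) = w(j) \neq 0$, a contradiction.

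Next, I would define each $\mcB_j$ uniformly in $\FPC$ with $j$ as a parameter, adapting the approach from the proof of Theorem \ref{thm:fpc-singsol}. By the second clause of the $\cyclp$-property, the formula $\Typk{2\ell}{1}[j](x,y)$ defines a preorder on $J$ whose incomparability classes are exactly the $\Gamma_j$-orbits of $J$; this yields a linear ordering $J_0^j \preceq J_1^j \preceq \cdots \preceq J_{n_j -1}^j$ of the orbit partition. From this, the $J \times \{0, \dots, n_j - 1\}$ matrix $T_j$ whose $i$-th column is the indicator vector of $J_i^j$ is $\FPC$-definable; its image is precisely the $\Gamma_j$-symmetric subspace of $\mbF^J$, so $\{T_j w : w \in \kernel(M T_j)\} = \kernel(M)^{\Gamma_j}$. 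The system $(M T_j)w = 0$ has a linearly ordered column set, and its rows can be linearly ordered by lexicographic comparison after merging duplicate rows; by the Immerman--Vardi theorem, $\FPC$ then defines a generating set of $\kernel(M T_j)$ of size at most $n_j \leq |J|$. Multiplying by $T_j$ (a standard $\FPC$-definable matrix operation) yields a generating set $\mcB_j$ for $\kernel(M)^{\Gamma_j}$ of the same size, which we pad with zero vectors and re-index by $\{1,\ldots,|J|\}$.

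Finally, let $\phi^\mfA \colon J \times (J \times |J|) \to \mbF$ be the matrix whose column indexed by $(j,i)$ is the $i$-th element of $\mcB_j$. Then $\im(\phi^\mfA) = \sum_j \langle \mcB_j \rangle = \sum_j \kernel(M)^{\Gamma_j} = \kernel(M)$ by the first step. For the variable count: the ordering formula $\Typk{2\ell}{1}[j]$ uses $\mcO(\ell)$ variables, the construction of $T_j$ and the Immerman--Vardi based solution of the ordered system each contribute only constantly many more, and the outer quantification over $j$ adds a single variable; the total is $\mcO(\ell)$. The main obstacle I anticipate in executing this plan cleanly is verifying uniformity in $j$: the entire pipeline---lex-ordering the rows of $M T_j$, running the Immerman--Vardi reduction on the ordered system, and lifting kernel vectors back via $T_j$---must be packaged into a single $\FPC$-formula parametrised by $j$. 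This is possible precisely because fixing $j$ as a parameter preserves the cyclicity property, which is exactly what Definition \ref{def:lpcyclic} guarantees.
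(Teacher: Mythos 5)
Your proposal is correct, but it follows a genuinely different route from the paper's own proof of Theorem~\ref{thm:fpc-ker}. The paper also indexes its generating set by pairs in $J \times |J|$, but its generating vectors are not symmetric: for each $\Gamma$-orbit $J_i$ of $J$ it defines ``$i$-homogeneous'' kernel vectors (zero on all orbits preceding $J_i$) whose projection to $J_i$ is the $m$-th \emph{echelon} vector with respect to the linear order $<_j$ that fixing a parameter $j \in J_i$ induces on its own orbit; each such vector is obtained by adding the corresponding constraints to $M \cdot x = 0$ and invoking Theorem~\ref{thm:fpc-singsol} on the augmented system. You instead prove the identity $\kernel(M) = \sum_{j \in J} \kernel(M)^{\Gamma_j}$ by a Maschke/averaging argument --- sound here because $|\Gamma_j|$ is a $p$-power, $\characteristic(\field F) = q \neq p$, and $\kernel(M)$ is $\Gamma$-invariant --- and then, for each parameter $j$, define a canonical generating set of $\kernel(M)^{\Gamma_j}$ by collapsing $\Gamma_j$-orbits via the indicator matrix $T_j$ and solving the column-ordered system $M T_j w = 0$ canonically via Immerman--Vardi; this is essentially the symmetrisation trick from the proof of Theorem~\ref{thm:fpc-singsol}, applied to the whole fixed subspace rather than to a single solution. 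What each approach buys: the paper's echelon construction stays entirely within the machinery already built (it reuses Theorem~\ref{thm:fpc-singsol} as a black box), but it needs the extra consequence of cyclicity that the parameter's own orbit is \emph{linearly} ordered; your construction needs only the $\Gamma_j$-orbit preorder from item (C-II) and is algebraically more self-contained on the linear-algebra side, at the price of importing one representation-theoretic ingredient (semisimplicity/averaging), whose hypothesis is exactly the cocyclicity assumption $q \neq p$. Your bookkeeping also checks out: $\im(T_j)$ is precisely the $\Gamma_j$-fixed subspace, so $T_j(\kernel(M T_j)) = \kernel(M)^{\Gamma_j}$; the padded index set matches $J \times |J|$; over-generation (taking all $j$ in an orbit, and zero padding) is harmless since only a generating set is required; and the variable count is dominated by $\Typk{2\ell}{1}[j]$, giving $\mcO(\ell)$ variables, matching the paper.
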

\begin{proof}
 Let $\mfA$ be an $\cyclp$ structure with automorphism group 
$\Gamma=\Aut(\mfA)$, and assume that $\mfA$ contains a matrix $M\colon I 
\times J 
\to \field F$ over a finite field $\field F$ of characteristic 
$\characteristic(\field F) = q \neq p$.
 First of all, we again use the formula $\Typk{\ell}{1}(x,y)$ to define a 
total 
preorder $\preceq$ on $J$ which orders the indexing elements in $J$ up to 
$\Gamma$-orbits.
Let $J = J_0 \preceq J_1 \preceq \cdots \preceq J_{n-1}$.
Recall that $\Typk{\ell}{1}$ is an $\FPC$-formula with $\mcO(\ell)$ 
variables.
Our plan is as follows. We aim to define a generating set for $\kernel(M)$ 
which consists of \emph{$i$-homogeneous} vectors for $0 \leq i < n-1$.
Here we say that a vector $v\colon J \to \field F$ is $i$-homogeneous if 
$v(j) 
= 0$ for all $j \in J_{i'}$ for $i' < i$. That is an $i$-homogeneous 
vector is zero on all $\Gamma$-orbits on $J$ which precede the $i$-th 
orbit~$J_i$.
Our plan is to define in $\FPC$, for every $0 \leq i < n$,  sets 
$K_i$ consisting of $i$-homogeneous vectors $v\colon J \to \field F$, $v \in 
\kernel(M)$, such that the 
projections of $K_i$ to $J_i$ yield generating sets for the projections 
of $\kernel(M)$ to $J_i$, which means that $\bigcup_{i < n-1} K_i$ is a 
generating set for $\kernel(M)$.

We will index the elements in $K_i$ by elements in $J_i 
\times |J|$. The crucial insight is that $\cyclp$ structures satisfy the 
additional property that for each fixed parameter $j \in J_i$  the 
formula $\Typk{\ell \cdot 2}{1}[j](x,y)$ defines a linear order $<_j$ on 
$J_i$.
Hence, if we have a fixed $j \in J_i$, then it makes sense to speak of 
an \emph{$m$-th echelon vector} $(0, \dots, 0, 1, \star, \star)$ of the 
projection of $\kernel(M)$ to $J_i$. Here an $m$-th echelon vector has 
entry 
$1$ in the $m$-th component and is zero at all preceding components (and 
its length is $|J_i|$). Clearly, for some $0 \leq m \leq |J_i|$ such a 
vector may not exist, but if we collect a set of (existing) $m$-th echelon 
vectors for $0 \leq m \leq |J_i|$, then we obtain a generating set for 
the projection of $\kernel(M)$ to $J_i$.
With this preparation, we can describe our strategy more 
precisely. The intention is that the $J$-vector $v \in K_i$ that is 
indexed 
by $(j, m)$, $j \in J_i$, $m < |J|$, represents an $i$-homogeneous vector 
$v \in \kernel(M)$ with the additional property that the projection of $v$ 
to 
$J_i$ is the $m$-th echelon vector of the projection of $\kernel(M)$ to 
$J_i$ 
(if it exists, otherwise we agree to let $v = 0$).
Note that in this way we actually include too many vectors in $K_i$. 
Indeed, it would be sufficient to consider all such vectors indexed by $j 
\times |J_i|$ for a single $j \in J_i$. However, since we cannot choose a 
particular $j \in J_i$ we just add all of these vectors for any $j 
\in J_i$. This 
does not cause any problems, since we do not aim at defining a basis for 
$\kernel(M)$, but just at defining a generating set.

It remains to see how we can define such a vector $v\colon J \to \field F$ 
in 
$\FPC$ given parameters $(j, m) \in J_i \times |J|$.
To this end we make use of Theorem~\ref{thm:fpc-singsol} and the formulas 
$\Typk{\ell}{1}$ and $\Typk{\ell \cdot 2}{1}[j]$ (both with $\mcO(\ell)$ 
many variables only).
Let us start with the homogeneous linear equation system $M \cdot x = 0$ 
which defines $\kernel(M)$. Given the parameters $(j,m)$, we now add 
extra constraints for the variables 
$x = (x_j)_{j \in J}$ as follows:
\begin{itemize}
 \item for $j' \in \bigcup_{i' < i} J_{i'}$ we set $x_{j'} = 0$,
 \item for $J_i = j_1 <_j \cdots <_j j_s$, we set $x_j = 0$ for $j \in \{ 
j_1, \dots, j_{m-1} \}$ and $x_{j_m} = 1$.
\end{itemize}
It is clear that the solution space of this linear equation system 
consists precisely of the $i$-homogeneous vectors $v \colon J \to \field 
F$ 
in 
$\kernel(M)$ whose projections to $J_i$ are $m$-th echelon vectors (with 
respect to the order $<_j$ defined by $\Typk{\ell \cdot 2}{1}[j]$ on 
$J_i$).
Moreover, this system can easily be defined in $\mfA$ using an 
$\FPC$-formula which uses $\Typk{\ell \cdot 2}{1}[j]$ and $\Typk{\ell}{1}$ 
as subformulas and parameters $(j,m)$. 
We can now make use of Theorem~\ref{thm:fpc-singsol} to define a solution 
$v\colon J \to \field F$ of this system (if a solution exists) in $\FPC$ 
using 
again $\mcO(\ell)$ many variables only. This yields the desired vector in 
$K_i$, that is indexed by $(j,m)$, and it concludes our proof.
\end{proof}

By putting Theorem~\ref{thm:fpc-singsol} and Theorem~\ref{thm:fpc-ker} 
together we arrive at our desired result, namely that $\FPC$ is able to express 
solution spaces 
of \lcocy linear equation systems $M \cdot x = b$ where $M\colon I 
\times J \to \field F$ and $b \colon I \to \field F$ using $\mcO(\ell)$ 
many variables only.
Unfortunately, there is still a small problem: 
according to Theorem~\ref{thm:fpc-ker}, the index set for the solution 
space that we get is $(J \times |J|)$. However, in general, $|J|$ can be 
much larger than $|I|$, and we would like to get \emph{small} generating 
sets for expressing $\PC$-refutations when we think of our procedure from
Figure~\ref{fullPCAlgo}. In fact, when we express $k$-dimensional 
\PC-refutations in $\FPC$, then for the 
linear equation systems that we need to solve there, we only have a global 
polynomial bound on the size of the index $I$ (the set of $k$-dimensional 
monomials), but not on 
the size of the index $J$ (which indexes the generating set for $\PCx 
k(\mcP)$ that we have computed up to a certain 
stage, cf.\ Figure~\ref{fullPCAlgo}).
Fortunately, we can use the same strategy that we used in order to prove 
Theorem~\ref{thm:fpc-ker} in order to convert a (potentially large) 
generating set for a given linear space into a small one within $\FPC$.

\begin{thm}
\label{thm:fpc-smallgen}
 For every $\ell \geq 1$ there exists an $\FPC$-formula 
$\phi$ with $\mcO(\ell)$ variables such that for every structure $\mfA$ 
which contains an \lcocy matrix $M\colon I \times J \to \field F$ over a 
finite field $\field F$, the formula $\phi$ defines in $\mfA$ a
matrix $\phi^\mfA\colon I \times (I \times |I|) \to \field F$ such that 
$\im(\phi^\mfA) = \im(M)$.
\end{thm}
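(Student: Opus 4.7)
The plan is to mirror the proof of Theorem~\ref{thm:fpc-ker}, but to target $\im(M) \subseteq \field F^I$ rather than $\kernel(M) \subseteq \field F^J$. First I would use $\Typk{\ell}{1}(x,y)$ on the index set $I$ to obtain in $\FPC$ the orbit preorder $I = I_0 \preceq I_1 \preceq \cdots \preceq I_{n-1}$ on the rows, and observe that for any fixed parameter $i \in I_k$, the formula $\Typk{\ell \cdot 2}{1}[i](x,y)$ linearly orders $I_k$ as $i_1 <_i \cdots <_i i_s$. Both formulas stay within $\mcO(\ell)$ variables.

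Next, for each parameter $(i,m) \in I \times |I|$ with $i \in I_k$, I would produce a \emph{$k$-homogeneous $m$-th echelon vector} of $\im(M)$: a vector $w \in \im(M)$ that vanishes on $\bigcup_{k' < k} I_{k'}$ and whose projection to $I_k$ under $<_i$ satisfies $w_{i_t} = 0$ for $t < m$ and $w_{i_m} = 1$. To obtain such a $w$ in $\FPC$, I would form the extended linear equation system in unknowns $x \colon J \to \field F$ and $w \colon I \to \field F$ consisting of $Mx - w = 0$ together with the linear constraints on $w$ just described; if the system is unsolvable (in particular whenever $m > |I_k|$ or no such echelon vector exists in the projection of $\im(M)$ to $I_k$), the convention of Theorem~\ref{thm:fpc-singsol} returns the zero solution. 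Since all additional constraints are $\FPC$-definable from $(i,m)$ via $\Typk{\ell}{1}$ and $\Typk{\ell \cdot 2}{1}[i]$, the extended system is interpretable from $(\mfA, i, m)$ by an $\FPC$-interpretation that uses only $\mcO(\ell)$ variables.

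Applying Theorem~\ref{thm:fpc-singsol} to this extended system returns in $\FPC$ with $\mcO(\ell)$ variables a single symmetric solution $(x,w)$; I declare the $(i,m)$-th column of $\phi^{\mfA}$ to be the $I$-projection $w$. The verification that $\im(\phi^{\mfA}) = \im(M)$ is routine: the inclusion $\im(\phi^{\mfA}) \subseteq \im(M)$ is immediate from $Mx = w$, and the reverse inclusion follows by a standard orbit-by-orbit triangular elimination, since for each orbit $I_k$ and any fixed $i \in I_k$ the nonzero echelon columns indexed by $(i,1), \ldots, (i, |I_k|)$ generate the projection of $\{\, u \in \im(M) : u \text{ is } k\text{-homogeneous}\,\}$ to $I_k$.

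The hard part will be verifying that the parametrised extended system really is $(\mcO(\ell), p)$-cyclic, so that Theorem~\ref{thm:fpc-singsol} applies uniformly across all parameters $(i,m)$ with only a constant-factor blow-up in the variable count. This is precisely where the closure properties from Subsection~\ref{subsec:cyclic} pay off: Theorem~\ref{thm:nfinter} ensures that the normalised $\FPC$-interpretation producing the extended system has automorphism group isomorphic to the stabiliser $\Gamma_{(i,m)} \leq \Gamma$ (still an Abelian $p$-group, with the characteristic mismatch $q \neq p$ preserved), and Theorem~\ref{thm:homab:lpcyc} then upgrades homogeneity back to cocyclicity with a homogeneity constant that remains $\mcO(\ell)$.
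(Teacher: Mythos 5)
Your proposal is correct and is essentially the paper's own argument: the paper proves Theorem~\ref{thm:fpc-smallgen} simply by declaring it analogous to Theorem~\ref{thm:fpc-ker}, and your write-up is exactly that adaptation (orbit preorder on $I$ via $\Typk{\ell}{1}$, parametrised local orders $\Typk{\ell\cdot 2}{1}[i]$, homogeneous echelon vectors of $\im(M)$ obtained by solving the augmented system $Mx - w = 0$ with the echelon constraints, and Theorem~\ref{thm:fpc-singsol} to pick a symmetric solution). Your extra care about cyclicity in the presence of the parameters $(i,m)$, handled via Theorem~\ref{thm:nfinter} and Theorem~\ref{thm:homab:lpcyc}, is consistent with the paper's toolkit and only makes explicit what the paper leaves implicit.
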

\begin{proof}
 The proof is analogous to our proof of Theorem~\ref{thm:fpc-ker}.
\end{proof}

\begin{cor}
\label{cor:fpc:finitefield:solvles}
  For every $\ell \geq 1$ there exist $\FPC$-formulas 
with $\mcO(\ell)$ variables such that for every structure $\mfA$ 
which contains an \lcocy linear equation system $M \cdot x = b$ for 
$M \colon I \times J \to \field F$ and $b\colon I \to \field F$ over a 
finite field $\field F$, the formulas either define a 
matrix $N\colon J \times K \to \field F$ and a $J$-vector $v\colon J \to 
\field F$ such that $\im(N) + v$ is the solution space of $M \cdot x = b$ 
where
$K \in \{ I \times |I|, J \times |J| \}$ and $|K| = \min(|I|^2, |J|^2)$,
or, in case that the solution space is empty, they define $v = \emptyset$.
\end{cor}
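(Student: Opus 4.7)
My plan is to assemble the three preceding theorems of this subsection into a single family of FPC-definitions. Given a structure $\mfA$ that contains an $\ell$-cocyclic system $M \cdot x = b$ with $M\colon I \times J \to \field F$ and $b\colon I \to \field F$, the first step is to invoke Theorem~\ref{thm:fpc-singsol} to produce an FPC-definable candidate solution $v\colon J \to \field F$; because matrix-vector multiplication and equality comparisons in the finite field $\field F$ are first-order in the structural encoding, I can then verify solvability inside FPC by checking whether $M \cdot v = b$. If that check fails, the formulas uniformly output $v = \emptyset$, covering the infeasibility clause of the statement.

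The second step is to invoke Theorem~\ref{thm:fpc-ker} on the coefficient matrix $M$ to obtain an FPC-definable matrix $N_J\colon J \times (J \times |J|) \to \field F$ with $\im(N_J) = \kernel(M)$. Combined with the particular solution from the first step, $v + \im(N_J)$ is exactly the affine solution space, which already establishes the variant of the conclusion with $K = J \times |J|$ and $|K| = |J|^2$.

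To also deliver a generator matrix indexed by $K = I \times |I|$, so that the minimum of the two size bounds can be selected, I would re-run the echelon-on-orbits construction underlying Theorems~\ref{thm:fpc-ker} and~\ref{thm:fpc-smallgen}, but with the stratification carried out on the $I$-side: stratify by the $\Gamma$-orbits on $I$ via $\Typk{\ell}{1}$, refine inside each orbit by $\Typk{\ell\cdot 2}{1}[i]$ for $i$ in the current orbit, and pick echelon-style representatives of the resulting homogeneous solution spaces. This parametrises a generating family of $\kernel(M)$ by $I \times |I|$. A final FPC-definable comparison of the two cardinalities then selects whichever of the two generator matrices is smaller, yielding $|K| = \min(|I|^2, |J|^2)$ uniformly.

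Throughout, every ingredient uses only $\mcO(\ell)$ variables, and the combining operations, namely matrix arithmetic, equality tests, and one cardinality comparison, contribute only constant overhead, so the full construction stays within the $\mcO(\ell)$-variable budget. The main obstacle is not conceptual but notational: one must glue the solvability check, the two alternative generator constructions, and the case distinction between them into a single uniform family of FPC-formulas, and verify that the final output matches the prescribed index conventions independently of the particular input structure $\mfA$.
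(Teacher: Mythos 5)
Your first two steps coincide with the paper's (implicit) argument: Theorem~\ref{thm:fpc-singsol} supplies the particular solution $v$ (its all-zero convention for unsolvable systems means your extra check $M\cdot v=b$ is a sound way to detect emptiness, though note that evaluating $M\cdot v$ needs counting to form unordered sums over $\field F$, so it is $\FPC$- rather than plain $\FO$-definable), and Theorem~\ref{thm:fpc-ker} supplies $N\colon J\times(J\times|J|)\to\field F$ with $\im(N)=\kernel(M)$, which settles the $K=J\times|J|$ case.

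The gap is in your third step. You claim that re-running the orbit/echelon construction ``on the $I$-side'' parametrises a generating family of $\kernel(M)$ by $I\times|I|$. This cannot work. The echelon machinery of Theorem~\ref{thm:fpc-ker} stratifies the coordinate set of the vectors being generated, which for $\kernel(M)$ is $J$; the set $I$ indexes equations, and kernel vectors carry no echelon structure relative to it. More fundamentally, $\kernel(M)\subseteq\field F^J$ has dimension $|J|-\rank(M)\geq|J|-|I|$, and cocyclicity imposes no relation between $|I|$, $|J|$ and $\rank(M)$, so as soon as $|J|>|I|^2+|I|$ no matrix with only $|I|\cdot|I|$ columns can generate the kernel; hence your concluding ``compare the two cardinalities and take the smaller generator'' rests on a false premise. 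In the paper the small index set $I\times|I|$ does not arise from any kernel construction: it comes from Theorem~\ref{thm:fpc-smallgen}, which compresses a generating set of a subspace of $\field F^{I}$, i.e.\ of an image $\im(M)$, down to $I\times|I|$ columns, and this is exactly what is needed in the application (Figure~\ref{fullPCAlgo}), where the spaces to be regenerated live over the polynomially bounded monomial index set. So the corollary is obtained by combining Theorems~\ref{thm:fpc-singsol}, \ref{thm:fpc-ker} and \ref{thm:fpc-smallgen}, with the $J\times|J|$ bound for the solution space itself and the $I\times|I|$ option reserved for image-type spaces; attempting, as you do, to force the $\min(|I|^2,|J|^2)$ bound directly onto $\kernel(M)$ is not repairable.
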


\subsection{Cocyclic PC-Refutations over Finite Fields in FPC}
\label{subsec:cocycPC}
We can finally come to our main result of this section. We show that 
$\FPC$ 
can express $k$-dimensional \PC-refutations over finite fields $\field F$ 
of characteristic $q$ if the inputs are polynomial equation systems 
that are interpreted in a class of $(\ell,p)$-cyclic structures, where $q 
\neq p$, 
using only $\mcO(\ell)$ variables.
As the prototype example, this situation occurs whenever we interpret 
polynomial equation systems over a field $\field F$ of characteristic 
$\characteristic(\field F) = q$ in (disjoint unions) of CFI-structures 
$\mfA \in \CFIclass{\mcF}{p}$ over $\field F_p$.
For the proof, recall that for an $\FPC$-interpretation~$\mcI$, we denote 
by 
$\Norm(\mcI)$ its normal form according to Theorem~\ref{thm:nfinter}.

\begin{thm}
\label{thm:main:PCfinitefields}
Let $Q \subsetneq \Primes$ be a (non-trivial) set of primes.
Let  $\mcI(\tup x)$ be an $\FPC$-interpretation which 
maps 
$\tau$-structures to polynomial equation systems over finite 
fields~$\field F$ of characteristic $q \in Q$.
Then for every $\ell \geq 1$, $k \geq 2$, and $p \in \Primes, p \not\in 
Q$, there exists an $\FPC$-formula $\phi$ with $\mcO(k\cdot \ell)$ many 
variables such that for every $\cyclp$ $\tau$-structure $\mfA$ and $\tup a 
\in \dom(\mfA,\tup x)$ we have that $\mfA \models \phi(\tup a)$ if, and 
only if, the polynomial equation system $\mcI(\mfA, \tup a)$ has a 
$\PC$-refutation (over the respective finite field $\field F$) of degree 
at most $k$.
\end{thm}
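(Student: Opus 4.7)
The plan is to mimic the proof of Theorem~\ref{thm:PCinFPC} by implementing the iterative procedure of Figure~\ref{fullPCAlgo} in $\FPC$, but with the crucial step of defining solution spaces of linear equation systems now carried out via Corollary~\ref{cor:fpc:finitefield:solvles} (for cocyclic systems over finite fields) in place of Theorem~\ref{thm:fpc:definesolutions} (which only works over $\mbQ$). The task therefore reduces to verifying that every linear equation system arising during the procedure is $\mcO(k\cdot\ell)$-cocyclic with respect to the prime $p$ of the input structure.

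First, I would replace $\mcI$ by its normal form $\mcJ = \Norm(\mcI)$ from Theorem~\ref{thm:nfinter}. This way, given a $\cyclp$ input $\mfA$ with parameters $\tup a$, the expanded structure $\hat{\mfA} := \mcJ(\mfA,\tup a)$ contains both (a copy of) $\mfA$ and the target polynomial equation system $\mcP := \mcI(\mfA,\tup a)$ as $\FO$-definable substructures, has automorphism group isomorphic to $\Aut(\mfA,\tup a)$ (hence an Abelian $p$-group), and is $\mcO(\ell)$-homogeneous. By Theorem~\ref{thm:homab:lpcyc} this means that $\hat{\mfA}$ is itself $\cyc{\ell'}{p}$ for some $\ell' = \mcO(\ell)$ whose constants depend only on the dimension and parameter length of $\mcI$. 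Since $p \notin Q$ and $\characteristic(\mbF) = q \in Q$, we have $p \neq q$, so any linear equation system over $\mbF$ that we identify inside $\hat{\mfA}$ will be $\ell'$-cocyclic in the sense of Subsection~\ref{subsec:cocylicLES}.

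Next, I would carry out the iteration of Figure~\ref{fullPCAlgo} within $\FPC$ exactly as for $\mbQ$. The ambient index set $M_k$ of multilinear monomials of degree at most $k$ is $\FO$-interpretable in $\hat{\mfA}$ with dimension $\mcO(k)$; the intermediate sets $\mcB, \mcC \subseteq \mbF^{M_k}$ are then representable by $\FO$-interpretable matrices over~$\mbF$ inside~$\hat{\mfA}$. The crucial linear-algebraic sub-tasks are (a) extracting a small generating set for the subspace $\{p \in \langle\mcB\rangle : \deg(p) < k\}$, which reduces via the $N \mapsto N \cdot N^{T}$ compression trick (Lemma~\ref{lem:bcmmt}) to solving a linear equation system derived from~$\mcB$, and (b) the final test whether $1 \in \langle \mcB \rangle$, which is again a solvability question. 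All of these systems are $\FPC$-interpretable in $\hat{\mfA}$; by composing with $\Norm$ (or directly observing that $\FPC$-interpretations preserve the automorphism group, which remains a $p$-group, and that homogeneity is preserved up to a linear factor in the dimension), each such system is $\cyc{\mcO(k\cdot\ell)}{p}$-cocyclic over $\mbF$. Corollary~\ref{cor:fpc:finitefield:solvles} thus supplies $\FPC$-formulas with $\mcO(k\cdot\ell)$ variables defining the required solution spaces or single solutions.

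The main obstacle will be the bookkeeping required to track cocyclicity through the iteration. Specifically, each step of the loop produces a new generating set $\mcB$, and the linear equation systems used in the next iteration are interpreted inside the structure formed by $\hat{\mfA}$ together with~$\mcB$. One must argue (via an extension of Theorem~\ref{thm:nfinter} or a direct inductive observation) that adding an $\FPC$-definable matrix as a distinguished substructure to a $\cyc{\ell''}{p}$-structure keeps it cocyclic with a linearly bounded increase in the homogeneity constant, so that the constant never drifts beyond $\mcO(k\cdot\ell)$. Once this is in place, the entire procedure of Figure~\ref{fullPCAlgo} is definable by a single $\FPC$-formula $\phi$ whose outermost inflationary fixed-point iterates the loop body a polynomial number of times and whose inner subformulas — all of $\mcO(k\cdot\ell)$-width — invoke Corollary~\ref{cor:fpc:finitefield:solvles} for the linear-algebraic tasks. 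The formula $\phi$ then holds on $(\mfA,\tup a)$ if, and only if, $1$ appears in the final generating set, i.e.\ $\mcI(\mfA,\tup a)$ has a degree-$k$ $\PC$-refutation over~$\mbF$.
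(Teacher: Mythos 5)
Your overall strategy is exactly the paper's: pass to the normal form $\Norm(\mcI)$ of Theorem~\ref{thm:nfinter}, conclude via Theorem~\ref{thm:homab:lpcyc} that the interpreted structure is $(\mcO(\ell),p)$-cyclic, observe that $p\neq q$ makes every linear-algebraic subtask an \lcocy system over $\field F$, and then run the procedure of Figure~\ref{fullPCAlgo} in $\FPC$ using Corollary~\ref{cor:fpc:finitefield:solvles} in place of Theorem~\ref{thm:fpc:definesolutions}, with the index sets of degree-$k$ monomials accounting for the extra factor $\mcO(k)$ in the variable count. However, one concrete step in your write-up is wrong: you propose to extract a small generating set for $\{p\in\langle\mcB\rangle:\deg(p)<k\}$ via the compression trick $N\mapsto N\cdot N^{T}$ of Lemma~\ref{lem:bcmmt}. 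That lemma is specific to $\mbQ$: its proof rests on $x^{T}M M^{T}x=0$ forcing $M^{T}x=0$, i.e.\ on positive definiteness of the standard bilinear form, and over a finite field of characteristic $q$ there are nonzero isotropic vectors, so in general $\im(N\cdot N^{T})\subsetneq\im(N)$ and $\kernel(MM^{T})\supsetneq\kernel(M^{T})$. This is precisely the point where the finite-field argument must deviate from the rational one; the correct replacement is Theorem~\ref{thm:fpc-smallgen} (and the form of Corollary~\ref{cor:fpc:finitefield:solvles}, whose output already has index set of size $\min(|I|^2,|J|^2)$), which achieves the compression by the orbit-preorder/echelon-vector argument that exploits cocyclicity rather than any inner-product identity.

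On your ``main obstacle'', the bookkeeping of cocyclicity through the iteration: this is lighter than you suggest and needs no extension of Theorem~\ref{thm:nfinter}. The intermediate sets $\mcB$ and $\mcC$ are not adjoined to the structure as new distinguished substructures; they are relations defined by $\FPC$-formulas (inside one inflationary fixed point) over the fixed structure $\Norm(\mcI)(\mfA,\tup a)$, with index sets consisting of $\mcO(k)$-tuples of its elements (monomials). Hence every linear system encountered lives inside the same $(\mcO(\ell),p)$-cyclic structure, its automorphism group never changes, and the orbit preorders needed by Theorems~\ref{thm:fpc-singsol}, \ref{thm:fpc-ker} and \ref{thm:fpc-smallgen} are supplied by the type formulas on $\mcO(k)$-tuples, using $\mcO(k\cdot\ell)$ variables throughout; no drift of the homogeneity constant can occur. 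With the $N\cdot N^{T}$ step replaced as above, your argument coincides with the paper's proof.
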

\begin{proof}
Let $d \geq 1$ denote the dimension of $\mcI(\tup x)$ and $r \geq 0$ the 
number of parameters, $r = |\tup x|$.
We use Theorem~\ref{thm:nfinter} to transform $\mcI(\tup x)$ into normal 
form $\Norm(\mcI)(\tup x)$. 
Then, if $\mfA$ is an $\cyclp$ $\tau$-structure and $\tup a \in \dom(\mfA, 
\tup x)$, then we know that $\Norm(\mcI)(\mfA, \tup a)$ is $(\ell \cdot (d 
+r))$-homogeneous and that the automorphism group of $\Norm(\mcI)(\mfA, 
\tup a)$ is an Abelian $p$-group.
We conclude, using Theorem~\ref{thm:homab:lpcyc}, that 
$\Norm(\mcI)(\mfA,\tup a)$ is $(\ell \cdot (d+r),p)$-cyclic. Note that $d, 
r$ are 
constants which only depend on the fixed interpretation $\mcI$.

Now, assume that we want to express in $\FPC$, given $\Norm(\mcI)(\mfA, 
\tup a)$, whether the contained polynomial equation system 
$\mcP$ over the finite field $\mbF$ of characteristic $q \neq p$ 
has a $k$-dimensional \PC-refutation.
In order to do this, we want to express the procedure from 
Figure~\ref{fullPCAlgo} in $\FPC$. Recall that the main (and only) 
difficulty 
is to (iteratively) define solution spaces of linear equation system over 
$\field F$ in $\FPC$.
However, since $\mcP$ is part of an $(\mcO(\ell),p)$-cyclic structure, all 
linear equation systems that we have to solve are $\mcO(\ell)$-cocyclic 
systems.
Since we can define the index sets for these systems using $\mcO(k)$ many 
variables in $\FPC$ (because we basically have to index all degree-$k$ 
multilinear monomials) it follows from 
Corollary~\ref{cor:fpc:finitefield:solvles} that solution sets can 
be defined in $\FPC$ using at most $\mcO(k \cdot \ell)$ many variables.
We can now translate the resulting formulas back via $\Norm(\mcI)$ which 
adds another constant factor to the number required variables that only 
depends on $\mcI$. This concludes our proof.
\end{proof}

As we said, in particular, we can apply this result for polynomial 
equation systems 
interpreted in CFI-structures. This 
will allow us to  derive lower bounds for the polynomial calculus 
just by using finite-model-theoretic arguments in 
Section~\ref{sec:lowerboundsPC}.

\section{Applications in Proof Complexity}
\label{sec:lowerboundsPC}
Our model-theoretic characterisations  of (bounded-width) resolution 
and the polynomial calculus via $\EFP$- and $\FPC$-definability allow us 
to 
uniformly (re-)prove many lower bounds on the complexity of proofs (size 
and/or width/degree) for families of propositional formulas using 
arguments from finite model theory.
The basic idea is very simple. We saw that the amount of certain logical 
resources that are required to express refutations (that is the number of 
variables) matches the 
complexity of refutations (width of clauses or degree of polynomials) up 
to linear factors.
It follows that if we exhibit families of propositional 
formulas $\Phi_n, \Psi_n$ that cannot be distinguished in $\EFP$ (or in
$\FPC$ or in $\Cinf$) using $\mcO(k)$ variables, then also the corresponding 
propositional proof systems cannot distinguish between these 
formulas using refutations of width $k$ or degree $k$, respectively. In 
particular, if one of the formulas in our family 
$(\Phi_n, \Psi_n)$, say $\Phi_n$, is satisfiable, then there cannot be a 
refutation for the indistinguishable formula $\Psi_n$ (of a certain 
complexity).

In the conference version of this article~\cite{GrPaPa17} we discussed 
these applications with respect to the resolution proof system. 
However, given that we extended our definability results for the 
polynomial calculus in this article, we can basically derive the same 
lower 
bounds directly for the full polynomial calculus over the rationals and 
over finite fields (with the Pigeonhole principle being the only 
exception). Clearly, this makes the lower bounds more interesting 
and, for conciseness, we therefore restrict our attention to the 
polynomial calculus here.

\subsection{Lower Bounds on Degree and Size of Refutations}
\label{sec:applications:low}
In this section we establish our main tool for proving lower bounds for 
the polynomial calculus.
Recall the notion of $(\ell, p)$-cyclic structures from 
Section~\ref{subsec:cyclic}. 
 
\begin{thm}
\label{thm:appl:pclow}
Let $\field F$ be a finite field or the field of rationals. Moreover, let 
$(\mcP_n)$ and $(\mcQ_n)$ be two families of 
polynomial equation systems over $\mbF$ and let 
$\mcI$ be an $\FPC$-interpretation that maps $\tau$-structures to 
polynomial equation systems over~$\field F$.
In addition, let $\ell \geq 1$ and let $p \in \Primes$ be such that 
$\characteristic(\field F) \neq p$ and let $(\mfA_n)$ and $(\mfB_n)$ be 
two families of $(\ell,p)$-cyclic $\tau$-structures such that for all $n 
\geq 1$:
\begin{itemize}
 \item $\mcI(\mfA_n) = \mcP_n$ and $\mcI(\mfB_n) = \mcQ_n$,
 \item $\mcP_n$ is satisfiable and $\mcQ_n$ is not satisfiable,
 \item $\mfA_n \equiv^{\Omega(n)} \mfB_n$.
\end{itemize}
Then the following holds:
\begin{enumerate}
 \item Let $\PCdegree(n)$ denote the minimal degree required to 
refute the system $\mcQ_n$ using the polynomial calculus over $\field F$. 
Then $\PCdegree(n) \in \Omega(n)$.
\end{enumerate}
Moreover, as a consequence of this, the following holds:
\begin{enumerate}
\stepcounter{enumi}
\item Let $\PCsize(n)$ denote the 
size of a minimal PC-refutation 
for $\mcQ_n$ over $\field F$. If the systems $\mcQ_n$, for $n \geq 1$, 
only contain $\mcO(n)$ many variables, then $\PCsize(n)$ is bounded from below 
by $2^{\Omega(n)}$.
\end{enumerate}
 \end{thm}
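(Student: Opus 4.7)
The plan is to prove both claims by contraposition against the indistinguishability hypothesis $\mfA_n \equiv^{\Omega(n)} \mfB_n$, relying on the $\FPC$-definability of bounded-degree PC-refutations established in Theorems~\ref{thm:PCinFPC} and~\ref{thm:main:PCfinitefields}. The strategy is conceptually the same in both parts: a short PC-refutation of $\mcQ_n$ yields, via the definability results, a small-variable $\FPC$-formula distinguishing $\mfA_n$ from $\mfB_n$, which contradicts the hypothesis since $\FPC\le\Cinf$ respects variable count.

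For the degree lower bound, suppose toward a contradiction that $\mcQ_n$ admits a degree-$k$ PC-refutation for some $k = o(n)$. I split into two cases according to the field $\mbF$. If $\mbF=\mbQ$, Theorem~\ref{thm:PCinFPC} supplies an $\FPC$-formula $\chi_k$ with $\mcO(k)$ variables expressing, for any structural encoding of a polynomial system over $\mbQ$, the existence of a degree-$k$ PC-refutation. If instead $\mbF$ is finite with $\characteristic(\mbF)\ne p$, then since both $(\mfA_n)$ and $(\mfB_n)$ are $(\ell,p)$-cyclic and the interpretation $\mcI$ is fixed, Theorem~\ref{thm:main:PCfinitefields} applies and yields an $\FPC$-formula $\chi_k$ with $\mcO(k\cdot\ell)$ variables that defines, on the input $\tau$-structures themselves, whether the interpreted system $\mcI(\cdot)$ admits a degree-$k$ PC-refutation. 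In the first case, I compose $\chi_k$ with the fixed interpretation $\mcI$, which is an $\FPC$-interpretation of fixed dimension and parameter length; this composition only multiplies the variable count by a constant, giving a formula $\psi_k \in \FPC$ with $\mcO(k)$ variables such that $\mfA\models\psi_k$ iff $\mcI(\mfA)$ has a degree-$k$ PC-refutation. In the second case $\chi_k$ already has this form with $\mcO(k\cdot\ell)$ variables, and $\ell$ is a fixed constant. Now since $\mcP_n = \mcI(\mfA_n)$ is satisfiable, $\mfA_n\not\models\psi_k$; since $\mcQ_n = \mcI(\mfB_n)$ admits a degree-$k$ refutation, $\mfB_n\models\psi_k$. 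Using $\FPC\le\Cinf$ (preserving the number of variables up to a constant), we obtain $\mfA_n\not\equiv^{c\cdot k}\mfB_n$ for some constant $c$ depending only on $\mcI$ and $\ell$. Combined with $\mfA_n\equiv^{\Omega(n)}\mfB_n$, this forces $k\in\Omega(n)$, proving $\PCdegree(n)\in\Omega(n)$.

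For the size lower bound, I invoke the standard \emph{size--degree tradeoff} for the polynomial calculus of Impagliazzo, Pudl\'ak and Sgall: any PC-refutation of a polynomial system in $N$ variables (with axioms of bounded initial degree) of size $s$ can be converted into a refutation of degree $\mcO(\sqrt{N\log s}) + d_0$, where $d_0$ is the maximal axiom degree. Under the hypothesis that $\mcQ_n$ contains $N = \mcO(n)$ variables (and since the axiom degrees of $\mcI(\mfB_n)$ are bounded by a constant depending only on $\mcI$), a refutation of size $s = \PCsize(n)$ yields a refutation of degree $\mcO(\sqrt{n\log s})$. Combining with $\PCdegree(n)\in\Omega(n)$ from the first part gives $\sqrt{n\log s}\in\Omega(n)$, i.e.\ $\log s\in\Omega(n)$, so $\PCsize(n)\ge 2^{\Omega(n)}$.

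The only genuine technical point is bookkeeping of variable counts: one must verify that all constants hidden in the $\mcO(\cdot)$ bounds are independent of $n$, which reduces to the fact that $\mcI$, $\ell$, and $\characteristic(\mbF)$ (and hence the constraint $\characteristic(\mbF)\ne p$) are fixed once and for all. The substantive input is already done: Theorem~\ref{thm:main:PCfinitefields} ensures that feeding $(\ell,p)$-cyclic structures into $\mcI$ yields systems whose intermediate linear-algebraic computations inside Figure~\ref{fullPCAlgo} remain \lcocy, so that the entire $\PCx k$-refutability test is expressible in $\FPC$ with the claimed number of variables. Everything else is a direct contrapositive from the indistinguishability hypothesis plus an off-the-shelf size-degree tradeoff.
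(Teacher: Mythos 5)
Your proposal is correct and follows essentially the same route as the paper's own proof: apply Theorem~\ref{thm:PCinFPC} (rationals) or Theorem~\ref{thm:main:PCfinitefields} (finite fields, using the cyclicity hypothesis) to get an $\FPC$-formula with $\mcO(k)$ variables deciding degree-$k$ refutability of $\mcI(\cdot)$, conclude via $\FPC\le\Cinf$ and $\mfA_n\equiv^{\Omega(n)}\mfB_n$ that $\PCdegree(n)\in\Omega(n)$, and then invoke the Impagliazzo--Pudl\'ak--Sgall size--degree trade-off for the exponential size bound. Your write-up merely spells out the variable-counting and the trade-off statement in slightly more detail than the paper does.
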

\begin{proof}
 For any degree $k \geq 1$, we know that by Theorem~\ref{thm:PCinCinf} (if 
$\field = \mbQ$) or Theorem~\ref{thm:main:PCfinitefields} (if $\field F$ 
is finite and of characteristic $q \neq p$) there exists a $\Cinf$-formula 
$\phi_k$ (in the case of finite fields, there even exists an $\FPC$-formula $\phi_k$, but this makes no difference for the argument) with $\mcO(k)$ many variables (note that $\ell$ is fixed) which 
expresses whether the polynomial equation systems $(\mcP_n)$, $(\mcQ_n)$ 
have a $\PC$-refutation over $\field F$ of degree at most $k$.
By translating these formulas back via the fixed $\FPC$-interpretation 
$\mcI$ (where we use $\Norm(\mcI)$ in case of finite fields) we obtain 
$\Cinf$-formulas $\psi_k$ with $\mcO(k)$ many variables such that $\mfA_n 
\models \psi_k$ if, and only if, $\mcP_n$ has a degree $k$ PC-refutation 
over $\field F$, and likewise for $\mfB_n$ and $\mcQ_n$.
However, since $\mcP_n$ is satisfiable it has no such refutation for any 
degree $k \geq 1$. Since $\mfA_n \equiv^{\Omega(n)} \mfB_n$, it thus 
follows that for $k \in \Omega(n)$, also $\mcQ_n$ has no such degree 
$k$-refutation. This proves our first claim.
The second claim follows from the size-degree trade-off for the 
polynomial calculus, see~\cite[Corollary 6.3]{ImPuSg99}.
\end{proof}

\subsection{Lower Bounds for the Graph Isomorphism Problem}
\label{subsec:lowbounds:gi}
We now discuss the prototype example for the lower bound technique on 
$\PC$-refutations (Theorem~\ref{thm:appl:pclow}). Specifically, we show 
that 
the graph 
isomorphism problem does not allow small PC-refutations neither over  
$\mbQ$ nor over finite fields. 
This result has already been established by Berkholz and Grohe 
in~\cite{BerkholzGro15,BerkholzGro17} by using known lower bounds for the 
polynomial calculus. Here, we present an alternative proof of (a 
generalisation of) their result using only arguments from finite model 
theory.

Given two graphs $G=(V,E)$ and $H=(W,F)$ it is easy to express the graph 
isomorphism problem for $G$ and $H$ as a polynomial equation system 
$\IsoForm(G,H)$ over any field $\mbF$ as follows.
We use variables $X[v \mapsto w]$, for $v \in V$ and $w \in W$, to 
indicate whether $v$ is mapped to $w$ by an isomorphism (that we are going 
to guess as a solution). We include the Boolean constraints $X^2 - 
X = 0$ as usual, i.e.\ $X[v \mapsto w] \in \{0 , 1\}$ for every solution.
Then we just have to express that every vertex $v \in V$ is mapped to 
precisely one $w \in W$: $\sum_w X[v \mapsto w] = 1$, and, dually, that 
every $w \in W$ has precisely one preimage $v \in V$: $\sum_v X[v \mapsto 
w] = 1$.
Finally we want that edges are preserved. We can achieve this by including 
for each $v_1, v_2 \in V$ and $ w_1, w_2 \in W$ such that $(v_1, v_2) \in 
E$ if, and only if, $(w_1,w_2) \not\in F$ the equation $X[v_1 \mapsto 
w_1]\cdot X[v_2 \mapsto w_2] = 0$.
It is this (fixed) encoding that Berkholz and Grohe considered 
in~\cite{BerkholzGro15,BerkholzGro17} in order to prove their 
lower bounds.
Interestingly, we can easily lift their result to a more general 
setting, namely we can 
allow arbitrary encodings of the graph isomorphism problem that are 
definable in $\FPC$ or even in $\Cinf$ and still obtain the same lower bounds.

\begin{thm}
\label{thm:low:gi}
Let $\field F$ be the field of rationals or a finite field. 
 Let $\mcI$ be an $\FPC$-interpretation that maps pairs of graphs $(G, H)$ 
to polynomial equation systems over $\field F$ such that $\mcI(G,H)$ is 
solvable if, and only if, $G$ and $H$ are isomorphic.
Then there exists a sequence $(G_n, H_n)$ of pairs of non-isomorphic 
graphs $G_n, H_n$ with bounded degree and of size $\mcO(n)$ such that 
$\PC$-refutations for the systems $\mcI(G_n, H_n)$ over $\field F$ require 
degree 
$\Omega(n)$.
\end{thm}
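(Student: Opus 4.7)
The plan is to reduce the theorem to the general lower bound of Theorem~\ref{thm:appl:pclow}, using the Cai--Fürer--Immerman construction to supply the required family of $\equiv^{\Omega(n)}$-indistinguishable inputs.

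First I would fix a prime $p \in \Primes$ with $p \neq \characteristic(\field F)$; any prime works when $\field F = \mbQ$. For each $n$ I would choose load vectors $\lambda_n, \sigma_n \in \field F_p^{V(G_n)}$ with distinct sums $\sum\lambda_n \neq \sum\sigma_n$, and set $\tilde\mfA_n := \CFIgraph{G_n}{p}{\lambda_n}$ and $\tilde\mfB_n := \CFIgraph{G_n}{p}{\sigma_n}$. These are non-isomorphic by the classification of CFI-structures, yet by Theorem~\ref{fpc-cfi} they satisfy $\tilde\mfA_n \equiv^{\Omega(n)} \tilde\mfB_n$, and by Theorem~\ref{thm:cfi:cyclp} they are $(\ell, p)$-cyclic for some fixed $\ell$. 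Using Lemma~\ref{lem:encodeCFI:asgraph} I would re-encode them as bounded-degree graphs $G_n := \mcJ(\tilde\mfA_n)$ and $H_n := \mcJ(\tilde\mfB_n)$ of size $\mcO(n)$; since $\mcJ$ is invertible up to isomorphism, $G_n$ and $H_n$ are non-isomorphic.

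Next I would form the two families of ordered pair structures $\mfA_n := (\tilde\mfA_n, \tilde\mfA_n)$ and $\mfB_n := (\tilde\mfA_n, \tilde\mfB_n)$, and consider the composed $\FPC$-interpretation $\mcI' := \mcI \circ (\mcJ \uplus \mcJ)$, which maps ordered pairs of CFI-structures to polynomial equation systems over $\field F$. By Theorem~\ref{thm:cyclp:orderedpair} both $\mfA_n$ and $\mfB_n$ are $(\ell, p)$-cyclic. A standard bijection-game argument (the Duplicator plays the identity on the common first coordinate and uses a winning strategy for $\tilde\mfA_n \equiv^{\Omega(n)} \tilde\mfB_n$ on the second) yields $\mfA_n \equiv^{\Omega(n)} \mfB_n$. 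The hypothesis on $\mcI$ guarantees that $\mcI'(\mfA_n) = \mcI(G_n,G_n)$ is satisfiable (any graph is isomorphic to itself) while $\mcI'(\mfB_n) = \mcI(G_n,H_n)$ is unsatisfiable.

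With all four hypotheses of Theorem~\ref{thm:appl:pclow} in place (applied to $\mcI'$ rather than to $\mcI$), that theorem immediately yields degree $\Omega(n)$ for any $\PC$-refutation of $\mcI(G_n,H_n)$ over $\field F$. The only real bookkeeping step will be ensuring that the composed interpretation fits the framework of Theorem~\ref{thm:appl:pclow}, which expects the cyclicity of its inputs to survive the interpretation; this is precisely the role of the normal-form construction $\Norm(\cdot)$ from Theorem~\ref{thm:nfinter}, which may be inserted between $\mcJ$ and $\mcI$ without disturbing either the automorphism group or the homogeneity constant by more than a fixed multiplicative factor depending only on $\mcI$ and $\mcJ$, and hence does not affect the asymptotic $\Omega(n)$ bound.
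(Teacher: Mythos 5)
Your proposal is correct and follows essentially the same route as the paper: fix a prime $p \neq \characteristic(\field F)$, take non-isomorphic CFI-structures over the expander family, re-encode them as bounded-degree graphs via $\mcJ$ (Lemma~\ref{lem:encodeCFI:asgraph}), pass to the ordered pairs $(\mfA_n,\mfA_n)$ and $(\mfA_n,\mfB_n)$, invoke Theorem~\ref{fpc-cfi}, Theorem~\ref{thm:cyclp:orderedpair} and Theorem~\ref{thm:homab:lpcyc} for indistinguishability and cyclicity, and apply Theorem~\ref{thm:appl:pclow} to the composed interpretation. The only cosmetic difference is that the paper simply picks $p \in \{2,3\}$ to settle the bounded-degree and $\mcO(n)$-size bookkeeping, and leaves the $\Norm(\cdot)$ normal form inside the proof of Theorem~\ref{thm:appl:pclow} rather than discussing it separately.
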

\begin{proof}
 Choose $p \in \{ 2, 3 \}$ such that $\characteristic(\field F) \neq p$.
 We consider the class of CFI-structures $\CFIclass{\mcF}{p}$ over $\field 
F_p$. Recall that $\mcF = \{ F_n : n \geq 1\}$ is a family of $3$-regular, 
connected expander graphs where $F_n$ has $\mcO(n)$ many vertices.
 In Lemma~\ref{lem:encodeCFI:asgraph} we observed that we can encode 
such CFI-structures as undirected graphs via $\FPC$-interpretations $\mcJ$ 
(with a corresponding inverse interpretation $\mcJ^{-1}$).
 Moreover, for a CFI-structure $\mfA=\CFIgraph{F_n}{p}{\lambda} \in 
\CFIclass{\mcF}{p}$, the graph $\mcJ(\mfA)$ encoding $\mfA$ has degree 
$\mcO(p^2)$ and contains $\mcO(p^2 \cdot n)$ vertices. Since $p \in \{ 2, 
3\}$, it follows that the graphs $\mcJ(\mfA)$ have bounded degree and 
contain $\mcO(n)$ vertices only.

For $n \geq 1$ we fix two non-isomorphic CFI-structures $\mfA_n, \mfB_n 
\in \CFIclass{\mcF}{p}$ with underlying graph $F_n$ over $\field F_p$.
We let $G_n = \mcJ(\mfA_n)$ and $H_n = \mcJ(\mfB_n)$. We claim that 
the resulting sequence $(G_n, H_n)$ satisfies the above claim.
To show this we use Theorem~\ref{thm:appl:pclow}. 
Let $\mcP_n = \mcI(G_n, G_n)$ and $\mcQ_n = \mcI(G_n, H_n)$.
Then we observe that $\mcP_n$ and $\mcQ_n$ can be interpreted in the 
structures $(\mfA_n, \mfA_n)$ and $(\mfA_n, \mfB_n)$ via $(\mcI \circ 
\mcJ)$ (where, formally, we need to slightly modify $\mcJ$ to encode 
ordered pairs of CFI-structures as ordered pairs of graphs).
By the CFI-Theorem~\ref{fpc-cfi} we know that $(\mfA_n, \mfA_n) 
\equiv^{\Omega(n)} (\mfA_n, \mfB_n)$.
By Theorem~\ref{thm:cyclp:orderedpair} and Theorem~\ref{thm:homab:lpcyc}, 
we know that the structures $(\mfA_n, \mfA_n)$ and $(\mfA_n, \mfB_n)$ are 
$\cyclp$ for some fixed $\ell \geq 1$.
Moreover, by our assumption on $\mcI$, the systems $\mcP_n$ are 
satisfiable and the systems $\mcQ_n$ are not satisfiable.
Thus, all preconditions of Theorem~\ref{thm:appl:pclow} are met, and the 
lower bound follows.
\end{proof}

Although Theorem~\ref{thm:low:gi} gives us the desired linear lower bound 
on the degree of PC-refutations for the graph isomorphism problem, we can 
not readily infer the exponential size lower bound from 
Theorem~\ref{thm:appl:pclow}. 
The reason is that the polynomial equation systems which encode the graph 
isomorphism problem might contain more than a linear number of 
variables. In fact, the number of variables in the system 
$\IsoForm(G,H)$ that we defined above contains a quadratic number of 
variables. This means that the size-degree trade-off results for the $\PC$ 
cannot be applied.

However we can fix this as follows. In our proof we used CFI-graphs 
and these are graphs of \emph{bounded colour class size}.
Formally, a \emph{graph with colour class size $k\geq 1$} is a structure 
$G=(V,E, \preceq)$ where $(V,E)$ is a graph and where $\preceq$ is a 
linear preorder on $V$ such that every class of $\preceq$-incomparable 
vertices, that is every \emph{colour class}, is of size at most $k$.
In other words, one can think of the vertices of the graph $G$ to be 
coloured while we only allow that at most $k$ vertices get the same 
colour.
We write $V = V_0 \preceq \cdots \preceq V_{n-1}$ to denote that $V$ is 
linearly ordered by $\preceq$ into $n$ colour classes $V_i$ in the 
indicated way. We have that $|V_i| \leq k$ for every $i < n$.

For CFI-graphs (that is graphs $\mcJ(\mfA)$ for $\mfA \in 
\CFIclass{\mcF}{p}$ and where $\mcJ$ is the graph encoding of 
CFI-structures from Lemma~\ref{lem:encodeCFI:asgraph}) the colour classes 
are basically given as the edge classes of the underlying graph plus the 
additional classes of inner nodes which encode the CFI-constraints, see 
our discussion preceding Lemma~\ref{lem:encodeCFI:asgraph}. The size of 
these classes is at most $\mcO(p^2)$. Since in our proof we can restrict 
to CFI-graphs over the field $\field F_p$ with $p \in \{ 2, 3 \}$, these 
edge classes are indeed of constant size.
Hence, it follows from our proof above that we can require the family 
of graphs $(G_n, H_n)$ in Theorem~\ref{thm:low:gi} to consist of graphs 
of bounded colour class size.

Now, restricted to graphs of bounded colour class size, our encoding 
$\IsoForm(G,H)$ for the graph isomorphism problem that we introduced above 
can naturally be simplified resulting in a polynomial equation system 
that uses linearly many variables only. To see this, we consider pairs of 
graphs $G=(V,E,\preceq_V)$ and $H=(W,F,\preceq_W)$ of 
colour class size $k\geq 1$ with the same number of colour classes, that is
\begin{align*}
 V &= V_0 \preceq_V  V_1 \preceq_V \cdots \preceq_V V_{n-1} \\
 W &= W_0 \preceq_W W_1 \preceq_W \cdots \preceq_W  W_{n-1}.
\end{align*}
Then each isomorphism is restricted to map vertices in the $i$-th 
colour class $V_i$ in $G$ to the $i$-th colour class $W_i$ in $H$. That 
means that in our system $\IsoForm(G,H)$ we only need to 
include variables $X[v \mapsto w]$ for all $v \in V_i, w \in W_i$, $i < 
n$.
Since the colour classes are of constant size, this means that the 
resulting system only contains a linear number of variables.
Hence, we obtain the following strengthening of
Theorem~\ref{thm:appl:pclow} for this setting.

\begin{thm}
\label{thm:low:gi:bcc}
Let $\field F$ be the field of rationals or a finite field. 
 Let $\mcI$ be an $\FPC$-interpretation that maps pairs of  graphs $(G, 
H)$ of \emph{bounded colour class size} to polynomial equation systems 
over $\field F$ such that $\mcI(G,H)$ is 
solvable if, and only if, $G$ and $H$ are isomorphic, and, moreover 
$\mcI(G,H)$ contains a linear number of variables only (linear with 
respect to the number of vertices of $G$ and $H$).
Then there exists a sequence $(G_n, H_n)$ of pairs of non-isomorphic 
graphs $G_n, H_n$ with bounded degree, of size $\mcO(n)$, and of bounded 
colour class size such that 
$\PC$-refutations for the systems $\mcI(G_n, H_n)$ over $\field F$ require 
degree 
$\Omega(n)$ and size $2^{\Omega(n)}$.
\end{thm}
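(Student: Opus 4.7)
The plan is to reuse verbatim the construction from the proof of Theorem~\ref{thm:low:gi}, while checking the two extra conditions that are needed to unlock the size lower bound of Theorem~\ref{thm:appl:pclow}: bounded colour class size of the witnessing graphs and a linear bound on the number of variables in the polynomial systems.

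Concretely, I would fix a prime $p\in\{2,3\}$ with $\characteristic(\field F)\neq p$ and, for each $n$, pick two non-isomorphic $\CFI$-structures $\mfA_n,\mfB_n\in\CFIclass{\mcF}{p}$ over the $3$-regular expander $F_n\in\mcF$. Setting $G_n:=\mcJ(\mfA_n)$ and $H_n:=\mcJ(\mfB_n)$ via the graph encoding of Lemma~\ref{lem:encodeCFI:asgraph}, the resulting graphs have bounded degree and $\mcO(n)$ many vertices, and are non-isomorphic. The extra point to record is that the construction $\mcJ$ yields graphs of bounded colour class size: the natural colour classes are the edge classes $e^p$ (of size $p\leq 3$), the classes of inner CFI-constraint nodes (of size bounded by a function of $p$), and the positions along the preorder path, so colour class size is bounded by an absolute constant, as already observed in the discussion preceding this theorem. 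Formally, one equips $G_n,H_n$ with the linear preorder induced by $\preceq$ on $\mfA_n$ and $\mfB_n$ (and by the natural first-order definable ordering on the extra sorts introduced by $\mcJ$) to exhibit them as graphs of bounded colour class size.

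Next, let $\mcP_n:=\mcI(G_n,G_n)$ and $\mcQ_n:=\mcI(G_n,H_n)$. By hypothesis, $\mcP_n$ is satisfiable and $\mcQ_n$ is not, and both systems are interpreted in the ordered pairs $(\mfA_n,\mfA_n)$ and $(\mfA_n,\mfB_n)$ via the composition of $\mcI$ with a bounded-colour-class version of the graph encoding $\mcJ$. Theorem~\ref{thm:cyclp:orderedpair} combined with Theorem~\ref{thm:homab:lpcyc} guarantees that these pairs are $\cyclp$ for a fixed $\ell\geq 1$ independent of $n$, and the CFI-Theorem~\ref{fpc-cfi} gives $(\mfA_n,\mfA_n)\equiv^{\Omega(n)}(\mfA_n,\mfB_n)$. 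Hence the first part of Theorem~\ref{thm:appl:pclow} yields the $\Omega(n)$ degree lower bound, exactly as in the proof of Theorem~\ref{thm:low:gi}.

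The new ingredient is the size lower bound, and this is where the linearity assumption on $\mcI$ enters. By assumption, $\mcI(G_n,H_n)$ contains $\mcO(n)$ variables, and the graphs $G_n,H_n$ themselves have $\mcO(n)$ vertices. Hence the preconditions of the second part of Theorem~\ref{thm:appl:pclow}, which invokes the size-degree trade-off for the polynomial calculus of Impagliazzo, Pudlák and Sgall, are met, and we immediately conclude that refuting $\mcQ_n$ in $\PC$ over $\field F$ requires size $2^{\Omega(n)}$. The only non-routine piece of work is the verification that the natural colour-class structure inherited from the CFI-construction makes $G_n,H_n$ genuinely of bounded colour class size after passage through $\mcJ$; everything else is immediate from earlier results in the paper.
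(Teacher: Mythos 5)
Your proposal is correct and follows essentially the same route as the paper: reuse the CFI-based construction from Theorem~\ref{thm:low:gi}, observe (as in the discussion preceding the theorem) that the CFI-graphs $\mcJ(\mfA_n),\mcJ(\mfB_n)$ have bounded colour class size, and then use the linear bound on the number of variables in $\mcI(G_n,H_n)$ to invoke the second part of Theorem~\ref{thm:appl:pclow} (the size-degree trade-off) for the $2^{\Omega(n)}$ size bound. Nothing essential is missing.
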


\subsection{Monomial-PC versus (Full-)PC over the Field of Rationals}
\label{sec:gi}
As mentioned above, in~\cite{BerkholzGro15} Grohe and Berkholz studied 
the power 
of the polynomial 
calculus with respect to the graph isomorphism problem. 
One of their main results is that the monomial-PC over $\mbQ$ has 
precisely the same expressive power as the well-known Weisfeiler-Leman 
graph isomorphism test which, in turn, has the same expressive power as 
counting logic (with respect to isomorphism testing).
However, they left open the question of whether the full polynomial 
calculus is more expressive than its restricted variant the monomial-PC 
over $\mbQ$ with respect to the graph isomorphism problem.

\begin{thmC}[\cite{BerkholzGro15}]
For all $k \geq 2$ and graphs $G,H$ we have that
\[ G \not\equiv^k H \text{ if, and only if, } G \not\equiv^{\monPCx k} H\] 
\end{thmC}

In the above theorem, $G \not\equiv^{\monPCx k} H$ means that the 
monomial-PC (over $\mbQ$) can refute the system $\IsoForm(G,H)$ using 
degree at most $k$.
Obviously, this also implies that if $G \not\equiv^k H$, then $G 
\not\equiv^{\PCx k} H$, that is $\IsoForm(G,H)$ can be refuted in the 
full-PC with degree at most $k$. However, it remained open whether the 
converse holds as well (in particular, it remained open if the converse 
holds if we allow to increase the dimension for the Weisfeiler-Leman 
algorithm by a constant factor).

\begin{quC}[\cite{BerkholzGro15}]
\label{ques:monPCvsPC}
 Is there a function $f\colon \mbN \to \mbN$ such that for all $k \geq 2$ 
we have
\[ G \not\equiv^{\PCx k} H \,\, \Longrightarrow \,\, G \not\equiv^{f(k)} 
H?\]
\end{quC}

It immediately follows from Theorem~\ref{thm:PCinCinf} that the answer is 
affirmative and that we can choose $f$ to be linear.
\begin{thm}
\label{thm:monPCvsPC:GI}
 There is a linear function $f\colon \mbN \to \mbN$ such that for all $k 
\geq 2$ we have
\[ G \not\equiv^{\PCx k} H \,\, \Longrightarrow \,\, G \not\equiv^{f(k)} 
H. \]
\end{thm}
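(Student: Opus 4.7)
The plan is to derive Theorem~\ref{thm:monPCvsPC:GI} as a direct corollary of Theorem~\ref{thm:PCinFPC} combined with the standard fact that $\FPC$-formulas with $m$ variables translate into $\Cinfx{m}$-formulas. I would proceed as follows.

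First, fix $k \geq 2$. The encoding map $(G,H) \mapsto \IsoForm(G,H)$ is a (parameter-free) first-order interpretation $\mcI$ from pairs of graphs to polynomial equation systems over $\mbQ$, of bounded dimension and using only a constant number of variables (the variables $X[v\mapsto w]$ are indexed by pairs, and the axioms are quantifier-free conditions on $(v_1,w_1,v_2,w_2)$). By Theorem~\ref{thm:PCinFPC} there exists an $\FPC$-formula $\phi_k$ with $\mcO(k)$ variables which, applied to a structural encoding of a polynomial system $\mcP$, decides whether $1 \in \PCx k(\mcP)$. Composing $\phi_k$ with $\mcI$ yields an $\FPC$-formula $\psi_k$ on pairs of graphs, again using $\mcO(k)$ variables, such that $(G,H) \models \psi_k$ iff $\IsoForm(G,H)$ admits a $\PCx k$-refutation over $\mbQ$, i.e.\ iff $G \not\equiv^{\PCx k} H$.

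Now I invoke the standard translation of $\FPC$-formulas with $m$ variables into equivalent $\Cinfx{m}$-formulas (as recalled in Section~\ref{sec:prel}). Hence $\psi_k$ is equivalent to some $\Cinfx{c\cdot k}$-formula on pairs of graphs, for an absolute constant $c$ coming from Theorem~\ref{thm:PCinFPC} and from the fixed interpretation $\mcI$. Since the encoding of ordered pairs $(G,H)$ as a single structure preserves $\equiv^m$ up to a constant factor, this shows that whether $\IsoForm(G,H)$ is refutable in $\PCx k$ is determined by the $\Cinfx{f(k)}$-type of the pair $(G,H)$ for a suitable linear function $f(k) = \mcO(k)$.

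To conclude, observe that $\IsoForm(G,G)$ is always satisfiable (take the identity), so $G \equiv^{\PCx k} G$ for every $k$. Assume now that $G \equiv^{f(k)} H$, i.e.\ that $G$ and $H$ agree on all $\Cinfx{f(k)}$-sentences; then the pairs $(G,G)$ and $(G,H)$ agree on $\psi_k$, so $\IsoForm(G,H)$ has no $\PCx k$-refutation, i.e.\ $G \equiv^{\PCx k} H$. Contrapositively, $G \not\equiv^{\PCx k} H$ implies $G \not\equiv^{f(k)} H$, which is exactly the claim. The only nontrivial ingredient is Theorem~\ref{thm:PCinFPC}; the rest is bookkeeping to check that the composed interpretation only inflates the variable count by a constant factor, so $f$ can indeed be taken linear.
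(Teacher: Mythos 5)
Your proposal is correct and follows essentially the same route as the paper's proof: apply Theorem~\ref{thm:PCinFPC} to get an $\FPC$-formula $\phi_k$ with $\mcO(k)$ variables, compose with the fixed $\FO$-interpretation producing $\IsoForm(G,H)$ (the paper instead pushes the $\equiv^{rck}$-equivalence of $(G,G)$ and $(G,H)$ through the interpretation, which is the same bookkeeping), and conclude via the satisfiability of $\IsoForm(G,G)$ by taking the contrapositive. The only presentational difference is that you invoke the $\FPC$-to-$\Cinfx{m}$ translation explicitly, while the paper phrases this as closure of $\FPC$ under $\FO$-interpretations; both yield the same linear $f$.
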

\begin{proof}
 Let $\mcI$ be an $\FO$-interpretation which interprets the 
$\IsoForm(G,H)$-formulas as polynomial systems over $\mbQ$ in pairs of 
graphs $(G,H)$. 
Let $r \geq 1$ be the number of variables in $\mcI$ and let $c\geq 1$ be a 
constant such that the number of variables in the $\Cinf$-formulas $\phi_k$,
that express the existence of $k$-dimensional PC-proofs according to 
Theorem~\ref{thm:PCinCinf}, is bounded by $c \cdot k$. 

We claim that $G \equiv^{r \cdot c \cdot k} H \,\, \Longrightarrow 
\,\, G \equiv^{\PCx k} H$. 
So let us assume that $G \equiv^{r \cdot c \cdot k} H$.
First of all it holds that $G \equiv^{r \cdot c \cdot k} H$ if, and only 
if, $(G,G) \equiv^{r\cdot c \cdot k} (G,H)$.
By the closure of $\Cinf$ under $\FO$-interpretations, it then follows that
$\IsoForm(G,G) \equiv^{c\cdot k} \IsoForm(G,H)$. Since $\IsoForm(G,G)$ is 
clearly satisfiable and since $\phi_k$ cannot distinguish between 
$\IsoForm(G,G)$ and $\IsoForm(G,H)$ it follows that there does not exist a
degree-$k$ $\PC$-refutation of $\IsoForm(G,H)$. Hence $G \equiv^{\PCx 
k} 
H$ as claimed.
\end{proof}

This shows that $\PC_k$ over $\bbQ$ as a graph distinguishing procedure is not substantially stronger than the $k$-dimensional Weisfeiler Leman test, and therefore, with respect to the graph isomorphism problem, $\monPC_{\Oo(k)}$ and $\PC_{\Oo(k)}$ are equally expressive. Generally speaking, though, $\PC_k$ and $\monPC_k$ differ in so far as  $\monPC_k$ proofs over $\bbQ$ can be found in polynomial time with the Gröbner basis algorithm (larger coefficients than in the input are never required), whereas for $\PC_k$, this is not the case. In light of Hakoniemi's exponential bit-complexity lower bound for $\PC_2$ \cite{Hakoniemi21}, it is in fact plausible that $\PC_k$ (over $\bbQ$) is simply not a polynomial-time proof system, and therefore in the general case strictly stronger than $\monPC_k$.

\subsection{Constraint Satisfaction Problems}
In this section we derive a dichotomy result for constraint 
satisfaction problems (CSPs) with respect to refutations in the polynomial 
calculus and the (weaker) resolution proof system.
Intuitively, what we are going to show is that each CSP either allows 
simple proofs of inconsistency, namely such proofs that can be derived in 
bounded-width resolution, or it requires proofs of very high complexity, 
that is of linear degree and exponential size, even in the much stronger 
polynomial calculus proof system. 

Let us recall the definition of CSPs.
We present the formulation as a homomorphism problem.
Let $\mfT$ be a fixed relational $\tau$-structure (the \emph{template}).
Then the \emph{constraint satisfaction problem} associated with $\mfT$ is 
the class $\Hom(\mfT)$ consisting of all $\tau$-structures $\mfA$ for 
which there exists an homomorphism $h\colon \mfA \to \mfT$.
Many combinatorial problems can be posed as CSPs. On the other hand, the 
class of all CSPs is limited in a certain sense: a famous conjecture 
by Feder and Vardi~\cite{FederV98}, which was recently confirmed 
independently by 
Bulatov~\cite{Bulatov17} and Zhuk~\cite{Zhuk17},
says that for each template $\mfT$ the problem 
$\Hom(\mfT)$ is either decidable in polynomial time (\PTIME) or complete 
for non-deterministic polynomial time (\NP-complete).
We will make use of a similar definability dichotomy for $\FPC$ soon. 

But before we do this, let us describe a simple algorithm to 
(approximately) solve constraint satisfaction problems.
This algorithm is known as the \emph{$k$-consistency test}
and it can be phrased as follows.
Fix a template $\mfT$ and consider an input structure $\mfA$.
Let us denote by $\Part^k(\mfA, \mfT)$ the set of all partial 
homomorphisms $p$ from $\mfA$ to $\mfT$ whose domain $\dom(p)$ is of size 
at most $k$ (we include the empty homomorphism $\emptyset$).
The idea is to iteratively compute restrictions $T_i \subseteq 
\Part^k(\mfA, \mfT)$ of $\Part^k(\mfA, \mfT)$ with respect to the 
following closure properties. We set $T_0 = \Part^k(\mfA, \mfT)$.
For $i \geq 1$ we set
\begin{align*}
 T_i &= \{ p \in T_{i-1} : \text{for all} \dom(p) \subseteq S 
\subseteq A \text{, } |S| \leq k \text{ th. ex. } q \in T_{i-1} \text{ 
s.th. } p \subseteq q, \dom(q) = S, \\
&\qquad\text{ and for all } q \subseteq p \text{ we have } q \in T_{i-1}\} 
\end{align*}
We output the final set $T_\infty$. In other words we iteratively 
eliminate all partial homomorphisms which cannot be extended to partial 
homomorphisms of size at most $k$ with respect to all possible 
(consistent) domains, and such partial homomorphisms for which we 
eliminated a restriction in the iteration before.
The first observation is that if there exists a homomorphism $h\colon \mfA 
\to \mfT$, then $T_\infty \neq \emptyset$, because the set of all 
restrictions of $h$ to partial homomorphisms in $\Part^k(\mfA, \mfT)$ will 
be contained in each $T_i$.
Hence, if $T_\infty = \emptyset$, then we can correctly conclude that 
$\mfA 
\not\in \Hom(\mfT)$.
If, on the other hand, $T_\infty \neq \emptyset$, then in the general case 
we must output ``we don't know''.
However, in many cases, depending on the template $\mfT$, this naive 
algorithm will work correctly on all inputs, which means that we have 
an efficient and simple way to decide the problem $\Hom(\mfT)$.

Before we come to this, let us observe that it is very easy to express the 
$k$-consistency test using bounded-width resolution.
For every $p \in \Part^k(\mfA, \mfT)$ consider a Boolean variable $X_p$ 
with the intended meaning that $X_p$ is true if $p \in T_\infty$.
According to the $k$-consistency test, we consider the following set of 
clauses:
\begin{itemize}
 \item For every $p \in \Part^k(\mfA, \mfT)$, every $\dom(p) \subseteq S 
\subseteq A$ with $|S| \leq k$:
\[ X_p \rightarrow \bigvee_{\begin{array}{c} \footnotesize
                             p \subseteq q \in \Part^k(\mfA, 
\mfT),  \\\dom(q) = S                            \end{array}
} X_q.\]
 \item For every $p \in \Part^k(\mfA, \mfT)$ and every $q \subseteq p$:
 \[ X_p \rightarrow X_q.\]
\end{itemize}
We obtain a (dual-)Horn-formula which always has the trivial model where 
we set every variable to false. Moreover, every non-trivial model is a 
witness that $T_\infty \neq \emptyset$. Since in this case $\emptyset \in 
T_\infty$, this means that if we add the single clause $X_{\emptyset}$ to 
the above formula, then we obtain a (dual-)Horn-formula which is not 
satisfiable if, and only if, $T_\infty = \emptyset$.
Moreover, note that since $\mfT$ is a fixed template, the above 
formula is of constant width. We conclude that $T_\infty = \emptyset$ if, 
and only if, bounded-width resolution can refute the above formula.
It is obvious that this formula is also interpretable in $\mfA$ using a 
first-order interpretation.

\begin{thm} For every $k \geq 1$, the $k$-consistency test can be 
expressed in  $\FO(\Res_{\mcO(k)})$.
\end{thm}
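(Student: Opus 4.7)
The plan is to formalise the construction sketched in the paragraphs immediately preceding the theorem. Fix a template $\mfT$, and for an input $\tau$-structure $\mfA$ let $\psi_{\mfA,k}$ denote the propositional formula on variables $\{ X_p : p \in \Part^k(\mfA,\mfT)\}$ consisting of (a) the extension clauses $X_p \to \bigvee_{p\subseteq q,\, \dom(q)=S} X_q$ for every $p$ and every $\dom(p)\subseteq S\subseteq A$ with $|S|\le k$, (b) the restriction clauses $X_p \to X_q$ for every $q \subseteq p$, and (c) the unit clause $X_\emptyset$. Three things have to be checked: that $\psi_{\mfA,k}$ has resolution refutations of width $\mcO(k)$ precisely when the $k$-consistency algorithm rejects $\mfA$; that $\psi_{\mfA,k}$ has clauses of width $\mcO(k)$; and that the mapping $\mfA \mapsto \psi_{\mfA,k}$ is $\FO$-definable. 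Putting these together and applying a single $\Qq_{\Res_{\mcO(k)}}$-quantifier to the resulting interpretation yields an $\FO(\Res_{\mcO(k)})$-sentence expressing ``$T_\infty = \emptyset$''.

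For the correctness, I would argue semantically. If $T_\infty \neq \emptyset$, then setting $X_p := 1$ for $p \in T_\infty$ and $X_p := 0$ otherwise satisfies all clauses of type (a) and (b), since $T_\infty$ is by construction closed under restriction and under the extension property from every admissible larger domain; and since $\emptyset \in T_\infty$ whenever $T_\infty \neq \emptyset$, the unit clause is satisfied as well. Conversely, a satisfying assignment $\alpha$ for $\psi_{\mfA,k}$ yields the set $T_\alpha = \{p : \alpha(X_p)=1\}$, which is easily seen by induction on the iteration of the $k$-consistency procedure to be contained in every $T_i$ and hence in $T_\infty$, so $T_\infty$ is nonempty. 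Thus $\psi_{\mfA,k}$ is unsatisfiable iff the $k$-consistency test rejects $\mfA$. Since $\psi_{\mfA,k}$ is a (dual-)Horn formula, its unsatisfiability in turn corresponds to a resolution refutation whose width is bounded by the maximal clause width of $\psi_{\mfA,k}$, which is at most $|T|^k + 1 = \mcO(k)$ given the fixed template $\mfT$.

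For the $\FO$-interpretation, I would take dimension $2k$ and represent a partial homomorphism $p$ by a tuple $(a_1,t_1,\ldots,a_k,t_k) \in (A\cup T)^{2k}$, with repetitions used to encode partial functions of smaller domain; the domain, congruence, and relation formulas are routinely first-order over $\mfA$ (together with the fixed finite template $\mfT$, which is $\FO$-definable by constants), since being a partial homomorphism is a conjunction of finitely many constraints involving only the relations of $\mfA$ and $\mfT$. The predicates singling out the clauses of types (a)--(c) are likewise first-order, using universal quantifiers over the constantly many elements of $T$ to enumerate extensions. The main technical obstacle is nothing deep, but rather the bookkeeping of the interpretation: making precise that an $\FO$-interpretation can output the variable/clause structure of a CNF formula as required by the encoding fixed in Section~\ref{sec:prel}. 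Once $I_k$ is specified, the sentence $\Qq_{\Res_{\mcO(k)}}\, I_k$ witnesses the theorem.
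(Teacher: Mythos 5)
Your proposal is correct and takes essentially the same route as the paper: the identical dual-Horn encoding with variables $X_p$ for $p \in \Part^k(\mfA,\mfT)$, the extension/restriction clauses plus the unit clause $X_\emptyset$, the observation that unsatisfiability coincides with $T_\infty = \emptyset$ and is detected by bounded-width (unit) resolution, and an $\FO$-interpretation producing the formula — you merely make explicit the two correctness directions and the width argument that the paper leaves implicit. The only blemish is writing $|T|^k + 1 = \mcO(k)$, which is not literally true as a function of $k$ (the clause width is constant for fixed $k$ and $\mfT$ but exponential in $k$); this conflation is, however, already present in the paper's own statement and proof sketch, so it is not a gap of your argument relative to the paper's.
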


Now, what happens if we have a template $\mfT$ for which the 
$k$-consistency test is incomplete for any fixed value of $k \geq 1$?
In this case the (descriptive) complexity of the problem $\Hom(\mfT)$ 
is much higher. In fact, it follows from~\cite{AtseriasBulDaw09} 
and~\cite{BaKo14} that in this case, the problem cannot be defined in 
$\FPC$. This  ``definability dichotomy''  was first explicitly noted, and 
refined, by Dawar and Wang in~\cite{DaWa15} and in~\cite{DawarW17}:
\begin{thmC}[\cite{BaKo14,AtseriasBulDaw09,DaWa15,DawarW17}]
 For every template $\mfT$ one of the following is true.
 \begin{enumerate}
  \item Either there is a $k \geq 1$ such that the $k$-consistency 
test correctly decides $\Hom(\mfT)$, or
  \item there exists a (non-trivial) finite Abelian group $G$ such that 
the problem of deciding the solvability of linear equation systems over 
$G$ with at most three variables per equation, $\text{3LIN}(G)$, reduces 
to $\Hom(\mfT)$ via an $\FPC$-interpretation of linear size (that is the 
interpretation only increases the sizes of structures by a constant 
factor).
 \end{enumerate}
\end{thmC}

Moreover, it is known that the problem of deciding whether two 
CFI-structures $\mfA, \mfB \in \CFIclass{\mcF}{p}$ over the same 
underlying graphs are isomorphic reduces to $\text{3LIN}(\field F_p)$ via 
an $\FPC$-reduction of linear size. Hence, by applying 
Theorem~\ref{thm:appl:pclow} (and using the same arguments as in the 
Section~\ref{subsec:lowbounds:gi}) we get the following dichotomy for 
the proof systems resolution and polynomial calculus.
\begin{thm} For every template $\mfT$ one of the following holds.
\begin{enumerate}
 \item Either $\Hom(\mfT)$ can be decided using bounded-width resolution, 
or
 \item there exists a finite set of primes $P \subseteq 
\Primes$ such that for every linear-size $\FPC$-definable encoding of 
$\Hom(\mfT)$ as a system of polynomial equations $\mcP(\mfT)$ over a field 
$\field F$, which is either $\mbQ$ or a finite field with 
$\characteristic(\field F) \not\in P$, refutations of $\mcP(\mfT)$ in the 
polynomial calculus over $\field F$ require degree $\Omega(n)$ and size 
$2^{\Omega(n)}$ (where $n$ refers to the size of the input structures 
$\mfA$).
\end{enumerate}
\end{thm}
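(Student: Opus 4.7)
The plan is to combine the Dawar--Wang definability dichotomy with the $\FPC$-to-proof-system lower bound machinery of Theorem~\ref{thm:appl:pclow} and the CFI construction. The first alternative is essentially immediate, so all the work goes into case~(2).

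First I would dispose of case~(1): if the $k$-consistency test decides $\Hom(\mfT)$ for some fixed $k\geq 1$, then the reduction to (dual-)Horn clauses of width $\mcO(k)$ described just before the dichotomy theorem, together with the fact that bounded-width resolution is closed under $\FO$-interpretations, immediately yields a decision procedure via refutations in $\Res_{\mcO(k)}$.

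Next I would handle the hard alternative. Apply the Dawar--Wang dichotomy to obtain a non-trivial finite Abelian group $G$ and a linear-size $\FPC$-interpretation $\mcJ_1$ reducing $\text{3LIN}(G)$ to $\Hom(\mfT)$. Set $P$ to be the (finite) set of prime divisors of $|G|$. Now fix a prime $p\in P$; since $p\mid |G|$, the group $G$ contains a copy of $\mbZ/p\mbZ$, so $\text{3LIN}(\field F_p)$ reduces, via a trivial linear-size $\FO$-interpretation, to $\text{3LIN}(G)$. Using the known linear-size $\FPC$-reduction from the CFI isomorphism problem over $\field F_p$ to $\text{3LIN}(\field F_p)$, I compose with $\mcJ_1$ and with any given linear-size $\FPC$-encoding $\mcI$ of $\Hom(\mfT)$ as a polynomial system $\mcP(\mfT)$ over $\field F$ (with $\characteristic(\field F)\notin P$, so in particular $\neq p$). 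This composition is again a linear-size $\FPC$-interpretation.

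Feeding in the expander CFI pairs $\mfA_n,\mfB_n\in\CFIclass{\mcF}{p}$ (same underlying graph $F_n$, different load sums), I would now invoke exactly the template of the proof of Theorem~\ref{thm:low:gi:bcc}. Let $\mcP_n$ and $\mcQ_n$ be the polynomial systems obtained from the ordered pairs $(\mfA_n,\mfA_n)$ and $(\mfA_n,\mfB_n)$. Then $\mcP_n$ is satisfiable and $\mcQ_n$ is not; by Theorems~\ref{thm:cyclp:orderedpair} and~\ref{thm:homab:lpcyc} the source pairs are $\cyclp$ for a fixed $\ell$; and by the CFI Theorem~\ref{fpc-cfi} we have $(\mfA_n,\mfA_n)\equiv^{\Omega(n)}(\mfA_n,\mfB_n)$. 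All hypotheses of Theorem~\ref{thm:appl:pclow} are therefore met, yielding the degree lower bound $\Omega(n)$. For the size bound one uses the linear-size assumption on $\mcI$ (and on all intermediate reductions) to conclude that $\mcQ_n$ has only $\mcO(n)$ variables, so the size-degree trade-off of Impagliazzo--Pudl\'ak--Sgall upgrades the degree bound to $2^{\Omega(n)}$.

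The main obstacle, and the only place real care is needed, is bookkeeping: one must verify that each link in the chain CFI-iso $\to$ $\text{3LIN}(\field F_p)$ $\to$ $\text{3LIN}(G)$ $\to$ $\Hom(\mfT)$ $\to$ $\mcP(\mfT)$ is genuinely a \emph{linear-size} $\FPC$-interpretation, so that the final systems have only linearly many variables (otherwise the size-degree trade-off yields nothing) and the $\equiv^{\Omega(n)}$-indistinguishability transports all the way through. The choice of $P$ as the set of prime divisors of $|G|$ is forced here: it is exactly what guarantees the existence of a usable prime $p\in P$ with $p\neq\characteristic(\field F)$, which is precisely the hypothesis needed to apply Theorem~\ref{thm:main:PCfinitefields} through Theorem~\ref{thm:appl:pclow}.
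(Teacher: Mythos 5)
Your overall route is the paper's route: case (1) via the observation that the $k$-consistency test is expressible in $\FO(\Res_{\mcO(k)})$, and case (2) by composing the Dawar--Wang reduction with the known linear-size reduction from CFI-isomorphism to $\text{3LIN}(\field F_p)$, then running the argument of Theorem~\ref{thm:appl:pclow} (cyclicity of ordered CFI-pairs, Theorem~\ref{fpc-cfi}, and the size--degree trade-off, with the linear-size bookkeeping you rightly emphasise). However, the one place where you commit to a concrete construction beyond what the paper states is wrong as stated: the claim that $\text{3LIN}(\field F_p)$ reduces to $\text{3LIN}(G)$ ``via a trivial linear-size $\FO$-interpretation'' because $G$ contains a copy of $\mbZ/p\mbZ$ by Cauchy's theorem. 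Pushing a $\mbZ_p$-system into $G$ through such a subgroup embedding does not preserve \emph{un}solvability, because solutions over $G$ need not take values in the embedded copy. Concretely, take $G=\mbZ_4$ and the copy $\{0,2\}\cong\mbZ_2$: the system $x+y=1$, $y+z=0$, $x+z=0$ is unsatisfiable over $\mbZ_2$, but its image $x+y=2$, $y+z=0$, $x+z=0$ is satisfied over $\mbZ_4$ by $x=y=1$, $z=3$. Since your argument needs unsatisfiable CFI-instances to stay unsatisfiable all the way down the chain, this link breaks exactly the direction you cannot afford to lose.

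The fact you need is nevertheless true, but it requires a slightly different reduction: the naive embedding is sound only when the copy of $\mbZ_p$ sits as (part of) a direct summand. One correct route is to decompose $G\cong\mbZ_{p^k}\times H$, reduce $\text{3LIN}(\mbZ_p)$ to $\text{3LIN}(\mbZ_{p^k})$ by encoding a $\mbZ_p$-value as a residue modulo $p$ of a $\mbZ_{p^k}$-variable, i.e.\ lifting each equation along the quotient $\mbZ_{p^k}\to\mbZ_p$ and absorbing the kernel $p\mbZ_{p^k}$ with auxiliary variables, and then embed into $G$ through the direct summand (padding the $H$-coordinate with $0$), keeping everything linear-size and within three variables per equation after splitting. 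Alternatively, one can avoid the detour through $\text{3LIN}(\field F_p)$ entirely and produce the $\equiv^{\Omega(n)}$-indistinguishable satisfiable/unsatisfiable pairs directly as instances of $\text{3LIN}(G)$, as in the literature on CFI-style constructions over arbitrary finite Abelian groups; the automorphism groups involved still have order divisible only by primes in $P$, which is all that the $\cyclp$ machinery (Theorems~\ref{thm:cyclp:orderedpair}, \ref{thm:homab:lpcyc}, \ref{thm:main:PCfinitefields}) requires. In fairness, the paper's own proof is a sketch that leaves this bridge between $\text{3LIN}(\field F_p)$ and $\text{3LIN}(G)$ implicit; but since you made it explicit, you must make it correctly, and the Cauchy-subgroup shortcut does not work.
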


For the case of $\mbQ$, this dichotomy result has been 
established in~\cite{AtOc17} via a different proof strategy. Let us remark 
that, as a result of our approach, we can 
formulate our dichotomy result with respect to every $\FPC$-definable 
encoding (of linear size if we want to maintain exponential size lower 
bounds). Also, to the best of our knowledge, this dichotomy was not known  
for the case of the polynomial calculus over finite fields.

\section{Discussion: The Power of the Polynomial Calculus and Beyond}
\label{sec:discussion}
The resolution proof system and the polynomial calculus are two 
important and well-studied propositional proof systems.
In this article we characterised their power from the viewpoint of 
finite model theory. We proved that bounded-width resolution ($\kRes$, 
$k\geq 3$) is complete for existential fixed-point logic 
($\EFP$), that Horn-Resolution ($\HRes$) is complete for least 
fixed-point logic ($\LFP$), and that the bounded-degree monomial-PC 
($\monPCx{k}$) and the degree-$k$ polynomial calculus over $\bbQ$ with bit complexity $n^b$ ($\PC_{k,b}$ for $k \geq 2, b \geq 1$) over $\mbQ$ are complete for fixed-point 
logic with counting ($\FPC$) under (numerical) first-order reductions. Moreover, we showed that the degree-$k$ PC over $\bbQ$ without any restriction on the coefficients ($\PC_k$) can be expressed in $\Cinf$ with $\Oo(k)$ many variables. It remains open if $\PC_k$ can also be simulated in the weaker logic $\FPC$, or more generally, in $\ptime$. However, our result that $\FPC \equiv \FOnum(\PC_{k,b})$, and the fact that the proof system $\PC_{k,b}$ is strictly weaker than $\PC_k$, suggests that $\PC_k$ is really more powerful than $\FPC$.\\
Interestingly, our Theorem \ref{thm:monPCvsPC:GI} implies that for deciding the graph isomorphism problem in the polynomial calculus, using large coefficients in the refutations does not lead to additional power compared to the $k$-dimensional Weisfeiler Leman isomorphism test, which can be implemented in $\FPC$. This raises the question what precisely are the problems for which large coefficients in refutations actually take the power of the proof system beyond that of $\FPC$ and $\PC_{k,b}$.\\

Our method that takes definability as the measure for 
expressive power yields a much finer classification compared 
to the one that we get by using standard complexity-theoretic 
notions.
Indeed, it is well-known that already $\kResx 3$ is $\PTIME$-complete, 
which means that \emph{all} (fragments of) proof systems that we 
considered here are 
equivalent from the viewpoint of (algorithmic) complexity theory.
In contrast, as we saw, we obtain a more interesting landscape 
if we measure their descriptive complexity instead.

On the other hand, compared to the view of proof complexity, our 
analysis is much coarser. For instance, in our framework there is no 
explicit difference between width-$3$ and width-$k$ resolution for any $k 
\geq 4$, while, from the viewpoint of proof complexity, clearly these 
systems have different power.
The reason for this mismatch is that we allow more powerful logical
reductions (which are still weak from the 
viewpoint of finite model theory).
We believe that this more general perspective, though not as 
precise, makes it easier to pin down fundamental differences between, and 
weaknesses of, the different (fragments) of proof systems.
For instance, our results show that the resolution proof system cannot 
refute the Pigeonhole Principle for any \FO-definable encoding, 
see~\cite{GrPaPa17}, while the polynomial calculus over $\mbQ$ allows 
simple refutations (with respect to a natural encoding). Our results 
explain this 
``counting dichotomy'' very clearly: resolution corresponds to $\EFP$, a 
logic which lacks counting, and the polynomial calculus over $\mbQ$ to 
$\FPC$/$\Cinf$, logics 
which explicitly include a counting ability.
Moreover, our results highlight that the polynomial calculus has a 
severe weakness: it is not able to go beyond $\FPC$ (with respect to its 
bounded-degree and bounded bit-complexity $\PTIME$-stratification). 
Since it is known that $\FPC$ fails to express all 
$\PTIME$-properties, this implies that there are certain 
$\PTIME$-properties which do not have small 
refutations in the polynomial calculus. The prototype example is solving 
linear equation systems over finite fields.
We can exploit this connection between $\FPC$ and the polynomial calculus 
over $\mbQ$ even further to derive yet another characterisation. Indeed, 
we can show that linear programming is complete for $\FPC$ under 
(numerical) first-order reductions. This means that the power of the 
polynomial calculus over $\mbQ$ corresponds \emph{precisely} to the power 
of linear programming under numerical $\FO$-reductions. This connects the 
polynomial calculus with a very natural and significant algorithmic 
problem 
in the setting of finite model theory.

\medskip
Another interesting outcome of our work are the new finite-model theoretic 
proofs for lower bounds on the complexity of refutations in the polynomial 
calculus.
We saw that, using a uniform finite-model theoretic approach, one can show 
that many families of propositional formulas require refutations of 
exponential size.
Remarkably, we could obtain these lower bounds not only for the polynomial 
calculus over $\mbQ$, but also for the polynomial calculus over 
 finite fields. 
Also, as a result of our approach, our lower bounds are very robust 
in the sense that they do not rely on any specific encoding of 
a problem as a propositional formula, but they hold with respect to any 
($\FPC$-)definable encoding of the problem. 
For the case of the polynomial calculus this implies, for example, that 
all of the aforementioned lower bounds also hold for the \emph{polynomial 
calculus with resolution} (PCR). This proof system is nothing more than 
the polynomial calculus, but we include for any variable $X$ a syntactic 
dual variable $\bar X$ together with the axiom $1 - X = \bar X$. Clearly 
these additional axioms can be defined in $\FPC$, and so, our results do 
not 
change in any way by considering the PCR instead of the standard PC.

\medskip
Let us finally take a look at some future work.
We observe that in our lower bound proofs for the 
polynomial calculus over finite fields we do not require a precise 
connection with $\FPC$-definability (in fact, as we saw, such a precise 
match between $\FPC$ and the polynomial calculus over finite fields does 
not exist).
Indeed, for proving lower bounds it was sufficient to establish
$\FPC$-definability of refutations for families of propositional formulas 
that are defined in CFI-structures. We then made use of the fact that 
the CFI-problem is hard for $\FPC$ which gave us the lower bounds on the 
proof complexity.
Even more general, we do not need to obtain $\FPC$-definability, but, 
because of the fact that the CFI-problem is hard already for
finite-variable counting logic $\Cinf$, 
 it is sufficient to show
$\Cinf$-definability (recall that $\Cinf$ is a more powerful logic 
than $\FPC$, so showing definability is easier).
We followed these lines for the case of the polynomial calculus 
over finite fields in 
Section~\ref{sec:PCfiniteFields}. For this, we strongly made use of our
key technical results which says that CFI-structures over expander graphs 
are 
$\FPC$-homogeneous.
Recall that this means that we can order orbits of $k$-tuples in 
CFI-structures using $\FPC$-formulas with a linear number of 
variables only.

In fact, we can use this homogeneity result to develop a much more general 
strategy for proving lower bounds for certain propositional proof system 
$\Prop$.
As we explain in the following, in certain situations this result 
allows us to \emph{quantify over refutations in $\Cinf$}.
More precisely, assume that $\Prop$ has a stratification $\Prop=(\Prop_k)$ 
along a 
parameter $k \geq 1$.
Moreover, assume that whenever a family of propositional formulas 
$\mcF$ 
that is defined (via a fixed $\FPC$-interpretation) in (pairs of) 
CFI-structures, has a refutation in $\Prop_k$, then 
it also has a refutation $\mfp$ such that:
\begin{itemize}
 \item $\mfp$ is symmetric, that is invariant under all automorphisms of 
the underlying CFI-structures, and
 \item $\mfp$ can be encoded as an object that is definable in the logic 
  $C^{\omega}_{\infty\omega}$ over the underlying CFI-structures with
$\mcO(k)$ variables, 
 \item given a description of $\mfp$ as above, it can be 
verified using a $\Cinf$-formula with $\mcO(k)$ many variables, that 
$\mfp$ refutes $\mcF$.
\end{itemize}
If these (vaguely formulated) conditions are satisfied, then we can 
basically apply our techniques in order to show that certain families of 
propositional formulas, namely such formulas 
which encode the CFI-isomorphism problem, cannot be refuted in $\Prop_k$ 
for any sublinear $k$.
At the moment, we work out the details and study to what extent these 
conditions can be relaxed. 

For now, let us illustrate the usefulness of this approach by 
means of a simple example. 
If we take another look at the paper by Grohe and 
Berkholz~\cite{BerkholzGro15}, then we observe that they do, in fact, not 
only derive lower bounds on the complexity of refutations for the graph 
isomorphism problem for the polynomial calculus over $\mbQ$, but also for 
a 
stronger proof system which is known as the \emph{Positivstellensatz} (or 
\emph{Sums-of-Squares Proof System}).
Let us briefly introduce this system. The setting is the same as for the 
polynomial calculus over $\mbQ$, that is our input is a set $\mcP$ 
consisting of multivariate polynomials $p \in \mbR[\mcX]$, and our aim is 
to show that the polynomials in $\mcP$ do not have a common zero. As 
before we implicitly assume that the Boolean constraints $X^2 - X = 0$ are 
contained in $\mcP$ for every variable $X \in \mcX$.

Let us fix a degree $k \geq 2$ which is even. A degree-$k$ 
Positivstellensatz refutation of a polynomial equation system $\mcP$ over 
variables $\mcX$ consists of polynomials $f_p 
\in \mbR[\mcX]$ such that 
\[ \sum_{p \in \mcP} f_p \cdot p = 1 + s, \]
where $s$ is a sum-of-squares (sos) polynomial, that is $s = \sum_{i \in 
I} q_i^2$ for some polynomials $q_i \in \mbR[\mcX]$, and such that all 
polynomials in the above equation have degree at most $k$. Since $s(a) 
\geq 0$ for every evaluation $a\colon \mcX \to \mbR$, the existence of 
such 
a refutation clearly proves that $\mcP$ is inconsistent.
Now, as in our description above, assume that we have interpreted this 
system in a (pair) of CFI-structures, and let $\Gamma$ be the 
corresponding CFI-automorphism group. Every $\pi \in \Gamma$ extends 
(uniquely) to a permutation on $\mcX$ and so it defines a unique 
automorphism of $\mbR[\mcX]$. Moreover, this automorphism of $\mbR[\mcX]$ 
stabilises
$\mcP$. It follows that if we have a refutation as above, also 
\[ \sum_{p \in \mcP} \pi(f_p) \cdot \pi(p) = 1 + \pi(s),\]
is a refutation. Here we are just saying that refutations are mapped 
to refutations if we permute the variables in such a way that the set of 
given polynomials remains stable. Clearly, this holds for any 
reasonable proof system. In particular, note that $\pi(s)$ is also a 
sum-of-squares polynomial (because $\pi$ is an automorphism of 
$\mbR[\mcX]$).

However, in the case of the Positivstellensatz we can go one important 
step further by summing up over all refutations that we obtain in this way:
\begin{align*}
\sum_{\pi \in \Gamma} \left(\sum_{p \in \mcP} \pi(f_p) \cdot \pi(p)\right) 
= |\Gamma| + \sum_{\pi \in \Gamma} \pi(s).
\end{align*}
The importance of this equation follows from the fact that the sos 
polynomial on the right-hand side is symmetric with respect to $\Gamma$.
The simple consequence is that whenever we can derive from 
$\mcP$, via a degree-$k$ combination of polynomials, a polynomial $1 + s$, 
where $s$ is an sos polynomial, then we can also derive from $\mcP$ a 
polynomial $1 + \hat s$ where $\hat s$ is a \emph{symmetric} sos 
polynomial (note that sos polynomials are closed under addition).

This already brings us very close to our proof strategy from above: we saw 
that whenever there is a degree-$k$ refutation, there is also a symmetric 
one.
Let us now complete our argument for the case of the Positivstellensatz 
more explicitly.
The most important question is how we can obtain the symmetric 
polynomial $1 + \hat{s}$ in $\Cinf$.
The key insight is that we don't have to bother too much about this, 
because $1 + \hat{s}$ is symmetric.
Clearly, we can describe $r= 1 + \hat{s}$ as a mapping $r\colon M_k \to 
\mbR$ where $M_k$ denotes the set of all monomials of degree at most $k$.
Since $r$ is symmetric, $r$ is a vector with the same entries on all 
$\Gamma$-orbits on $M_k$. We now make use of the fact that CFI-structures 
are $\FPC$-homogeneous. This allows us to order the $\Gamma$-orbits on 
$M_k$ in $\FPC$ using only $\mcO(k)$ many variables.
Using this we can see that we can describe the vector $r$ by using a 
mapping from an \emph{ordered set} to~$\mbR$. This is a quite simple object 
from the viewpoint of $\Cinf$ as it has nothing to do with the 
underlying structure. In particular, we can explicitly quantify 
over all such mappings, since we have infinite conjunctions and 
disjunctions available in $\Cinf$. 
The final step is to verify that, having guessed such a vector $r\colon M_k 
\to \mbR$ in $\Cinf$, this vector is indeed a refutation, that is $r = 1 + 
\hat{s}$ for a symmetric sos-polynomial $\hat s$, and that $r$ can be 
derived from $\mcP$ using a degree-$k$ polynomial combination.
The latter problem is about solving a linear equation system over $\mbR$ 
which can be done $\Cinf$ by what we saw in Section~\ref{sec:fpc} 
(it is not hard to see that dealing with real numbers in this 
context is easy: since we are working in $\Cinf$ and not in $\FPC$, we 
can quantify explicitly over (sets of) real numbers that we can use for 
our definitions).

The former problem can be reformulated as follows.
Let $M_{k/2}$ denote the set of monomials over $\mcX$ of degree at most 
$k/2$.
Let $S$ be the $M_{k/2} \times M_{k/2}$-matrix over $\mbR$ which is 
defined by letting $S(m,n)$ be the leading coefficient of the monomial 
$m \cdot n$, $m,n \in M_{k/2}$, in $\hat s$ that we get when we 
syntactically expand the sos polynomial $\hat s$. Then $S$ is symmetric 
and, as a consequence of 
the syntactic form of $\hat s$ ($\hat s$ is an sos polynomial), $S$ can 
be written as a sum of matrices $vv^T$ where the $v\colon M_{k/2} \to 
\mbR$ correspond to the summands in $\hat{s}$. Vice versa, assume 
that $S$ can be written in this form. Let $z$ be the $M_{k/2}$-vector 
whose entries are the monomials $m \in M_{k/2}$, i.e.\ $z(m) = m$. Then it 
is easy to see that $z^T S z$ is an sos polynomial. Hence, $\hat{s}$ is an 
sos polynomial if, and only if, the corresponding matrix $S$ can be 
written as a sum of matrices $vv^T$ for $v\colon M_{k/2} \to \mbR$.
This condition is equivalent to saying that $S$ is positive semi-definite, 
which, in turn, is equivalent to saying that $S$ has only non-negative 
eigenvalues.
It is known that the eigenvalues of matrices over $\mbQ$ are definable in 
$\FPC$, see~\cite{DawarGroHolLau09}. It is easy to adapt this definability 
result to our setting which shows that the positive semi-definiteness of 
$S$ can be certified in $\Cinf$ (using $\mcO(k)$ many variables) as well.

This proof (sketch) shows that all lower bounds for the 
polynomial calculus that we obtained in 
Section~\ref{sec:applications:low}, 
that is for graph isomorphism refutations and for the CSP dichotomy,
remain valid for the Positivstellensatz. These lower bounds have 
been known before, but it is nice to see how easily they can be derived by 
using our newly developed finite-model-theoretic tools.
Again, let us stress that what makes our arguments particularly simple 
is the $\FPC$-homogeneity of CFI-structures.
As we saw, this result allows us to quantify over refutations in 
$\Cinf$ (assuming that symmetric refutations with certain syntactic 
properties exist), so we are only left with the usually much simpler task 
of verifying such refutations in $\Cinf$.
As indicated above, this line of research is part of on ongoing project 
where we explore the power of symmetric proof systems from the viewpoint 
of finite model theory more thoroughly, so we defer the details to this 
upcoming work.\\
\\
\\
\textbf{Acknowledgements:} We would like to thank Joanna Ochremiak for drawing our attention to a mistake in the previous version of this paper, and Tuomas Hakoniemi for answering detailed questions on the issue of bit-complexity in the polynomial calculus over $\bbQ$.


\bibliographystyle{abbrv}
\bibliography{references}

\end{document}